\titleformat*{\section}{\bf\Large\center}
\newcommand{\GG}[1]{}
\theoremstyle{definition}
\newtheorem*{theorem*}{Theorem}
\newtheorem{theorem}{Theorem}
\newtheorem*{rmk*}{Remark}
\newtheorem{lemma}{Lemma}
\newtheorem{condition}{Condition}
\newtheorem{remark}{Remark}
\newtheorem{corollary}{Corollary}
\newtheorem*{corollary*}{Corollary}
\apptocmd{\sloppy}{\hbadness 10000\relax}{}{} %
\DeclareMathOperator*{\argmax}{arg\,max}
\def\Pr{\mathbb{P}}
\def\Var{\text{Var}}
\def\Cov{\text{Cov}}
\def\converged{\stackrel{d}{\longrightarrow}}
\def\convergep{\stackrel{\Pr}{\longrightarrow}}
\def\Pr{\text{pr}}
\def\bs{}
\def\id{\text{id}}
\def\E{E}
\def\loss{\ell}
\def\logit{\text{logit}}
\def\MB{\textup{B}}
\def\MI{\textup{I}}
\def\MA{\textup{A}}
\def\converged{\stackrel{d}{\longrightarrow}}
\def\convergep{\stackrel{p}{\longrightarrow}}
\def\diag{\text{diag}}
\def\limsup{\overline{\lim}}
\def\liminf{\underline{\lim}}
\def\lpc{\dot\sim}
\def\lpdist{\pi}
\def\rev{\color{black}}
\def\lad{\color{black}}
\def\new{}
\begin{document}

\onehalfspacing

\title{\bf 
Randomization-based Z-estimation 
for evaluating average and individual
treatment effects 
}
\author{
Tianyi Qu, Jiangchuan Du, 
and Xinran Li
\footnote{
Tianyi Qu, Department of Statistics, University of Illinois, Champaign, IL (e-mail: \href{mailto:tianyiq3@illinois.edu}{tianyiq3@illinois.edu}).
Jiangchuan Du, Department of Statistics, University of Chicago, Chicago, IL (e-mail: \href{mailto:jiangchuand@uchicago.edu}{jiangchuand@uchicago.edu}). 
Xinran Li, Department of Statistics, University of Chicago, Chicago, IL (e-mail: \href{mailto:xinranli@uchicago.edu}{xinranli@uchicago.edu}). 
}
}
\date{}
\maketitle

\begin{abstract}
Randomized experiments have been the gold standard for drawing causal inference. 
The conventional model-based approach has been one of the most popular ways for analyzing treatment effects from randomized experiments, which is often carried through inference for certain model parameters. 
In this paper, we provide a systematic investigation of model-based analyses for treatment effects 
under the randomization-based inference framework. 
This framework does not impose  
any distributional assumptions on the outcomes, covariates and their dependence, 
and utilizes only randomization as the ``reasoned basis''. 
We first derive the asymptotic theory for Z-estimation in completely randomized experiments, and propose sandwich-type conservative covariance estimation. 
We then apply the developed theory to analyze both average and individual treatment effects in randomized experiments.
For the average treatment effect, we consider three estimation strategies: model-based, model-imputed, and model-assisted,
where the first two can be sensitive to model misspecification or require specific ways for parameter estimation. 
The model-assisted approach is robust to arbitrary model misspecification and always provides consistent average treatment effect estimation. 
We propose optimal ways to conduct model-assisted estimation using generally nonlinear least squares for parameter estimation.  
For the individual treatment effects, we 
propose to directly model the relationship between individual effects and covariates, 
and discuss the model's identifiability, inference and interpretation allowing model misspecification.
\end{abstract}

{\bf Keywords}: 
Randomization-based inference; 
potential outcomes; 
M-estimation;
generalized linear models; 
individual treatment effect; 
model-assisted approach.

\section{Introduction}\label{sec:intro}

Since \citet{fisher1925statistical}, 
randomized experiments have become the ``gold standard'' for drawing causal inference. 
There are at least two popular inferential frameworks for analyzing causal/treatment effects in randomized experiments.
One is the finite population inference \citep{fisher1935design, neyman1923}, also called randomization-based or design-based inference, which uses solely randomization of the treatment assignment as the ``reasoned basis'' and focuses on the units in the experiment. 
The other is the usual superpopulation inference that assumes random, e.g., independent and identically distributed (i.i.d.), sampling of units from some population, 
often with model assumptions on potential outcomes as well as their dependence on covariates. 
The finite population inference can be more preferable when experiments involve convenient samples, 
or when the superpopulation is hard to articulate.   
For example, when evaluating certain policy effects on the 50 states of US, it may not be clear what the corresponding superpopulation is.
In addition, the finite population inference naturally avoids distributional or model assumptions on potential outcomes and covariates.
See 
\citet{Little2004},
\citet{KG2006}, \citet{Rosenbaum02a}, \citet{Rosenberger2019} and \citet{Abadie2020}
for more comparisons between these two modes of inference.

Superpopulation inference is more flexible for incorporating covariates, often through modeling the conditional distributions of potential outcomes given covariates. 
It can help conducting efficient inference for treatment effects.  
However, such an inference can be sensitive to model misspecification. 
We will conduct a systematic investigation for analyses of randomized experiments that involve statistical models, under the finite population inference allowing arbitrary model misspecification. 
That is, we view the invoked model as a \textit{working} model 
rather than a true \textit{data-generating} model.
We will investigate 
the robust utilization of models 
for evaluating both average and individual treatment effects. 
To do this, an important first step is to understand large-sample properties of the parameter estimation under randomization.

We first establish the theory for M-estimation and Z-estimation under the randomization-based inference framework. In contrast to usual superpopulation inference, the observations can be dependent across units due to the dependence in their treatment assignments, 
and we may not be able to consistently estimate the asymptotic distributions of the estimators due to the unidentifiable joint distributions of the potential outcomes. 
We give rigorous conditions for the consistency and asymptotic Normality of the M-estimators and Z-estimators under randomization, 
and propose asymptotically conservative covariance estimators. 
We then apply the theory to analyze both average and individual treatment effects.

We consider three strategies for inferring average treatment effects.
The model-based approach, which often uses the estimated model parameters, is generally biased. 
The model-imputed approach, which uses models as a tool for imputing the potential outcomes, is  generally biased as well.  
However,
when model parameters are estimated in particular ways, 
it can provide consistent treatment effect estimation that is robust to any model misspecification. 
Finally, we recommend 
the model-assisted approach, which uses models to adjust potential outcomes and 
can always provide consistent treatment effect estimation, 
and suggest optimal adjustment using generally nonlinear least squares for parameter estimation. 
Our study on average treatment effects 
is related to many existing literature. 
Specifically, 
our study for model-based and model-imputed approaches generalizes \citet{Freedman08logistic}'s study on logistic regression to general models, 
and 
our study for model-imputed and model-assisted approaches 
intersects
with \citet{GuoBasse21} and \citet{Fogarty2020calibration}, 
but with different perspectives. 
Our work also shares similar spirits as 
\citet{Robins2005}, 
\citet{Rv2010}, 
\citet{Wooldridge21}, 
and 
\citet{VD2002}. 
However, 
we differ from these by not imposing i.i.d. assumptions.

We further explore how to robustly use models to  understand individual treatment effects. 
We propose to directly impose working models for the relationship between individual effects and covariates,  despite none of individual effects are observed, and discuss the models' identifiability and inference. 
The fitted model can be viewed as ``best approximation'' for the distribution of individual effects given covariates.

\section{Notation and Framework}
\subsection{Potential outcomes, covariates and treatment assignments}\label{sec:po}

We consider an experiment with $N$ units. 
For each unit $i$, let $Y_i(1)$ and $Y_i(0)$ denote the treatment and control potential outcomes, 
and $\bs{X}_i$ denote the pretreatment covariate vector. 
The average treatment and control potential outcomes are, respectively, 
$\bar{Y}(1) = N^{-1} \sum_{i=1}^N Y_i(1)$ and $\bar{Y}(0) = N^{-1} \sum_{i=1}^N Y_i(0).$
The average treatment effect in the difference scale is then $\bar{Y}(1) - \bar{Y}(0)$. 
More generally, we may be interested in the average treatment effect in a $g$-scale defined as 
$
\tau_g  = g(\bar{Y}(1)) - g(\bar{Y}(0))
$
for some prespecified $g$. 
For example, with binary outcomes, 
$g = \id$ corresponds to the risk difference, 
$g = \log$ corresponds to logarithm of the risk ratio, 
and 
$g = \text{logit}$ corresponds to logarithm of the odds ratio.

For $1\le i \le N$, 
let $Z_i$ be the treatment assignment indicator for unit $i$, 
where $Z_i$ equals $1$ if the unit receives treatment and $0$ otherwise. 
The observed outcome for each unit is then one of its two potential outcomes depending on the treatment assignment, i.e., $Y_i = Z_i Y_i(1) + (1-Z_i) Y_i(0)$.

To simplify the notation, 
for any finite population $\{c_1, c_2, \ldots, c_N\}$ of size $N$, 
let $\E_N(c_i) = N^{-1} \sum_{i=1}^N c_i$ and $\Var_N(c_i) = (N-1)^{-1} \sum_{j=1}^N \{ c_j - \E_N(c_i) \}^2$ be the finite population average and variance, respectively. 
Given the treatment assignment $(Z_1, \ldots, Z_N)$, 
we further introduce $\E_N^z(c_i) = n_z^{-1} \sum_{i:Z_i=z} c_i$ and $\Var_N^{z} = (n_z-1)^{-1} \sum_{j: Z_j = z}\{c_j - \E_N^z(c_i)\}^2$ to denote the sample average and variance under treatment arm $z$, for $z = 0,1$, 
{\rev where $n_1$ and $n_0$ denote the numbers of treated and control units.} 
We define analogously the finite population covariance $\Cov_N(\cdot)$ and the sample covariance $\Cov_N^{z}(\cdot)$ for $z=0,1$. 
For descriptive convenience, 
we introduce $\mathcal{Y}$ and $\mathcal{X}$ to denote the supports of outcome and covariate variables, respectively. 
For a general function $a(y, \bs{x}; \bs{\theta})$ of the outcome, covariates and parameter, 
let $\dot a(y, \bs{x}; \bs{\theta})$ 
denote its 
partial derivative over $\bs{\theta}$.

\subsection{Randomization-based and model-based inference}
As discussed in \S \ref{sec:intro}, 
there are two popular inferential frameworks for analyzing randomized experiments: 
finite population inference 
and usual superpopulation inference. 
Under finite population inference, all the potential outcomes and covariates are viewed as fixed constants (or equivalently being conditioned on), and we use solely the randomization of treatment assignment as the reasoned basis. 
Under superpopulation inference, some model assumptions are often imposed on
the relation between potential outcomes and covariates, 
and the inference is often driven by classical likelihood theory.

In general, 
if the model is correctly specified, 
then the model-based approach can provide the most efficient estimation for model parameters, which are often linked to and interpreted as treatment effects. 
However, in practice, the model may be misspecified. 
Furthermore, even if the model is correctly specified, the targeted model parameter may be different from true causal effects of interest. 
{\rev For example, 
under the finite population inference, 
\citet{Freedman08logistic} 
showed that,} with binary outcomes, the estimator from usual logistic regression model is generally inconsistent for the true average effect $\tau_g$ with $g=\logit$, an issue that cannot be avoided even when the logistic regression model is correctly specified.

We will investigate the model-based analysis for randomized experiments under the finite population inference framework, which allows arbitrary model misspecification. 
We focus on the completely randomized experiment (CRE), 
under which $n_1$ units are randomly assigned to treatment and the remaining $n_0 = N-n_1$ to control. 
Let $r_1 = n_1/N$ and $r_0 = n_0/N$ be the proportions of treated and control units.

\section{M-estimation and Z-estimation under Randomization}\label{sec:m_est}

\subsection{M-estimators and Z-estimators}

We consider estimating a parameter $\theta\in \mathbb{R}^p$ under a CRE, by either minimizing a certain risk function:
\begin{align}\label{eq:M_hat}
	\hat{M}_N (\bs{\theta})
	= 
    \E_N\{
    \loss_{Z_i} (Y_i, \bs{X}_i; \bs{\theta})
    \}
	=
	r_1 \E_{N}^1 \{\loss_1(Y_i, \bs X_i;{\bs\theta})\}
	+
	r_0 \E_{N}^0 \{\loss_0(Y_i, \bs X_i;{\bs\theta})\},
\end{align}
or finding the root of a certain estimating equation: 
\begin{align}\label{eq:Z_hat}
	\hat{\Psi}_N (\bs{\theta})
	= 
    \E_N\{
    \psi_{Z_i} (Y_i, \bs{X}_i; \bs{\theta})
    \}
	=
	r_1 \E_{N}^1 \{\psi_1(Y_i, \bs X_i;{\bs\theta})\}
	+
	r_0 \E_{N}^0 \{\psi_0(Y_i, \bs X_i;{\bs\theta})\}.
\end{align}
In \eqref{eq:M_hat} and \eqref{eq:Z_hat}, $\loss_z$ and $\psi_z$ are prespecified loss and estimating functions taking values in $\mathbb{R}$ and $\mathbb{R}^p$, respectively, for observations under treatment arm $z \in \{0, 1\}$. 
The estimators from \eqref{eq:M_hat} and \eqref{eq:Z_hat} are often called the M-estimator and Z-estimator, respectively. 
We will focus on the Z-estimator in the main paper, since the M-estimator can also be viewed as a Z-estimator by setting $\psi_z = \dot{\loss}_z$ for $z=0,1$. 
When we choose $\psi_1$ and $\psi_0$ to be the derivatives of the log-density functions from certain working models on the conditional distributions of $Y(1)$ and $Y(0)$ given $X$, the Z-estimator then reduces to the usual maximum likelihood estimator. 
{\rev See also \citet{Xu20} and \citet{han2024introduction} for recent progresses on M-estimation in finite population. 
}

\subsection{Finite population asymptotics and regularity conditions}\label{sec:fp_asym}

Studying the exact distribution of the Z-estimator $\hat{\bs{\theta}}_N$ 
is generally difficult and intractable. 
We therefore invoke the finite population asymptotics to study the large-sample properties of  $\hat{\bs{\theta}}_N$ \citep{fpclt2017}.  
Specifically, 
we embed the finite population of $N$ units into a sequence of finite populations, and study the limiting distribution of an estimator along the sequence of finite populations. 
Similar to usual superpopulation inference, we need to impose some regularity conditions on the sequence of finite populations, 
as well as some smoothness conditions on the estimating functions. %
Below we discuss these conditions.

First, 
we impose some regularity conditions on 
the finite population estimating equation: 
\begin{align}\label{eq:population_risk}
	\Psi_N (\bs{\theta})
	\equiv
	\E\{ \hat{\Psi}_N(\bs{\theta}) \}
	=
	r_1 \E_N\{\psi_1(Y_i(1), \bs X_i;{\bs\theta})\}
	+r_0 \E_N\{\psi_0 (Y_i(0), \bs X_i; {\bs\theta})\}, 
\end{align}
where the last equality holds due to the properties of the CRE.

\begin{condition}[\rev Interior root]\label{cond:population_minimum}
    Let $\Theta \subset \mathbb{R}^p$ be the parameter space, and 
	$\theta_N \in \Theta$ be the root of 
 $\Psi_N(\bs{\theta})$ in \eqref{eq:population_risk}.  
	There exists $\varepsilon_0 > 0$ such that $\mathcal{B}({\bs\theta}_N, \varepsilon_0 ) \equiv \{\theta: \|\theta-\theta_N\|<\varepsilon_0\} \subset \Theta$ for 
	all $N$.
\end{condition}

Second, we impose some smoothness conditions.  
For $z=0,1$, 
define 
$\psi_{zi}(\theta) = \psi_z(Y_i(z), \bs X_i;{\bs\theta})$, 
$
\Delta(\psi_{zi}, \theta_N, \delta)=\sup _{\bs\theta \in \Theta:\|\bs\theta-\bs\theta_N\|\le \delta}
\| 
\psi_{zi}(\theta) - \psi_{zi}(\theta_N)
\|, 
$ 
and analogously 
$\Delta(\dot{\psi}_{zi}, \theta_N, \delta)$, for $1\le i\le N$.

\begin{condition}[\rev Smoothness]\label{cond:continuous}
	For $z=0,1$ and $(\bs{x}, y)\in \mathcal{X} \times \mathcal{Y}$,
	$\psi_z(y, \bs{x}; \bs{\theta})$ 
    are continuously differentiable in $\bs{\theta} \in \Theta$.
    As $\delta \rightarrow 0$, 
     $\sup_{N} \E_N\{ \Delta(\dot \psi_{zi}, \theta_N, \delta)\}$ and 
    $\sup_{N} \E_N\{ \Delta^2(\psi_{zi}, \theta_N, \delta)\}$
    converge to zero.
\end{condition}

Third, 
we assume that some finite population quantities are bounded or 
not too heavy-tailed. We use $\sigma_{\max}$ and $\sigma_{\min}$ to denote the largest and smallest singular value of a matrix.

\begin{condition}[\rev Boundedness]\label{cond:stable_limits}
    Both $r_1$ and $r_0$ are bounded away from $0$ and $1$. 
    For $z=0,1$, 
    \vspace{-0.08in}
	\begin{enumerate}[label = (\roman*)]
		\item 
            there exists $c_0 > 0$ such that
            $\sigma_{\max}( \Cov_N\{\psi_{zi}(\bs{\theta}_N)\} ) \le  c_0$ for all $N$;
    
		\item  
            there exists $c_2 > c_1 > 0$ such that 
            $c_1 \le \sigma_{\text{min}}(\dot\Psi_N(\theta_N)) \le \sigma_{\text{max}}(\dot\Psi_N(\theta_N)) \le c_2$ for all $N$;
		
		\item  
		$
		N^{-1} \max_{1\le j \le N} \| \psi_{zj}(\bs{\theta}_N) - \E_N\{ \psi_{zi}(\bs{\theta}_N) \} \|^2 
		\rightarrow 0
		$
        as $N \rightarrow \infty$. 
	\end{enumerate}
\end{condition}

Fourth, we assume uniform convergence of the derivative of the empirical estimating function, as well as the consistency of the Z-estimator. 
In the supplementary material, we give sufficient conditions for them, using 
either compactness of the parameter space or convexity of the risk functions.

\begin{condition}[\rev Consistency]\label{cond:empirical_minimum}
The estimator $\hat{\bs{\theta}}_N$ is a root of 
$\hat{\Psi}_N(\bs{\theta})$ and is consistent for $\theta_N$, 
i.e., 
$\hat \theta_N - \theta_N = o_{\Pr}(1)$ as $N \rightarrow \infty$. 
There exists $r_0>0$ such that $\sup_{\bs\theta \in \Theta:\|\bs\theta-\bs\theta_N\|\le r_0}     \|\dot{\hat\Psi}_N(\bs\theta) - \dot{\Psi}_N(\bs\theta)\| = o_{\Pr}(1)$. 
\end{condition}

\subsection{Large-sample properties of the Z-estimator under randomization}

{\rev We say $A_N \lpc B_N$ if the
Lévy--Prokhorov
distance between probability measures induced by $A_N$ and $B_N$ converges to $0$ as $N\rightarrow \infty$.
In the theorem below, 
if further $\sigma_{\min}(\Sigma_N)$ is uniformly bounded away from zero for all $N$, then $\lpc$ can also be understood as the convergence of the Kolmogorov distance.

}

\begin{theorem}\label{thm:theta_clt}
	Under Conditions \ref{cond:population_minimum}--\ref{cond:empirical_minimum},
	$
	\sqrt{N} 
	( \hat{\bs{\theta}}_N - \bs{\theta}_N )
	\lpc 
	\mathcal{N}(\bs{0}, \bs{\Sigma}_N), 
	$
	where 
	\begin{align*}
		\bs{\Sigma}_N = r_1 r_0 \{\dot \Psi_N(\bs{\theta}_N)\}^{-1} \Cov_N\{ 
        \psi_{1i}(\bs{\theta}_N) - \psi_{0i}(\bs{\theta}_N)
		\}  \{\dot \Psi_N(\bs{\theta}_N)^\top \}^{-1}.
	\end{align*}

\end{theorem}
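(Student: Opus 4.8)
The plan is to prove the theorem by linearizing the estimating equation around $\bs\theta_N$ and reducing the problem to a finite population central limit theorem for a statistic that is linear in the treatment indicators. Since $\hat{\bs\theta}_N$ is a root of $\hat\Psi_N$ by Condition~\ref{cond:empirical_minimum}, I would write, via the fundamental theorem of calculus,
\begin{align*}
\bs 0 = \hat\Psi_N(\hat{\bs\theta}_N) = \hat\Psi_N(\bs\theta_N) + \bar{\bs H}_N\,(\hat{\bs\theta}_N - \bs\theta_N), \qquad \bar{\bs H}_N \equiv \int_0^1 \dot{\hat{\Psi}}_N\!\big(\bs\theta_N + t(\hat{\bs\theta}_N - \bs\theta_N)\big)\,dt,
\end{align*}
so that, once $\bar{\bs H}_N$ is shown to be invertible with probability tending to one, $\sqrt{N}(\hat{\bs\theta}_N - \bs\theta_N) = -\bar{\bs H}_N^{-1}\sqrt{N}\,\hat\Psi_N(\bs\theta_N)$. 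The two objects to control are then the ``score'' $\sqrt N\,\hat\Psi_N(\bs\theta_N)$ and the averaged Jacobian $\bar{\bs H}_N$.

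For the score, the key observation is that substituting $Y_i = Z_iY_i(1)+(1-Z_i)Y_i(0)$ renders the statistic linear in the assignment vector:
\begin{align*}
\hat\Psi_N(\bs\theta_N) = \Psi_N(\bs\theta_N) + \frac1N\sum_{i=1}^N (Z_i - r_1)\big\{\psi_{1i}(\bs\theta_N) - \psi_{0i}(\bs\theta_N)\big\},
\end{align*}
where $\Psi_N(\bs\theta_N)=\bs 0$ by Condition~\ref{cond:population_minimum}. Under the CRE the indicators satisfy $\Var(Z_i)=r_1r_0$ and $\Cov(Z_i,Z_j)=-r_1r_0/(N-1)$ for $i\ne j$, so a direct computation yields the exact finite population covariance $\Cov\{\sqrt N\,\hat\Psi_N(\bs\theta_N)\}=r_1r_0\,\Cov_N\{\psi_{1i}(\bs\theta_N)-\psi_{0i}(\bs\theta_N)\}$, which is bounded by Condition~\ref{cond:stable_limits}(i). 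I would then invoke the finite population central limit theorem of \citet{fpclt2017}, treating $\psi_{1i}(\bs\theta_N)-\psi_{0i}(\bs\theta_N)$ as the fixed vector ``potential outcomes'': the maximal-summand negligibility in Condition~\ref{cond:stable_limits}(iii) is precisely the Lindeberg-type requirement, giving $\sqrt N\,\hat\Psi_N(\bs\theta_N)\lpc \mathcal N\big(\bs 0,\, r_1r_0\,\Cov_N\{\psi_{1i}(\bs\theta_N)-\psi_{0i}(\bs\theta_N)\}\big)$.

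For the Jacobian, I would show $\bar{\bs H}_N = \dot\Psi_N(\bs\theta_N) + o_{\Pr}(1)$. Since $\hat{\bs\theta}_N - \bs\theta_N = o_{\Pr}(1)$ by Condition~\ref{cond:empirical_minimum}, every convex combination $\bs\theta_N + t(\hat{\bs\theta}_N-\bs\theta_N)$ lies in a shrinking neighborhood of $\bs\theta_N$. Splitting $\bar{\bs H}_N - \dot\Psi_N(\bs\theta_N)$ into a stochastic part $\dot{\hat{\Psi}}_N-\dot\Psi_N$ and a deterministic part $\dot\Psi_N(\cdot)-\dot\Psi_N(\bs\theta_N)$, the former is $o_{\Pr}(1)$ by the uniform convergence in Condition~\ref{cond:empirical_minimum} and the latter by the equicontinuity of $\dot\Psi_N$ near $\bs\theta_N$ that Condition~\ref{cond:continuous} supplies through $\sup_N\E_N\{\Delta(\dot\psi_{zi},\bs\theta_N,\delta)\}\to0$. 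Condition~\ref{cond:stable_limits}(ii) keeps $\dot\Psi_N(\bs\theta_N)$ uniformly well-conditioned, so $\bar{\bs H}_N$ is invertible with probability tending to one and $\bar{\bs H}_N^{-1}=\{\dot\Psi_N(\bs\theta_N)\}^{-1}+o_{\Pr}(1)$. A Slutsky-type step then writes $-\bar{\bs H}_N^{-1}\sqrt N\,\hat\Psi_N(\bs\theta_N)$ as $-\{\dot\Psi_N(\bs\theta_N)\}^{-1}\sqrt N\,\hat\Psi_N(\bs\theta_N)$ plus an $o_{\Pr}(1)$ times $O_{\Pr}(1)$ remainder, and the leading term carries the limiting law into $\mathcal N(\bs 0,\bs\Sigma_N)$ with the stated sandwich form.

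The main obstacle I anticipate is that $\bs\Sigma_N$ is allowed to vary with $N$ and may even degenerate, so the convergence cannot be stated against a fixed Gaussian limit nor normalized by $\bs\Sigma_N^{-1/2}$; this is exactly what forces the Lévy--Prokhorov formulation $\lpc$. Two supporting facts must be established in this framework. First, additive $o_{\Pr}(1)$ remainders are LP-negligible, which disposes of the linearization error and of the term $(\bar{\bs H}_N^{-1}-\{\dot\Psi_N(\bs\theta_N)\}^{-1})\sqrt N\,\hat\Psi_N(\bs\theta_N)$, the latter being $o_{\Pr}(1)$ times a tight score. Second, pre-multiplication by the deterministic matrix $\{\dot\Psi_N(\bs\theta_N)\}^{-1}$, whose operator norm is bounded uniformly in $N$ by Condition~\ref{cond:stable_limits}(ii), is a uniformly Lipschitz map and therefore sends $\mathcal N(\bs 0, r_1r_0\Cov_N\{\cdot\})$ to $\mathcal N(\bs 0,\bs\Sigma_N)$ while preserving LP closeness. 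The genuinely delicate part is the vector-valued finite population CLT itself: because a Cramér--Wold reduction produces degenerate scalar limits along directions where $\Cov_N\{\cdot\}$ collapses, one wants the CLT stated directly in LP distance with an explicit comparison to the moving Gaussian law, and it is there that the uniform lower bound on $\sigma_{\min}(\dot\Psi_N(\bs\theta_N))$ and the negligibility in Condition~\ref{cond:stable_limits}(iii) do the essential work.
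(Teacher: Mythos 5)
Your main line coincides with the paper's own argument (carried out in its appendices): the fundamental-theorem-of-calculus linearization $\bs 0=\hat\Psi_N(\bs\theta_N)+\bar{\bs H}_N(\hat{\bs\theta}_N-\bs\theta_N)$; the rewriting of the score as $\Psi_N(\bs\theta_N)+N^{-1}\sum_{i=1}^N(Z_i-r_1)\{\psi_{1i}(\bs\theta_N)-\psi_{0i}(\bs\theta_N)\}$, so that the only randomness is a simple-random-sample average of the fixed difference vectors; the exact covariance $r_1r_0\Cov_N\{\psi_{1i}(\bs\theta_N)-\psi_{0i}(\bs\theta_N)\}$; the finite population CLT of \citet{fpclt2017} with Condition \ref{cond:stable_limits}(iii) as the Lindeberg-type condition; consistency and invertibility of the averaged Jacobian from Conditions \ref{cond:continuous}, \ref{cond:empirical_minimum} and \ref{cond:stable_limits}(ii); and a Slutsky step. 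All of this matches the paper's route through its supplementary Theorems on Z-estimation under the CRE.

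The step you yourself single out as delicate, however, is asserted rather than proved, and it is a genuine gap: the comparison of the score (and hence of $\sqrt{N}(\hat{\bs\theta}_N-\bs\theta_N)$) to a \emph{moving, possibly singular} Gaussian law in L\'evy--Prokhorov distance. The CLT you invoke is a weak-convergence statement whose hypotheses require the covariance (and $r_1$, $\dot\Psi_N(\bs\theta_N)$) to converge; it cannot be applied as-is when $\Cov_N\{\psi_{1i}(\bs\theta_N)-\psi_{0i}(\bs\theta_N)\}$ has no limit, which is exactly the situation the $\lpc$ formulation is meant to cover. The paper closes this with a subsequence device: given any subsequence, Condition \ref{cond:stable_limits} (boundedness) and Bolzano--Weierstrass yield a further subsequence along which $r_1$, $r_0$, $\Cov_N\{\psi_{1i}(\bs\theta_N)-\psi_{0i}(\bs\theta_N)\}$ and $\dot\Psi_N(\bs\theta_N)$ all converge; along it both $\sqrt{N}(\hat{\bs\theta}_N-\bs\theta_N)$ and $\mathcal{N}(\bs 0,\bs\Sigma_N)$ converge weakly to the same fixed Gaussian, so their LP distance vanishes; and since every subsequence admits such a further subsequence, LP convergence holds along the full sequence. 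Your proposal needs this reduction (or an explicit Berry--Esseen-type bound, which is harder precisely because $\bs\Sigma_N$ may degenerate). Relatedly, your claim that the uniform lower bound on $\sigma_{\min}(\dot\Psi_N(\bs\theta_N))$ does ``essential work'' in the CLT along degenerate directions is misplaced: Condition \ref{cond:stable_limits}(ii) only makes $\{\dot\Psi_N(\bs\theta_N)\}^{-1}$ uniformly bounded, so the linearization can be inverted and the sandwich map is uniformly Lipschitz. Degeneracy of the score covariance is handled inside the finite population CLT itself (in the paper's Lemma \ref{lemma:clt_srs}, collapsing coordinates are disposed of by Chebyshev), and Theorem \ref{thm:theta_clt} deliberately permits singular $\bs\Sigma_N$ --- which is exactly why its conclusion is stated in $\lpc$ rather than in Kolmogorov distance.
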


From Theorem \ref{thm:theta_clt}, 
the Z-estimator $\hat{\bs{\theta}}_N$ is consistent for its population analogue $\bs{\theta}_N$, and its asymptotic covariance matrix 
has 
the usual sandwich form. 
Importantly, 
the asymptotics in Theorem \ref{thm:theta_clt} is not driven by i.i.d.\ sampling of units, but rather the randomization of treatment assignments. 
Thus, Theorem \ref{thm:theta_clt} can be more general than the classical theory of Z-estimation in the context of randomized experiments, 
allowing flexible forms of heterogeneity and dependence across units' potential outcomes and covariates.

In practice, we are often interested in constructing confidence intervals for $\theta_N$ or its transformations, which further requires estimating the asymptotic covariance $\Sigma_N$. 
However, due to no joint observation of both potential outcomes for the same unit, there is generally no consistent covariance estimator
\citep{neyman1923}. 
Nevertheless, we can still conservatively estimate $\Sigma_N$ by, e.g., the following sample analogue:  
\begin{align*}
	\hat{\bs{\Sigma}}_N 
	& =\{\dot{\hat{\Psi}}_N(\hat{\bs{\theta}}_N)\}^{-1} 
	\Big[
	r_0 {\Cov}_{N}^1\{\psi_{1i}(\hat{\bs{\theta}}_N) \} + 
	r_1 {\Cov}_{N}^0\{\psi_{0i}(\hat{\bs{\theta}}_N) \}
	\Big]
	\{\dot{\hat{\Psi}}_N(\hat{\bs{\theta}}_N)^\top\}^{-1}. 
\end{align*}

\begin{theorem}\label{thm:m_est_variance_estimate}
	Under Conditions \ref{cond:population_minimum}--\ref{cond:empirical_minimum}, 
     {\rev 
     $\hat{\bs{\Sigma}}_N = \tilde{\bs{\Sigma}}_N+o_{\Pr}(1)$
     }
      for some $\tilde{\bs{\Sigma}}_N \ge \bs{\Sigma}_N$.
\end{theorem}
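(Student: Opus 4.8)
The plan is to compare $\hat{\bs{\Sigma}}_N$ with an oracle matrix
\[
\tilde{\bs{\Sigma}}_N = \{\dot\Psi_N(\theta_N)\}^{-1}\Big[r_0 \Cov_N\{\psi_{1i}(\theta_N)\} + r_1 \Cov_N\{\psi_{0i}(\theta_N)\}\Big]\{\dot\Psi_N(\theta_N)^\top\}^{-1},
\]
which replaces the estimated bread $\dot{\hat\Psi}_N(\hat{\bs{\theta}}_N)$ by the population slope $\dot\Psi_N(\theta_N)$ and the sample covariances by their finite-population analogues. The argument then splits into two independent tasks: (i) show $\hat{\bs{\Sigma}}_N - \tilde{\bs{\Sigma}}_N = o_{\Pr}(1)$, and (ii) establish the deterministic domination $\tilde{\bs{\Sigma}}_N \ge \bs{\Sigma}_N$. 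Task (ii) is the conceptual heart and is short; task (i) carries the technical load.

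For task (ii), I would expand the meat $\Cov_N\{\psi_{1i}(\theta_N) - \psi_{0i}(\theta_N)\}$ appearing in $\bs{\Sigma}_N$. Writing $S_{11} = \Cov_N\{\psi_{1i}(\theta_N)\}$, $S_{00} = \Cov_N\{\psi_{0i}(\theta_N)\}$ and $S_{10} = \Cov_N\{\psi_{1i}(\theta_N), \psi_{0i}(\theta_N)\}$, a direct computation using $1-r_1 = r_0$ gives
\begin{align*}
r_0 S_{11} + r_1 S_{00} - r_1 r_0\big(S_{11} + S_{00} - S_{10} - S_{10}^\top\big)
&= r_0^2 S_{11} + r_1^2 S_{00} + r_1 r_0 (S_{10} + S_{10}^\top) \\
&= \Cov_N\{r_0 \psi_{1i}(\theta_N) + r_1 \psi_{0i}(\theta_N)\}.
\end{align*}
Hence the meat of $\tilde{\bs{\Sigma}}_N$ exceeds that of $\bs{\Sigma}_N$ by the finite-population covariance of $r_0 \psi_{1i}(\theta_N) + r_1 \psi_{0i}(\theta_N)$, which is positive semidefinite; sandwiching by $\{\dot\Psi_N(\theta_N)\}^{-1}$ and its transpose preserves this, so $\tilde{\bs{\Sigma}}_N - \bs{\Sigma}_N \ge 0$. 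This is the Neyman-type conservativeness: the unidentifiable cross-covariance $S_{10}$ is effectively replaced by an upper bound.

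For task (i), I treat bread and meat separately. For the bread, the uniform convergence of the derivative in Condition \ref{cond:empirical_minimum}, combined with consistency $\hat{\bs{\theta}}_N - \theta_N = o_{\Pr}(1)$ and the equicontinuity of $\dot\Psi_N$ supplied by Condition \ref{cond:continuous} (through $\Delta(\dot\psi_{zi}, \theta_N, \delta)$), yields $\dot{\hat\Psi}_N(\hat{\bs{\theta}}_N) - \dot\Psi_N(\theta_N) = o_{\Pr}(1)$; since $\sigma_{\min}(\dot\Psi_N(\theta_N)) \ge c_1 > 0$ by Condition \ref{cond:stable_limits}(ii), inversion is stable (via $A^{-1}-B^{-1}=A^{-1}(B-A)B^{-1}$ with bounded inverses) and the estimated bread inverse converges to the oracle one. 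The meat is the main obstacle: I must show $\Cov_N^1\{\psi_{1i}(\hat{\bs{\theta}}_N)\} = \Cov_N\{\psi_{1i}(\theta_N)\} + o_{\Pr}(1)$, and likewise for the control arm. I would factor this through two steps. First, replace $\hat{\bs{\theta}}_N$ by $\theta_N$ inside the sample covariance: the entrywise error is controlled by $\E_N^1\{\Delta^2(\psi_{1i}, \theta_N, \|\hat{\bs{\theta}}_N - \theta_N\|)\}$, which is negligible because Condition \ref{cond:continuous} forces $\sup_N \E_N\{\Delta^2(\psi_{zi}, \theta_N, \delta)\} \to 0$ as $\delta \to 0$ and, under the CRE, the treated-subsample average $\E_N^1$ stays close to the population average $\E_N$. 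Second, invoke the consistency of the subsample covariance for the finite-population covariance under the CRE, which holds by the boundedness in Condition \ref{cond:stable_limits}(i) and the Lindeberg-type negligibility in Condition \ref{cond:stable_limits}(iii). Assembling the bread and meat limits yields $\hat{\bs{\Sigma}}_N = \tilde{\bs{\Sigma}}_N + o_{\Pr}(1)$. The delicate point throughout is that $\hat{\bs{\theta}}_N$ is plugged into a covariance evaluated over a \emph{random} subsample, so the uniform-in-$N$ equicontinuity of Condition \ref{cond:continuous} must be transferred from the whole population to the randomly selected treated and control arms before the CRE covariance consistency can be applied.
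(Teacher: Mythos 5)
Your proposal is correct and follows essentially the same route as the paper's own proof: the same oracle matrix $\tilde{\bs{\Sigma}}_N$, the same bread argument (uniform convergence of $\dot{\hat{\Psi}}_N$ plus equicontinuity of $\dot{\Psi}_N$ and stable inversion via Condition \ref{cond:stable_limits}(ii)), the same two-step meat argument (replace $\hat{\bs{\theta}}_N$ by $\bs{\theta}_N$ using the $\Delta^2$ equicontinuity transferred to the random arms, then invoke CRE subsample-covariance consistency), and the same Neyman-type conservativeness identity, since your $\Cov_N\{r_0\psi_{1i}(\bs{\theta}_N)+r_1\psi_{0i}(\bs{\theta}_N)\}$ is just $r_1 r_0$ times the paper's $\Cov_N\{\sqrt{r_0/r_1}\,\psi_{1i}(\bs{\theta}_N)+\sqrt{r_1/r_0}\,\psi_{0i}(\bs{\theta}_N)\}$. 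The only cosmetic difference is that the paper first passes to subsequences along which $r_z$, the finite-population covariances, and $\dot{\Psi}_N(\bs{\theta}_N)$ converge, so that its covariance-consistency lemmas (stated in terms of limits) apply directly; your argument implicitly needs the same device or a limit-free restatement, but this is a routine technicality rather than a gap.
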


\begin{corollary}\label{cor:wald_ci_coverage}
{\rev 
In Theorem \ref{thm:m_est_variance_estimate}, 
if further $\sigma_{\min}(\tilde{\bs{\Sigma}}_N)$
is uniformly bounded away from zero, 
then for any $\bs{v} \in \mathbb{R}^{p\times m}$ of full column rank and $\alpha \in (0,1)$, 
an asymptotic $1-\alpha$ confidence set for $v^\top\theta_N$ is
$$\{\omega: 
N (v^\top\hat{\bs{\theta}}_N - \omega)^\top 
(v^\top \hat{\Sigma}_N v)^{-1} (v^\top\hat{\bs{\theta}}_N - \omega) 
\le \chi^2_{m, 1-\alpha}
\},
$$
where $\chi^2_{m, 1-\alpha}$ is the $1-\alpha$ quantile of the 
chi-squared distribution with degrees of freedom $m$. 
}
\end{corollary}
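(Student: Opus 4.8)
The plan is to establish that the proposed set attains asymptotic coverage at least $1-\alpha$; that is, taking $\omega = v^\top\theta_N$ and writing $T_N = N (v^\top\hat{\bs{\theta}}_N - \omega)^\top (v^\top \hat{\Sigma}_N v)^{-1} (v^\top\hat{\bs{\theta}}_N - \omega)$, I would show $\liminf_N \Pr(T_N \le \chi^2_{m,1-\alpha}) \ge 1-\alpha$. Set $G_N = \sqrt{N}\, v^\top(\hat{\bs{\theta}}_N - \theta_N)$. Since $v$ is a fixed matrix of bounded operator norm, Theorem \ref{thm:theta_clt} and linearity of the Gaussian limit give $G_N \lpc \mathcal{N}(0, v^\top\Sigma_N v)$; because $\Sigma_N$ is uniformly bounded under Condition \ref{cond:stable_limits}, the limiting laws form a tight family, so $\{G_N\}$ is tight and $\|G_N\|^2 = O_{\Pr}(1)$.

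First I would replace $\hat{\Sigma}_N$ by $\tilde{\Sigma}_N$ inside the standardizing matrix. The hypothesis $\sigma_{\min}(\tilde{\Sigma}_N)\ge c>0$ uniformly, together with $v$ having full column rank, yields $\sigma_{\min}(v^\top\tilde{\Sigma}_N v)\ge c\,\sigma_{\min}(v)^2>0$ uniformly, so $v^\top\tilde{\Sigma}_N v$ is invertible with uniformly bounded inverse. Theorem \ref{thm:m_est_variance_estimate} gives $\hat{\Sigma}_N = \tilde{\Sigma}_N + o_{\Pr}(1)$, whence the standard inverse-perturbation identity shows $(v^\top\hat{\Sigma}_N v)^{-1} = (v^\top\tilde{\Sigma}_N v)^{-1} + o_{\Pr}(1)$ with the inverse staying $O_{\Pr}(1)$. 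Combined with $\|G_N\|^2 = O_{\Pr}(1)$ this delivers $T_N = \|U_N\|^2 + o_{\Pr}(1)$, where $U_N = (v^\top\tilde{\Sigma}_N v)^{-1/2} G_N$.

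Next I would identify the limit of $U_N$ and extract the conservativeness from $\tilde{\Sigma}_N \ge \Sigma_N$. Applying the fixed, bounded linear map $(v^\top\tilde{\Sigma}_N v)^{-1/2}$ gives $U_N \lpc \mathcal{N}(0, M_N)$ with $M_N = (v^\top\tilde{\Sigma}_N v)^{-1/2}(v^\top\Sigma_N v)(v^\top\tilde{\Sigma}_N v)^{-1/2}$. The key algebraic point is that $\tilde{\Sigma}_N \ge \Sigma_N$ forces $v^\top\tilde{\Sigma}_N v \ge v^\top\Sigma_N v$, hence $M_N \preceq I_m$, i.e.\ all eigenvalues of $M_N$ lie in $[0,1]$ by the generalized Rayleigh-quotient characterization. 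Coupling $\zeta_N \sim \mathcal{N}(0,M_N)$ as $\zeta_N = M_N^{1/2} g$ with $g\sim\mathcal{N}(0,I_m)$ then gives $\|\zeta_N\|^2 = g^\top M_N g \le \|g\|^2$ pathwise, so $\|\zeta_N\|^2$ is stochastically dominated by $\chi^2_m$ uniformly in $N$. Since $x\mapsto\|x\|^2$ is continuous and $\{U_N\},\{\zeta_N\}$ are tight, the mapping property of the Lévy--Prokhorov distance yields $\|U_N\|^2 \lpc \|\zeta_N\|^2$; translating this into a Lévy-distance bound between the one-dimensional laws, combined with the stochastic domination, gives for each $\epsilon>0$ that $\Pr(\|U_N\|^2 > \chi^2_{m,1-\alpha}-\epsilon) \le \Pr(\chi^2_m > \chi^2_{m,1-\alpha}-\epsilon) + o(1)$. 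Absorbing the $o_{\Pr}(1)$ from $T_N = \|U_N\|^2 + o_{\Pr}(1)$ and letting $\epsilon\downarrow 0$, using continuity of the $\chi^2_m$ distribution function, produces $\limsup_N \Pr(T_N > \chi^2_{m,1-\alpha}) \le \alpha$, which is the claim.

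The main obstacle I anticipate is transferring distributional conclusions through the $N$-varying quadratic form while only having Lévy--Prokhorov convergence to a \emph{moving} Gaussian target whose covariance $\Sigma_N$ need not be bounded away from singularity (only $\tilde{\Sigma}_N$ is). The resolution is to standardize by the nondegenerate $\tilde{\Sigma}_N$ rather than by $\Sigma_N$: this simultaneously keeps every matrix inverse uniformly bounded and converts the matrix inequality $\tilde{\Sigma}_N \ge \Sigma_N$ into the clean spectral bound $M_N \preceq I_m$ that generates the conservativeness, after which tightness is exactly what licenses the continuous-mapping and Lévy-distance steps.
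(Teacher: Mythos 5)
Your proposal is correct, and it proves exactly the right statement (asymptotic coverage at least $1-\alpha$, with conservativeness coming from $\tilde{\bs{\Sigma}}_N \ge \bs{\Sigma}_N$); but it takes a noticeably different technical route from the paper. The paper's proof runs its by-now standard subsequence device: extract a subsequence $\{N_t\}$ along which $\bs{\Sigma}_N$ and $\tilde{\bs{\Sigma}}_N$ converge to limits $\bs{\Sigma}_\infty \le \tilde{\bs{\Sigma}}_\infty$, apply Theorems \ref{thm:theta_clt}--\ref{thm:m_est_variance_estimate} and Slutsky's theorem along that subsequence, and compare the limiting quadratic form $\epsilon^\top (v^\top \bs{\Sigma}_\infty v)^{1/2}(v^\top\tilde{\bs{\Sigma}}_\infty v)^{-1}(v^\top\bs{\Sigma}_\infty v)^{1/2}\epsilon$ with $\epsilon^\top\epsilon \sim \chi^2_m$, using continuity of the limiting law at $\chi^2_{m,1-\alpha}$. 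You instead never pass to subsequences: you standardize by the \emph{moving} matrix $(v^\top\tilde{\bs{\Sigma}}_N v)^{-1/2}$ (whose invertibility is exactly what the extra hypothesis guarantees), reduce to $U_N$ with Gaussian comparison law $\mathcal{N}(0, M_N)$, observe $M_N \preceq I_m$ uniformly in $N$, and get stochastic domination of $\|\zeta_N\|^2$ by $\chi^2_m$ via the pathwise coupling $\zeta_N = M_N^{1/2}g$ — the same algebraic heart as the paper's $\tilde{\bs{\Sigma}}_\infty \ge \bs{\Sigma}_\infty$ step, but executed at finite $N$ rather than in the limit. What each buys: the paper's version is shorter because the subsequence machinery is already built (Remark \ref{rmk:converge_dist} and the proofs of Theorems \ref{thm:theta_clt} and \ref{thm:m_est_variance_estimate} all use it), and it only needs the ordinary continuous mapping and Slutsky theorems for fixed limits; your version makes the conservativeness bound uniform and explicit, but it leans on the fact that L\'evy--Prokhorov closeness transfers through continuous (not uniformly continuous) maps on tight families — a correct fact, but one whose standard proof is itself a subsequence-compactness argument, so the avoidance of subsequences is more cosmetic than structural. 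One small point worth noting if you write this up: your inverse-perturbation step and the tightness claim $\|G_N\|^2 = O_{\Pr}(1)$ both require $\sigma_{\max}(\bs{\Sigma}_N)$ bounded, which indeed follows from Condition \ref{cond:stable_limits}(i)--(ii), so nothing is missing — just make that dependence explicit.
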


\section{Average treatment effects}\label{sec:ave_effect}

\subsection{Model-based and Model-imputed approaches}\label{sec:base_imp}

In this subsection we summarize main results for
the model-based and model-imputed approaches, with their details relegated to the supplementary material. 
The model-based and model-imputed estimators for the average treatment effect $\tau_g$ at the $g$-scale are, respectively, of forms:
\begin{align}\label{eq:MB_est}
    \hat{\tau}_{g\MB} & = 
	\E_N \big\{ g( h_1(\bs{X}_i; \hat{\bs{\theta}}_N) )  - g( h_0(\bs{X}_i; \hat{\bs{\theta}}_N)) \big\}, \\
    \label{eq:tau_hat_MI}
    \hat{\tau}_{g \MI} & = 
	g \big( \E_N\{ h_1(\bs{X}_i; \hat{\bs{\theta}}_N ) \} \big) - g\big( \E_N \{ h_0(\bs{X}_i; \hat{\bs{\theta}}_N ) \} \big). 
\end{align}
In \eqref{eq:MB_est} and \eqref{eq:tau_hat_MI}, $h_1$ and $h_0$ are often conditional mean functions of $Y(1)$ and $Y(0)$ given $X$ under a working model, and $\hat\theta_N$ is the corresponding maximum likelihood estimator. 
The asymptotic properties of 
$\hat{\tau}_{g\MB}$ and $\hat{\tau}_{g\MI}$ and their inference  
can be derived using Theorems \ref{thm:theta_clt} and \ref{thm:m_est_variance_estimate} with Delta method \citep{PASHLEY2022109540}. 

The model-based estimator in \eqref{eq:MB_est} often corresponds to model parameter estimator. For example, when the working model is a generalized linear model with canonical link $g$ and no treatment-covariate interaction, $\hat{\tau}_{g\MB}$ becomes the same as the estimator for the coefficient of the treatment indicator. 
In general, $\hat{\tau}_{g\MB}$ is not consistent for $\tau_g$, 
underscoring the risk of misinterpreting model parameters as treatment effects.
Moreover, 
this is true even when the model  is correctly specified, which 
is related to the distinction between conditional and marginal effects \citep{Rosenblum15, Ding2022comment}.

The model-imputed estimator in \eqref{eq:tau_hat_MI} uses the fitted model  to estimate or impute the potential outcomes and then estimates $\tau_g$ in a plug-in way. 
Unless the model is correctly specified, $\hat{\tau}_{g \MI}$ is in general inconsistent for $\tau_g$. 
Fortunately, 
$\hat{\tau}_{g \MI}$ can be consistent for $\tau_g$ under arbitrary model specification, given that the parameter $\theta_N$ is estimated in the following way: 
$\hat{\theta}_N$ is a Z-estimator from an estimating equation as in \eqref{eq:Z_hat}, 
and there exist differentiable 
functions $\bs{q}_1(\bs{\theta})$ and $\bs{q}_0(\bs{\theta})$ such that,  
for all $\bs{x}, y, \bs{\theta}$ and $z=0,1$, 
\begin{align}\label{eq:cond_MI_cons}
    \bs{q}_z(\bs{\theta})^\top \psi_z(y, \bs{x}; \bs{\theta}) = y - h_z(\bs{x}; \bs{\theta}), 
    \quad 
    \bs{q}_z(\bs{\theta})^\top  \psi_{1-z}(y, \bs{x}; \bs{\theta}) = 0. 
\end{align}
This condition is automatically satisfied when the working model is a generalized linear model with a canonical link and 
the parameter is estimated using the maximum likelihood approach.

\subsection{Model-assisted approach: consistency, asymptotic Normality and variance estimation}\label{sec:MA_consist}

In the remaining of this section, 
we focus on the model-assisted approach, which 
can give consistent treatment effect estimation under arbitrary model misspecification, 
without requiring
any specific ways for parameter estimation. 
Given general adjustment functions $h_z(\bs{x}; \bs{\theta})$ for $z=0,1$ and $\theta \in \Theta$, 
let $Y_i(z; \bs{\theta}) = Y_i(z) - h_z( \bs{X}_i; \bs{\theta} ) + \E_N \{ h_z(\bs{X}_i; \bs{\theta}) \}$ 
be an adjusted potential outcome for unit $i$. 
Importantly, 
the average of the adjusted potential outcomes is the same as that of the original ones, 
i.e., 
$\E_N\{Y_i(z; \bs{\theta})\} = \E_N\{Y_i(z)\}$ for $z=0,1$. 
This is crucial for the consistency of the model-assisted estimator, and its idea dates back to regression estimates in survey sampling
\citep{cochran1977sampling}. 
For each unit $i$, let 
$Y_i(\bs{\theta}) = Z_i Y_i(1;\bs{\theta}) + (1-Z_i) Y_i(0;\bs{\theta})$
be the observed adjusted outcome. 
Under the CRE, 
for any fixed $\bs{\theta}$ and $z=0,1$, 
the average observed adjusted outcome 
$
\bar{Y}_z(\bs{\theta})$$
\equiv 
\E_{N}^z \{Y_i(\bs{\theta})\}
$ 
under treatment arm $z$
is 
unbiased 
for the average potential outcome $\bar{Y}(z)$. 
This motivates the following model-assisted estimator: 
\begin{align}\label{eq:tau_hat_MA}
	\hat{\tau}_{g\MA} & = 
	g( \bar{Y}_1(\hat{\bs{\theta}}_N) ) - 
	g( \bar{Y}_0(\hat{\bs{\theta}}_N) )
    = 
	g\big( \E_N^1 \{ Y_i(\hat{\bs{\theta}}_N) \} \big)
	-
	g\big( \E_N^0\{ Y_i(\hat{\bs{\theta}}_N) \} \big),
\end{align} 
where $\hat{\bs{\theta}}_N$ is either predetermined or estimated from the data, such as M- or Z-estimators from \eqref{eq:M_hat} or \eqref{eq:Z_hat}.

We now study the asymptotic distribution of $\hat{\tau}_{g\MA}$. 
The key is to prove  $\hat{\tau}_{g\MA}$ is asymptotically equivalent to the one in \eqref{eq:tau_hat_MA} with $\hat{\bs{\theta}}_N$ replaced by its probability limit $\bs{\theta}_N$. 
Below we introduce some conditions on the smoothness of adjustment functions and boundedness of some finite population quantities. 
For $z=0,1$,  
define   
$
\Delta(h_{zi}, \theta_N, \delta)
\equiv 
\sup _{\bs\theta \in \Theta:\|\bs\theta-\bs\theta_N\| \leq \delta}
\| h_z(\bs{X}_i,\bs\theta)- h_z( \bs{X}_i, \bs\theta_N )\|,  
$
and analogously $\Delta(\dot{h}_{zi}, \theta_N, \delta)$.

\begin{condition}
[\rev Smoothness of adjustment]
\label{cond:ma_continuous}
	Both $h_1(\bs{x}, \bs{\theta})$ and $h_0(\bs{x}, \bs{\theta})$ are continuously differentiable in $\bs{\theta} \in \Theta$ for  $\bs{x} \in \mathcal{X}$, 
	and 
	$g(y)$ is continuously differentiable in $y \in \mathcal{Y}$.
	Moreover, 
	for $z=0,1$, as $\delta \rightarrow 0$, 
	both $\sup_N \E_N\{ \Delta(\dot{h}_{zi}, \theta_N, \delta) \}$ and $\sup_N \E_N\{ \Delta^2(h_{zi}, \theta_N, \delta) \}$ converge to zero. 
\end{condition}

\begin{condition}
[\rev Boundedness of adjusted outcomes]
\label{cond:ma_stable}
    $\mathcal{B}({\bs\theta}_N, \varepsilon_0 ) \subset \Theta$ for some $\varepsilon_0 > 0$ and 
	all $N$, and%
    \vspace{-0.08in}
	\begin{enumerate}[label = (\roman*)]
		\item for all $N$, both $r_1$ and $r_0$ are bounded away from $0$ and $1$, 
        and
		both $\bar Y(1)$ and $\bar Y(0)$ are bounded; 
		\item the finite population variances 
        of $Y_i(1; \bs{\theta}_N)$ and $Y_i(0; \bs{\theta}_N)$ are bounded for all $N$,  
		and, as $N\rightarrow \infty$, $ N^{-1} \max _{1 \leq j \leq N}[ Y_j(z; \bs{\theta}_N)- \E_N\{Y_i(z; \bs{\theta}_N)\} ]^{2} \rightarrow 0$ for $z=0,1$; 
		\item as $N\rightarrow \infty$, 
		$N^{-1}\Cov_N\{\dot h_z(\bs{X}_i,{\bs\theta}_N)\}\rightarrow \bs{0}$ for $z=0,1$. 
		
	\end{enumerate}
\end{condition}

\begin{theorem}\label{thm:clt_mae}
	If $\hat{\bs{\theta}}_N-\bs{\theta}_N=O_{\Pr}(N^{-1/2})$, 
	 and Conditions \ref{cond:ma_continuous} and \ref{cond:ma_stable} hold,  
	then 
	$\sqrt{N}( \hat{\tau}_{g\MA} - \tau_{g} ) \lpc \mathcal{N}(0, V_{g\MA})$, where, 
    with $\dot{g}$ being the derivative of $g$, 
	\begin{align}\label{eq:V_g_MA}
		V_{g\MA} 
		& = 
		r_1^{-1} \Var_N \big\{ \dot g(\bar{Y}(1)) Y_i(1; \bs{\theta}_N) \big\} + 
		r_0^{-1} \Var_N \big\{ \dot g(\bar{Y}(0)) Y_i(0; \bs{\theta}_N) \big\} 
		\nonumber
		\\
		& \quad \ 
		- 
		\Var_N \big\{ \dot g(\bar{Y}(1)) Y_i(1; \bs{\theta}_N) - \dot g(\bar{Y}(0)) Y_i(0; \bs{\theta}_N)  \big\}. 
	\end{align}
\end{theorem}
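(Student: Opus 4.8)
The plan is to first replace $\hat{\bs{\theta}}_N$ by its probability limit $\bs{\theta}_N$ inside $\hat{\tau}_{g\MA}$, then linearize $g$ by the delta method, and finally invoke the finite population central limit theorem of \citet{fpclt2017} applied to fixed adjusted outcomes. Writing
\begin{align*}
\bar{Y}_z(\bs{\theta}) = \E_N^z\{Y_i(z)\} - \E_N^z\{ h_z(\bs{X}_i;\bs{\theta}) \} + \E_N\{ h_z(\bs{X}_i;\bs{\theta}) \},
\end{align*}
only the last two terms depend on $\bs{\theta}$, so that $\bar{Y}_z(\hat{\bs{\theta}}_N) - \bar{Y}_z(\bs{\theta}_N) = \E_N\{ \Delta_i \} - \E_N^z\{ \Delta_i \}$ with $\Delta_i = h_z(\bs{X}_i;\hat{\bs{\theta}}_N) - h_z(\bs{X}_i;\bs{\theta}_N)$. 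The goal of the first step is to show $\sqrt{N}\{ \bar{Y}_z(\hat{\bs{\theta}}_N) - \bar{Y}_z(\bs{\theta}_N) \} = o_{\Pr}(1)$ for $z=0,1$.

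First I would Taylor expand $\Delta_i$ in $\bs{\theta}$ around $\bs{\theta}_N$, giving a linear term $\dot h_z(\bs{X}_i;\bs{\theta}_N)^\top(\hat{\bs{\theta}}_N - \bs{\theta}_N)$ plus a remainder whose norm is at most $\Delta(\dot h_{zi}, \bs{\theta}_N, \|\hat{\bs{\theta}}_N - \bs{\theta}_N\|)\,\|\hat{\bs{\theta}}_N - \bs{\theta}_N\|$. Averaging, the difference becomes
\begin{align*}
\bar{Y}_z(\hat{\bs{\theta}}_N) - \bar{Y}_z(\bs{\theta}_N) = \big[ \E_N\{ \dot h_z(\bs{X}_i;\bs{\theta}_N) \} - \E_N^z\{ \dot h_z(\bs{X}_i;\bs{\theta}_N) \} \big]^\top (\hat{\bs{\theta}}_N - \bs{\theta}_N) + \text{(remainder)}.
\end{align*}
The remainder is controlled by the smoothness Condition \ref{cond:ma_continuous}: on the event $\{ \|\hat{\bs{\theta}}_N - \bs{\theta}_N\| \le \delta \}$ it is bounded by $\E_N^z\{ \Delta(\dot h_{zi}, \bs{\theta}_N, \delta) \}\cdot \|\hat{\bs{\theta}}_N - \bs{\theta}_N\|$, and since $\sup_N \E_N\{ \Delta(\dot h_{zi}, \bs{\theta}_N, \delta) \}\to 0$ as $\delta\to0$ (with a Markov bound to pass from the population average $\E_N$ to the sample average $\E_N^z$, valid because $\Delta\ge0$ is unbiasedly estimated under the CRE), combining with $\hat{\bs{\theta}}_N - \bs{\theta}_N = O_{\Pr}(N^{-1/2})$ makes it $o_{\Pr}(N^{-1/2})$. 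The crux is the leading factor: $\E_N\{ \dot h_z \} - \E_N^z\{ \dot h_z \}$ is the sampling error of a fixed finite population mean under the CRE, with variance of order $N^{-1}\Cov_N\{ \dot h_z(\bs{X}_i;\bs{\theta}_N) \}$, which Condition \ref{cond:ma_stable}(iii) forces to $o_{\Pr}(1)$; multiplying by $O_{\Pr}(N^{-1/2})$ yields $o_{\Pr}(N^{-1/2})$. I expect this to be the main obstacle, as it is precisely where the $\sqrt{N}$-consistency of $\hat{\bs{\theta}}_N$ must beat the randomization-induced fluctuation of the adjustment gradient.

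With the equivalence established, it remains to analyze the fixed-$\bs{\theta}_N$ statistic $g(\bar{Y}_1(\bs{\theta}_N)) - g(\bar{Y}_0(\bs{\theta}_N))$. Because $\E_N\{ Y_i(z;\bs{\theta}_N) \} = \bar{Y}(z)$, the quantities $\bar{Y}_z(\bs{\theta}_N) = \E_N^z\{ Y_i(z;\bs{\theta}_N) \}$ are arm-$z$ sample means of fixed adjusted outcomes with population mean $\bar{Y}(z)$, so the finite population CLT of \citet{fpclt2017} under Condition \ref{cond:ma_stable}(i)--(ii) yields a joint Normal limit for $\sqrt{N}(\bar{Y}_1(\bs{\theta}_N) - \bar{Y}(1),\, \bar{Y}_0(\bs{\theta}_N) - \bar{Y}(0))$. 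Since $g$ is continuously differentiable and each $\bar{Y}_z(\bs{\theta}_N) - \bar{Y}(z) = O_{\Pr}(N^{-1/2})$ with $\bar{Y}(z)$ bounded, the delta method gives
\begin{align*}
g(\bar{Y}_1(\bs{\theta}_N)) - g(\bar{Y}_0(\bs{\theta}_N)) - \tau_g = \dot g(\bar{Y}(1))\{ \bar{Y}_1(\bs{\theta}_N) - \bar{Y}(1) \} - \dot g(\bar{Y}(0))\{ \bar{Y}_0(\bs{\theta}_N) - \bar{Y}(0) \} + o_{\Pr}(N^{-1/2}).
\end{align*}
The leading term is the CRE sampling error of the fixed populations $\tilde a_i = \dot g(\bar{Y}(1)) Y_i(1;\bs{\theta}_N)$ and $\tilde b_i = \dot g(\bar{Y}(0)) Y_i(0;\bs{\theta}_N)$, i.e.\ $(\E_N^1\{\tilde a_i\} - \E_N\{\tilde a_i\}) - (\E_N^0\{\tilde b_i\} - \E_N\{\tilde b_i\})$, whose $\sqrt{N}$-scaled variance is the Neyman form $r_1^{-1}\Var_N\{\tilde a_i\} + r_0^{-1}\Var_N\{\tilde b_i\} - \Var_N\{\tilde a_i - \tilde b_i\}$, which is exactly $V_{g\MA}$ in \eqref{eq:V_g_MA}.

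Finally I would assemble the three pieces through Slutsky's lemma: the first step, together with the mean value theorem and the continuity of $\dot g$, transfers the $\bar{Y}_z$ equivalence through $g$ and shows that $\hat{\tau}_{g\MA}$ and $g(\bar{Y}_1(\bs{\theta}_N)) - g(\bar{Y}_0(\bs{\theta}_N))$ differ by $o_{\Pr}(N^{-1/2})$, while the second and third steps give the CLT for the latter. Throughout, the convergence should be stated in the L\'evy--Prokhorov sense $\lpc$ rather than ordinary weak convergence, since the variance sequence $V_{g\MA}$ may vary with $N$ and need not be bounded away from zero; this is the only additional care required, and it matches the formulation already used in Theorem \ref{thm:theta_clt}.
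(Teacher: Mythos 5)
Your proposal is correct and follows essentially the same route as the paper's proof: establish the asymptotic equivalence $\bar{Y}_z(\hat{\bs{\theta}}_N) = \bar{Y}_z(\bs{\theta}_N) + o_{\Pr}(N^{-1/2})$ by expanding $h_z$ in $\bs{\theta}$ and using Condition \ref{cond:ma_stable}(iii) to kill the gradient sampling error (the paper's Lemma for the model-assisted equivalence does this via the mean value theorem at an intermediate point, yours via a Taylor expansion at $\bs{\theta}_N$ with the modulus-of-continuity remainder — the same mechanism), then apply the finite population CLT of \citet{fpclt2017} to the fixed adjusted outcomes, linearize $g$ by the delta method, and assemble via Slutsky, with the $\lpc$ formulation handled as in Theorem \ref{thm:theta_clt} by passing to subsequences along which the finite population moments converge. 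The only detail left implicit in your write-up, which the paper makes explicit, is that the subsequence (Bolzano--Weierstrass) reduction is also what justifies evaluating $\dot g$ at the varying points $\bar{Y}(z)$ in the delta step; since you already invoke the Theorem \ref{thm:theta_clt} device, this is a presentational rather than substantive gap.
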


Unlike the model-imputed estimator whose consistency requires \eqref{eq:cond_MI_cons}, 
Theorem \ref{thm:clt_mae}
does not impose any condition on the parameter estimation and 
works for general adjustment functions $h_z$s.  
When $\hat{\theta}_N$ is a Z-estimator from an estimating equation satisfying \eqref{eq:cond_MI_cons}, 
the model-imputed and model-assisted estimators are asymptotically equivalent.
In addition,
we can use the model-assisted approach to infer more general estimands of form $\nu(\bar{Y}(1), \bar{Y}(0))$. 
We relegate the details to the supplementary material.

Both the model-imputed estimator with parameter estimation satisfying \eqref{eq:cond_MI_cons} and the model-assisted estimator fall in the class of Oaxaca-Blinder estimators considered by \citet{GuoBasse21}. 
Here we distinguish these two estimators based on the difference in their construction, and highlight the inconsistency of the general model-imputed estimator.
Moreover, we introduce the model-assisted approach through the lens of adjusted potential outcomes, which can not only ease the understanding of its consistency and robustness but also facilitate the study on optimal adjustment discussed shortly in \S \ref{sec:optimal_loss_MA}.

We finally 
consider variance estimation. 
Similar to 
Theorem \ref{thm:m_est_variance_estimate}, since we cannot observe both potential outcomes for any unit, we generally cannot consistently estimate $V_{g\MA}$, particularly the last term in \eqref{eq:V_g_MA}. 
Instead, we consider 
a conservative variance estimator
$
	\hat{V}_{g\MA} 
	=
	r_1^{-1} \Var_N^1 \{ \dot g(\bar{Y}_1(\hat{\theta}_N)) Y_i(\hat{\bs{\theta}}_N) \} + 
	r_0^{-1} 
	\Var_N^0 \{ \dot g(\bar{Y}_0(\hat{\theta}_N)) Y_i(\hat{\bs{\theta}}_N) \}, 
$ 
which is a sample analogue of $V_{g\MA}$ excluding the last unidentifiable term.

\begin{theorem}\label{thm:MA_ci}
	If $\hat{\bs\theta}_N-{\bs\theta}_N=O_\Pr(N^{-1/2})$
	and 
	Conditions \ref{cond:ma_continuous} and \ref{cond:ma_stable} hold, 
	then 
    {\rev $\hat{V}_{g\MA} =  \tilde{V}_{g\MA}+ o_{\Pr}(1)$}, with 
	\begin{align}\label{eq:Vtilde_gMA}
	    \tilde{V}_{g\MA} = r_1^{-1} \Var_N \big\{ \dot g(\bar{Y}(1)) Y_i(1; \bs{\theta}_N) \big\} + 
		r_0^{-1} \Var_N \big\{ \dot g(\bar{Y}(0)) Y_i(0; \bs{\theta}_N) \big\} 
		\ge V_{g\MA}. 
	\end{align}
\end{theorem}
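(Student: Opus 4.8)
My plan is to treat the two assertions of the theorem separately. The inequality $\tilde{V}_{g\MA} \ge V_{g\MA}$ is immediate from \eqref{eq:V_g_MA} and \eqref{eq:Vtilde_gMA}: the two expressions differ precisely by the term $\Var_N\{\dot g(\bar Y(1)) Y_i(1;\bs{\theta}_N) - \dot g(\bar Y(0)) Y_i(0;\bs{\theta}_N)\}$, which is a finite-population variance and hence nonnegative; it is exactly the unidentifiable cross-arm term that $\hat V_{g\MA}$ omits. The substance of the theorem is therefore the probabilistic equivalence $\hat V_{g\MA} - \tilde V_{g\MA} = o_{\Pr}(1)$, and since both quantities are sums of an arm-$1$ and an arm-$0$ contribution, I would prove the equivalence one arm at a time.

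Fix an arm $z$. Because $\dot g(\bar Y_z(\hat{\bs{\theta}}_N))$ is constant across units, the arm-$z$ summand of $\hat V_{g\MA}$ factors as $r_z^{-1}[\dot g(\bar Y_z(\hat{\bs{\theta}}_N))]^2\,\Var_N^z\{Y_i(\hat{\bs{\theta}}_N)\}$, while the corresponding summand of $\tilde V_{g\MA}$ is $r_z^{-1}[\dot g(\bar Y(z))]^2\,\Var_N\{Y_i(z;\bs{\theta}_N)\}$. I would establish two factor-wise limits and then combine them by Slutsky, using that $\Var_N\{Y_i(z;\bs{\theta}_N)\}$ is bounded by Condition \ref{cond:ma_stable}(ii) and that the squared multiplier is $O_{\Pr}(1)$. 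For the multiplier I would show $\bar Y_z(\hat{\bs{\theta}}_N) \convergep \bar Y(z)$ and then invoke continuity of $\dot g$ from Condition \ref{cond:ma_continuous}. Writing $\bar Y_z(\hat{\bs{\theta}}_N) - \bar Y(z) = \{\bar Y_z(\hat{\bs{\theta}}_N) - \bar Y_z(\bs{\theta}_N)\} + \{\bar Y_z(\bs{\theta}_N) - \bar Y(z)\}$, the second bracket vanishes in probability because $\bar Y_z(\bs{\theta}_N) = \E_N^z\{Y_i(z;\bs{\theta}_N)\}$ is unbiased for $\bar Y(z) = \E_N\{Y_i(z;\bs{\theta}_N)\}$ with variance $O(N^{-1})$ under the CRE, and the first bracket is controlled by the modulus $\E_N^z\{\Delta(h_{zi},\bs{\theta}_N,\|\hat{\bs{\theta}}_N-\bs{\theta}_N\|)\}$ together with the consistency of $\hat{\bs{\theta}}_N$.

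The crux is the sample-variance factor, for which I would prove $\Var_N^z\{Y_i(\hat{\bs{\theta}}_N)\} \convergep \Var_N\{Y_i(z;\bs{\theta}_N)\}$. Setting $B_i = Y_i(z;\hat{\bs{\theta}}_N) - Y_i(z;\bs{\theta}_N)$, which equals $Y_i(\hat{\bs{\theta}}_N) - Y_i(z;\bs{\theta}_N)$ for units with $Z_i = z$, I would expand $\Var_N^z\{Y_i(z;\bs{\theta}_N) + B_i\}$ into the leading variance $\Var_N^z\{Y_i(z;\bs{\theta}_N)\}$, a cross term $2\Cov_N^z\{Y_i(z;\bs{\theta}_N), B_i\}$, and the perturbation variance $\Var_N^z\{B_i\}$. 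The leading term converges to the finite-population variance $\Var_N\{Y_i(z;\bs{\theta}_N)\}$ by the finite-population law of large numbers for sample variances, available under the bounded-variance and negligible-maximum requirements of Condition \ref{cond:ma_stable}(ii).

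I expect the perturbation variance to be the main obstacle, because $B_i$ is data-dependent through $\hat{\bs{\theta}}_N$. Here I would exploit shift-invariance of the variance: since $Y_i(z;\bs{\theta})$ differs from $-h_z(\bs{X}_i;\bs{\theta})$ only by a unit-independent constant, $\Var_N^z\{B_i\} = \Var_N^z\{h_z(\bs{X}_i;\hat{\bs{\theta}}_N) - h_z(\bs{X}_i;\bs{\theta}_N)\} \le \E_N^z\{\Delta^2(h_{zi},\bs{\theta}_N,\delta)\}$ on the event $\{\|\hat{\bs{\theta}}_N-\bs{\theta}_N\| \le \delta\}$. By consistency of $\hat{\bs{\theta}}_N$ this event has probability tending to one for each fixed $\delta$, while $\E_N^z\{\Delta^2(h_{zi},\bs{\theta}_N,\delta)\}$ is nonnegative with finite-population mean $\E_N\{\Delta^2(h_{zi},\bs{\theta}_N,\delta)\}$ that Condition \ref{cond:ma_continuous} forces to be uniformly small as $\delta \to 0$; a Markov bound together with a standard $\varepsilon$--$\delta$ argument then gives $\Var_N^z\{B_i\} = o_{\Pr}(1)$. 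The cross term is $o_{\Pr}(1)$ by Cauchy--Schwarz, since the leading variance is $O_{\Pr}(1)$ and $\Var_N^z\{B_i\} = o_{\Pr}(1)$. Assembling the two arms then yields $\hat V_{g\MA} = \tilde V_{g\MA} + o_{\Pr}(1)$. Much of the fixed-$\bs{\theta}_N$ bookkeeping here overlaps with the linearization already established in the proof of Theorem \ref{thm:clt_mae}, which could be imported to shorten the argument.
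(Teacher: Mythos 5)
Your proposal is correct and takes essentially the same route as the paper: the paper also argues arm by arm, obtaining $\dot g(\bar{Y}_z(\hat{\bs\theta}_N)) = \dot g(\bar{Y}(z)) + o_{\Pr}(1)$ from the consistency $\bar{Y}_z(\hat{\bs\theta}_N) = \bar{Y}(z)+o_{\Pr}(1)$ already established in the proof of Theorem \ref{thm:clt_mae}, and proving $\Var_N^z\{Y_i(\hat{\bs\theta}_N)\} = \Var_N\{Y_i(z;\bs{\theta}_N)\} + o_{\Pr}(1)$ via Lemmas \ref{lemma:consist_of_var_e} and \ref{lemma:consist_of_var_e_est_para}, whose proofs contain exactly your decomposition (leading sample variance handled by the finite-population law of large numbers, cross terms by Cauchy--Schwarz, and the data-dependent perturbation by the $\Delta^2$ modulus together with a Markov bound and consistency of $\hat{\bs\theta}_N$). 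The only difference is bookkeeping: the paper first reduces to subsequences along which the bounded finite-population quantities ($\bar{Y}(z)$, $\Var_N\{Y_i(z;\bs{\theta}_N)\}$, etc.) actually converge, a device your argument implicitly needs when invoking continuity of $\dot g$ at the moving point $\bar{Y}(z)$ and when applying the sample-variance law of large numbers.
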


\subsection{Model-assisted approach: optimal parameter estimation}\label{sec:optimal_loss_MA}

From Theorem \ref{thm:clt_mae}, 
the model-assisted estimator is always consistent for the average effect $\tau_g$, regardless of the value of $\hat{\bs{\theta}}_N$ or its  probability limit $\bs{\theta}_N$.
It is natural to ask: what is the best choice of $\bs{\theta}_N$? 
Given the forms of $h_1$ and $h_0$, 
asymptotically, 
the optimal $\bs{\theta}_N$ 
is the one minimizing the asymptotic variance $V_{g\MA}$. 
However, 
such an optimal choice may not be achievable,  
since 
there may be no consistent estimation for $V_{g\MA}$ 
as discussed in \S \ref{sec:MA_consist}. 
Instead, we will consider 
$\bs{\theta}_N$ that lead to the smallest estimated variance for $\hat{\tau}_{g\MA}$.

Below we assume that $h_1$ and $h_0$ do not share any common parameters, which can be enforced by the analyzer. 
To minimize $\tilde{V}_{g\MA}$ in \eqref{eq:Vtilde_gMA}, it suffices to minimize both 
$\Var_N \{ Y_i(1; \bs{\theta}) \}$ and $\Var_N \{ Y_i(0; \bs{\theta}) \}$ simultaneously. 
Note that the model-assisted estimator $\hat{\tau}_{g\MA}$ in \eqref{eq:tau_hat_MA} is invariant under any constant shift of the adjustment functions $h_1$ and $h_0$, and  
$(N-1)\Var_N \{ Y_i(z; \bs{\theta}) \} = \min_{\kappa_z} \sum_{i=1}^N \{Y_i(z) - \kappa_z -  h_z(\bs{x}; \bs{\theta})\}^2$ for $z=0,1$.
We can always augment $h_1$ and $h_0$ with intercept terms if they do not have, 
under which the optimal $\theta$ is achieved with 
squared loss functions, as summarized in the following theorem.

\begin{theorem}\label{thm:ma_optimal}
	Consider adjustment functions $h_1(\bs{x}; \bs{\theta})$ and $h_0(\bs{x}; \bs{\theta})$ augmented with intercept terms and depending on disjoint subsets of the parameter $\bs{\theta}$. 
	The
	probability limit $\tilde{V}_{g\MA}$ in \eqref{eq:Vtilde_gMA} of the conservative variance estimator 
	achieves its minimum at $\bs{\theta} = \bs{\theta}_N$, 
    which minimizes the expected value of the empirical risk function in \eqref{eq:M_hat} with squared losses 
	$\loss_z(y, \bs{x}; \bs{\theta}) = \{ y - h_z(\bs{x}; \bs{\theta}) \}^2$ for $z=0,1$.  
	Consequently, 
	under Conditions \ref{cond:population_minimum}--\ref{cond:ma_stable}, 
    the model-assisted estimator $\hat{\tau}_{g\MA}$ in \eqref{eq:tau_hat_MA} with $\hat{\bs{\theta}}_N$, obtained from minimizing the empirical risk function $\hat{M}_N(\bs{\theta})$ in \eqref{eq:M_hat} with squared losses, achieves the optimal estimated precision. 
\end{theorem}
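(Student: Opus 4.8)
The plan is to reduce the probability limit $\tilde{V}_{g\MA}$ of the conservative variance estimator to a weighted sum of two finite-population variances of the adjusted potential outcomes, argue that squared losses minimize each variance, and then transport this population-level optimality to the estimated precision via the earlier asymptotic results. First I would observe that $\dot g(\bar{Y}(z))$ depends only on fixed finite-population quantities, hence is a constant that factors out of the variance. Thus from \eqref{eq:Vtilde_gMA},
$$
\tilde{V}_{g\MA} = r_1^{-1}\{\dot g(\bar{Y}(1))\}^2 \Var_N\{Y_i(1;\bs{\theta}_N)\} + r_0^{-1}\{\dot g(\bar{Y}(0))\}^2 \Var_N\{Y_i(0;\bs{\theta}_N)\},
$$
a sum of two terms with fixed nonnegative coefficients. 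Because $h_1$ and $h_0$ depend on disjoint coordinate blocks of $\bs{\theta}$, the two variance terms are functions of disjoint parameters, so minimizing the weighted sum is equivalent to minimizing each variance $\Var_N\{Y_i(z;\bs{\theta})\}$ separately over its own block. This decoupling is what lets a single estimation scheme be simultaneously optimal for both arms.

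\textbf{Linking variance to squared loss through the intercept.}
Next I would use that the adjustment enters $Y_i(z;\bs{\theta}) = Y_i(z) - h_z(\bs{X}_i;\bs{\theta}) + \E_N\{h_z(\bs{X}_i;\bs{\theta})\}$ only up to a common shift, so centering gives $\Var_N\{Y_i(z;\bs{\theta})\} = \Var_N\{Y_i(z) - h_z(\bs{X}_i;\bs{\theta})\}$, which is invariant to the intercept of $h_z$. Writing $h_z = \beta_z + \tilde h_z$ and invoking the identity $(N-1)\Var_N\{Y_i(z) - \tilde h_z\} = \min_{\kappa_z}\sum_{i=1}^N\{Y_i(z)-\kappa_z-\tilde h_z(\bs{X}_i;\bs{\theta})\}^2$, I would profile the squared-loss risk $\E\{\hat{M}_N(\bs{\theta})\} = r_1\E_N\{(Y_i(1)-h_1)^2\} + r_0\E_N\{(Y_i(0)-h_0)^2\}$ over each intercept $\beta_z$ (which plays the role of $\kappa_z$, since $h_z$ is augmented with an intercept). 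This shows the minimizer of $\E\{\hat{M}_N(\bs{\theta})\}$ with squared losses has non-intercept components equal to those minimizing $\Var_N\{Y_i(z;\bs{\theta})\}$, while the freely-determined intercepts are irrelevant to $\tilde{V}_{g\MA}$ by shift invariance. Hence the squared-loss population minimizer $\bs{\theta}_N$ minimizes $\tilde{V}_{g\MA}$.

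\textbf{From population optimality to estimated precision.}
For the ``consequently'' claim I would note that squared losses yield a Z-estimating equation ($\psi_z = \dot{\loss}_z$), so under Conditions \ref{cond:population_minimum}--\ref{cond:ma_stable} Theorem \ref{thm:theta_clt} gives $\hat{\bs{\theta}}_N - \bs{\theta}_N = O_{\Pr}(N^{-1/2})$ for the squared-loss estimator, whence Theorem \ref{thm:MA_ci} yields $\hat{V}_{g\MA} = \tilde{V}_{g\MA} + o_{\Pr}(1)$ at this $\bs{\theta}_N$. Since any competing loss has a probability limit whose $\tilde{V}_{g\MA}$ is no smaller, and each scheme's $\hat{V}_{g\MA}$ converges to its own $\tilde{V}_{g\MA}$, the squared-loss estimator attains the smallest limiting estimated variance, i.e.\ optimal estimated precision.

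\textbf{Main obstacle.}
The delicate step is the intercept bookkeeping: the expected risk uses the $N^{-1}$ normalization of $\E_N$ whereas $\tilde{V}_{g\MA}$ uses the $(N-1)^{-1}$ normalization of $\Var_N$, and the squared-loss objective is not literally a variance until the intercept is profiled out. I must verify that profiling the intercept aligns the remaining minimizers exactly (the $(N-1)/N$ factor being an irrelevant positive constant), and confirm that the $\bs{\theta}_N$ so obtained is precisely the root of the population estimating equation $\Psi_N$ for the squared loss, so that the $\bs{\theta}_N$ entering $\tilde{V}_{g\MA}$ coincides with the probability limit to which $\hat{\bs{\theta}}_N$ converges when Theorems \ref{thm:theta_clt} and \ref{thm:MA_ci} are applied. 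A minor side point is the degenerate case $\dot g(\bar{Y}(z))=0$, which only drops a term and does not affect the argument since squared losses minimize each variance irrespective of the nonnegative weights.
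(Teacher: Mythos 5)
Your proof is correct and takes essentially the same route as the paper's: the disjoint-parameter decoupling of the two variance terms, the shift-invariance of $\Var_N\{Y_i(z;\bs{\theta})\}$ with respect to intercepts, and the identity expressing $(N-1)\Var_N\{Y_i(z;\bs{\theta})\}$ as the squared-loss criterion profiled over the intercept are exactly the ingredients of the paper's chain of inequalities (the paper's proof writes $h_z(\bs{x};\bs{\theta}) = \alpha_z + \tilde{h}_z(\bs{x};\beta_z)$, uses the first-order condition $\E_N\{Y_i(z)-h_z(\bs{X}_i;\bs{\theta}_N)\}=0$, and bounds $\Var_N\{Y_i(z;\bs{\theta})\} \ge \frac{N}{N-1}\min_{\alpha_z,\beta_z}\E_N[\{Y_i(z)-\alpha_z-\tilde{h}_z(\bs{X}_i;\beta_z)\}^2] = \Var_N\{Y_i(z;\bs{\theta}_N)\}$, which is your profiling argument). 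Your explicit treatment of the ``consequently'' step via Theorems \ref{thm:theta_clt} and \ref{thm:MA_ci} merely spells out what the paper leaves implicit, and your attention to the $(N-1)/N$ factor and the degenerate case $\dot g(\bar{Y}(z))=0$ are correct but inessential refinements.
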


From Theorem \ref{thm:ma_optimal}, 
the generally nonlinear least squares estimator for $\theta$ 
can be more preferable than usual maximum likelihood estimator based on some working models. 
Moreover, 
if there exists a value of $\theta$ such that both $h_1(\bs{x}; \bs{\theta})$ and $h_0(\bs{x}; \bs{\theta})$ are constants that do not vary with $x$, then the optimal model-assisted estimator from Theorem \ref{thm:ma_optimal} will be non-inferior to the unadjusted estimator.
Theorem \ref{thm:ma_optimal} focuses on choice of parameter given the form of adjustment functions;
the choice of the latter can also be critical.

\section{Individual Treatment Effects}\label{sec:ite}

We now study how to robustly utilize models to understand individual treatment effects and their heterogeneity. 
As discussed in \S \ref{sec:base_imp}, 
we often impose models on the relationship between potential outcomes and covariates. We can then use the fitted models to estimate individual treatment effects. 
However, the interpretation can be difficult under model misspecification. 
We relegate the details to the supplementary material. 
Below we propose an alternative strategy focusing directly on individual treatment effects.

We consider directly imposing \textit{working} models on the 
the relationship between individual effects and covariates. 
This has the advantage that, even under model misspecification, the fitted model still provides meaningful approximations for individual effects. 
However, 
because none of the individual effects are observed, the model fitting may not be feasible based on observed data, or the model may not be identifiable. 

To overcome the challenge, we restrict to 
the following exponential dispersion family of models:  
\begin{align}\label{eq:model_tau}
    \text{working model of }\tau \mid x 
    \ \sim \ 
    f(\tau, x; \theta, \phi) 
    = 
    \exp\left\{ \frac{\tau \cdot v(x; \theta) - u(x;\theta)}{a(\bs{\phi})} + c(\tau, x, \bs{\phi})\right\}
\end{align}
with $a(\phi)>0$, 
which includes many generalized linear models as special instances. 
We are interested in 
\begin{align*}%
    (\theta_N, \phi_N) & = 
    \argmax_{(\theta, \phi)} \E_N [ \log f(\tau_i, X_i; \theta, \phi) ]
    = 
    \argmax_{\theta, \phi} \E_N \Big\{  
    \frac{\tau_i \cdot v(X_i; \theta)  - u(X_i;\theta)}{a(\phi)} + c(\tau, x, \bs{\phi}) \Big\}. 
\end{align*} 
When the model is correctly specified, $f(\tau_i, X_i; \theta_N, \phi_N)$ essentially gives the conditional distribution of individual treatment effects given covariates. 
In general, when the model can be misspecified, 
similar to usual superpopulation inference \citep{White82}, $(\theta_N, \phi_N)$ can be understood as the limit of the quasi-maximum likelihood estimator, and 
the corresponding $f(\tau, x; \theta_N, \phi_N)$ can be viewed as the best distributional approximation among \eqref{eq:model_tau} for the relation between $\tau_i$s and $X_i$s. 
This can help us understand individual treatment effects and their heterogeneity with respect to the observed covariates.

We now discuss the identifiability and inference 
of $(\theta_N, \phi_N)$. 
In general, 
the dispersion parameter $\phi_N$ may not be identifiable from the observed data, since none of the $\tau_i$s are observed. However, $\theta_N$ is always identifiable. 
Assuming both $v$ and $u$ are differentiable over $\theta$, 
$\theta_N$ is equivalently the maximizer or root of 
\begin{align}\label{eq:M_Z_ind}
    M_N(\theta) = \E_N \{  \tau_i \cdot v(X_i; \theta)  - u(X_i;\theta) \}
    \ \text{ or } \ 
    \Psi_N(\theta) = \E_N \{ \tau_i \cdot \dot{v}(X_i; \theta) - \dot{u}(X_i;\theta) \}. 
\end{align}
Importantly, $M_N(\theta)$ and $\Psi_N(\theta)$ are linear in the $\tau_i$s, which is crucial for the identifiability of $\theta_N$ and explains our choice of models in \eqref{eq:model_tau}. 
This is because we can unbiasedly estimate the individual effect $\tau_i$ by $\hat{\tau}_i = Z_i Y_i/r_1 -  (1-Z_i) Y_i/r_0$, which implies the following 
intuitive unbiased estimator for $\Psi_N(\theta)$: 
\begin{align}\label{eq:Psi_ind_effect_est}
     \hat{\Psi}_N(\theta) & 
     = \E_N \{ \hat{\tau}_i \cdot \dot{v}(X_i; \theta) - \dot{u}(X_i;\theta) \}
     = r_1 \E_N^1 \{\psi_1(Y_i, \bs X_i;{\bs\theta})\}
	+r_0 \E_N^0 \{\psi_0 (Y_i, \bs X_i; {\bs\theta})\},
\end{align}
where $\psi_1(y,x;\theta) =  
 y/r_1 \cdot \dot{v}(x; \theta)  - \dot{u}(x;\theta)$ 
and 
$\psi_1(y,x;\theta) =  
 -y/r_0 \cdot \dot{v}(x; \theta)  - \dot{u}(x;\theta)$.
We can find the root of \eqref{eq:Psi_ind_effect_est} to obtain a Z-estimator $\hat{\theta}_N$.  
From 
\S \ref{sec:m_est}, 
$\hat{\theta}_N$ can be consistent for $\theta_N$ and asymptotically Normal, based on which we can further construct confidence intervals for $\theta_N$. 
As a side note, 
the unbiased estimators $\hat{\tau}_i$s for individual effects in \eqref{eq:Psi_ind_effect_est} could potentially be improved using an idea in \citet{Semenova23}; we briefly discuss it in the supplementary material and leave details for future study.

Our results differ from and generalize \citet{Tian2014}.  
First, we propose to model directly the relationship between individual effects and covariates, which can ease the understanding and interpretation of the fitted models. 
Second, we propose a general class of models for which the parameters are estimable or identifiable; the corresponding maximum likelihood estimation includes several approaches in \citet{Tian2014} as special cases. Third, we conduct finite population randomization-based inference. 

Below we give an illustrative example. 
Consider the Normal working model: 
$\tau\mid x \sim \mathcal{N}(\mu(x;\theta), \phi^2)$. In this case, $\theta_N$ from \eqref{eq:M_Z_ind} minimizes $\E_N [ \{\tau_i - u(X_i;\theta)\}^2 ]$. 
Thus, we can view $u(X_i;\theta_N)$s as the best approximation for the $\tau_i$s.
When $u(x;\theta)$ is linear in $x$, 
$u(X_i;\theta_N)$s becomes the linear projections of individual effects on covariates. 
This leads to the decomposition
$\Var_N(\tau_i) = \Var_N\{u(X_i;\theta_N)\} + \Var_N\{\tau_i - u(X_i;\theta_N)\}$
used by \citet{dingdecomposition2019} to access treatment effect heterogeneity. 
In the supplementary material, we consider also 
analyses for binary outcomes, under which $\tau_i\in \{-1, 0, 1\}$.

\section*{Supplementary material}
Supplementary material 
includes 
(1) details for uniform convergence and consistency under M- and Z-estimation, 
(2) details of model-based and model-imputed estimators, 
(3) linear adjustment using imputed potential outcomes \citep{Fogarty2020calibration},   
(4) simulation studies comparing various estimators for average effects, 
(5) proofs 
of all theorems and corollaries, 
and (6) additional remarks and technical details.

\bibliographystyle{plainnat}
\bibliography{model.bib}

\newpage
\setcounter{equation}{0}
\setcounter{section}{0}
\setcounter{figure}{0}
\setcounter{example}{0}
\setcounter{proposition}{0}
\setcounter{corollary}{0}
\setcounter{theorem}{0}
\setcounter{lemma}{0}
\setcounter{table}{0}
\setcounter{condition}{0}

\renewcommand {\theproposition} {A\arabic{proposition}}
\renewcommand {\theexample} {A\arabic{example}}
\renewcommand {\thefigure} {A\arabic{figure}}
\renewcommand {\thetable} {A\arabic{table}}
\renewcommand {\theequation} {A\arabic{equation}}
\renewcommand {\thelemma} {A\arabic{lemma}}
\renewcommand {\thesection} {A\arabic{section}}
\renewcommand {\thetheorem} {A\arabic{theorem}}
\renewcommand {\thecorollary} {A\arabic{corollary}}
\renewcommand {\thecondition} {A\arabic{condition}}
\renewcommand {\thermk} {A\arabic{rmk}}

\renewcommand {\thepage} {A\arabic{page}}
\setcounter{page}{1}

\begin{center}
	\bf \LARGE 
	Supplementary Material 
\end{center}

Appendix \ref{sec:model} introduces 
the setup for a general model-based analysis of treatment effects. 

Appendix \ref{sec:based} studies asymptotic theory for model-based approach.

Appendix \ref{sec:impute} studies asymptotic theory for model-imputed approach.

Appendix \ref{sec:conn_mi_ma} studies connection between model-imputed and model-assisted approach.

Appendix \ref{sec:ai} studies linear adjustment based on imputed potential outcomes or transformed covariates.

Appendix \ref{sec:ind_po} studies individual effects from working models on potential outcomes.

Appendix \ref{sec:simu} includes simulation studies. 

Appendices  \ref{sec:M_srs} and \ref{sec:clt_srs} study M- and Z-estimation under simple random sampling. 

Appendices \ref{sec:M_cre} and \ref{sec:clt_cre} study M- and Z-estimation under completely randomized experiments.

Appendix \ref{sec:var_cre} studies variance estimation under completely randomized experiments.

Appendix \ref{sec:proof_thm12_cor1} proves Theorems \ref{thm:theta_clt} and \ref{thm:m_est_variance_estimate} and Corollary \ref{cor:wald_ci_coverage}. 

Appendix \ref{sec:proof_thm345} proves Theorems \ref{thm:clt_mae}, \ref{thm:MA_ci} and \ref{thm:ma_optimal}.

Appendix \ref{sec:proof_thm_cor_A} proves Theorems \ref{thm:clt_mbe}, \ref{thm:clt_ibe} and \ref{thm:clt_ai} and Corollaries \ref{cor:glm_est_eqn}, \ref{cor:model_imp_consist} and \ref{cor:equi_i_a}.

Appendix \ref{sec:add_detail} gives additional remarks and technical details.

\section{Working models for potential outcomes and covariates and maximum likelihood approach}\label{sec:model}

\subsection{Model setup}

We discuss the setup for a general model-based analysis of treatment effects. 
In the following discussion, we will view the models as \textit{working} models instead of true data-generating models, unless otherwise noted. 
Typically, 
we will impose some statistical models on the distribution of potential outcomes given covariates:
\begin{align}\label{eq:model}
	Y(1)\mid \bs{X} \sim f_1 (y, \bs{X}; \bs{\theta}), \quad 
	Y(0)\mid \bs{X} \sim f_0 (y, \bs{X}; \bs{\theta}),
\end{align}
where $f_1(\cdot, \bs{x}; \bs{\theta})$ and $f_0(\cdot, \bs{x}; \bs{\theta})$ denote two generic probability densities that can depend on the covariate vector $\bs{x}$ and some (unknown) parameter vector $\bs{\theta}\in \Theta$. 
The experimental units 
$(Y_i(1), Y_i(0), \bs{X}_i)$'s are then assumed to be i.i.d.\  samples from a superpopulation satisfying  \eqref{eq:model}. 
Here we model both potential outcomes in \eqref{eq:model} using the same parameter $\bs{\theta}$, which can allow the two models to share same parameters.  
Note that we are considering a CRE where the treatment assignment is unconfounded \citep{pscore1983}.  
Hence, 
the observed outcome for each unit given its covariates follows one of the two potential outcome models in \eqref{eq:model}, depending on the treatment assignment, i.e., 
$$
Y_i \mid \bs{X}_i, Z_i \sim Z_i \cdot f_1 (y, \bs{X}; \bs{\theta}) + (1-Z_i) \cdot f_0 (y, \bs{X};  \bs{\theta}). 
$$

\subsection{Generalized linear models}
We will 
pay special attention to the class of generalized linear models  
{\lad with canonical links}, 
which has been widely studied and used \citep[see, e.g.,][]{nelder1989glm, agresti2015foundations}. 
Specifically, we assume that 
the densities of both potential outcomes given covariates belong in a certain exponential dispersion family  
of the following forms:
\begin{align}\label{eq:glm}
	Y(z) \mid \bs{x} & \sim f_z(y, \bs{x}; \bs{\theta}) = \exp\left\{ \frac{y \cdot (\alpha_z + \bs{\beta}_z^\top \bs{x}) - b_z(\alpha_z + \bs{\beta}_z^\top \bs{x})}{a_z(\bs{\phi}_z)} + c_z(y, \bs{\phi}_z)\right\}, 
\end{align}
where $\bs{\theta} = (\alpha_1, \alpha_0, \bs{\beta}_1, \bs{\beta}_0, \bs{\phi}_1, \bs{\phi}_0)$ and $z=0,1$. 
By the property of the exponential dispersion family, 
the conditional mean functions of the potential outcomes given covariates under model \eqref{eq:glm} 
are 
$
h_z(\bs{x}; \bs{\theta}) 
= \dot b_z ( \alpha_z + \bs{\beta}_z^\top \bs{x} ) 
$
for $z=0,1$. 
When $b_1(\cdot) = b_0(\cdot)$ equals some $b(\cdot)$, 
both potential outcome models 
will share the same canonical link function, denoted by $g = (\dot b)^{-1}$, 
where $\dot{b}$ denotes the first derivative of $b$. 
Consequently, 
$g(h_z(\bs{x}; \bs{\theta})) = \alpha_z + \bs{\beta}_z^\top \bs{x}$, 
and 
the treatment effect at the $g$-scale for a unit with covariate value $\bs{x}$ is often interpreted as 
\begin{align}\label{eq:diff_g_h_ind}
	g\left( h_1\left( \bs{x}; \bs{\theta} \right) \right)
	-
	g\left( h_0\left( \bs{x}; \bs{\theta} \right) \right)
	& 
	=
	\alpha_1 - \alpha_0 + \left( \bs{\beta}_1 - \bs{\beta}_0 \right)^\top \bs{x}. 
\end{align}
Sometimes it is assumed that both potential outcome models share the same coefficient for covariates, i.e., $\bs{\beta}_1 = \bs{\beta}_0$. This is often termed as no interaction between treatment and covariates. 
In the absence of interaction, 
\eqref{eq:diff_g_h_ind} reduces to a constant $\alpha_1 - \alpha_0$ that does not depend on the covariates, 
under which it is often interpreted as the constant treatment effect at the $g$-scale. 
For example, in practice, we often write the working model in terms of the observed outcome given treatment assignment and covariates, 
which has the following equivalent forms under \eqref{eq:glm}:
\begin{align*}
    & \quad \  Y\mid Z, X 
    \\
    & \sim 
    f_Z(y, \bs{x}; \bs{\theta}) = \exp\left\{ \frac{y \cdot (\alpha_Z + \bs{\beta}_Z^\top \bs{x}) - b(\alpha_Z + \bs{\beta}_Z^\top \bs{x})}{a_Z(\bs{\phi}_Z)} + c_Z(y, \bs{\phi}_Z)\right\}
    \\
    & = 
    \exp\Big\{ \frac{y \{ \alpha_0 + (\alpha_1 - \alpha_0) Z + \bs{\beta}_0^\top \bs{x} + (\beta_1 - \beta_0)^\top \bs{x} Z \} - b( \alpha_0 + (\alpha_1 - \alpha_0) Z + \bs{\beta}_0^\top \bs{x} + (\beta_1 - \beta_0)^\top \bs{x} Z )}{a_Z(\bs{\phi}_Z)} \Big\} \\
    & \quad \ \cdot \exp\{c_Z(y, \bs{\phi}_Z)\}.
\end{align*}
The coefficient of $Z$, $\alpha_1 - \alpha_0$, is often used to infer treatment effects; 
when there is no treatment-covariate interaction, i.e., $\beta_1 = \beta_0$, $\alpha_1 - \alpha_0$ is often interpreted as certain constant treatment effects.

Table \ref{table:glm} lists the forms of $h_1$, $h_0$ and $g$ for three commonly used generalized linear models {\lad with canonical links}, i.e., linear, logistic and Poisson regression models for continuous, binary and count outcomes, respectively.

\begin{table}[ht]
	\centering
	\caption{
	Three commonly used generalized linear models.
	The four columns show the model name, the distribution of potential outcome given covariates, the canonical link, and the conditional mean function of potential outcome given covariates, respectively. 
	}\label{table:glm}
	\resizebox{\columnwidth}{!}{%
		\begin{tabular}{llll}
			\toprule
			Model & Distribution & Canonical link & Mean function\\
			\midrule
			Linear  & $Y(z) \mid \bs{X} \sim \mathcal{N}(\alpha_z + \bs{\beta}_z^\top \bs{X}, \sigma_z^2)$ & $g = \id$ & $h_z(\bs{x}; \bs{\theta}) = \alpha_z + \bs{\beta}_z^\top \bs{x}$
			\\
			Logistic & $\logit\left\{ \Pr\left( Y(z) = 1 \mid \bs{X} \right) \right\} 
			= \alpha_z + \bs{\beta}_z ^\top \bs{X}$ & $g = \logit$ 
			& 
			$h_z(\bs{x}; \bs{\theta}) = \logit^{-1}( \alpha_z + \bs{\beta}_z^\top \bs{x} )$
			\\
			Poisson & $
            Y(z) \mid \bs{X} \sim\text{Poisson}(\exp(\alpha_z + \bs{\beta}_z^\top \bs{X}))$
            & $g = \log$ 
			& 
			$h_z(\bs{x}; \bs{\theta}) = \exp( \alpha_z + \bs{\beta}_z^\top \bs{x} )$
			\\
			\bottomrule
		\end{tabular}
	}
\end{table}

\subsection{Maximum likelihood estimation}

In practice, the parameter $\theta_N$ is often estimated by the maximum likelihood approach under the imposed model. 
This is equivalent to the M-estimator in \eqref{eq:M_hat} with $\loss_z(y, \bs{x}; \bs{\theta}) = -\log f_z (y, \bs{x}; \bs{\theta})$ being the corresponding minus log-density function, 
as well as the Z-estimator in \eqref{eq:Z_hat} with $\psi_z(y, \bs{x}; \bs{\theta}) = - \partial \log f_z (y, \bs{x}; \bs{\theta})/\partial \theta$ being the derivative of the corresponding minus log-density function over $\theta$.  

Below we pay special attention to the maximum likelihood estimation under a working generalized linear model \eqref{eq:glm}.
In particular, we study properties of $\bs{\theta}_N$, 
the root of the population estimating equation in \eqref{eq:population_risk} with $\psi_z(y, \bs{x}; \bs{\theta}) = - \partial \log f_z (y, \bs{x}; \bs{\theta})/\partial \theta$ implied by the working model. 
By the density forms in \eqref{eq:glm}, 
under a generalized linear model \eqref{eq:glm} allowing for interaction, 
$\theta_N$
must satisfy the following equations: for $z=0,1$, 
\begin{align}\label{eq:est_eqn_cglm}
	\frac{1}{N} \sum_{i=1}^N \big\{ Y_i(z) - h_z(X_i; \theta) \big\}
	\begin{pmatrix}
		1\\
		\bs{X}_i
	\end{pmatrix}
	=
	\frac{1}{N} \sum_{i=1}^N \big\{ Y_i(z) - \dot{b}_z(\alpha_z+\beta_z^\top X_i) \big\}
	\begin{pmatrix}
		1\\
		\bs{X}_i
	\end{pmatrix}
	= \bs{0},
\end{align}
where $h_z(x;\theta)$ denotes the conditional mean of the potential outcome $Y(z)$ given covariates under model \eqref{eq:glm}.  
Even if we additionally impose the constraint $\bs{\beta}_1 = \bs{\beta}_0$ (i.e., no treatment-covariate interaction), 
the equation in the first row of \eqref{eq:est_eqn_cglm} still holds, which has important implications for the later discussion. 
We summarize the results below.  

\begin{corollary}\label{cor:glm_est_eqn}
	Under a working generalized linear model \eqref{eq:glm} {\lad with canonical links}, regardless of whether we allow treatment-covariate interaction, 
	under Conditions \ref{cond:population_minimum}--\ref{cond:empirical_minimum}, 
	the probability limit $\bs{\theta}_N$ of 
	the maximum likelihood estimator %
	satisfies: 
	\begin{align}\label{eq:ave_h_Y_bar}
		\bar{Y}(z) =
		\E_N\{h_z(X_i;\theta_N)\} = 
		\E_N \big\{ \dot{b}_z (\alpha_{Nz} + \bs{X}_i^\top \bs{\beta}_{Nz}) \big\}, 
    \qquad (z = 0,1)
	\end{align}
	where $\alpha_{N1}, \alpha_{N0}, \bs{\beta}_{N1}$ and $\bs{\beta}_{N0}$ are subvectors of $\bs{\theta}_N$. 
\end{corollary}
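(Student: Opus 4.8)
The plan is to obtain \eqref{eq:ave_h_Y_bar} directly from the intercept coordinate of the population estimating equation $\bs{\Psi}_N(\bs{\theta}_N) = \bs{0}$, whose root $\bs{\theta}_N$ exists inside $\Theta$ by Condition \ref{cond:population_minimum}. First I would record the score implied by the canonical-link model \eqref{eq:glm}. Writing $\eta_z = \alpha_z + \bs{\beta}_z^\top \bs{x}$, we have $\log f_z(y,\bs{x};\bs{\theta}) = \{y\eta_z - b_z(\eta_z)\}/a_z(\bs{\phi}_z) + c_z(y,\bs{\phi}_z)$, so that $\psi_z = -\partial \log f_z/\partial \bs{\theta}$, in the coordinates corresponding to $(\alpha_z, \bs{\beta}_z)$, equals
\[
-\frac{1}{a_z(\bs{\phi}_z)}\{y - \dot{b}_z(\eta_z)\}\begin{pmatrix}1\\ \bs{x}\end{pmatrix}
= -\frac{1}{a_z(\bs{\phi}_z)}\{y - h_z(\bs{x};\bs{\theta})\}\begin{pmatrix}1\\ \bs{x}\end{pmatrix},
\]
using $h_z(\bs{x};\bs{\theta}) = \dot{b}_z(\eta_z)$; all remaining coordinates of $\psi_z$ vanish, and $\psi_{1-z}$ does not depend on $(\alpha_z,\bs{\beta}_z)$ at all. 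This is precisely the building block already displayed in \eqref{eq:est_eqn_cglm}.

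Next I would project the identity $\bs{\Psi}_N(\bs{\theta}_N) = r_1\E_N\{\psi_1(Y_i(1),\bs{X}_i;\bs{\theta}_N)\} + r_0\E_N\{\psi_0(Y_i(0),\bs{X}_i;\bs{\theta}_N)\} = \bs{0}$ from \eqref{eq:population_risk} onto the single coordinate indexing the intercept $\alpha_z$. By the previous step only the arm-$z$ summand contributes there, so that coordinate reads
\[
-\frac{r_z}{a_z(\bs{\phi}_{Nz})}\,\E_N\{Y_i(z) - h_z(\bs{X}_i;\bs{\theta}_N)\} = 0, \qquad z = 0,1.
\]
Since $r_z$ is bounded away from zero and $a_z(\bs{\phi}_{Nz}) > 0$ is a scalar that does not depend on $i$, both prefactors cancel, leaving $\E_N\{Y_i(z) - h_z(\bs{X}_i;\bs{\theta}_N)\} = 0$, i.e.\ $\bar{Y}(z) = \E_N\{h_z(\bs{X}_i;\bs{\theta}_N)\} = \E_N\{\dot{b}_z(\alpha_{Nz} + \bs{X}_i^\top\bs{\beta}_{Nz})\}$, which is exactly \eqref{eq:ave_h_Y_bar}.

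Finally I would argue the interaction-invariance. The whole derivation uses only the intercept coordinate, and $\alpha_z$ remains a free parameter whether or not $\bs{\beta}_1 = \bs{\beta}_0$ is imposed; forcing no interaction merely couples the slope coordinates (summing the two $\bs{\beta}$-equations) while leaving each intercept equation intact, so \eqref{eq:ave_h_Y_bar} holds in both cases. I do not expect a genuine obstacle here, since the statement is essentially a reading of \eqref{eq:est_eqn_cglm}; the one feature worth flagging is that the \emph{canonical}-link assumption is what makes the intercept score exactly the dispersion-scaled residual $y - h_z$, allowing the clean cancellation. Under a non-canonical link the intercept score would carry an extra $\bs{x}$-dependent weight and $\E_N\{h_z(\bs{X}_i;\bs{\theta}_N)\}$ would no longer equal $\bar{Y}(z)$; this is the only place where care is required.
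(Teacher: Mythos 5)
Your proof is correct and takes essentially the same route as the paper's: both isolate the intercept coordinate of the population estimating equation $\Psi_N(\bs{\theta}_N)=\bs{0}$, use the canonical-link structure to write that coordinate as $-r_z\,\E_N\{Y_i(z)-h_z(\bs{X}_i;\bs{\theta}_N)\}/a_z(\bs{\phi}_{Nz})$, and cancel the positive prefactor to obtain \eqref{eq:ave_h_Y_bar}, with the no-interaction case handled because the intercept equations are untouched by the constraint $\bs{\beta}_1=\bs{\beta}_0$. One immaterial slip: the coordinates of $\psi_z$ corresponding to the dispersion parameter $\bs{\phi}_z$ do not in general vanish (only the $(\alpha_{1-z},\bs{\beta}_{1-z},\bs{\phi}_{1-z})$-coordinates do), but your argument never uses that claim, so the proof stands as written.
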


\section{Model-Based Approach}\label{sec:based}

From the discussion in \S \ref{sec:model} and as shown in \eqref{eq:diff_g_h_ind}, 
under a generalized linear model \eqref{eq:glm} with shared canonical links, 
$
g( h_1(\bs{X}_i; \bs{\theta}) ) - g( h_0(\bs{X}_i; \bs{\theta}))
$
is often interpreted as the treatment effect at the $g$-scale for unit $i$. 
Moreover, if there is no treatment-covariate interaction,
then it becomes constant across all units and is often referred as the constant treatment effect at the $g$-scale.  
These then motivate us to consider the following model-based estimator: 
\begin{equation}\label{eq:MB_est}
	\hat{\tau}_{g\MB}  \equiv 
	\E_N \big\{ g( h_1(\bs{X}_i; \hat{\bs{\theta}}_N) ) - g( h_0(\bs{X}_i; \hat{\bs{\theta}}_N)) \big\}. 
\end{equation}
Below we will investigate its property for estimating 
the average treatment effect $\tau_g$ at the $g$-scale introduced in \S \ref{sec:po}. 

In the following, 
we consider general functions $h_1, h_0$ and $g$, as well as general functions $\psi_1, \psi_0\in \mathbb{R}^p$ for obtaining $\hat{\theta}_N$ from the estimating equation in \eqref{eq:Z_hat}, 
for the model-based estimator $\hat{\tau}_{g\MB}$ in \eqref{eq:MB_est}, not necessarily those implied by a statistical model.  
To derive the asymptotic distribution of $\hat{\tau}_{g \MB}$, we assume some smoothness conditions on the functions $h_1, h_0$ and $g$, 
as well as 
boundedness 
of some finite population quantities. 
Define 
$G_{N\MB}(\bs{\theta}) = \E_N\{g(h_1(\bs{X}_i; \bs{\theta})) -g(h_0(\bs{X}_i; \bs{\theta}))\}$.

\begin{condition}\label{cond:gh_MB}
	(i) $G_{N\MB}(\bs{\theta})$ is continuously differentiable in $\bs{\theta} \in \Theta$ for all $N$; 
	(ii) there exists $\varepsilon_0>0$ such that $\mathcal{B}(\bs{\theta}_N, \varepsilon_0) \subset \Theta$ for all $N$;  
	(iii)
	for any $\varepsilon > 0$, there exists $\delta > 0$ such that, for all $N$ and $\bs{\theta}$ with $\|{\bs\theta}-{\bs\theta}_N\|\le \delta$, 
	$\|\dot G_{N\MB}(\bs{\theta}) - \dot G_{N\MB}(\bs{\theta}_N)\|\le \varepsilon$; 
	(iv)
	$\dot G_{N\MB}(\bs{\theta}_N)$ is bounded for all $N$.
\end{condition}

The following theorem shows the asymptotic distribution of the model-based estimator. 

\begin{theorem}\label{thm:clt_mbe}
	If $\sqrt{N} (\hat{\bs\theta}_N-{\bs\theta}_N ) \lpc \mathcal{N} ( \bs{0},\bs{\Sigma}_N )$ with $\sigma_{\max}(\bs{\Sigma}_N)$ being bounded for all $N$,  
	and Condition \ref{cond:gh_MB} holds, 
	then 
	\begin{align*}
		\sqrt{N}( \hat{\tau}_{g\MB}-\tau_{g\MB})
		= 
		\sqrt{N} \{ G_{N\MB}(\hat{\bs{\theta}}_N) - G_{N\MB}(\bs{\theta}_N) \}
		\lpc 
		\mathcal{N}
		\big( 
		0, \ 
		\dot G_{N\MB}(\bs{\theta}_N)^\top \bs{\Sigma}_N \dot G_{N\MB}(\bs{\theta}_N)
		\big),
	\end{align*}
	where 
	$
	\tau_{g\MB} 
	= 
	G_{N\MB}(\bs{\theta}_N)
	= \E_N\{ g( h_1(\bs{X}_i; \bs{\theta}_N) ) - g( h_0(\bs{X}_i; \bs{\theta}_N) )\}. 
	$
\end{theorem}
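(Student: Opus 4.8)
The plan is to treat Theorem~\ref{thm:clt_mbe} as a delta-method argument adapted to the finite-population framework, where the central subtlety is that the reference law $\mathcal{N}(\bs{0},\bs{\Sigma}_N)$ appearing in the hypothesis $\sqrt{N}(\hat{\bs{\theta}}_N-\bs{\theta}_N)\lpc\mathcal{N}(\bs{0},\bs{\Sigma}_N)$ is itself indexed by $N$, so the classical delta method cannot be quoted verbatim. Writing $G_N\equiv G_{N\MB}$, and noting that $\hat{\tau}_{g\MB}=G_N(\hat{\bs{\theta}}_N)$ and $\tau_{g\MB}=G_N(\bs{\theta}_N)$ hold by definition (so the first displayed equality in the theorem is immediate), the content reduces to the limit law for $\sqrt{N}\{G_N(\hat{\bs{\theta}}_N)-G_N(\bs{\theta}_N)\}$. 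I would first record two auxiliary facts about the Lévy--Prokhorov convergence $\lpc$ that play the roles of the continuous-mapping and Slutsky lemmas: (a) if $A_N\lpc\mathcal{N}(\bs{0},\bs{\Sigma}_N)$ and $\bs{v}_N$ is deterministic with $\sup_N\|\bs{v}_N\|<\infty$, then $\bs{v}_N^\top A_N\lpc\mathcal{N}(0,\bs{v}_N^\top\bs{\Sigma}_N\bs{v}_N)$; and (b) if $A_N\lpc\mathcal{N}(0,\sigma_N^2)$ and $C_N=o_{\Pr}(1)$, then $A_N+C_N\lpc\mathcal{N}(0,\sigma_N^2)$. Both follow because the Lévy--Prokhorov distance contracts under a Lipschitz map with uniformly bounded Lipschitz constant and is stable under additive perturbations vanishing in probability; these are the same tools already used to prove Theorem~\ref{thm:theta_clt}.

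Next I would carry out the expansion. Since $G_N:\mathbb{R}^p\to\mathbb{R}$ is scalar-valued and, by Condition~\ref{cond:gh_MB}(i)--(ii), continuously differentiable on a neighborhood of $\bs{\theta}_N$, the mean value theorem supplies a point $\tilde{\bs{\theta}}_N$ on the segment joining $\hat{\bs{\theta}}_N$ and $\bs{\theta}_N$ with
\[
G_N(\hat{\bs{\theta}}_N)-G_N(\bs{\theta}_N)
= \dot{G}_N(\bs{\theta}_N)^\top(\hat{\bs{\theta}}_N-\bs{\theta}_N)
+ \{\dot{G}_N(\tilde{\bs{\theta}}_N)-\dot{G}_N(\bs{\theta}_N)\}^\top(\hat{\bs{\theta}}_N-\bs{\theta}_N).
\]
Scaling by $\sqrt{N}$, the leading term is $\dot{G}_N(\bs{\theta}_N)^\top\{\sqrt{N}(\hat{\bs{\theta}}_N-\bs{\theta}_N)\}$; applying fact (a) with $\bs{v}_N=\dot{G}_N(\bs{\theta}_N)$, whose norm is bounded uniformly in $N$ by Condition~\ref{cond:gh_MB}(iv), yields $\dot{G}_N(\bs{\theta}_N)^\top\sqrt{N}(\hat{\bs{\theta}}_N-\bs{\theta}_N)\lpc\mathcal{N}(0,\dot{G}_N(\bs{\theta}_N)^\top\bs{\Sigma}_N\dot{G}_N(\bs{\theta}_N))$, which is exactly the claimed variance.

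It then remains to show the scaled remainder is negligible. Because $\sigma_{\max}(\bs{\Sigma}_N)$ is bounded, the family $\{\mathcal{N}(\bs{0},\bs{\Sigma}_N)\}$ is uniformly tight, and $\lpc$ convergence transfers this tightness, so $\sqrt{N}(\hat{\bs{\theta}}_N-\bs{\theta}_N)=O_{\Pr}(1)$ and in particular $\hat{\bs{\theta}}_N-\bs{\theta}_N=o_{\Pr}(1)$. Hence $\|\tilde{\bs{\theta}}_N-\bs{\theta}_N\|\le\|\hat{\bs{\theta}}_N-\bs{\theta}_N\|=o_{\Pr}(1)$, and Condition~\ref{cond:gh_MB}(iii), the uniform-in-$N$ continuity of $\dot{G}_N$ at $\bs{\theta}_N$, gives $\dot{G}_N(\tilde{\bs{\theta}}_N)-\dot{G}_N(\bs{\theta}_N)=o_{\Pr}(1)$. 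The remainder is then a product of an $o_{\Pr}(1)$ vector with an $O_{\Pr}(1)$ vector, hence $o_{\Pr}(1)$ after multiplication by $\sqrt{N}$. Combining with fact (b) delivers $\sqrt{N}\{G_N(\hat{\bs{\theta}}_N)-G_N(\bs{\theta}_N)\}\lpc\mathcal{N}(0,\dot{G}_N(\bs{\theta}_N)^\top\bs{\Sigma}_N\dot{G}_N(\bs{\theta}_N))$, as asserted.

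The main obstacle is not the expansion, which is routine, but making the delta method rigorous when the Gaussian reference law moves with $N$: the textbook continuous-mapping and Slutsky theorems presuppose a fixed limit, so I must instead work directly with the Lévy--Prokhorov metric and verify that its behaviour under a bounded linear map and under a vanishing additive perturbation is uniform in $N$. The uniform bounds furnished by Condition~\ref{cond:gh_MB}(iv) on $\|\dot{G}_N(\bs{\theta}_N)\|$ and by the hypothesis on $\sigma_{\max}(\bs{\Sigma}_N)$ are precisely what keep the constant factors in these contraction estimates from blowing up, and are therefore the conditions to be used with care.
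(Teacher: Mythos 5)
Your proof is correct, and it reaches the conclusion by a genuinely different technical device than the paper. The paper's proof (Appendix \ref{sec:proof_thm_cor_A}) uses the same mean-value-theorem decomposition and invokes Conditions \ref{cond:gh_MB}(iii)--(iv) in exactly the same roles, but it handles the $N$-dependence of $\bs{\Sigma}_N$ and $\dot G_{N\MB}(\bs{\theta}_N)$ by the subsequence-extraction principle already used for Theorem \ref{thm:theta_clt}: by Bolzano--Weierstrass it suffices to prove weak convergence along subsequences $\{N_t\}$ on which these bounded quantities converge, and along such a subsequence the limit law is fixed, so the classical Slutsky and continuous-mapping theorems apply verbatim. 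You instead stay at the level of the L\'evy--Prokhorov metric for all $N$ simultaneously, proving the two uniform lemmas (contraction under a linear map with uniformly bounded norm, via $\pi(f_*\mu, f_*\nu)\le \max(1,L)\,\pi(\mu,\nu)$ for $L$-Lipschitz $f$; and stability under $o_{\Pr}(1)$ additive perturbations, via domination of $\pi$ by the Ky Fan metric), plus the transfer of uniform tightness to get $\sqrt{N}(\hat{\bs{\theta}}_N-\bs{\theta}_N)=O_{\Pr}(1)$. Your route is more self-contained and avoids repeated subsequence scaffolding, at the price of having to actually establish the metric-level lemmas, which the paper never states; the paper's route leans entirely on textbook fixed-limit theorems, at the price of the extraction argument. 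One caution: your parenthetical that these LP facts are ``the same tools already used to prove Theorem \ref{thm:theta_clt}'' is not accurate --- that proof uses the subsequence device, not metric contraction --- so you would need to prove facts (a) and (b) yourself rather than cite them.

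A small wrinkle to patch, not a gap: the mean value theorem needs the whole segment joining $\hat{\bs{\theta}}_N$ and $\bs{\theta}_N$ to lie in the region where $G_{N\MB}$ is differentiable, i.e.\ in $\mathcal{B}(\bs{\theta}_N,\varepsilon_0)\subset\Theta$ from Condition \ref{cond:gh_MB}(ii), whereas $\hat{\bs{\theta}}_N$ is only guaranteed to be there with probability tending to one. So you must establish $\hat{\bs{\theta}}_N-\bs{\theta}_N=o_{\Pr}(1)$ \emph{before} invoking the expansion (your write-up does this afterwards) and then either restrict to the event $\{\|\hat{\bs{\theta}}_N-\bs{\theta}_N\|\le \varepsilon_0/2\}$, or use the projection device the paper borrows from the proof of Theorem \ref{thm:rate}; since modifications on events of vanishing probability do not affect the limiting law, either fix is one sentence.
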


From Theorem \ref{thm:clt_mbe}, 
the model-based estimator is consistent for $\tau_{g \MB}$. 
However, $\tau_{g\MB}$ is generally different from the average effect $\tau_g$ at the $g$-scale. 
Below we discuss two special cases, where $h_1$ and $h_0$ are chosen to be the conditional mean functions of potential outcomes given covariates implied by the imposed working model. 

We first consider the case where the working model is a generalized linear model \eqref{eq:glm} {\lad with canonical link}.  
From Corollary \ref{cor:glm_est_eqn}, 
$
\E_N \{ h_{z} (\bs{X}_i; \bs{\theta}_N) \} = \bar{Y}(z)
$
for $z=0,1$.
Consequently, for a nonlinear $g$, $\E_N\{ g( h_z(\bs{X}_i; \bs{\theta}_N) ) \}$ is generally different from $g( \E_N\{  h_z(\bs{X}_i; \bs{\theta}_N) \}) = g(\bar{Y}(z))$, 
because a nonlinear $g$ is generally not exchangeable with the average operator.  
For example, 
when $g$ is convex or concave, by Jensen's inequality, one is often greater than the other. 
These indicate that generally
$\tau_{g \MB} \ne \tau_g$.

We then consider the case where 
we correctly specify the data generating model.  
In this case, 
we expect that 
$h_z(\bs{x}; \bs{\theta}_N) \approx \E(Y(z) \mid \bs{x})$, the true conditional mean of potential outcome given covariates, for $z=0,1$. 
By the same logic as before, 
for a nonlinear $g$, 
$
\E_N\{ g(h_z(\bs{X}_i;\bs{\theta}_N)) \}
\approx 
\E[ g\{ \E( Y(z)\mid \bs{X} ) \} ]
$
is generally different from 
$
g(\E_N\{ h_z(\bs{X}_i; \bs{\theta}_N) \})
\approx g( \E\{ \E( Y(z)\mid \bs{X} ) \} ) = g(\E(Y(z))). 
$
Thus, even with correct model specification, 
$\tau_{g \MB}$ is generally different from $\tau_g$. 
{\new The difference between $\tau_{g\MB}$ and $\tau_g$ in the case of correct model specification has also been noticed in the literature, where the former is often termed as conditional effects while the latter is termed as average or marginal effects; see, e.g., \citet{Rosenblum15} and \citet{Ding2022comment}. 
}

From the above, the model-based approach generally provides a biased estimation of average treatment effects 
and should 
be used with caution in practice.  
Note that for the generalized linear model \eqref{eq:glm} with 
shared canonical links
and without treatment-covariate interaction, 
the model-based estimator $\hat{\tau}_{g\MB}$ reduces to the maximum likelihood estimator for $\hat{\alpha}_1 - \hat{\alpha}_0$, which is often used in practice to access treatment effects. Therefore, Theorem \ref{thm:clt_mbe} reminds us that usual parameter estimation based on certain hypothesized statistical models may not be desirable for evaluating treatment effects, 
even when the models are correctly specified. 

\section{Model-Imputed Approach}\label{sec:impute}

In this section, we consider another way of utilizing models. 
Specifically, we use the model as a tool to impute the potential outcomes. 
We call this approach as model-imputed approach. 

Specifically, given a model with conditional mean function $h_z(\bs{x}; \bs{\theta})$ and a parameter estimator $\hat{\bs{\theta}}_N$, 
we can impute or estimate the potential outcome $Y_i(z)$ for unit $i$ by $h_z(\bs{X}_i; \hat{\bs{\theta}}_N)$. 
This then motivates the following model-imputed estimator for the average effect $\tau_g$ at the $g$-scale: 
\begin{align}\label{eq:tau_hat_MI}
	\hat{\tau}_{g \MI} & = 
	g \big( \E_N\{ h_1(\bs{X}_i; \hat{\bs{\theta}}_N ) \} \big) - g\big( \E_N \{ h_0(\bs{X}_i; \hat{\bs{\theta}}_N ) \} \big). 
\end{align}
The estimation in \eqref{eq:tau_hat_MI} is also called the model standardization approach \citep{Rosenblum15}. 
\citet{GuoBasse21} discussed a closely related approach that uses $h_z(\bs{X}_i; \hat{\bs{\theta}}_N)$'s only for the missing potential outcomes, and termed the it as Oaxaca-Blinder method \citep{Blinder73, Oaxaca73}. 
These two approaches are equivalent under the prediction unbiasedness condition in \citet{GuoBasse21}, but in general they are not the same. 

In the following discussion, we will allow $h_1$, $h_0$ and $g$ to be general functions,  
and assume some smoothness conditions on them, as well as boundedness of some finite population quantities. 
Define $G_{N\MI}(\bs{\theta})=g(\E_N \{ h_1(\bs X_i;\bs {\theta}) \} )-g(\E_N \{ h_0(\bs X_i;\bs {\theta}) \} )$. 
\begin{condition}\label{cond:MI_g_h}
    Condition \ref{cond:gh_MB} holds with $G_{N\MB}(\cdot)$ replaced by $G_{N\MI}(\cdot)$. 
\end{condition}

\begin{theorem}\label{thm:clt_ibe}
	If $\sqrt{N} (\hat{\bs\theta}_N-{\bs\theta}_N) \lpc \mathcal{N}( \bs{0},\bs{\Sigma}_N)$ with $\sigma_{\max}(\bs{\Sigma}_N)$ being bounded for all $N$, 
    and Condition 
	\ref{cond:MI_g_h} holds, 
	then
	\begin{align*}
		\sqrt{N}(\hat{\tau}_{g\MI}-\tau_{g\MI}) 
		= 
		\sqrt{N}( G_{N\MI}(\hat{\bs{\theta}}_N) - G_{N\MI}(\bs{\theta}_N) ) 
		\lpc  
		\mathcal{N}\big( 
		0, \ 
		\dot G_{N\MI}(\bs{\theta}_N)^\top 
		\bs{\Sigma}_N \dot G_{N\MI}(\bs{\theta}_N)
		\big),
	\end{align*}
	where 
	$
	\tau_{g\MI} 
	= 
	g( \E_N\{ h_1(\bs{X}_i; \bs{\theta}_N) \} ) - g( \E_N\{ h_0(\bs{X}_i; \bs{\theta}_N )\} ).
	$
\end{theorem}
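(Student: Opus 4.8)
To prove Theorem~\ref{thm:clt_ibe}, the plan is to run a finite-population delta method: treat $G_{N\MI}$ as a scalar-valued map whose gradient $\dot G_{N\MI}$ is equicontinuous at $\bs{\theta}_N$ uniformly in $N$ (Condition~\ref{cond:MI_g_h}, inheriting parts (i)--(iv) of Condition~\ref{cond:gh_MB}), and then push the assumed $\lpc$ convergence of $\sqrt{N}(\hat{\bs{\theta}}_N-\bs{\theta}_N)$ through this map. First I would observe that $\sqrt{N}(\hat{\bs{\theta}}_N-\bs{\theta}_N)\lpc\mathcal{N}(\bs 0,\bs{\Sigma}_N)$ with $\sigma_{\max}(\bs{\Sigma}_N)$ bounded forces $\sqrt{N}(\hat{\bs{\theta}}_N-\bs{\theta}_N)=O_{\Pr}(1)$, hence $\hat{\bs{\theta}}_N-\bs{\theta}_N=o_{\Pr}(1)$. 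This consistency is exactly what legitimizes the linearization.

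Second, using continuous differentiability of $G_{N\MI}$ (part (i)) together with the mean-value theorem for the scalar map $G_{N\MI}$, I would write
$$\sqrt{N}\{G_{N\MI}(\hat{\bs{\theta}}_N)-G_{N\MI}(\bs{\theta}_N)\}=\dot G_{N\MI}(\bs{\theta}_N)^\top\sqrt{N}(\hat{\bs{\theta}}_N-\bs{\theta}_N)+R_N,$$
with remainder $R_N=\{\dot G_{N\MI}(\tilde{\bs{\theta}}_N)-\dot G_{N\MI}(\bs{\theta}_N)\}^\top\sqrt{N}(\hat{\bs{\theta}}_N-\bs{\theta}_N)$ for some $\tilde{\bs{\theta}}_N$ on the segment joining $\hat{\bs{\theta}}_N$ and $\bs{\theta}_N$. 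Since $\tilde{\bs{\theta}}_N-\bs{\theta}_N=o_{\Pr}(1)$, the uniform equicontinuity of $\dot G_{N\MI}$ at $\bs{\theta}_N$ (part (iii)) yields $\|\dot G_{N\MI}(\tilde{\bs{\theta}}_N)-\dot G_{N\MI}(\bs{\theta}_N)\|=o_{\Pr}(1)$; multiplying by the $O_{\Pr}(1)$ factor gives $R_N=o_{\Pr}(1)$.

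Third --- and this is the step I expect to be the main obstacle --- I would show that the leading linear term $c_N^\top\sqrt{N}(\hat{\bs{\theta}}_N-\bs{\theta}_N)$, with $c_N\equiv\dot G_{N\MI}(\bs{\theta}_N)$, satisfies $\lpc\mathcal{N}(0,c_N^\top\bs{\Sigma}_N c_N)$. The difficulty is that we operate with a moving sequence of laws and with the Lévy--Prokhorov metric rather than a fixed weak limit, so the classical one-dimensional delta method does not apply verbatim; both the linear functional $c_N$ and the target normal $\mathcal{N}(0,c_N^\top\bs{\Sigma}_N c_N)$ vary with $N$. The key is that $c_N$ is bounded uniformly in $N$ (part (iv)), so the map $x\mapsto c_N^\top x$ is Lipschitz with a uniformly bounded constant $\|c_N\|$. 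Since the Lévy--Prokhorov distance between push-forwards under an $L$-Lipschitz map is at most $\max(1,L)$ times the original distance, applying this map to both $\sqrt{N}(\hat{\bs{\theta}}_N-\bs{\theta}_N)$ and $\mathcal{N}(\bs 0,\bs{\Sigma}_N)$ preserves the $\lpc$ relation and sends the normal law exactly to $\mathcal{N}(0,c_N^\top\bs{\Sigma}_N c_N)$.

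Finally, I would assemble the pieces via a Slutsky-type argument for the Lévy--Prokhorov metric: because $R_N=o_{\Pr}(1)$, the Ky~Fan bound gives that adding $R_N$ leaves the limiting law unchanged, so by the triangle inequality for $\lpc$ we obtain $\sqrt{N}\{G_{N\MI}(\hat{\bs{\theta}}_N)-G_{N\MI}(\bs{\theta}_N)\}\lpc\mathcal{N}(0,c_N^\top\bs{\Sigma}_N c_N)$, which is the claimed conclusion with $c_N=\dot G_{N\MI}(\bs{\theta}_N)$. The routine bookkeeping --- that $c_N^\top\bs{\Sigma}_N c_N$ remains bounded --- follows from the boundedness of $c_N$ (part (iv)) and of $\sigma_{\max}(\bs{\Sigma}_N)$. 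The entire argument parallels the proof of Theorem~\ref{thm:clt_mbe}, with $G_{N\MB}$ replaced throughout by $G_{N\MI}$.
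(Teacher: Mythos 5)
Your proposal is correct, and its linearization step (mean value theorem plus the uniform-in-$N$ equicontinuity of $\dot G_{N\MI}$ from Condition \ref{cond:MI_g_h}, i.e.\ Condition \ref{cond:gh_MB}(iii)) is exactly the paper's; where you genuinely diverge is in how the $N$-varying limit law is handled. The paper's proof of Theorem \ref{thm:clt_ibe} simply defers to that of Theorem \ref{thm:clt_mbe}, which uses the same subsequence device as Theorem \ref{thm:theta_clt}: by Bolzano--Weierstrass, pass to a further subsequence along which $\bs{\Sigma}_N$ and $\dot G_{N\MI}(\bs{\theta}_N)$ converge, apply classical Slutsky/continuous-mapping arguments to show that both $\sqrt{N}(\hat{\tau}_{g\MI}-\tau_{g\MI})$ and $\mathcal{N}(0,\dot G_{N\MI}(\bs{\theta}_N)^\top\bs{\Sigma}_N\dot G_{N\MI}(\bs{\theta}_N))$ converge weakly to one common fixed Gaussian, and conclude that the L\'evy--Prokhorov distance vanishes along every subsequence. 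You instead argue directly in the metric: since $\|\dot G_{N\MI}(\bs{\theta}_N)\|$ is bounded (Condition \ref{cond:gh_MB}(iv)), the map $x\mapsto \dot G_{N\MI}(\bs{\theta}_N)^\top x$ is Lipschitz with a constant uniform in $N$, the L\'evy--Prokhorov distance between pushforwards under an $L$-Lipschitz map is at most $\max(1,L)$ times the original distance (a correct and easily verified inequality), and the Ky Fan bound absorbs the $o_{\Pr}(1)$ remainder; this yields the conclusion with no subsequence extraction at all. Both routes are valid: yours is more quantitative and self-contained, while the paper's reduction lets it quote off-the-shelf weak-convergence results at the cost of the compactness argument. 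One small point you should make explicit: to invoke the mean value theorem you need the segment joining $\hat{\bs{\theta}}_N$ and $\bs{\theta}_N$ to lie in $\Theta$, which holds on an event of probability tending to one by your derived consistency together with Condition \ref{cond:gh_MB}(ii); the paper handles the same issue by first projecting $\hat{\bs{\theta}}_N$ onto $\overline{\mathcal{B}}(\bs{\theta}_N,\varepsilon_0/2)$.
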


From 
Corollary \ref{cor:glm_est_eqn}, 
under the generalized linear model \eqref{eq:glm} {\lad with canonical link},  
if we use the maximum likelihood approach to estimate the model parameters, 
then $\tau_{g\MI}$ is actually the same as the average treatment effect $\tau_g$ at the $g$-scale, implying the consistency of the model-imputed estimator $\hat{\tau}_{g\MI}$. 
Below we generalize this to general model-imputed estimation. 
We introduce the following condition on the estimating functions $\psi_z$s in the estimating equation \eqref{eq:Z_hat} for $\theta$ and the imputation functions $h_z$s to ensure  \eqref{eq:ave_h_Y_bar}, i.e., 
the average imputed potential outcomes are the same as the true 
ones, 
which is sufficient and almost necessary for the consistency of the model-imputed estimator.

\begin{condition}\label{cond:consis_MI}
	There exist differentiable 
	vector functions $\bs{q}_1(\bs{\theta})$ and $\bs{q}_0(\bs{\theta})$ such that 
	for all $(\bs{x}, y)\in \mathcal{X} \times \mathcal{Y}$, $\bs{\theta}\in \Theta$ and $z=0,1$, 
	$\bs{q}_z(\bs{\theta})^\top  \psi_z(y, \bs{x}; \bs{\theta}) = y - h_z(\bs{x}; \bs{\theta})$ 
	and 
	$\bs{q}_z(\bs{\theta})^\top  \psi_{1-z}(y, \bs{x}; \bs{\theta}) = 0$. 
\end{condition}

\begin{corollary}\label{cor:model_imp_consist}
	Under Conditions \ref{cond:population_minimum}--\ref{cond:empirical_minimum}, \ref{cond:MI_g_h} and \ref{cond:consis_MI}, 
    $\hat{\tau}_{g\MI}$ in \eqref{eq:tau_hat_MI} is consistent for the average treatment effect $\tau_g$ at the $g$-scale, 
	and 
	$
	\sqrt{N}(\hat{\tau}_{g\MI}-\tau_{g}) \lpc \mathcal{N}(0, V_{g\MI}), 
	$
	where 
	\begin{align}\label{eq:V_g_MI}
		V_{g\MI} & = 
		r_1 r_0 \dot G_{N\MI}(\bs{\theta}_N) ^\top 
		\{\dot \Psi_N(\bs{\theta_N})\}^{-1} \Cov_N\{ \psi_{1i}(\bs{\theta}_N) - \psi_{0i}(\bs{\theta}_N) \}  \{\dot \Psi_N(\bs{\theta_N})^\top \}^{-1}  \dot G_{N\MI}(\bs{\theta}_N) . 
	\end{align}
\end{corollary}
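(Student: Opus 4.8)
The plan is to obtain the corollary as a direct consequence of Theorem~\ref{thm:clt_ibe}, with the only genuinely new ingredient being the identification of the probability limit $\tau_{g\MI}$ with the target estimand $\tau_g$. Theorem~\ref{thm:clt_ibe} already delivers $\sqrt{N}(\hat{\tau}_{g\MI} - \tau_{g\MI}) \lpc \mathcal{N}(0, \dot G_{N\MI}(\bs{\theta}_N)^\top \bs{\Sigma}_N \dot G_{N\MI}(\bs{\theta}_N))$; its hypotheses are met because Theorem~\ref{thm:theta_clt} under Conditions~\ref{cond:population_minimum}--\ref{cond:empirical_minimum} supplies the required asymptotic Normality of $\hat{\bs{\theta}}_N$, and Condition~\ref{cond:stable_limits}(i)--(ii) ensures $\sigma_{\max}(\bs{\Sigma}_N)$ is bounded uniformly in $N$ (the sandwich $\bs{\Sigma}_N$ is a product of the uniformly bounded covariance $\Cov_N\{\psi_{1i}(\bs{\theta}_N)-\psi_{0i}(\bs{\theta}_N)\}$ and the uniformly invertible $\dot\Psi_N(\bs{\theta}_N)$). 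Thus, once I show $\tau_{g\MI} = \tau_g$ \emph{exactly} for every $N$, both the consistency claim and the stated CLT follow immediately, and substituting the explicit form of $\bs{\Sigma}_N$ from Theorem~\ref{thm:theta_clt} yields $V_{g\MI}$ in \eqref{eq:V_g_MI}.

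The crux is therefore to verify the averaging identity \eqref{eq:ave_h_Y_bar}, namely $\E_N\{h_z(\bs{X}_i; \bs{\theta}_N)\} = \bar{Y}(z)$ for $z=0,1$, from Condition~\ref{cond:consis_MI} alone. First I would recall that by Condition~\ref{cond:population_minimum} the limit $\bs{\theta}_N$ is a root of the population estimating equation, so $\Psi_N(\bs{\theta}_N) = r_1 \E_N\{\psi_1(Y_i(1), \bs{X}_i; \bs{\theta}_N)\} + r_0 \E_N\{\psi_0(Y_i(0), \bs{X}_i; \bs{\theta}_N)\} = \bs{0}$. Left-multiplying this vector identity by $\bs{q}_z(\bs{\theta}_N)^\top$ and invoking the two relations in Condition~\ref{cond:consis_MI}---which collapse the $\psi_{1-z}$ term to zero and turn the $\psi_z$ term into $Y_i(z) - h_z(\bs{X}_i; \bs{\theta}_N)$---leaves $r_z\, \E_N\{Y_i(z) - h_z(\bs{X}_i; \bs{\theta}_N)\} = 0$. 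Since $r_z$ is bounded away from zero by Condition~\ref{cond:stable_limits}, this forces $\E_N\{h_z(\bs{X}_i; \bs{\theta}_N)\} = \bar{Y}(z)$, as claimed.

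With \eqref{eq:ave_h_Y_bar} in hand, the definition of $\tau_{g\MI}$ in Theorem~\ref{thm:clt_ibe} gives $\tau_{g\MI} = g(\E_N\{h_1(\bs{X}_i; \bs{\theta}_N)\}) - g(\E_N\{h_0(\bs{X}_i; \bs{\theta}_N)\}) = g(\bar{Y}(1)) - g(\bar{Y}(0)) = \tau_g$, an exact equality holding for each $N$. Consequently $\hat{\tau}_{g\MI} - \tau_g = \hat{\tau}_{g\MI} - \tau_{g\MI} = o_{\Pr}(1)$ (consistency), and replacing $\tau_{g\MI}$ by $\tau_g$ inside the conclusion of Theorem~\ref{thm:clt_ibe} gives $\sqrt{N}(\hat{\tau}_{g\MI} - \tau_g) \lpc \mathcal{N}(0, \dot G_{N\MI}(\bs{\theta}_N)^\top \bs{\Sigma}_N \dot G_{N\MI}(\bs{\theta}_N))$. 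Substituting $\bs{\Sigma}_N = r_1 r_0 \{\dot\Psi_N(\bs{\theta}_N)\}^{-1}\Cov_N\{\psi_{1i}(\bs{\theta}_N)-\psi_{0i}(\bs{\theta}_N)\}\{\dot\Psi_N(\bs{\theta}_N)^\top\}^{-1}$ then reproduces $V_{g\MI}$ exactly.

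The argument is short because it leans almost entirely on results already established; the only real obstacle is bookkeeping rather than analysis. Specifically, I would take care that the matching $\tau_{g\MI} = \tau_g$ is an identity valid at each finite $N$ (not merely an asymptotic statement), so that no additional approximation error is incurred when transferring the CLT, and I would double-check that the regularity hypotheses of Theorem~\ref{thm:clt_ibe}---in particular the uniform boundedness of $\sigma_{\max}(\bs{\Sigma}_N)$---are genuinely implied by Conditions~\ref{cond:population_minimum}--\ref{cond:empirical_minimum} together with Condition~\ref{cond:MI_g_h}, rather than assumed afresh.
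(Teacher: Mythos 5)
Your proposal is correct and takes essentially the same route as the paper's own proof: apply Theorem \ref{thm:clt_ibe} (whose hypotheses are supplied by Theorem \ref{thm:theta_clt} under Conditions \ref{cond:population_minimum}--\ref{cond:empirical_minimum} together with Condition \ref{cond:MI_g_h}), then establish the exact finite-$N$ identity $\tau_{g\MI}=\tau_g$ by contracting the population estimating equation $\Psi_N(\bs{\theta}_N)=\bs{0}$ with $\bs{q}_z(\bs{\theta}_N)^\top$ via Condition \ref{cond:consis_MI}, and finally substitute $\bs{\Sigma}_N$ to obtain $V_{g\MI}$. The paper runs the same contraction in the opposite direction, expanding $\bar{Y}(z)-\E_N\{h_z(\bs{X}_i;\bs{\theta}_N)\}$ as $r_z^{-1}\bs{q}_z(\bs{\theta}_N)^\top\Psi_N(\bs{\theta}_N)=0$, so the two arguments coincide.
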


From Corollary \ref{cor:model_imp_consist}, as long as Condition \ref{cond:consis_MI} holds, the model-imputed estimator is consistent for $\tau_g$ and is robust to arbitrary model misspecification. 
Therefore, in practice, it is often desirable to design estimating and imputation functions such that Condition \ref{cond:consis_MI} holds. 
As verified in
\S \ref{sec:glm_cond}, 
if we choose the estimating functions $\psi_z$s to be the derivative of the minus log-density functions of some generalized linear models 
in \eqref{eq:glm} {\lad with canonical links}, then Condition \ref{cond:consis_MI} holds automatically. 
Therefore, generalized linear models {\lad with canonical links}
can be
preferable for imputing the potential outcomes and should be recommended in practice. 
{\new Moreover, we can further use Theorem \ref{thm:m_est_variance_estimate} to construct 
large-sample confidence intervals for the average treatment effects.}

As a side note, 
if the units are i.i.d.\ samples from a superpopulation and we correctly specify the data generating model, 
then we expect that $
N^{-1} \sum_{i=1}^N  h_z(\bs{X}_i; \bs{\theta}_N)
\approx 
N^{-1} \sum_{i=1}^N  \E(Y(z) \mid \bs{X}_i )
\approx 
\E\{ \E(Y(z) \mid \bs{X} ) \}
= 
\E(Y(z))
\approx \bar{Y}(z), 
$
implying that $\tau_{g\MI} \approx \tau_g$. 
This indicates the consistency of model-imputed estimator under correct model specification. 
Nevertheless,
it is still preferable to 
choose estimating and imputation functions such that Condition \ref{cond:consis_MI} holds, 
which makes the model-imputed estimation robust to model misspecification. 

{\new Under Condition \ref{cond:consis_MI}, we can show that $\E_N^z \{h_z(X_i; \hat{\theta}_N)\} = \E_N^z \{Y_i\}$ for $z=0,1$, a property termed as prediction unbiasedness by \citet{GuoBasse21}. 
Therefore, $\hat{\tau}_{g\MI}$ with $g=\id$ is also a generalized Oaxaca-Blinder estimator studied in \citet{GuoBasse21}. 
A technical difference is that 
the asymptotic Normality in Corollary \ref{cor:model_imp_consist} is driven mainly by that of the Z-estimator $\hat{\theta}_N$; 
see \S \ref{sec:MA_consist} and \S \ref{sec:conn_mi_ma} for more related discussion.
}

\section{Connection between model-imputed and model-assisted approach}\label{sec:conn_mi_ma}

From \S \ref{sec:impute}, the model-imputed estimator is consistent for the average treatment effect $\tau_g$ and robust to model misspecification 
as long as the parameter estimation satisfies Condition \ref{cor:model_imp_consist}, as shown in Corollary \ref{cor:model_imp_consist}. 
From \S \ref{sec:MA_consist}, 
the model-assisted estimator is always consistent for the average treatment effect $\tau_g$ and robust to model misspecification, as shown in Theorem \ref{thm:clt_mae}. 
These two estimators are indeed closely related, as demonstrated in the corollary below.

\begin{corollary}\label{cor:equi_i_a}
	Under the conditions in Corollary \ref{cor:model_imp_consist} and Theorem \ref{thm:clt_mae}, 
	the model-imputed and model-assisted estimators are equivalent, i.e.,  $\hat{\tau}_{g\MI} = \hat{\tau}_{g\MA}$, 
	and their asymptotic variances in \eqref{eq:V_g_MI} and \eqref{eq:V_g_MA}, although having quite different forms, are exactly the same, i.e., $V_{g\MI} = V_{g\MA}$.  
\end{corollary}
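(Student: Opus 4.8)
The plan is to establish the two claims in turn: first the exact pointwise identity $\hat{\tau}_{g\MI} = \hat{\tau}_{g\MA}$, then the exact equality $V_{g\MI} = V_{g\MA}$, the latter being the substantive part. Throughout I write $\bs{q}_z = \bs{q}_z(\bs{\theta}_N)$, $\psi_{zi} = \psi_{zi}(\bs{\theta}_N)$, $\dot{h}_{zi} = \dot{h}_z(\bs{X}_i; \bs{\theta}_N)$, and $D_i = \psi_{1i} - \psi_{0i}$, and recall from Condition \ref{cond:consis_MI} that $\bs{q}_z^\top \psi_{zi} = Y_i(z) - h_z(\bs{X}_i; \bs{\theta}_N)$ and $\bs{q}_z^\top \psi_{(1-z)i} = 0$. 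For the estimator identity, since $\hat{\bs{\theta}}_N$ is a root of $\hat{\Psi}_N$, contracting $\hat{\Psi}_N(\hat{\bs{\theta}}_N) = \bs{0}$ on the left by $\bs{q}_z(\hat{\bs{\theta}}_N)^\top$ and using Condition \ref{cond:consis_MI} collapses the estimating equation to the prediction-unbiasedness relation $\E_N^z\{Y_i - h_z(\bs{X}_i; \hat{\bs{\theta}}_N)\} = 0$. Substituting this into $\bar{Y}_z(\hat{\bs{\theta}}_N) = \E_N^z\{Y_i\} - \E_N^z\{h_z(\bs{X}_i; \hat{\bs{\theta}}_N)\} + \E_N\{h_z(\bs{X}_i; \hat{\bs{\theta}}_N)\}$ gives $\bar{Y}_z(\hat{\bs{\theta}}_N) = \E_N\{h_z(\bs{X}_i; \hat{\bs{\theta}}_N)\}$ for $z = 0, 1$, so $\hat{\tau}_{g\MA} = \hat{\tau}_{g\MI}$ for every $N$.

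For the variance identity I would rewrite $V_{g\MA}$ as a single quadratic form in $C \equiv \Cov_N(D_i)$. Evaluating the two $\bs{q}_z$-identities at $\bs{\theta}_N$ and using $\E_N\{h_z(\bs{X}_i; \bs{\theta}_N)\} = \bar{Y}(z)$ from \eqref{eq:ave_h_Y_bar} (itself a consequence of contracting $\Psi_N(\bs{\theta}_N) = \bs{0}$ by $\bs{q}_z^\top$) gives the linearizations $Y_i(1; \bs{\theta}_N) - \bar{Y}(1) = \bs{q}_1^\top D_i$ and $Y_i(0; \bs{\theta}_N) - \bar{Y}(0) = -\bs{q}_0^\top D_i$, where the two cross-relations in Condition \ref{cond:consis_MI} supply both the collapse to $D_i$ and the opposite sign. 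Writing $p_z = \dot{g}(\bar{Y}(z))\,\bs{q}_z$, the centered summands of \eqref{eq:V_g_MA} are $p_1^\top D_i$ and $-p_0^\top D_i$; expanding the three variance/covariance terms and using $r_0 = 1 - r_1$ collapses $V_{g\MA}$ to $m^\top C m$ with $m = \sqrt{r_1 r_0}\,(p_1/r_1 - p_0/r_0)$. In parallel, feeding the sandwich $\bs{\Sigma}_N$ of Theorem \ref{thm:theta_clt} into \eqref{eq:V_g_MI} yields $V_{g\MI} = r_1 r_0\, c^\top C c$, where $c$ solves $\dot{\Psi}_N(\bs{\theta}_N)^\top c = \dot{G}_{N\MI}(\bs{\theta}_N)$.

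It then remains to prove $\dot{\Psi}_N(\bs{\theta}_N)^\top (p_1/r_1 - p_0/r_0) = -\dot{G}_{N\MI}(\bs{\theta}_N)$, which forces $m = -\sqrt{r_1 r_0}\,c$ and hence $V_{g\MA} = m^\top C m = r_1 r_0\, c^\top C c = V_{g\MI}$ (the sign is immaterial because $C$ enters only through a quadratic form). This is the main obstacle and the one place genuine computation is needed. I would differentiate both Condition \ref{cond:consis_MI} identities in $\bs{\theta}$, obtaining $\bs{q}_z^\top \dot{\psi}_{zi} = -\dot{h}_{zi}^\top - \psi_{zi}^\top \dot{\bs{q}}_z$ and $\bs{q}_z^\top \dot{\psi}_{(1-z)i} = -\psi_{(1-z)i}^\top \dot{\bs{q}}_z$, and then assemble $\bs{q}_z^\top \dot{\Psi}_N(\bs{\theta}_N) = r_1 \E_N\{\bs{q}_z^\top \dot{\psi}_{1i}\} + r_0 \E_N\{\bs{q}_z^\top \dot{\psi}_{0i}\}$. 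The crucial cancellation is that every $\dot{\bs{q}}_z$ term aggregates into a multiple of $(r_1 \E_N\{\psi_{1i}^\top\} + r_0 \E_N\{\psi_{0i}^\top\})\dot{\bs{q}}_z = \Psi_N(\bs{\theta}_N)^\top \dot{\bs{q}}_z = \bs{0}$, since $\bs{\theta}_N$ is the population root; what survives is $\dot{\Psi}_N(\bs{\theta}_N)^\top \bs{q}_z = -r_z\, \E_N\{\dot{h}_{zi}\}$. Combining the $z=1$ and $z=0$ cases gives $\dot{\Psi}_N(\bs{\theta}_N)^\top(p_1/r_1 - p_0/r_0) = -\dot{g}(\bar{Y}(1))\E_N\{\dot{h}_{1i}\} + \dot{g}(\bar{Y}(0))\E_N\{\dot{h}_{0i}\}$, which equals $-\dot{G}_{N\MI}(\bs{\theta}_N)$ by the chain-rule form of $\dot{G}_{N\MI}$ together with \eqref{eq:ave_h_Y_bar}, completing the proof. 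As a sanity check, the exact equality is consistent with (but stronger than) the asymptotic equivalence already implied by $\hat{\tau}_{g\MA} = \hat{\tau}_{g\MI}$ through Theorem \ref{thm:clt_mae} and Corollary \ref{cor:model_imp_consist}.
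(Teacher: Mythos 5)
Your proposal is correct and follows essentially the same route as the paper's proof: both parts hinge on contracting the estimating equations with $\bs{q}_z$ (the empirical one at $\hat{\bs{\theta}}_N$ for the estimator identity, the population one at $\bs{\theta}_N$ for $\E_N\{h_z(\bs{X}_i;\bs{\theta}_N)\}=\bar{Y}(z)$), on the key derivative identity $\dot{\Psi}_N(\bs{\theta}_N)^\top \bs{q}_z = -r_z \E_N\{\dot{h}_z(\bs{X}_i;\bs{\theta}_N)\}$ obtained by differentiating Condition \ref{cond:consis_MI} and killing the $\dot{\bs{q}}_z$ terms via $\Psi_N(\bs{\theta}_N)=\bs{0}$, and on recognizing $\bs{q}_z^\top\{\psi_{1i}(\bs{\theta}_N)-\psi_{0i}(\bs{\theta}_N)\}$ as the centered adjusted potential outcomes. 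The only difference is bookkeeping: you reduce $V_{g\MA}$ and $V_{g\MI}$ separately to quadratic forms in $\Cov_N\{\psi_{1i}(\bs{\theta}_N)-\psi_{0i}(\bs{\theta}_N)\}$ and match the vectors up to sign, whereas the paper substitutes its expression for $\dot{G}_{N\MI}(\bs{\theta}_N)$ into \eqref{eq:V_g_MI} and simplifies until \eqref{eq:V_g_MA} appears.
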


From Corollary \ref{cor:equi_i_a}, any model-imputed estimator with parameter estimation satisfying Condition \ref{cond:consis_MI} can be equivalently viewed as a model-assisted estimator. 
On the contrary, we can also view the model-assisted estimator as a model-imputed estimator, as explained below. 
Consider any model-imputed estimator with adjustment functions $h_z$s. We assume that $h_z$s do not have intercept terms; this does not lose any generality since the model-assisted estimator is invariant to any constant shift of the adjustment function. 
We then consider the model-imputed estimator using imputation function $\tilde{h}_z$s of the following form:  
$
\tilde{h}_z(x; \tilde{\theta}) = \alpha_z + h_z(x; \theta)
$
for $z=0,1$, 
which augments the adjustment function with an intercept term. Here, $\tilde{\theta} = (\alpha_1, \alpha_0, \theta^\top)^\top$. 
We do not impose any constraint on the estimation of $\theta$, which can be estimated in the same way as in the model-assisted estimator. 
For the intercepts $\alpha_1$ and $\alpha_0$, we assume that they are estimated based on the following estimating functions: 
\begin{align}\label{eq:psi_ma_mi}
    \tilde{\psi}_1(y, x; \tilde{\theta}) = 
    \begin{pmatrix}
        y - \alpha_1 - h_1(x; \theta)\\
        0
    \end{pmatrix}, 
    \quad 
    \tilde{\psi}_0(y, x; \tilde{\theta}) = 
    \begin{pmatrix}
        0\\
        y - \alpha_0 - h_0(x; \theta)
    \end{pmatrix},
\end{align}
noting that they can be combined with any estimating functions for $\theta$ to perform Z-estimation for the whole parameter vector $\tilde{\theta}$. 
As verified in \S \ref{sec:proof_equ_mi_ma}, the resulting model-imputed estimator will be equivalent to the original model-assisted estimator.

Despite the above equivalence relationship between the model-imputed and model-assisted estimators, 
we find it more preferable to distinguish them. 
We emphasize that the general model-imputed estimator is inconsistent for the average treatment effect and is not robust to model misspecification, unless the parameters are estimated in specific ways.

\section{Linear adjustment based on imputed potential outcomes or transformed covariates}\label{sec:ai}

In practice, we can hypothesize several plausible models for the relation between potential outcomes and covariates, 
and the resulting estimated models (often based on the maximum likelihood approach) and their implied conditional means of potential outcomes given covariates can then help us impute the potential outcomes. 
As discussed in \S \ref{sec:impute}, such an approach may provide inconsistent treatment effect estimation if the corresponding model assumption fails or the imputation and estimating functions do not satisfy Condition \ref{cond:consis_MI}. 
However, we can still view these imputed potential outcomes as transformed covariates, and further use model-assisted approach to adjust these covariates. 
When some hypothesized models 
can explain the variability of potential outcomes well, such an approach can be more efficient than that adjusting only original covariates. 
Below we consider linear adjustment for these imputed potential outcomes or essentially transformed covariates.
Such a strategy has also been studied by \citet{GuoBasse21} and 
\citet{Fogarty2020calibration}. 
{\new Specifically, \citet{GuoBasse21} considered adjusting each potential outcome using the corresponding imputation, and \citet{Fogarty2020calibration} considered adjusting additionally the imputation for the other potential outcome of the same unit.  
We extend their discussion by considering multiple parametric models for imputation, 
and provide general and concrete regularity conditions. 
Relatedly, \citet{gagnon2021precise} considered adjustments using imputations fitted from auxiliary observational data.}

Let $J$ be the number of hypothesized parametric models under consideration. 
For each hypothesized model $1\le j\le J$, 
we use $h_{jz}^{\MI}(x; \theta^{\MI}_j)$ ($z=0,1$) to denote the form of the imputation function. 
We 
estimate the parameter $\theta^{\MI}_j$ using a Z-estimator  $\hat{\theta}^{\MI}_{Nj}$ from 
the estimating equation as in \eqref{eq:Z_hat}
with some estimating function $\psi^{\MI}_{jz}$ for $z=0,1$, which is often the derivative of the minus log-density function implied by the corresponding model.  
We then consider the following linear adjustment function $h_{z}^{\MA}(x; \theta^{\MA})$:
\begin{align}\label{eq:h_z_MA}
    h_{z}^{\MA}(x; \theta^{\MA})
    & = 
    \alpha_z
    +
    \sum_{j=1}^J \beta_{zj} h_{j0}^{\MI}(x; \hat{\theta}^{\MI}_{Nj}) + \sum_{j=1}^J \gamma_{zj} h_{j1}^{\MI}(x; \hat{\theta}^{\MI}_{Nj}), 
    \quad 
    (z=0,1)
\end{align}
where the parameter vector
$\theta^{\MA}$ consists of $\alpha_z, \beta_{zj}$ and $\gamma_{zj}$ for $z=0,1$ and $1\le j \le J$. 
We will then use $\hat{\theta}^{\MA}_N$ that minimizes the empirical risk function as in \eqref{eq:M_hat} with squared losses, due to the optimality discussed in \S \ref{sec:optimal_loss_MA}, 
and denote the resulting model-assisted estimator by $\hat{\tau}_{g\MA\MI}$, with the subscript indicating model assisting based on imputed potential outcomes. 

Let $\theta^{\MI}_{Nj}$ be the minimizer of the finite population estimating equation as in \eqref{eq:population_risk} with estimating functions $\psi^{\MI}_{j1}$ and $\psi^{\MI}_{j0}$, for $1\le j \le J$. 
Let $\tilde{\tau}_{g\MA\MI}$ be the model-assisted estimator with adjustment function $\tilde{h}_{z}^{\MA}(x; \theta^{\MA})$, which has the same form as that in \eqref{eq:h_z_MA} but with fixed transformed covariates $h_{j0}^{\MI}(x; \theta^{\MI}_{Nj})$ and $h_{j1}^{\MI}(x; \theta^{\MI}_{Nj})$ for $1\le j \le J$, 
and squared loss for estimating $\theta^{\MA}$. 
As demonstrated below, under some regularity conditions,  
$\hat{\tau}_{g\MA\MI}$ is asymptotically equivalent to $\tilde{\tau}_{g\MA\MI}$, which, by Theorem \ref{thm:clt_mae}, is consistent for $\tau_g$ and asymptotically Normal. 

\begin{condition}\label{cond:impute_adjust}
    (a) Conditions \ref{cond:ma_continuous} and \ref{cond:ma_stable} hold for $(h_1, h_2, \theta_N) = (h_{j1}^\MI, h_{j0}^\MI, \theta_{Nj}^\MI)$, $1\le j \le J$. 
    (b)
    the finite population variances of $Y_i(z)$, $h_{j1}^\MI(X_i; \theta_{Nj}^\MI)$ and $h_{j0}^\MI(X_i; \theta_{Nj}^\MI)$ $(z=0,1; 1 \le j \le J)$ 
    are bounded for all $N$. 
    (c)
    As $N\rightarrow \infty$, 
    $N^{-1} \max_{1\le l \le N} \{Y_l(z) - \bar{Y}(z)\}^2$ and 
    $N^{-1} \max_{1\le l \le N} [h_{jz}^\MI(X_l;\theta_{Nj}^\MI) - \E_N\{h_{jz}^\MI(X_i;\theta_{Nj}^\MI)\}]^2$ converge to zero for all $z$ and $j$.  
\end{condition}

\begin{theorem}\label{thm:clt_ai}
    Under Condition \ref{cond:impute_adjust}, 
    $\hat{\tau}_{g\MA\MI} - \tilde{\tau}_{g\MA\MI} = o_{\Pr}(N^{-1/2})$, and they follow the same asymptotic Normal distribution. 
    Moreover, the asymptotic variance can be conservatively estimated analogously as in Theorem \ref{thm:MA_ci}. 
\end{theorem}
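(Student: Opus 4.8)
The plan is to recognize both $\hat{\tau}_{g\MA\MI}$ and $\tilde{\tau}_{g\MA\MI}$ as instances of the model-assisted estimator covered by Theorem \ref{thm:clt_mae}, and then to show they share a common first-order stochastic expansion. For $\hat{\tau}_{g\MA\MI}$ I would treat the adjustment function as depending on the augmented parameter $\vartheta = (\theta^{\MA}, \theta^{\MI}_1, \ldots, \theta^{\MI}_J)$, writing $H_z(x; \vartheta) = \alpha_z + \sum_j \beta_{zj} h_{j0}^{\MI}(x; \theta^{\MI}_j) + \sum_j \gamma_{zj} h_{j1}^{\MI}(x; \theta^{\MI}_j)$, where the inner parameters are estimated by the separate Z-estimators $\hat{\theta}^{\MI}_{Nj}$ and the outer linear coefficients by least squares. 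The decisive observation is that Theorem \ref{thm:clt_mae} places no restriction on how the adjustment parameters are estimated: it requires only joint root-$N$ consistency $\hat{\vartheta}_N - \vartheta_N = O_{\Pr}(N^{-1/2})$ together with the smoothness and boundedness Conditions \ref{cond:ma_continuous} and \ref{cond:ma_stable} for $H_z$.

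First I would assemble the ingredients. Root-$N$ consistency of each $\hat{\theta}^{\MI}_{Nj}$ follows from Theorem \ref{thm:theta_clt} under Condition \ref{cond:impute_adjust}(a), and root-$N$ consistency of the least-squares coefficient estimator follows from Theorem \ref{thm:theta_clt} applied to the squared-loss estimating equations, whose boundedness requirements are supplied by Condition \ref{cond:impute_adjust}(b),(c). Since $H_z$ is a finite linear combination of the individually well-behaved imputation functions plus an intercept, Condition \ref{cond:impute_adjust} also delivers Conditions \ref{cond:ma_continuous} and \ref{cond:ma_stable} for the augmented function. Applying Theorem \ref{thm:clt_mae} then yields $\sqrt{N}(\hat{\tau}_{g\MA\MI} - \tau_g) \lpc \mathcal{N}(0, V_{g\MA})$ with $V_{g\MA}$ evaluated at the probability limit $\vartheta_N = (\theta^{\MA}_N, \theta^{\MI}_{N1}, \ldots, \theta^{\MI}_{NJ})$.

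The key step is to extract the common oracle. By the central argument of Theorem \ref{thm:clt_mae}, the model-assisted estimator is asymptotically equivalent to its oracle version in which the estimated adjustment parameter is frozen at its probability limit, with leading stochastic term driven by $\bar{Y}_z(\vartheta_N) - \bar{Y}(z)$. Because the least-squares coefficient fitted on the estimated transformed covariates $h_{jz}^{\MI}(x; \hat{\theta}^{\MI}_{Nj})$ and the one fitted on the population transformed covariates $h_{jz}^{\MI}(x; \theta^{\MI}_{Nj})$ share the same probability limit $\theta^{\MA}_N$, both $\hat{\tau}_{g\MA\MI}$ and $\tilde{\tau}_{g\MA\MI}$ are asymptotically equivalent to the identical oracle built from $H_z(\cdot; \theta^{\MA}_N, \theta^{\MI}_{N1}, \ldots, \theta^{\MI}_{NJ})$; hence $\hat{\tau}_{g\MA\MI} - \tilde{\tau}_{g\MA\MI} = o_{\Pr}(N^{-1/2})$ and the two share one asymptotic Normal law. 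For the variance claim I would apply Theorem \ref{thm:MA_ci} to the augmented adjustment function, obtaining the conservative estimator converging to $\tilde{V}_{g\MA} \ge V_{g\MA}$.

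The hard part will be rigorously justifying that the nested, hybrid estimation of the inner parameters $\theta^{\MI}_j$ — performed by the model-specific Z-estimators rather than by the outer least squares — leaves the first-order asymptotics intact, and in particular that the least-squares probability limit $\theta^{\MA}_N$ is the same whether one feeds in estimated or population transformed covariates. Equivalently, in a direct treatment one must show that the sensitivity of $\hat{\tau}_{g\MA\MI}$ to each $\theta^{\MI}_j$ is itself $O_{\Pr}(N^{-1/2})$, so that its product with $\hat{\theta}^{\MI}_{Nj} - \theta^{\MI}_{Nj} = O_{\Pr}(N^{-1/2})$ is negligible at the $N^{-1/2}$ scale. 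This bound rests on the structural feature of model-assisted estimation that the relevant derivative reduces to the difference between an arm-$z$ sample average and the full-population average of $\dot{h}^{\MI}_{jz}$, which is $O_{\Pr}(N^{-1/2})$ under the CRE by Condition \ref{cond:impute_adjust}(a),(c); one must additionally track the analogous propagation through the least-squares normal equations for $\theta^{\MA}$ and confirm it carries the same order.
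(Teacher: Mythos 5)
Your overall strategy---reduce both estimators to a common oracle in which all adjustment parameters are frozen at their probability limits---is the same as the paper's, and your closing paragraph even identifies the structural fact that makes this work. But your primary route, a blanket application of Theorem \ref{thm:clt_mae} to the augmented parameter $\vartheta = (\theta^{\MA}, \theta^{\MI}_1,\ldots,\theta^{\MI}_J)$, has a genuine gap: Theorem \ref{thm:clt_mae} requires $\hat{\vartheta}_N - \vartheta_N = O_{\Pr}(N^{-1/2})$, and this includes root-$N$ consistency of the outer least-squares coefficients $\hat{\theta}^{\MA}_N$, which are fitted on the \emph{estimated} transformed covariates $h^{\MI}_{jz}(X_i;\hat{\theta}^{\MI}_{Nj})$. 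Your justification (``Theorem \ref{thm:theta_clt} applied to the squared-loss estimating equations, whose boundedness requirements are supplied by Condition \ref{cond:impute_adjust}(b),(c)'') does not go through: (i) those normal equations have random, assignment-dependent covariates, whereas Theorem \ref{thm:theta_clt} is formulated for fixed estimating functions $\psi_z(y,x;\theta)$; and (ii) even with the covariates fixed at $h^{\MI}_{jz}(X_i;\theta^{\MI}_{Nj})$, Condition \ref{cond:stable_limits} for the least-squares $\psi$'s would require bounded covariances of products such as $Y_i(z)\,h^{\MI}_{j0}(X_i;\theta^{\MI}_{Nj})$ (fourth-moment-type bounds) and a nonsingular Gram matrix, none of which follow from Condition \ref{cond:impute_adjust}(b),(c), which only bound second moments. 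A related problem affects your claim that Condition \ref{cond:impute_adjust} ``delivers'' Condition \ref{cond:ma_continuous} for the augmented $H_z$: the cross term $(\beta_{zj}-\beta_{Nzj})\,\dot{h}^{\MI}_{j0}(X_i;\theta^{\MI}_{Nj})$ appearing in $\Delta(\dot{H}_{zi},\vartheta_N,\delta)$ requires $\sup_N \E_N\|\dot{h}^{\MI}_{j0}(X_i;\theta^{\MI}_{Nj})\| <\infty$, which Condition \ref{cond:impute_adjust} does not supply (Condition \ref{cond:ma_stable}(iii) controls only the covariance of $\dot{h}$, not its mean).

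The paper's proof is structured precisely to avoid needing any of this. It establishes only $o_{\Pr}(1)$ consistency of $\hat{\beta}_{Nzj},\hat{\gamma}_{Nzj}$ (and of $\tilde{\beta}_{Nzj},\tilde{\gamma}_{Nzj}$) toward the common limits ${\beta}_{Nzj},{\gamma}_{Nzj}$, via the sample-covariance Lemmas \ref{lemma:consist_of_var_e} and \ref{lemma:consist_of_var_e_est_para}, which need exactly the second-moment and Lindeberg-type bounds in Condition \ref{cond:impute_adjust}(b),(c). It then exploits the fact that in a model-assisted estimator the coefficients multiply only centered differences $\E_N^z\{h^{\MI}_{j\cdot}\} - \E_N\{h^{\MI}_{j\cdot}\}$, which are $O_{\Pr}(N^{-1/2})$ by Chebyshev's inequality (second moments again), while Lemma \ref{lemma:maa_1}, using root-$N$ consistency of the inner $\hat{\theta}^{\MI}_{Nj}$ only, replaces the estimated inner parameters inside those differences by their limits at cost $o_{\Pr}(N^{-1/2})$. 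Hence each coefficient error contributes $o_{\Pr}(1)\cdot O_{\Pr}(N^{-1/2}) = o_{\Pr}(N^{-1/2})$, both $\hat{\tau}_{g\MA\MI}$ and $\tilde{\tau}_{g\MA\MI}$ collapse to the same oracle quantity, and the finite-population CLT plus the delta method finish the argument. In short, the propagation step you mention only in passing (``one must additionally track the analogous propagation through the least-squares normal equations'') is not a detail to be confirmed; it is the device that makes the weaker, actually available rate for the outer coefficients sufficient, and it is what a direct invocation of Theorem \ref{thm:clt_mae} on the augmented parameter cannot replace.
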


For conciseness, 
we relegate the expressions for the asymptotic variance and conservative variance estimator to \S \ref{app:ai}. 

\begin{remark}
We can also consider adjustment functions that have the same form as in \eqref{eq:h_z_MA} but include both parameters $\theta^A$ for linear adjustment coefficients and $\theta^{\MI}_j$'s for transforming the covariates, and estimate these parameters simultaneously using, say, the squared loss functions. 
This can lead to model-assisted estimator with smaller estimated standard error due to the optimality discussed in \S \ref{sec:optimal_loss_MA}. 
However, the two-step estimation as in \eqref{eq:h_z_MA} can sometimes be computationally easier and more stable, since the first-step estimation for $\hat{\theta}^{\MI}_j$'s can be classical maximum likelihood estimation,
and the second-step estimation is simply a least squares fitting. 
For example, for many generalized linear models, the minus log-likelihood is a convex function \citep{Wedderburn76}, which eases the computation for the maximum likelihood estimator. 
\end{remark}

\begin{remark}\label{rmk:tran_ind_effect}
    The idea of using transformed covariates can also be applied to \S \ref{sec:ite}. In particular, when modeling the relation between individual treatment effects and covariates, we can also use transformed covariates obtained from some fitted models instead of original ones. 
\end{remark}

\section{Understanding individual effects from working models on potential outcomes}\label{sec:ind_po}

We now discuss how to understand individual treatment effects by working models on the relation between potential outcomes and covariates, which is briefly mentioned in \S \ref{sec:ite}. 
Suppose $h_z(X_i; \theta)$ is the hypothesized functional form for estimating the individual potential outcome $Y_i(z)$ for $z=0,1$, and let the ``oracle'' parameter $\theta_N$ be the root of some finite population estimating equation in \eqref{eq:population_risk} with some estimating functions $\psi_z$s, or the minimizer of the expected value of the risk function in \eqref{eq:M_hat} with some loss functions $\loss_z$s.  
We can then decompose the potential outcome $Y_i(z)$ into two parts: 
the mean part $h_z(\bs{X}_i; \bs{\theta}_N)$ and the corresponding residual $e_i(z) \equiv Y_i(z) - h_z(\bs{X}_i; \bs{\theta}_N)$. 
The decomposition of potential outcomes naturally provide the following decomposition of individual treatment effects: for $1\le i \le N$, 
\begin{align}\label{eq:tau_i}
	\tau_i \equiv Y_i(1) - Y_i(0) = \{ h_1(\bs{X}_i; \bs{\theta}_N) - h_0(\bs{X}_i; \bs{\theta}_N)\} + \{e_i(1) - e_i(0)\} \equiv \mu_i + \varepsilon_i, 
\end{align}
where $\mu_i$ and $\varepsilon_i$ denote the mean and residual of the individual treatment effect. 
When $h_z$s and $\loss_z$s correspond to the conditional mean functions and minus log-density functions of correctly specified models for both potential outcomes, then $\mu_i \approx \E( Y(1) - Y(0) \mid \bs{X} = \bs{X}_i)$, the \textit{true} conditional average treatment effect given covariate value $\bs{X}_i$. 
However, in the presence of model misspecification, especially under our finite population inference framework, 
the interpretation of the decomposition 
in \eqref{eq:tau_i} becomes difficult. 
In general, we can understand $h_z(X_i, \theta_N)$s as the ``best'' approximations of potential outcomes $Y_i(z)$s in terms of some prespecified loss functions $\loss_z$s, 
e.g., the squared loss
$\loss_z(Y_i(z), X_i;\theta) = \{Y_i(z) - h(X_i;\theta)\}^2$, 
and can therefore intuitively understand $\mu_i$'s as good approximations for the individual treatment effects $\tau_i$'s.  
Theorems \ref{thm:theta_clt} and \ref{thm:m_est_variance_estimate} can then help us conduct statistical inference for $\mu_i$s in \eqref{eq:tau_i}.  

Compared to the above strategy, the approach in \S \ref{sec:ite} can be more preferable. 
In \S \ref{sec:ite}, we suggest directly imposing models on the relation between individual treatment effects and potential outcomes, which is our target of interest. 

\begin{remark}
The fitted models for both potential outcomes could potentially improve the estimation for individual treatment effects. 
Following the idea in \citet{Semenova23}, for all $i$,  we can estimate $\tau_i$ by 
\begin{align}\label{eq:tau_i_imp}
   \hat{\tau}_i =\{h_1(X_i;\hat{\theta}_N)-h_0(X_i;\hat{\theta}_N)\}+Z_i\{Y_i -h_1(X_i;\hat{\theta}_N)\}/r_1 - (1-Z_i)\{Y_i-h_0 (X_i;\hat{\theta}_N)\}/r_0, 
\end{align}
where $h_1(X_i;\hat{\theta}_N)$ and $h_0(X_i;\hat{\theta}_N)$ can be fitted conditional means of potential outcomes from some working models. 
When $\hat{\theta}_N$ is predetermined and fixed, 
$\hat{\tau}_i$ in \eqref{eq:tau_i_imp} is exactly unbiased for $\tau_i$ 
and can be used to construct the estimating equation in \eqref{eq:Psi_ind_effect_est} as well;  
the corresponding Z-estimator from \eqref{eq:Psi_ind_effect_est} can then be analyzed using the developed Z-estimation theory in \S \ref{sec:m_est}. 
However, when $\hat{\theta}_N$ is estimated using the observed data,  $\hat{\tau}_i$ in \eqref{eq:tau_i_imp} may not be exactly unbiased for $\tau_i$. 
This will lead to some technical difficulty when we use \eqref{eq:tau_i_imp} to construct the estimating equation in \eqref{eq:Psi_ind_effect_est}. 
It will also be interesting to investigate the optimal choice of the $h_z$s. 
These are beyond the scope of this paper, and we leave them for future study.
\end{remark}

\section{Simulation studies}\label{sec:simu}

\subsection{Heterogeneous treatment effects}\label{sec:simu_heter}

We conduct a simulation to illustrate the performance of various approaches for estimating average treatment effects. 
We simulate potential outcomes and covariates as i.i.d.\ draws from the following model with $n =1000$: 
\begin{align}\label{eq:po_gen}
    & Y(0) =  6 + \exp(1+X_1+X_2) + 3 \varepsilon_0, \ \ 
    Y(1) = 11 + \exp(3-X_1^2-X_2) + 3 \varepsilon_1, 
\end{align}
where $X_1, X_2, \varepsilon_0$ and $\varepsilon_1$ are mutually independent and follow the standard Normal distribution. 
We further round the potential outcomes to the nearest nonnegative integers, making them count outcomes. 
Once generated, we fix the potential outcomes and covariates, and generate $10^4$ random treatment assignments from the CRE with half of the units receiving treatment and control, respectively, mimicking the finite population inference. 

We focus on estimating the average treatment effect at the $\log$-scale, i.e., $\tau_{\log}$. 
We consider Poisson, negative binomial and linear regression models, where the former two are the most popular models for analyzing count data, 
and consider cases with and without treatment-covariate interaction. 
We conduct simulation for model-based, model-imputed and model-assisted estimators based on these regression models as discussed in \S \ref{sec:ave_effect}. 
Here we do not consider negative binomial regression model without treatment-covariate interaction since it often leads to numerical convergence issue for maximizing the likelihood function, as well as the model-based approach for linear models since it will involve logarithm of negative values.  
The numerical results for these estimators are shown in the first six rows and last three rows of Table \ref{tab:simu}, where we also include the unadjusted estimator without using any covariate as a benchmark. 

We consider then the optimal model-assisted estimator discussed in \S \ref{sec:optimal_loss_MA} with squared loss functions. 
The form of the adjustment function is based on the mean specification from either the Poisson or Negative binomial regression models; noting that both models give the same mean specification. 
The numerical results for this estimator are shown in the seventh row of Table \ref{tab:simu}. 

We finally consider the model-assisted estimator linearly adjusting the imputed potential outcomes as discussed in \S \ref{sec:ai}. 
We consider three ways for imputing the potential outcomes: the first two use the fitted values from the Poisson or Negative binomial regression models, and the third one uses the Poisson mean specification with squared losses for estimating the parameter. 
In other words, all three ways use the same form of imputation function, but with different estimation for the parameter: the first two uses maximum likelihood estimator from either the Poisson or Negative binomial regression models, and the third one minimizes the squared loss as discussed in \S \ref{sec:optimal_loss_MA}. 
The numerical results for this estimator are shown in the eighth to tenth rows of Table \ref{tab:simu}.

\begin{table}
    \centering
    \caption{The performance of various approaches using different statistical models for estimating the average treatment effect at the $\log$-scale. The data are generated from \eqref{eq:po_gen}. 
    ``\MB'' denotes the model-based approach in \S \ref{sec:based}, 
    ``\MI'' denotes the model-imputed approach in \S \ref{sec:impute}, 
    and 
    ``\MA'' denotes the model-assisted approach in \S \ref{sec:MA_consist}, where the unknown parameters are estimated by maximizing the model-implied likelihood function, unless otherwise stated.
    ``\MA (squared loss)'' denotes the model-assisted approach with squared loss as in \S \ref{sec:optimal_loss_MA}, where we use the Poisson mean specification with treatment-covariate interaction for the form of the adjustment functions. 
    ``\MA\MI'' denotes the model-assisted approach in \S \ref{sec:ai} linearly adjusting the imputed potential outcomes from fitted Poisson regression, fitted Negative binomial regression, and the Poisson mean specification with squared losses for estimating the parameter as in \S \ref{sec:optimal_loss_MA}. 
    The ``\textup{Unadjusted}'' denotes the model-assisted approach without using any covariate adjustment. 
    The equality sign in Column 3 indicates that several approaches provide the same estimation. 
    Columns 4--7 show the bias, standard deviation, root mean squared error and estimated standard error, all scaled by $\sqrt{n}$, for each of the estimators. 
    }
    \label{tab:simu}
    \begin{tabular}{cclrrrrr}
    \toprule
    Model & Interaction & Estimation & Bias & SD & RMSE & ESE & Coverage \\
    \midrule
    Pois & No & B$=$I$=$A & $0.080$ & $2.246$ & $2.248$ & $3.261$ & $0.9708$
    \\
    \midrule
    Pois & Yes & B & $7.691$ & $1.225$ & $7.788$ & $1.560$ & $0.0000$
    \\
    & & I$=$A & $0.255$ & $1.336$ & $1.360$ & $1.591$ & $0.9435$
    \\
    \midrule
    NBin & Yes & B & $4.523$ & $0.644$ & $4.569$ & $0.928$ & $0.0000$ 
    \\
     & & I & $2.238$ & $0.987$ & $2.447$ & $1.322$ & $0.6420$
    \\
    & & A & $0.093$ & $1.814$ & $1.817$ & $2.390$ & $0.9477$
    \\
    \midrule
    Pois & Yes & A (squared loss) & $0.121$ & $0.785$ & $0.794$ & $0.950$ & $0.9640$
    \\
    mean & & AI (Pois) & $0.322$ & $1.164$ & $1.207$ & $1.352$ & $0.9497$
    \\
    & & AI (Nbin) & $0.227$ & $1.472$ & $1.490$ & $1.828$ & $0.9479$
    \\
    & & AI (square loss) & $0.108$ & $0.778$ & $0.786$ & $0.945$ & $0.9651$
    \\
    \midrule
    Linear & No & I$=$A & $0.090$ & $2.282$ & $2.284$ & $3.324$ & $0.9780$
    \\
    \midrule
    Linear & Yes & I$=$A & $0.104$ & $2.197$ & $2.199$ & $2.817$ & $0.9502$
    \\
    \midrule
    Unadjusted & & & $0.053$ & $2.308$ & $2.308$ & $3.305$ & $0.9728$
    \\
    \bottomrule
    \end{tabular}
\end{table}

We now compare the above estimators based on their simulation results shown in Table \ref{tab:simu}. 
The model-based approach generally provides biased estimation for the average treatment effect, except for the Poisson regression without treatment-covariate interaction, under which the model-based approach leads to the same estimation as the model-imputed and model-assisted approaches; note that such equivalence generally fails when we consider average effects at other scales.  
In addition, the model-imputed approach using negative binomial regression also provides biased estimation,
since the corresponding likelihood equations do not satisfy Condition \ref{cond:consis_MI}. 
Interestingly, for the negative binomial regression,
the model-imputed approach has smaller bias and mean squared error than the model-based approach.

Compare the four model-assisted estimators using the same adjustment function but different ways for estimating the parameter, shown in the first, third, sixth and seventh row of Table \ref{tab:simu}. 
Among them, the model-assisted estimator with squared loss has the smallest root mean squared error and the smallest estimated standard error, confirm its optimality established in \S \ref{sec:optimal_loss_MA}. 
Moreover, 
except the one uses Poisson regression without treatment-covariate interaction, 
these estimators are more precise than those based on linear adjustment functions. This indicates the importance of the form of adjustment functions.

We now discuss the three model-assisted estimators linearly adjusting the imputed potential outcomes, shown in the eighth to tenth rows of Table \ref{tab:simu}. 
All of them improve their original model-assisted estimators in the third, sixth and seventh rows. This is not surprising because (i) we enrich the form of adjustments, and (ii) we estimate the linear adjustment coefficients in the second step using the squared loss that is optimal as studied in \S \ref{sec:optimal_loss_MA}. 
Under our simulation setting, the improvement is quite small when we use squared loss to estimate the parameter in the imputation function in the first step, 
whereas it is most substantial we use the maximum likelihood estimator from the Negative binomial model.  

The coverage probabilities of confidence intervals for the biased estimators are either zero or around $64\%$, which is substantially lower than the nominal level $95\%$. This is not surprising since the biases of these estimator are not negligible compared to their standard deviations.
Among the three inconsistent estimators, the model-imputed estimator based on the negative binomial regression model has the smallest bias, which results in the highest coverage probability of about $64\%$ among them.  
The coverage probabilities of confidence intervals for the consistent estimators are close or above the nominal level, due to the conservativeness in variance estimation.

Finally, we try to explain individual effect heterogeneity using covariates. 
We consider the best linear approximation of the individual effects using either the original covariates or the imputed potential outcomes from the Poisson regression model with treatment-covariate interaction as discussed in Remark \ref{rmk:tran_ind_effect}. 
We use $R^2 = 1 - \sum_{i=1}^n(\tau_i - u_i)^2/\sum_{i=1}^n \tau_i^2$ to denote the proportion of treatment effect variation explained by the covariates, 
where $u_i$'s denote the estimated best approximation. 
Across all simulations, the average values of $R^2$ are, respectively, $0.50$ and $0.71$, when using the original and transformed covariates. 
These indicate that 
models
can provide useful transformations of original covariates and help better explain treatment effect heterogeneity.

\begin{table}
    \centering
    \caption{ 
    The description of the table is the same as Table \ref{tab:simu}, except that the data are generated from \eqref{eq:po_gen_zero}, under which the treatment has no effect for any unit. 
    }
    \label{tab:simu_null}
    \begin{tabular}{cclrrrrr}
    \toprule
    Model & Interaction & Estimation & Bias & SD & RMSE & ESE & Coverage \\
    \midrule
    Pois & No & B$=$I$=$A & $-0.008$ & $0.645$ & $0.645$ & $0.638$ & $0.9449$
    \\
    \midrule
    Pois & Yes & B & $-0.006$ & $0.584$ & $0.584$ & $0.581$ & $0.9463$
    \\
    & & I$=$A & $-0.008$ & $0.647$ & $0.647$ & $0.633$ & $0.9425$
    \\
    \midrule
    NBin & Yes & B & $-0.006$ & $0.576$ & $0.576$ & $0.572$ & $0.9460$
    \\
     & & I & $-0.008$ & $0.630$ & $0.630$ & $0.615$ & $0.9397$
    \\
    & & A & $-0.008$ & $0.661$ & $0.661$ & $0.649$ & $0.9433$
    \\
    \midrule
    Pois/Nbin & Yes & A (squared loss) & $0.004$ & $0.413$ & $0.413$ & $0.413$ & $0.9500$
    \\
    mean & & AI (Pois) & $-0.007$ & $0.580$ & $0.580$ & $0.580$ & $0.9460$
    \\
    specification & & AI (Nbin) & $-0.007$ & $0.593$ & $0.593$ & $0.590$ & $0.9443$
    \\
    & & AI (square loss) & $-0.001$ & $0.410$ & $0.410$ & $0.409$ & $0.9495$
    \\
    \midrule
    Linear & No & I$=$A & $-0.011$ & $0.788$ & $0.788$ & $0.780$ & $0.9501$
    \\
    \midrule
    Linear & Yes & I$=$A & $-0.011$ & $0.789$ & $0.789$ & $0.778$ & $0.9487$
    \\
    \midrule
    Unadjusted & & & $-0.007$ & $1.093$ & $1.093$ & $1.086$ & $0.9479$
    \\
    \bottomrule
    \end{tabular}
\end{table}

\subsection{Constant treatment effects}

We conduct an additional simulation in which the treatment has no effect for every unit. 
This is the setting where our variance estimation becomes asymptotically exact, and can help better assess the coverage properties of the confidence intervals. 
Specifically, we generate the potential outcomes in the following way: 
\begin{align}\label{eq:po_gen_zero}
    Y(1) = Y(0) = 10 + \exp(1+X_{1}-0.5 \cdot X_{2}) + 3 \varepsilon, 
\end{align}
where $X_1, X_2$ and $\varepsilon$ are mutually independent and follow the standard Normal distribution. 
We round the potential outcomes to the nearest nonnegative integers, making them count outcomes.
All the potential outcomes and covariates will be fixed once generated.

We then conduct the same simulation as in \S \ref{sec:simu_heter}. The results are analogously shown in Table \ref{tab:simu_null}. Note that in this case, all the estimators are consistent for the zero average treatment effects. 
Therefore, the biases are all close to zero, and the coverage probabilities are all close to the nominal level $95\%$. 
The other results are similar to \S \ref{sec:simu_heter}, and thus we omit the detailed discussion. 

\section{Uniform convergence and consistency for M-estimation and Z-estimation under simple random sampling}\label{sec:M_srs}

In this section, we consider M-estimation and Z-estimation under simple random sampling. 
We establish the uniform convergence of the empirical risk function and the estimating equation, and the consistency of the M-estimator and Z-estimator. 
We will consider two types of sufficient conditions, based on either the compactness of the parameter space or the convexity of the risk function. 
We present the results in the following four subsections. With in each subsection, we first present the two theorems for uniform convergence and consistency, and then present their proofs.

\subsection{M-estimation with compactness}

\begin{theorem}\label{thm:uniform_converge}
        Let $m_{i}(\theta) \in \mathbb{R}$, $1\le i \le N$, and $\tilde{m}(\bs\theta) \in \mathbb{R}$ be functions of $\theta$ in a set $\Theta\subset \mathbb{R}^p$, 
        where these functions can also depend on $N$ but we make this dependence implicit for notational convenience. 
		Let 
		$(Z_{1}, Z_{2}, \ldots, Z_{N}) \in \{0,1\}^N$ be a random vector 
		whose probability of taking value $(z_1, z_2, \ldots, z_N) \in \{0,1\}^N$ is $1/\binom{N}{n}$ if $\sum_{i=1}^N z_i = n$ and zero otherwise, 
		where $n$ is a fixed constant. 
		Define 
		\begin{align*}
		    \hat{M}_N(\bs\theta)= \frac{1}{n}\sum_{i=1}^N Z_{i} m_{i}(\bs\theta)+\tilde{m}(\bs\theta), 
		    \quad 
		    M_N(\bs\theta)=\frac{1}{N}\sum_{i=1}^N m_{i}(\bs\theta)+\tilde{m}(\bs\theta),
		\end{align*}
		and 
		$
		\Delta(m_{i}, \delta)=\sup _{(\bs\theta, \bs\theta')\in \Theta^2:\|\bs\theta-\bs\theta'\|\leq\delta}|
		m_{i}(\bs\theta)-m_{i}(\bs\theta' )|.
		$
		If the following conditions hold:
		\begin{itemize}
                \item[(i)] $\Theta$ is compact,
			\item[(ii)] $\sup_N \E_N \{ \Delta(m_{i},\delta) \} \rightarrow 0$ as $\delta \rightarrow 0$, 
			\item[(iii)]  for any $\bs\theta\in \Theta$, $\Var_N\{m_{i}(\bs\theta)\}$ is $o(n)$, 
		\end{itemize}   
		then $\sup_{\bs\theta\in\Theta}|\hat{M}_N(\bs\theta)-M_N(\bs\theta)| = o_{\Pr}(1)$ as $N\rightarrow \infty$. 
\end{theorem}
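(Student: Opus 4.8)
The plan is to prove a uniform law of large numbers by combining pointwise convergence at finitely many parameter values with a uniform control on the oscillation of $\hat{M}_N - M_N$ over small balls, using compactness to keep the number of balls free of $N$. First I would note that the deterministic term $\tilde{m}(\bs\theta)$ cancels in the difference $\hat{M}_N(\bs\theta) - M_N(\bs\theta) = \frac{1}{n}\sum_{i=1}^N Z_i m_i(\bs\theta) - \frac{1}{N}\sum_{i=1}^N m_i(\bs\theta)$, and that under simple random sampling $\E(Z_i) = n/N$, so that $M_N(\bs\theta) = \E\{\hat{M}_N(\bs\theta)\}$; the target is therefore the supremum of a centered empirical process.

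For the pointwise step, I would observe that for each fixed $\bs\theta$ the quantity $\frac{1}{n}\sum_i Z_i m_i(\bs\theta)$ is the mean of a simple random sample of size $n$ drawn without replacement from the finite population $\{m_i(\bs\theta)\}_{i=1}^N$, so its variance equals $\frac{N-n}{nN}\Var_N\{m_i(\bs\theta)\} \le n^{-1}\Var_N\{m_i(\bs\theta)\}$, which is $o(1)$ by condition (iii). Chebyshev's inequality then yields $\hat{M}_N(\bs\theta) - M_N(\bs\theta) = o_{\Pr}(1)$ at each fixed $\bs\theta$. For the oscillation step, for any $\bs\theta$ within distance $\delta$ of a point $\bs\theta_k$ I would bound $|\hat{M}_N(\bs\theta) - \hat{M}_N(\bs\theta_k)| \le A_N(\delta) \equiv \frac{1}{n}\sum_i Z_i \Delta(m_i,\delta)$ and, deterministically, $|M_N(\bs\theta) - M_N(\bs\theta_k)| \le \E_N\{\Delta(m_i,\delta)\} \le c(\delta)$, where $c(\delta) = \sup_N \E_N\{\Delta(m_i,\delta)\} \to 0$ as $\delta \to 0$ by condition (ii); note also $\E\{A_N(\delta)\} = \E_N\{\Delta(m_i,\delta)\} \le c(\delta)$.

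To assemble, I would use compactness (condition (i)) to cover $\Theta$ by $K = K(\delta)$ balls of radius $\delta$ with centers $\bs\theta_1,\dots,\bs\theta_K$, where $K$ does not depend on $N$ because $\Theta$ is a fixed set. This gives $\sup_{\bs\theta\in\Theta}|\hat{M}_N(\bs\theta)-M_N(\bs\theta)| \le A_N(\delta) + \max_{1\le k\le K}|\hat{M}_N(\bs\theta_k)-M_N(\bs\theta_k)| + c(\delta)$. Given a target accuracy $\eta>0$ and tolerance $\epsilon>0$, I would choose $\delta$ small enough that $c(\delta) < \min(\eta/3,\ \eta\epsilon/6)$, which is possible by condition (ii). Markov's inequality then bounds $\Pr\{A_N(\delta) > \eta/3\} \le 3c(\delta)/\eta < \epsilon/2$ uniformly in $N$, while the finite maximum over the $N$-free set of centers is $o_{\Pr}(1)$ by the pointwise step, so $\Pr\{\max_k|\hat{M}_N(\bs\theta_k)-M_N(\bs\theta_k)| > \eta/3\} < \epsilon/2$ for all large $N$. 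Combining the three terms gives $\Pr\{\sup_{\bs\theta}|\hat{M}_N-M_N| > \eta\} < \epsilon$ eventually, which is the claim.

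The main obstacle is the subtlety in the final assembly: the oscillation term $A_N(\delta)$ does \emph{not} vanish in probability, since condition (iii) controls only the pointwise variances $\Var_N\{m_i(\bs\theta)\}$ and says nothing about $\Var_N\{\Delta(m_i,\delta)\}$, so one cannot simply send $N\to\infty$ and discard it. The resolution is to tie the ball radius $\delta$ to \emph{both} the accuracy $\eta$ and the failure probability $\epsilon$, so that Markov's inequality on the expectation of $A_N(\delta)$ alone suffices to control it uniformly in $N$; compactness is precisely what makes the intervening finite cover legitimate with cardinality independent of $N$, which in turn allows the pointwise step to handle the remaining maximum.
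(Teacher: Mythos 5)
Your proposal is correct and follows essentially the same route as the paper's proof: pointwise convergence at finitely many centers via the simple-random-sampling variance formula and Chebyshev, a finite $\delta$-cover from compactness, and control of the oscillation term $\frac{1}{n}\sum_i Z_i\Delta(m_i,\delta)$ by Markov's inequality applied to its expectation $\E_N\{\Delta(m_i,\delta)\}$, with $\delta$ chosen jointly in terms of the accuracy and the failure probability. The subtlety you flag at the end—that the oscillation term is only controlled in expectation uniformly in $N$, not shown to vanish, so the order of choosing $\delta$, $\eta$, $\epsilon$ matters—is exactly how the paper's argument is structured (there via the bound $\E_N\{\Delta(m_i,\delta)\}<\eta\varepsilon/3$ and the concluding $\limsup$ step).
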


\begin{theorem}\label{thm:consistency}
	Under the same setting as in Theorem \ref{thm:uniform_converge}, if conditions (i), (ii) and (iii) in Theorem \ref{thm:uniform_converge} holds, 
	and the following conditions hold:
	\begin{itemize}
		\item[(i)] 
        $\bs{\theta}_N$ is the unique minimizer of $M_N(\bs{\theta})$  satisfying that for any $\varepsilon>0$, there exists $\eta>0$ such that
		$\inf_{\bs\theta\in\Theta: \|\bs\theta-\bs\theta_{N}\| \geq \varepsilon} M_N(\bs\theta)>M_N\left(\bs\theta_{N}\right)+\eta$ for all $N$,
		\item[(ii)] $\hat{\bs{\theta}}_N \in \Theta$ 
		satisfies $\hat{M}_N(\hat{\bs\theta}_N)\leq \hat{M}_N(\bs{\theta}_N)$, 
	\end{itemize}
	then $\hat{\bs\theta}_N-\bs\theta_N = o_{\Pr}(1)$ as $N \rightarrow \infty$. 
\end{theorem}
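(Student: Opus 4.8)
The plan is to prove consistency of the M-estimator via the standard argument-max (here argument-min) consistency theorem, leveraging the uniform convergence from Theorem \ref{thm:uniform_converge} together with the well-separation condition (i) on the population minimizer $\bs\theta_N$. The structure is classical: uniform convergence plus a well-separated minimum forces the empirical minimizer to be close to the population minimizer. The subtlety here, relative to the textbook i.i.d.\ case, is that everything depends on $N$ (both the functions $m_i$ and the minimizer $\bs\theta_N$ vary along the sequence of finite populations), so I must take care that all bounds are uniform in $N$.

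\textbf{First I would} fix $\varepsilon > 0$ and invoke condition (i) to obtain $\eta > 0$ such that $\inf_{\|\bs\theta - \bs\theta_N\| \ge \varepsilon} M_N(\bs\theta) > M_N(\bs\theta_N) + \eta$ for all $N$. The key event to control is $\{\|\hat{\bs\theta}_N - \bs\theta_N\| \ge \varepsilon\}$. On this event, by the separation condition, $M_N(\hat{\bs\theta}_N) > M_N(\bs\theta_N) + \eta$. The strategy is to show this is incompatible with the defining property of $\hat{\bs\theta}_N$ from condition (ii), namely $\hat{M}_N(\hat{\bs\theta}_N) \le \hat{M}_N(\bs\theta_N)$, once the uniform gap between $\hat{M}_N$ and $M_N$ is smaller than $\eta/2$.

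\textbf{The core chain of inequalities} would run as follows. From condition (ii),
\begin{align*}
M_N(\hat{\bs\theta}_N) - M_N(\bs\theta_N)
&= \{M_N(\hat{\bs\theta}_N) - \hat{M}_N(\hat{\bs\theta}_N)\}
+ \{\hat{M}_N(\hat{\bs\theta}_N) - \hat{M}_N(\bs\theta_N)\}
+ \{\hat{M}_N(\bs\theta_N) - M_N(\bs\theta_N)\} \\
&\le 2 \sup_{\bs\theta \in \Theta} |\hat{M}_N(\bs\theta) - M_N(\bs\theta)|,
\end{align*}
where the middle bracket is nonpositive by condition (ii) and the two outer brackets are each bounded by the uniform deviation. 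By Theorem \ref{thm:uniform_converge}, this supremum is $o_{\Pr}(1)$, so $\Pr(2\sup_{\bs\theta}|\hat{M}_N(\bs\theta) - M_N(\bs\theta)| < \eta) \to 1$. On that high-probability event we have $M_N(\hat{\bs\theta}_N) - M_N(\bs\theta_N) < \eta$, which by the separation condition forces $\|\hat{\bs\theta}_N - \bs\theta_N\| < \varepsilon$. Hence $\Pr(\|\hat{\bs\theta}_N - \bs\theta_N\| \ge \varepsilon) \to 0$, and since $\varepsilon$ was arbitrary, $\hat{\bs\theta}_N - \bs\theta_N = o_{\Pr}(1)$.

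\textbf{The main obstacle} I anticipate is not in the logic—which is routine—but in ensuring that the whole argument is genuinely uniform over the sequence of finite populations. In particular, the separation constant $\eta$ must be chosen independently of $N$ (guaranteed by the ``for all $N$'' in condition (i)), and Theorem \ref{thm:uniform_converge} must deliver a single $o_{\Pr}(1)$ statement rather than a pointwise-in-$N$ bound. Since Theorem \ref{thm:uniform_converge} is already stated as a uniform result under conditions (i)--(iii), these are available, so the remaining care is purely bookkeeping: verifying that the hypotheses of Theorem \ref{thm:uniform_converge} (compactness, the modulus-of-continuity condition, and the variance condition) transfer directly, which they do since they are assumed verbatim. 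Thus I expect no substantive difficulty once the uniform convergence theorem is invoked as a black box.
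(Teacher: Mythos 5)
Your proposal is correct and follows essentially the same route as the paper's proof: the identical three-term decomposition of $M_N(\hat{\bs\theta}_N)-M_N(\bs\theta_N)$, dropping the nonpositive middle term via condition (ii), bounding the outer terms by $2\sup_{\bs\theta\in\Theta}|\hat{M}_N(\bs\theta)-M_N(\bs\theta)| = o_{\Pr}(1)$ from Theorem \ref{thm:uniform_converge}, and then converting $\Pr(\|\hat{\bs\theta}_N-\bs\theta_N\|\ge\varepsilon)\le\Pr(M_N(\hat{\bs\theta}_N)-M_N(\bs\theta_N)\ge\eta)\to 0$ via the well-separation condition (i). Your attention to the uniformity in $N$ of both $\eta$ and the uniform convergence statement matches the role of the "for all $N$" clauses in the paper's conditions.
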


\begin{proof}[of Theorem \ref{thm:uniform_converge}]
	We first prove that, for any $\bs{\theta} \in \Theta$, 
	$\hat{M}_N(\bs{\theta}) - M_N(\bs{\theta}) = o_{\Pr}(1)$. 
	By Chebyshev's inequality and the property of simple random sampling, 
	\begin{align}\label{eq:M_hat_point_wise}
		\hat{M}_N(\bs\theta)-M_N(\bs\theta) 
		& = \frac{1}{n}\sum_{i=1}^N Z_i m_{i}(\bs\theta) -  \frac{1}{N}\sum_{i=1}^N m_{i}(\bs\theta)
		= 
		O_{\Pr}\left(
		\Var^{1/2}\left\{ 
		\frac{1}{n}\sum_{i=1}^N Z_i m_{i}(\bs\theta)
		\right\}
		\right)
		\nonumber
		\\
		& = 
		O_{\Pr} \left(
		\sqrt{
			n^{-1} \Var_N \left\{ m_{i}(\bs\theta)\right\}
		}
		\right)
		= o_{\Pr}(1), 
	\end{align}
	where the last equality holds due to condition (iii) in Theorem \ref{thm:uniform_converge}. 
	
	We then prove $\sup_{\bs\theta\in\Theta}|\hat{M}_N(\bs\theta)-M_N(\bs\theta)| = o_{\Pr}(1)$ as $N\rightarrow \infty$. 
	For any $\varepsilon>0$ and $\eta \in (0,1)$, by condition (ii) in Theorem \ref{thm:uniform_converge}, there 
	there exists $\delta > 0$ such that for all $N$,
	\begin{align}\label{eq:delta_epsilon_6}
		\E_N \{ \Delta(m_{i},\delta) \} = 
		\frac{1}{N}\sum_{i=1}^N \Delta(m_{i},\delta)   < \eta \varepsilon/3 <\varepsilon/3. 
	\end{align}
	Because $\Theta$ is a compact set by condition (i),
	we can find a finite set $ \Theta_{\delta} \equiv \{\bs{\theta}_1,\dots,\bs{\theta}_J\} \subset \Theta$ 
	such that 
	$
	\Theta \subset \bigcup_{j=1}^J \mathcal{B}(\bs{\theta}_j, \delta), 
	$
	where $\mathcal{B}(\bs\theta_j, \delta) \equiv \{\bs\theta: \|\bs\theta - \bs\theta_j\| < \delta\}$.  
	Thus, for any $\bs\theta\in\Theta$, there must exist $\bs\theta_j \in \Theta_{\delta}$ such that $\|\bs\theta - \bs\theta_j\|< \delta$, 
	and the absolute difference between $\hat{M}_N(\bs\theta)$ and $M_N(\bs\theta)$ can then be bounded by 
	\begin{align}\label{eq:diff_bet_Fn_FN_1}
		& \quad \ |\hat{M}_N(\bs\theta)-M_N(\bs\theta)| 
		= 
		\left| \frac{1}{n}\sum_{i=1}^N Z_i m_{i}(\bs\theta) -  \frac{1}{N}\sum_{i=1}^N m_{i}(\bs\theta) \right|
		\nonumber
		\\
		& 
		\leq 
		\left|
		\frac{1}{n}\sum_{i=1}^N Z_i m_{i}(\bs\theta)- \frac{1}{n}\sum_{i=1}^N Z_i m_{i}(\bs\theta_j)
		\right|
		+
		\left|
		\frac{1}{N}\sum_{i=1}^N m_{i}(\bs\theta) - \frac{1}{N}\sum_{i=1}^N m_{i}(\bs\theta_j)
		\right|
		\nonumber
		\\
		& \quad \ + 
		\left|
		\frac{1}{n}\sum_{i=1}^N Z_i m_{i}(\bs\theta_j) - \frac{1}{N}\sum_{i=1}^N m_{i}(\bs\theta_j)
		\right| 
		\nonumber
		\\
		& \le 
		\frac1n\sum_{i=1}^N Z_i \left| m_i(\bs\theta)- 
		m_i(\bs\theta_j)
		\right|
		+
		\frac{1}{N} \sum_{i=1}^N \left|
		m_i(\bs\theta)- 
		m_i(\bs\theta_j)
		\right|
		+ 
		\left|
		\hat{M}_N(\bs\theta_j) - M_N(\bs\theta_j)
		\right| 
		\nonumber
		\\
		& 
		\le 
		\frac1n\sum_{i=1}^N Z_i \Delta(m_i, \delta)
		+
		\frac{1}{N} \sum_{i=1}^N \Delta(m_i, \delta)
		+ 
		\left|
		\hat{M}_N(\bs\theta_j) - M_N(\bs\theta_j)
		\right|, 
	\end{align} 
	where the last inequality holds by 
    the definition of $\Delta(m_i, \delta)$. 
	Consequently, 
	\begin{align*}
		\sup_{\bs\theta \in \Theta} |\hat{M}_N(\bs\theta)-M_N(\bs\theta)| 
		& \le 
		\frac1n\sum_{i=1}^N Z_i \Delta(m_i, \delta)
		+
		\frac{1}{N} \sum_{i=1}^N \Delta(m_i, \delta)
		+ 
		\max_{1\le j \le J}
		\left|
		\hat{M}_N(\bs\theta_j) - M_N(\bs\theta_j)
		\right|. 
	\end{align*}
	By the Markov inequality and \eqref{eq:delta_epsilon_6}, we then have 
	\begin{align*}%
		\Pr\left(
		\sup_{\bs\theta \in \Theta} |\hat{M}_N(\bs\theta)-M_N(\bs\theta)| > \varepsilon
		\right)
		& \le 
		\Pr\left(
		\frac1n\sum_{i=1}^N Z_i \Delta(m_i, \delta) > \frac{\varepsilon}{3}
		\right)
		+ 
		\Pr\left(
		\frac{1}{N} \sum_{i=1}^N \Delta(m_i, \delta) > \frac{\varepsilon}{3}
		\right)
		\\
		& +
		\Pr\left(
		\max_{1\le j \le J}
		\left|\hat{M}_N(\bs\theta_j) - M_N(\bs\theta_j)\right| > \frac{\varepsilon}{3}
		\right)
		\nonumber
		\\
		& \le 
		\frac{\E_N \{\Delta(m_i, \delta)\}}{\varepsilon/3} 
		+
		\sum_{j=1}^J
		\Pr\left(
		\left|\hat{M}_N(\bs\theta_j) - M_N(\bs\theta_j)\right| > \frac{\varepsilon}{3}
		\right)
		\nonumber
		\\
		& \le 
		\eta + \sum_{j=1}^J
		\Pr\left(
		\left|\hat{M}_N(\bs\theta_j) - M_N(\bs\theta_j)\right| > \frac{\varepsilon}{3}
		\right). 
	\end{align*}
	From \eqref{eq:M_hat_point_wise}, letting $N$ go to infinity, we then have 
	$$
	\limsup_{N\rightarrow \infty}\Pr\left(
	\sup_{\bs\theta \in \Theta} |\hat{M}_N(\bs\theta)-M_N(\bs\theta)| > \varepsilon
	\right)
	\le \eta. 
	$$
	Because the above inequality holds for any $\eta\in (0,1)$, we must have 
	$
	\Pr(
	\sup_{\bs\theta \in \Theta} |\hat{M}_N(\bs\theta)-M_N(\bs\theta)| > \varepsilon
	) \rightarrow 0
	$
	as $N\rightarrow \infty$. 
	Therefore, $\sup_{\bs\theta \in \Theta} |\hat{M}_N(\bs\theta)-M_N(\bs\theta)| = o_{\Pr}(1)$, i.e., Theorem \ref{thm:uniform_converge} holds.
\end{proof}

\begin{proof}[of Theorem \ref{thm:consistency}]
	From Theorem \ref{thm:uniform_converge} and 
	conditions (i) and (ii) in Theorem \ref{thm:consistency}, 
	\begin{align*}
		0 & \le M_N(\hat{\bs\theta}_N)-M_N({\bs\theta}_N)
		= 
		M_N(\hat{\bs\theta}_N) - \hat{M}_N(\hat{\bs\theta}_N)
		+ 
		\hat{M}_N(\hat{\bs\theta}_N) - \hat{M}_N({\bs\theta}_N)
		+ 
		\hat{M}_N({\bs\theta}_N) - M_N({\bs\theta}_N)\\
		& \le 
		\left| 
		M_N(\hat{\bs\theta}_N) - \hat{M}_N(\hat{\bs\theta}_N)
		\right|
		+ 
		\big\{
		\hat{M}_N(\hat{\bs\theta}_N) - \hat{M}_N({\bs\theta}_N)
		\big\}
		+ 
		\left|
		\hat{M}_N({\bs\theta}_N) - M_N({\bs\theta}_N)
		\right|
		\\
		& \le 
		2\sup_{{\bs\theta}\in\Theta}|\hat{M}_N({\bs\theta})-M_N({\bs\theta})|+\big\{
		\hat{M}_N(\hat{\bs\theta}_N) - \hat{M}_N({\bs\theta}_N)
		\big\}
		\\
		& \le 2\sup_{{\bs\theta}\in\Theta}|\hat{M}_N({\bs\theta})-M_N({\bs\theta})|
		= o_{\Pr}(1). 
	\end{align*}
	From condition (i) in Theorem \ref{thm:consistency}, 
	for any $\varepsilon > 0,$ 
	there exists $\eta > 0$ such that, 
	as $N\to \infty$, 
	$$
	\Pr(\|\hat{\bs\theta}_N-{\bs\theta}_N\|\ge \varepsilon)\leq\Pr(M_N(\hat{\bs\theta}_N)-M_N({\bs\theta}_N)\ge \eta) \to 0. 
	$$ 
	Therefore, $\hat{\bs\theta}_N-{\bs\theta}_N = o_{\Pr}(1)$, i.e., Theorem \ref{thm:consistency} holds.
\end{proof}

\subsection{Z-estimation with compactness} 

\begin{theorem}\label{thm:uniform_converge_ZCP}
        Let $\phi_{i}(\theta) = (\phi_{i1}(\theta), \ldots,  \phi_{ip}(\theta))^\top \in \mathbb{R}^p$, $1\le i \le N$, and $\tilde{\phi}(\bs\theta) \in \mathbb{R}^p$ be functions of $\theta$ in a set $\Theta \subset \mathbb{R}^p$, 
        where these functions can also depend on $N$ but we make this dependence implicit for notational convenience. 
		Let 
		$(Z_{1}, Z_{2}, \ldots, Z_{N}) \in \{0,1\}^N$ be a random vector 
		whose probability of taking value $(z_1, z_2, \ldots, z_N) \in \{0,1\}^N$ is $1/\binom{N}{n}$ if $\sum_{i=1}^N z_i = n$ and zero otherwise, 
		where $n$ is a fixed constant. 
		Define 
		\begin{align*}
		    \hat{\Phi}_N(\bs\theta)= \frac{1}{n}\sum_{i=1}^N Z_{i} \phi_{i}(\bs\theta)+\tilde{\phi}(\bs\theta), 
		    \quad 
		    \Phi_N(\bs\theta)=\frac{1}{N}\sum_{i=1}^N \phi_{i}(\bs\theta)+\tilde{\phi}(\bs\theta),
		\end{align*}
		and 
		$
		\Delta(\phi_{i}, \delta)=\sup _{(\bs\theta, \bs\theta')\in \Theta^2:\|\bs\theta-\bs\theta'\|\leq\delta}
        \| \phi_{i}(\bs\theta)-\phi_{i}(\bs\theta' ) \|.
		$
		If the following conditions hold:
		\begin{itemize}
                \item[(i)] $\Theta$ is compact,
			\item[(ii)] $\sup_N \E_N \{ \Delta(\phi_{i},\delta) \} \rightarrow 0$ as $\delta \rightarrow 0$, 
			\item[(iii)]  for any $\bs\theta\in \Theta$ and $1\le k \le p$, $\Var_N\{\phi_{ik}(\bs\theta)\}$ is $o(n)$,  
		\end{itemize}   
		then $\sup_{\bs\theta\in\Theta}\|\hat{\Phi}_N(\bs\theta)-\Phi_N(\bs\theta)\| = o_{\Pr}(1)$ as $N\rightarrow \infty$. 
\end{theorem}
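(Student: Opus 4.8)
The plan is to reduce the vector statement to the scalar uniform-convergence result already established in Theorem \ref{thm:uniform_converge}, applied one coordinate at a time. Fix a coordinate index $k \in \{1, \ldots, p\}$ and set $m_i(\bs\theta) = \phi_{ik}(\bs\theta)$ and $\tilde m(\bs\theta) = \tilde\phi_k(\bs\theta)$, so that the $\hat M_N(\bs\theta)$ and $M_N(\bs\theta)$ of Theorem \ref{thm:uniform_converge} coincide with the $k$-th coordinates of $\hat\Phi_N(\bs\theta)$ and $\Phi_N(\bs\theta)$, respectively. It then suffices to verify the three hypotheses of Theorem \ref{thm:uniform_converge} for this scalar system and to combine the resulting coordinatewise conclusions.

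Conditions (i) and (iii) transfer directly: compactness of $\Theta$ is assumed, and condition (iii) here states exactly that $\Var_N\{\phi_{ik}(\bs\theta)\}$ is $o(n)$ for each fixed $\bs\theta$ and each $k$. The only point requiring care is condition (ii). Writing $\Delta(\phi_{ik}, \delta) = \sup_{\|\bs\theta - \bs\theta'\| \le \delta} |\phi_{ik}(\bs\theta) - \phi_{ik}(\bs\theta')|$ for the scalar modulus of continuity, I would use that the absolute value of any single coordinate is dominated by the Euclidean norm of the whole vector, so that $|\phi_{ik}(\bs\theta) - \phi_{ik}(\bs\theta')| \le \|\phi_i(\bs\theta) - \phi_i(\bs\theta')\|$ and hence $\Delta(\phi_{ik}, \delta) \le \Delta(\phi_i, \delta)$ pointwise in $i$. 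Taking finite-population averages and suprema over $N$ gives $\sup_N \E_N\{\Delta(\phi_{ik}, \delta)\} \le \sup_N \E_N\{\Delta(\phi_i, \delta)\} \to 0$ as $\delta \to 0$ by the assumed condition (ii), so condition (ii) of Theorem \ref{thm:uniform_converge} holds for the $k$-th coordinate.

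With all three hypotheses verified, Theorem \ref{thm:uniform_converge} yields $\sup_{\bs\theta \in \Theta} |[\hat\Phi_N(\bs\theta)]_k - [\Phi_N(\bs\theta)]_k| = o_{\Pr}(1)$ for each $k = 1, \ldots, p$. To assemble the vector conclusion, I would bound the Euclidean norm by the sum of coordinate absolute values,
\begin{align*}
\sup_{\bs\theta \in \Theta} \|\hat\Phi_N(\bs\theta) - \Phi_N(\bs\theta)\|
\le \sum_{k=1}^p \sup_{\bs\theta \in \Theta} \big|[\hat\Phi_N(\bs\theta)]_k - [\Phi_N(\bs\theta)]_k\big|,
\end{align*}
and then invoke the fact that a finite sum of $o_{\Pr}(1)$ terms is itself $o_{\Pr}(1)$, giving the claimed uniform convergence. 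There is no genuine obstacle in this argument; the only substantive step is the coordinate-domination inequality that lets condition (ii) descend from the vector modulus of continuity to each scalar one. One could instead replicate the $\varepsilon$-net argument in the proof of Theorem \ref{thm:uniform_converge} verbatim with $\|\cdot\|$ in place of $|\cdot|$, but the coordinatewise reduction is shorter and reuses the scalar result directly.
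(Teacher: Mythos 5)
Your proposal is correct and takes essentially the same approach as the paper: the paper's proof likewise applies the scalar Theorem \ref{thm:uniform_converge} to each coordinate of $\hat{\Phi}_N$ and then combines the coordinatewise conclusions into the vector statement. Your write-up in fact supplies the two details the paper leaves implicit, namely the domination inequality $\Delta(\phi_{ik},\delta) \le \Delta(\phi_i,\delta)$ used to verify condition (ii) coordinatewise, and the bound of the Euclidean norm by the sum of coordinate suprema in the final assembly.
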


\begin{theorem}\label{thm:consistency_ZCP}
	Under the same setting as in Theorem \ref{thm:uniform_converge_ZCP}, if conditions (i), (ii) and (iii) in Theorem \ref{thm:uniform_converge_ZCP} holds, 
	and the following conditions hold:
	\begin{itemize}
		\item[(i)] $\bs{\theta}_N$ is the unique root of $\Phi_N(\bs{\theta})$  satisfying that for any $\varepsilon>0$, there exists $\eta>0$ such that
		$\inf_{\bs\theta\in\Theta: \|\bs\theta-\bs\theta_{N}\| \geq \varepsilon} \|\Phi_N(\bs\theta)\|>\eta$ for all $N$,
		\item[(ii)] $\hat{\bs{\theta}}_N \in \Theta$ 
		satisfies $\hat{\Phi}_N(\hat{\bs\theta}_N) = 0$, 
	\end{itemize}
	then $\hat{\bs\theta}_N-\bs\theta_N = o_{\Pr}(1)$ as $N \rightarrow \infty$. 
\end{theorem}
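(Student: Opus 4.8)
The plan is to mirror the consistency argument for the M-estimator in Theorem~\ref{thm:consistency}, replacing the comparison of risk values by a comparison of the norms of the estimating functions, and to let Theorem~\ref{thm:uniform_converge_ZCP} supply all of the probabilistic content. The key observation is that, since $\hat{\bs{\theta}}_N$ is an \emph{exact} root of $\hat{\Phi}_N$ by condition~(ii), the value of the population estimating function $\Phi_N$ at $\hat{\bs{\theta}}_N$ is governed entirely by the uniform discrepancy between $\hat{\Phi}_N$ and $\Phi_N$, which is negligible.

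First I would bound $\|\Phi_N(\hat{\bs{\theta}}_N)\|$. Using $\hat{\Phi}_N(\hat{\bs{\theta}}_N) = \bs{0}$ and the triangle inequality,
\begin{align*}
\|\Phi_N(\hat{\bs{\theta}}_N)\|
= \big\| \Phi_N(\hat{\bs{\theta}}_N) - \hat{\Phi}_N(\hat{\bs{\theta}}_N) \big\|
\le \sup_{\bs{\theta}\in\Theta} \big\| \hat{\Phi}_N(\bs{\theta}) - \Phi_N(\bs{\theta}) \big\|,
\end{align*}
and by Theorem~\ref{thm:uniform_converge_ZCP} the right-hand side is $o_{\Pr}(1)$. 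Hence $\|\Phi_N(\hat{\bs{\theta}}_N)\| = o_{\Pr}(1)$, irrespective of where $\hat{\bs{\theta}}_N$ happens to fall in $\Theta$.

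Next I would invoke the well-separation condition~(i) in its contrapositive form. Fixing $\varepsilon > 0$, let $\eta > 0$ be the constant it provides, so that $\|\bs{\theta} - \bs{\theta}_N\| \ge \varepsilon$ forces $\|\Phi_N(\bs{\theta})\| > \eta$ for every $N$. Consequently the event $\{\|\hat{\bs{\theta}}_N - \bs{\theta}_N\| \ge \varepsilon\}$ is contained in $\{\|\Phi_N(\hat{\bs{\theta}}_N)\| > \eta\}$, which by the displayed bound is contained in $\{\sup_{\bs{\theta}\in\Theta}\|\hat{\Phi}_N(\bs{\theta}) - \Phi_N(\bs{\theta})\| > \eta\}$. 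Taking probabilities and applying Theorem~\ref{thm:uniform_converge_ZCP} once more,
\begin{align*}
\Pr\big( \|\hat{\bs{\theta}}_N - \bs{\theta}_N\| \ge \varepsilon \big)
\le \Pr\Big( \sup_{\bs{\theta}\in\Theta} \big\| \hat{\Phi}_N(\bs{\theta}) - \Phi_N(\bs{\theta}) \big\| > \eta \Big) \longrightarrow 0
\end{align*}
as $N \to \infty$. Since $\varepsilon$ was arbitrary, this yields $\hat{\bs{\theta}}_N - \bs{\theta}_N = o_{\Pr}(1)$.

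I do not anticipate a serious obstacle, as the argument is a direct transcription of the classical Z-estimator consistency proof once the uniform convergence has been established. The only point deserving care is that the constant $\eta$ in condition~(i) is uniform over $N$: this is precisely what allows the final probability bound to hold along the entire sequence of finite populations rather than for a single fixed population, and it explains why condition~(i) is stated with the quantifier ``for all $N$.'' A secondary subtlety is that condition~(i) presumes $\bs{\theta}_N$ is the \emph{unique} root of $\Phi_N$, which is what makes the well-separation meaningful; without it the contrapositive step could fail.
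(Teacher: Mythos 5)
Your proof is correct and is essentially identical to the paper's: both bound $\|\Phi_N(\hat{\bs\theta}_N)\|$ by the uniform discrepancy $\sup_{\bs\theta\in\Theta}\|\hat{\Phi}_N(\bs\theta)-\Phi_N(\bs\theta)\|$ using $\hat{\Phi}_N(\hat{\bs\theta}_N)=\bs{0}$, then apply the well-separation condition (i) to convert $\|\Phi_N(\hat{\bs\theta}_N)\| = o_{\Pr}(1)$ into $\hat{\bs\theta}_N - \bs\theta_N = o_{\Pr}(1)$. Your added remarks on the uniformity of $\eta$ over $N$ are accurate but not a point of divergence.
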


\begin{proof}[of Theorem \ref{thm:uniform_converge_ZCP}]
    From conditions (i), (ii) and (iii) in Theorem \ref{thm:uniform_converge_ZCP}
    and applying Theorem \ref{thm:uniform_converge} to each coordinate of $\hat{\Phi}_{N}$,  
    we can know that $\sup_{\bs\theta\in\Theta}|\hat{\Phi}_{Nk}(\bs\theta)-\Phi_{Nk}(\bs\theta)| = o_{\Pr}(1)$ as $N\rightarrow \infty$ for $1\le k \le p$, where $\hat{\Phi}_{Nk}(\bs\theta)$ and $\Phi_{Nk}(\bs\theta)$ are the $k$th coordinates of $\hat{\Phi}_{N}(\bs\theta)$ and $\Phi_{N}(\bs\theta)$, respectively. 
    These then immediately imply that     $\sup_{\bs\theta\in\Theta}\|\hat{\Phi}_N(\bs\theta)-\Phi_N(\bs\theta)\| = o_{\Pr}(1)$. Thus, Theorem \ref{thm:uniform_converge_ZCP} holds.

\end{proof} 

\begin{proof}[of Theorem \ref{thm:consistency_ZCP}]
	From Theorem \ref{thm:uniform_converge_ZCP} and 
	condition (ii) in Theorem \ref{thm:consistency_ZCP}, 
	$$
		\| \Phi_N(\hat{\bs\theta}_N) \|= \| \Phi_N(\hat{\bs\theta}_N)-\hat{\Phi}_N({\hat{\bs\theta}}_N)\|
        \le \sup_{\bs\theta\in\Theta}\|\hat{\Phi}_N(\bs\theta)-\Phi_N(\bs\theta)\|
		= o_{\Pr}(1). 
	$$
	From condition (i) in Theorem \ref{thm:consistency_ZCP}, 
	for any $\varepsilon > 0,$ 
	there exists $\eta > 0$ such that, 
	as $N\to \infty$, 
	$$
	\Pr(\|\hat{\bs\theta}_N-{\bs\theta}_N\|\ge \varepsilon)\leq\Pr(\| \Phi_N(\hat{\bs\theta}_N) \|> \eta) \to 0. 
	$$ 
	Therefore, $\hat{\bs\theta}_N-{\bs\theta}_N = o_{\Pr}(1)$, i.e., Theorem \ref{thm:consistency_ZCP} holds.
\end{proof}

\subsection{M-estimation with convexity}\label{sec:M_est_convex}

\begin{theorem}[\citet{GuoBasse21}]\label{thm:uniform_converge_MCV}
    Under the same setting as in Theorem \ref{thm:uniform_converge}, if the following conditions hold:
    \begin{itemize}
        \item[(i)] there exists $r > 0$ such that $\mathcal{B}(\bs\theta_N, r) \subseteq \Theta$ for all $N$, where $\{\bs\theta_N\}$ is a sequence from $\Theta$,
        \item[(ii)] there exists $L \geq 0$ such that $(\E_N [ \{ m_i(\bs\theta) - m_i(\bs\theta') \}^2 ])^{1/2} \leq L \|\bs\theta - \bs\theta'\|$ for all $N$ and $\bs\theta, \bs\theta' \in \mathcal{B}(\bs\theta_N, r)$,
    \end{itemize}
    then, as $N \to \infty$, 
    \begin{equation}
    \label{eq:Guo's result}
        \sup_{\bs\theta \in \Theta:\|\bs\theta-\bs\theta_N\| \leq r} \big| \{ \hat{M}_N(\theta) - M_N(\theta) \} - \{ \hat{M}_N(\theta_N) - M_N(\theta_N) \} \big| = o_{\Pr}(1).
    \end{equation}
    If, in addition, 
    \begin{itemize}
        \item[{(iii)}]
        $var_N\{m_{i}(\bs\theta_N)\}= o(n) $,
    \end{itemize}
    then, as $N \to \infty$, 
    $$
    \sup_{\bs\theta \in \Theta:\|\bs\theta-\bs\theta_N\| \leq r} \big|  \hat{M}_N(\theta) - M_N(\theta) \big| = o_{\Pr}(1).
    $$
    
\end{theorem}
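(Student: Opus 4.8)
The plan is to first reduce the final conclusion to the intermediate bound \eqref{eq:Guo's result}, and then establish the latter by chaining. Since the deterministic term $\tilde m(\bs\theta)$ cancels in $\hat M_N(\bs\theta) - M_N(\bs\theta)$, I would work with the recentered sampling process
\[
W_N(\bs\theta) \equiv \hat M_N(\bs\theta) - M_N(\bs\theta) = \frac1n\sum_{i=1}^N Z_i m_i(\bs\theta) - \frac1N\sum_{i=1}^N m_i(\bs\theta).
\]
By the triangle inequality,
\[
\sup_{\|\bs\theta - \bs\theta_N\|\le r} |W_N(\bs\theta)| \le \sup_{\|\bs\theta - \bs\theta_N\|\le r}|W_N(\bs\theta) - W_N(\bs\theta_N)| + |W_N(\bs\theta_N)|,
\]
where the first term is $o_{\Pr}(1)$ once \eqref{eq:Guo's result} is proved, and the second is $o_{\Pr}(1)$ by Chebyshev's inequality exactly as in \eqref{eq:M_hat_point_wise}, using $\Var\{W_N(\bs\theta_N)\} \le n^{-1}\Var_N\{m_i(\bs\theta_N)\} = o(1)$ under condition (iii). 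Hence the entire task reduces to proving \eqref{eq:Guo's result}.

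For \eqref{eq:Guo's result}, note that $W_N(\bs\theta) - W_N(\bs\theta') = \tfrac1n\sum_i Z_i\{m_i(\bs\theta)-m_i(\bs\theta')\} - \tfrac1N\sum_i\{m_i(\bs\theta)-m_i(\bs\theta')\}$ is a mean-zero sampling average. The key quantitative input is an increment variance bound: by the variance identity for simple random sampling together with condition (ii),
\[
\Var\{W_N(\bs\theta) - W_N(\bs\theta')\} = \frac{N-n}{nN}\Var_N\{m_i(\bs\theta) - m_i(\bs\theta')\} \le \frac{1}{n}\E_N[\{m_i(\bs\theta) - m_i(\bs\theta')\}^2] \le \frac{L^2}{n}\|\bs\theta - \bs\theta'\|^2,
\]
so the process has $L^2$ increments that are Lipschitz in $\bs\theta$ with constant $L/\sqrt n$.

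The main step is then a maximal inequality over the Euclidean ball $\mathcal{B}(\bs\theta_N, r)$, for which I would run Dudley-type chaining. For each level $k$, take a minimal $2^{-k}r$-net $T_k$ of $\mathcal{B}(\bs\theta_N, r)$, whose cardinality satisfies $\log|T_k| \lesssim p\,k$ because the ball lives in $\mathbb{R}^p$, and telescope $W_N(\bs\theta) - W_N(\bs\theta_N)$ along successive nearest-point projections $\pi_k(\bs\theta)\in T_k$. Each chaining link joins points at distance $O(2^{-k}r)$ and, by the display above, has standard deviation $O(2^{-k}r/\sqrt n)$; bounding the maximal link increment at level $k$ and summing the geometric-in-$k$ contributions yields $\E\sup_{\|\bs\theta-\bs\theta_N\|\le r}|W_N(\bs\theta) - W_N(\bs\theta_N)| = O(Lr\sqrt p/\sqrt n)$, which tends to $0$, and Markov's inequality then gives \eqref{eq:Guo's result}, completing the proof via the reduction above.

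The hard part is making this chaining rigorous for sampling \emph{without} replacement, since a variance bound alone does not license a union bound over the nets: one needs exponential (sub-Gaussian or Bernstein) control of each link's tail. I would obtain this from the negative association of the inclusion indicators $(Z_1,\ldots,Z_N)$ under simple random sampling, which transfers Hoeffding and Bernstein--Serfling concentration inequalities from the independent case to the sampling average; when the $m_i$ are not uniformly bounded this requires a preliminary truncation of the increments at a slowly growing level, with the truncated remainder controlled in $L^2$ by condition (ii). An alternative that sidesteps explicit tail bounds is to invoke Hoeffding's reduction theorem to compare the without-replacement average with its with-replacement (i.i.d.) counterpart and then apply a standard empirical-process maximal inequality. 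Either route delivers the required entropy integral bound, and this tail-control step for the finite-population process is where I expect essentially all the technical difficulty to concentrate.
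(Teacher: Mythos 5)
Your second step --- deducing the final uniform-convergence claim from \eqref{eq:Guo's result} by the triangle inequality plus a Chebyshev bound at $\bs\theta_N$ under condition (iii) --- is exactly the paper's argument. The gap is in your first step. The paper never proves \eqref{eq:Guo's result} itself (it imports it from the proof of Theorem 9 of \citet{GuoBasse21}), and your chaining program for it does not go through as sketched; the failure sits precisely where you predicted. Condition (ii) bounds only the \emph{population average} of the squared increments, so a single unit can have $|m_i(\bs\theta)-m_i(\bs\theta')|$ as large as $\sqrt{N}\,L\|\bs\theta-\bs\theta'\|$. Consequently the Hoeffding--Serfling and Bernstein--Serfling inequalities, which require bounded summands, do not apply to the chaining links; after truncation, the bias terms must be controlled \emph{uniformly} over the exponentially many links of each level-$k$ net, which an averaged $L^2$ bound cannot deliver, and the chain cannot be terminated at any finite resolution because condition (ii) gives no pointwise control of the final approximation error. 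Your alternative route (Hoeffding's reduction to sampling with replacement, then symmetrization and a maximal inequality) stalls for the same underlying reason: after symmetrization, Dudley's entropy bound is taken with respect to the \emph{random empirical} $L^2$ metric on the sampled units, whereas condition (ii) controls only the population metric.

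The missing idea, which makes the whole problem elementary, is a deterministic Cauchy--Schwarz bound. Writing
\begin{align*}
W_N(\bs\theta) \equiv \hat{M}_N(\bs\theta)-M_N(\bs\theta)=\sum_{i=1}^N\Big(\frac{Z_i}{n}-\frac{1}{N}\Big)m_i(\bs\theta),
\qquad
\sum_{i=1}^N\Big(\frac{Z_i}{n}-\frac{1}{N}\Big)^2=\frac{1}{n}-\frac{1}{N},
\end{align*}
condition (ii) gives, surely (not merely in $L^2$),
\begin{align*}
|W_N(\bs\theta)-W_N(\bs\theta')|
\le \Big(\frac{1}{n}-\frac{1}{N}\Big)^{1/2}\Big[\sum_{i=1}^N\{m_i(\bs\theta)-m_i(\bs\theta')\}^2\Big]^{1/2}
\le L\sqrt{N/n-1}\;\|\bs\theta-\bs\theta'\|.
\end{align*}
When $n/N$ is bounded away from zero (the regime in which this theorem is invoked for the CRE), the centered process is therefore pointwise Lipschitz with a bounded constant, and \eqref{eq:Guo's result} follows from a single fixed $\delta$-net of $\mathcal{B}(\bs\theta_N,r)$, Chebyshev's inequality at each of the finitely many net points exactly as in \eqref{eq:M_hat_point_wise} (each such deviation has variance $O(L^2r^2/n)$ by condition (ii)), and a finite union bound --- no chaining, no exponential tail inequalities, no truncation. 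This is the substance of the cited Guo--Basse argument; note also that your claimed rate $O(Lr\sqrt{p/n})$ for the expected supremum is stronger than what this argument yields and stronger than what the theorem requires.
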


\begin{theorem}[\citet{GuoBasse21}]\label{thm:consistency_MCV}
    Under the same setting as in Theorem \ref{thm:uniform_converge},if conditions (i) and (ii) in Theorem \ref{thm:uniform_converge_MCV} holds, 
	and the following conditions hold:
    \begin{itemize}
        \item[(i)] $\Theta$ is convex, 
        \item[(ii)] $m_i(\bs\theta)$ is convex for all $i$, and $\tilde{m}(\bs\theta)$ is convex, 
	\item[(iii)] 
        $\bs{\theta}_N$ is the unique minimizer of $M_N(\bs{\theta})$  satisfying that for any $\varepsilon>0$, there exists $\eta>0$ such that
		$\inf_{\bs\theta\in\Theta: \|\bs\theta-\bs\theta_{N}\| =\varepsilon} M_N(\bs\theta)>M_N\left(\bs\theta_{N}\right)+\eta$ for all $N$,
	\item[(iv)] $\hat{\bs{\theta}}_N \in \Theta$ 
		satisfies $\hat{M}_N(\hat{\bs\theta}_N)\leq \hat{M}_N(\bs{\theta}_N)$, 
	\end{itemize}
	then $\hat{\bs\theta}_N-\bs\theta_N = o_{\Pr}(1)$ as $N \rightarrow \infty$. 
\end{theorem}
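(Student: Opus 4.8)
The plan is to use convexity to upgrade the \emph{local} uniform convergence on the ball $\mathcal{B}(\bs\theta_N, r)$ into \emph{global} consistency of the minimizer, thereby avoiding any appeal to compactness of $\Theta$ or to uniform convergence over all of $\Theta$. First I would record that $\hat{M}_N(\bs\theta) = n^{-1}\sum_{i=1}^N Z_i m_i(\bs\theta) + \tilde{m}(\bs\theta)$ is convex in $\bs\theta$: each $m_i$ and $\tilde m$ are convex by condition (ii), the weights $Z_i \in \{0,1\}$ are nonnegative, and a nonnegative combination of convex functions plus a convex function is convex. Next I would invoke Theorem \ref{thm:uniform_converge_MCV}, whose hypotheses (i) and (ii) are assumed here, to obtain the centered uniform-convergence statement \eqref{eq:Guo's result}, namely $\sup_{\|\bs\theta - \bs\theta_N\| \le r} |D_N(\bs\theta)| = o_{\Pr}(1)$ where $D_N(\bs\theta) \equiv \{\hat{M}_N(\bs\theta) - M_N(\bs\theta)\} - \{\hat{M}_N(\bs\theta_N) - M_N(\bs\theta_N)\}$. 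This centered form is exactly what is needed, since the fluctuation at $\bs\theta_N$ will cancel in the argument below.

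The crux is a convexity ``pull-back''. Fix any $\varepsilon \in (0, r)$ and work on the event $\{\|\hat{\bs\theta}_N - \bs\theta_N\| > \varepsilon\}$. Let $\tilde{\bs\theta}$ be the point at which the segment from $\bs\theta_N$ to $\hat{\bs\theta}_N$ crosses the sphere $\|\bs\theta - \bs\theta_N\| = \varepsilon$, i.e., $\tilde{\bs\theta} = (1-t)\bs\theta_N + t\hat{\bs\theta}_N$ with $t = \varepsilon/\|\hat{\bs\theta}_N - \bs\theta_N\| \in (0,1)$; convexity of $\Theta$ from condition (i) guarantees $\tilde{\bs\theta} \in \mathcal{B}(\bs\theta_N, r) \subseteq \Theta$. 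By convexity of $\hat{M}_N$ together with condition (iv),
\[
\hat{M}_N(\tilde{\bs\theta}) \le (1-t)\hat{M}_N(\bs\theta_N) + t\,\hat{M}_N(\hat{\bs\theta}_N) \le \hat{M}_N(\bs\theta_N).
\]
Rearranging the definition of $D_N$ then yields the identity $M_N(\tilde{\bs\theta}) - M_N(\bs\theta_N) = \{\hat{M}_N(\tilde{\bs\theta}) - \hat{M}_N(\bs\theta_N)\} - D_N(\tilde{\bs\theta})$, so the preceding display gives $M_N(\tilde{\bs\theta}) - M_N(\bs\theta_N) \le -D_N(\tilde{\bs\theta}) \le \sup_{\|\bs\theta - \bs\theta_N\| \le r}|D_N(\bs\theta)| = o_{\Pr}(1)$.

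To finish, I would combine this with the well-separated-minimum condition (iii). Since $\tilde{\bs\theta}$ lies exactly on the sphere $\|\bs\theta - \bs\theta_N\| = \varepsilon$, condition (iii) supplies $\eta > 0$ with $M_N(\tilde{\bs\theta}) - M_N(\bs\theta_N) > \eta$. Hence the event $\{\|\hat{\bs\theta}_N - \bs\theta_N\| > \varepsilon\}$ forces $\sup_{\|\bs\theta - \bs\theta_N\| \le r}|D_N(\bs\theta)| > \eta$, giving $\Pr(\|\hat{\bs\theta}_N - \bs\theta_N\| > \varepsilon) \le \Pr(\sup_{\|\bs\theta - \bs\theta_N\| \le r}|D_N(\bs\theta)| > \eta) \to 0$. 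As this holds for every small $\varepsilon$ (and larger $\varepsilon$ only shrink the event), consistency follows.

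The main obstacle --- and the whole point of the convexity hypothesis --- is the pull-back step: for a nonconvex risk, a spurious minimizer $\hat{\bs\theta}_N$ far from $\bs\theta_N$ could attain a small objective value while the intervening sphere point $\tilde{\bs\theta}$ does not, so that local convergence near $\bs\theta_N$ would be insufficient. Convexity is precisely what forces $\hat{M}_N(\tilde{\bs\theta}) \le \hat{M}_N(\bs\theta_N)$, collapsing the global comparison onto the sphere $\|\bs\theta - \bs\theta_N\| = \varepsilon$ that sits inside the ball where \eqref{eq:Guo's result} is available; note this is also why condition (iii) here need only control $M_N$ on that sphere rather than over the whole exterior $\{\|\bs\theta - \bs\theta_N\| \ge \varepsilon\}$ as in the compact case of Theorem \ref{thm:consistency}.
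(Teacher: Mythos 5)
Your proof is correct and is essentially the paper's own argument: both rest on the centered uniform convergence \eqref{eq:Guo's result}, the convexity ``pull-back'' of $\hat{M}_N(\hat{\bs\theta}_N)\le \hat{M}_N(\bs\theta_N)$ to the point where the segment crosses the sphere $\|\bs\theta-\bs\theta_N\|=\varepsilon$, and the well-separation condition (iii) on that sphere. The only difference is organizational — the paper first shows $\inf_{\|\bs\theta-\bs\theta_N\|=\varepsilon}\hat{M}_N(\bs\theta)-\hat{M}_N(\bs\theta_N)>\eta/2$ with probability tending to one and then argues by contradiction, whereas you work contrapositively on the event $\{\|\hat{\bs\theta}_N-\bs\theta_N\|>\varepsilon\}$ directly — which is the same argument.
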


\begin{proof}[of Theorem \ref{thm:uniform_converge_MCV}]
    The proof of \eqref{eq:Guo's result} can be found in the proof of Theorem 9 in \citet{GuoBasse21}. 
    Below we prove the uniform convergence of $\hat{M}_N(\theta)$. 
    By Chebyshev's inequality and the same logic as \eqref{eq:M_hat_point_wise}, we can show $|\hat{M}_{N}(\bs\theta_N)-M_{N}(\bs\theta_N)| = o_{\Pr}(1)$ as $N\rightarrow \infty$ using conditions (iii). 
    Combined with \eqref{eq:Guo's result}, we then have
    \begin{align*}
    & \quad \ \sup_{\bs\theta \in \Theta:\|\bs\theta-\bs\theta_N\| \leq r} \big|  \hat{M}_N(\theta) - M_N(\theta) \big| \\
    & \leq \sup_{\bs\theta \in \Theta:\|\bs\theta-\bs\theta_N\| \leq r} \left| \left( \hat{M}_{N}(\theta) - M_{N}(\theta) \right) - \left( \hat{M}_{N}(\theta_N) - M_{N}(\theta_N) \right) \right|
    + |\hat{M}_{N}(\bs\theta_N)-M_{N}(\bs\theta_N)| 
    \\
    & = o_{\Pr}(1).  
    \end{align*}
    Therefore, Theorem \ref{thm:uniform_converge_MCV} holds. 
\end{proof}

\begin{proof}[of Theorem \ref{thm:consistency_MCV}]
The proof can also be found in \citet{GuoBasse21}. 
Here we give a more detailed proof for clarity. 
From condition (iii), for any $\varepsilon>0$, there exists $\eta>0$ such that, for all $N$,  $\inf_{\bs\theta\in\Theta: \|\bs\theta-\bs\theta_{N}\| =\varepsilon} M_N(\bs\theta)>M_N\left(\bs\theta_{N}\right)+\eta$. 
For any $\bs\theta\in\Theta$ such that $\|\bs\theta-\bs\theta_{N}\| =\varepsilon$, we then have 
\begin{align*}
    \hat{M}_N(\theta) - \hat{M}_N(\theta_N) 
    & \ge 
    M_N(\theta)  - M_N(\theta_N) 
    - 
    \big| \big\{ \hat{M}_N(\theta) - \hat{M}_N(\theta_N) \big\} - \big\{  M_N(\theta)  - M_N(\theta_N) \big\} \big|
    \\
    & \ge \eta - \big| \big\{ \hat{M}_N(\theta) - M_N(\theta) \big\} - \big\{ \hat{M}_N(\theta_N) - M_N(\theta_N) \big\} \big|.
\end{align*}
Consequently, from Theorem \ref{thm:uniform_converge_MCV}, we have 
\begin{align*}
    \inf_{\bs\theta \in \Theta:\|\bs\theta-\bs\theta_N\| = \varepsilon} \hat{M}_N(\theta) - \hat{M}_N(\theta_N) 
    & \ge \eta - \sup_{\bs\theta \in \Theta:\|\bs\theta-\bs\theta_N\| = \varepsilon} \big| \big\{ \hat{M}_N(\theta) - M_N(\theta) \big\} -  \big\{ \hat{M}_N(\theta_N) - M_N(\theta_N) \big\} \big|
    \\
    & = \eta + o_{\Pr}(1),
\end{align*}
where the last equality is due to Theorem \ref{thm:uniform_converge_MCV}. 
This implies that, as $N\rightarrow \infty$,  
\begin{align*}
    \Pr\left(\inf_{\bs\theta \in \Theta:\|\bs\theta-\bs\theta_N\| = \varepsilon} \hat{M}_N(\theta) - \hat{M}_N(\theta_N) > \eta/2 \right) \rightarrow 1
\end{align*}

Below we prove that $\inf_{\bs\theta \in \Theta:\|\bs\theta-\bs\theta_N\| = \varepsilon} \hat{M}_N(\theta) - \hat{M}_N(\theta_N) > \eta/2$ must imply $\|\hat \theta_N - \theta_N\| \le  \varepsilon$. 
We prove this by contradiction. 
Suppose that $\inf_{\bs\theta \in \Theta:\|\bs\theta-\bs\theta_N\| = \varepsilon} \hat{M}_N(\theta) - \hat{M}_N(\theta_N) > \eta/2$ and 
$\|\hat \theta_N - \theta_N\| >  \varepsilon$. 
There must exist 
$\theta_N' = (1-\lambda) \theta_N + \lambda \hat{\theta}_N$, with $\lambda \in (0,1)$, such that  $\|\theta_N' - \theta_N\| = \varepsilon$. Since $\hat{M}_N(\bs\theta)$ is convex, 
from condition (iv), 
we have 
\begin{align*}
\hat{M}_N(\theta_N') - \hat{M}_N(\theta_N) &\leq (1-\lambda) \hat{M}_N(\theta_N) +  \lambda \hat{M}_N(\hat{\theta}_N) - \hat{M}_N(\theta_N) = \lambda\{ \hat{M}_N(\hat{\theta}_N) - \hat{M}_N(\theta_N)\} \leq 0. 
\end{align*}
However, this contradicts with our assumption that $\inf_{\bs\theta \in \Theta:\|\bs\theta-\bs\theta_N\| = \varepsilon} \hat{M}_N(\theta) - \hat{M}_N(\theta_N) > \eta/2$. 

From the above, we then have, as $N\rightarrow \infty$, 
$$
\Pr(\|\hat{\theta}_N - \theta_N\| \leq \varepsilon) \geq \Pr\left(\sup_{\theta \in \Theta: \|\theta - \theta_N\| = \varepsilon} \hat{M}_N(\theta) - \hat{M}_N(\theta_N) > \eta/2 \right) \rightarrow 1.
$$
Therefore, $\hat{\bs\theta}_N-\bs\theta_N = o_{\Pr}(1)$ as $N \rightarrow \infty$, 
i.e., Theorem \ref{thm:consistency_MCV} holds. 
\end{proof}

\subsection{Z-estimation with convexity}

\begin{theorem}\label{thm:uniform_converge_ZCV}
    Under the same setting as in Theorem \ref{thm:uniform_converge_ZCP}, if the following conditions hold:
    \begin{itemize}
        \item[(i)] there exists $r > 0$ such that $\mathcal{B}(\theta_N, r) \subseteq \Theta$ for all $N$, where $\{\bs\theta_N\}$ is a sequence from $\Theta$,
        \item[(ii)] there exists $L \geq 0$ such that $(\E_N \left[ \| \phi_i(\bs\theta) - \phi_i(\bs\theta') \|^2 \right])^{1/2} \leq L \|\bs\theta - \bs\theta'\|$ for all $N$ and $\bs\theta, \bs\theta' \in \mathcal{B}(\bs\theta_N, r)$,
        \item[(iii)] $var_N\{\phi_{ik}(\bs\theta_N)\}$ is $o(n)$ for each coordinate $1\le k \le p$,
    \end{itemize}
    then, as $N \to \infty$, 
    $$
    \sup_{\bs\theta \in \Theta:\|\bs\theta-\bs\theta_N\| \leq r} \| \hat{\Phi}_N(\theta) - \Phi_N(\theta)  \| = o_{\Pr}(1).
    $$
\end{theorem}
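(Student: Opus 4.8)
The plan is to reduce the vector-valued uniform convergence to the scalar M-estimation result in Theorem \ref{thm:uniform_converge_MCV}, applied separately to each of the $p$ coordinates of $\hat{\Phi}_N$, exactly mirroring how the proof of Theorem \ref{thm:uniform_converge_ZCP} was obtained by applying the compactness-based scalar Theorem \ref{thm:uniform_converge} coordinatewise. Fix a coordinate $1 \le k \le p$ and take $m_i(\bs\theta) = \phi_{ik}(\bs\theta)$ and $\tilde m(\bs\theta) = \tilde\phi_k(\bs\theta)$ in the scalar setting, where $\tilde\phi_k$ denotes the $k$th component of $\tilde\phi$. Then $\frac1n\sum_{i=1}^N Z_i \phi_{ik}(\bs\theta) + \tilde\phi_k(\bs\theta)$ is exactly the $k$th coordinate $\hat\Phi_{Nk}(\bs\theta)$ of $\hat\Phi_N(\bs\theta)$, and likewise $\frac1N\sum_{i=1}^N \phi_{ik}(\bs\theta)+\tilde\phi_k(\bs\theta)$ equals $\Phi_{Nk}(\bs\theta)$, so the two decompositions and the simple-random-sampling structure required by Theorem \ref{thm:uniform_converge_MCV} are in place.

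Next I would verify that conditions (i)--(iii) of Theorem \ref{thm:uniform_converge_MCV} are inherited by each coordinate. Condition (i) there, namely $\mathcal{B}(\bs\theta_N, r)\subseteq\Theta$ for all $N$, is identical to condition (i) here and needs nothing extra. For condition (ii), the scalar Lipschitz-in-$L^2$ bound follows from the vector bound since $\{\phi_{ik}(\bs\theta)-\phi_{ik}(\bs\theta')\}^2 \le \|\phi_i(\bs\theta)-\phi_i(\bs\theta')\|^2$, hence $\E_N[\{\phi_{ik}(\bs\theta)-\phi_{ik}(\bs\theta')\}^2] \le \E_N[\|\phi_i(\bs\theta)-\phi_i(\bs\theta')\|^2] \le L^2\|\bs\theta-\bs\theta'\|^2$, giving the same Lipschitz constant $L$ for every coordinate. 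Condition (iii), $\Var_N\{\phi_{ik}(\bs\theta_N)\}=o(n)$, is precisely condition (iii) assumed here for each $k$. Applying Theorem \ref{thm:uniform_converge_MCV} to coordinate $k$ then yields $\sup_{\bs\theta\in\Theta:\|\bs\theta-\bs\theta_N\|\le r}|\hat\Phi_{Nk}(\bs\theta)-\Phi_{Nk}(\bs\theta)| = o_{\Pr}(1)$.

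Finally I would assemble the coordinates. Using the elementary bound $\|\hat\Phi_N(\bs\theta)-\Phi_N(\bs\theta)\| \le \sum_{k=1}^p |\hat\Phi_{Nk}(\bs\theta)-\Phi_{Nk}(\bs\theta)|$, taking the supremum over $\|\bs\theta-\bs\theta_N\|\le r$ on both sides, and noting that the supremum of a finite sum is bounded by the sum of the suprema, the left-hand side is dominated by a sum of $p$ terms each of which is $o_{\Pr}(1)$. Since $p$ is fixed, the sum is $o_{\Pr}(1)$, which is the desired conclusion.

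A remark on where the real content lies: despite the section heading, this statement requires no convexity --- only the $L^2$-Lipschitz condition (ii) drives the argument, and convexity enters only in the companion consistency result (cf.\ Theorem \ref{thm:consistency_MCV}). Consequently there is essentially no substantive obstacle; the only point demanding a moment's care is the coordinatewise deduction of the Lipschitz condition, which is immediate from the coordinate-versus-norm inequality above. The entire proof is therefore a short reduction, and I would expect it to read almost identically to the proof of Theorem \ref{thm:uniform_converge_ZCP}.
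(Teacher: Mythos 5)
Your proposal is correct and follows essentially the same route as the paper: the paper's proof also applies Theorem \ref{thm:uniform_converge_MCV} coordinatewise to $\phi_i$ and then combines the $p$ coordinate suprema into the vector-norm statement. You merely spell out two details the paper leaves implicit---deducing the scalar $L^2$-Lipschitz bound from the vector one via $\{\phi_{ik}(\bs\theta)-\phi_{ik}(\bs\theta')\}^2 \le \|\phi_i(\bs\theta)-\phi_i(\bs\theta')\|^2$, and the final norm-versus-sum assembly---both of which are handled correctly.
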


\begin{theorem}\label{thm:consistency_ZCV}
    Under the same setting as in Theorem \ref{thm:uniform_converge_ZCP}, if conditions (i)--(iii) in Theorem \ref{thm:uniform_converge_ZCV} holds, 
	and the following conditions hold:
    \begin{itemize}
        \item[(i)] $\Theta$ is convex, 
        \item[(ii)] For all $i$ and $ \bs\theta, \bs\theta' \in \Theta$, $\phi_i(\bs\theta)$ and $\tilde \phi(\bs\theta)$ satisfy 
        $$
        \{ \phi_i(\bs\theta) - \phi_i(\bs\theta') \}^\top(\bs\theta - \bs\theta') \geq 0 \text{ and } \{ \tilde\phi(\bs\theta) - \tilde\phi(\bs\theta') \}^\top(\bs\theta - \bs\theta') \geq 0, 
         $$
	\item[(iii)] 
        $\bs{\theta}_N$ is the unique root of $\Phi(\bs{\theta})$  satisfying that for any $\varepsilon>0$, there exists $\eta>0$ such that
		$\inf_{\bs\theta\in\Theta: \|\bs\theta-\bs\theta_{N}\| = \varepsilon} \Phi_N(\bs\theta)^\top(\bs\theta-\bs\theta_N) > \eta$ for all $N$,
	\item[(iv)] $\hat{\bs{\theta}}_N \in \Theta$ 
		satisfies $\hat{\Phi}_N(\hat{\bs\theta}_N) = 0$, 
	\end{itemize}
	then $\hat{\bs\theta}_N-\bs\theta_N = o_{\Pr}(1)$ as $N \rightarrow \infty$. 
\end{theorem}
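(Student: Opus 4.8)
The plan is to adapt the convexity-based consistency proof for M-estimation (Theorem~\ref{thm:consistency_MCV}) to the Z-estimation setting, replacing the convex risk increment $M_N(\bs\theta)-M_N(\bs\theta_N)$ by the scalar quantity $\Phi_N(\bs\theta)^\top(\bs\theta-\bs\theta_N)$ and replacing convexity of $\hat M_N$ by monotonicity of $\hat\Phi_N$. Fix an arbitrary $\varepsilon\in(0,r)$ and let $\eta>0$ be the constant supplied by condition~(iii). First I would record that the empirical estimating function is itself monotone: since the monotonicity hypotheses in condition~(ii) hold for every $\phi_i$ and for $\tilde\phi$, and the sampling weights $Z_i/n$ are nonnegative, summing the inequalities gives
\[
\{\hat\Phi_N(\bs\theta)-\hat\Phi_N(\bs\theta')\}^\top(\bs\theta-\bs\theta')\ge 0 \qquad \text{for all } \bs\theta,\bs\theta'\in\Theta,
\]
which is the Z-estimation analogue of the convexity of $\hat M_N$ exploited in Theorem~\ref{thm:consistency_MCV}.

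Second, I would transfer the population separation in condition~(iii) to $\hat\Phi_N$ on the sphere $\{\bs\theta:\|\bs\theta-\bs\theta_N\|=\varepsilon\}$. Decomposing
\[
\hat\Phi_N(\bs\theta)^\top(\bs\theta-\bs\theta_N) = \Phi_N(\bs\theta)^\top(\bs\theta-\bs\theta_N) + \{\hat\Phi_N(\bs\theta)-\Phi_N(\bs\theta)\}^\top(\bs\theta-\bs\theta_N)
\]
and bounding the last term by $\varepsilon\,\|\hat\Phi_N(\bs\theta)-\Phi_N(\bs\theta)\|$ via Cauchy--Schwarz, the uniform convergence from Theorem~\ref{thm:uniform_converge_ZCV} yields
\[
\inf_{\|\bs\theta-\bs\theta_N\|=\varepsilon}\hat\Phi_N(\bs\theta)^\top(\bs\theta-\bs\theta_N) \ge \eta-\varepsilon\sup_{\|\bs\theta-\bs\theta_N\|\le r}\|\hat\Phi_N(\bs\theta)-\Phi_N(\bs\theta)\| = \eta+o_{\Pr}(1),
\]
so that this infimum exceeds $\eta/2$ on an event whose probability tends to one.

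Third, on that event I would establish $\|\hat{\bs\theta}_N-\bs\theta_N\|\le\varepsilon$ by contradiction. If instead $\|\hat{\bs\theta}_N-\bs\theta_N\|>\varepsilon$, convexity of $\Theta$ (condition~(i)) lets me choose $\bs\theta_N'=(1-\lambda)\bs\theta_N+\lambda\hat{\bs\theta}_N$ with $\lambda\in(0,1)$ and $\|\bs\theta_N'-\bs\theta_N\|=\varepsilon$. Applying the monotonicity of $\hat\Phi_N$ between $\bs\theta_N'$ and $\hat{\bs\theta}_N$, together with the root equation $\hat\Phi_N(\hat{\bs\theta}_N)=\bs 0$ from condition~(iv), gives $\hat\Phi_N(\bs\theta_N')^\top(\bs\theta_N'-\hat{\bs\theta}_N)\ge 0$; since $\bs\theta_N'-\hat{\bs\theta}_N=-\tfrac{1-\lambda}{\lambda}(\bs\theta_N'-\bs\theta_N)$ is a strictly negative multiple of $\bs\theta_N'-\bs\theta_N$, this forces $\hat\Phi_N(\bs\theta_N')^\top(\bs\theta_N'-\bs\theta_N)\le 0$, contradicting the sphere bound $>\eta/2$. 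Hence $\Pr(\|\hat{\bs\theta}_N-\bs\theta_N\|\le\varepsilon)\to 1$, and because $\varepsilon$ was arbitrary, $\hat{\bs\theta}_N-\bs\theta_N=o_{\Pr}(1)$.

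The main obstacle is exactly this final sign-reversal step: I must verify that monotonicity of $\hat\Phi_N$ propagates the root information $\hat\Phi_N(\hat{\bs\theta}_N)=\bs 0$ outward to the intermediate point $\bs\theta_N'$ on the sphere. Whereas in Theorem~\ref{thm:consistency_MCV} convexity directly supplied $\hat M_N(\bs\theta_N')\le\hat M_N(\bs\theta_N)$ along the segment, here it is the collinearity of the increments $\bs\theta_N'-\bs\theta_N$ and $\bs\theta_N'-\hat{\bs\theta}_N$ that converts the monotonicity inequality into the required bound, and pinning down the correct direction of that inequality is the delicate point of the argument.
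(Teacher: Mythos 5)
Your proposal is correct and follows essentially the same route as the paper's proof: transferring the population separation on the sphere $\{\bs\theta:\|\bs\theta-\bs\theta_N\|=\varepsilon\}$ to $\hat\Phi_N$ via uniform convergence and Cauchy--Schwarz, then ruling out $\|\hat{\bs\theta}_N-\bs\theta_N\|>\varepsilon$ by contradiction using the convex-combination point and the monotonicity of $\hat\Phi_N$ together with $\hat\Phi_N(\hat{\bs\theta}_N)=\bs{0}$. Your collinearity identity $\bs\theta_N'-\hat{\bs\theta}_N=-\tfrac{1-\lambda}{\lambda}(\bs\theta_N'-\bs\theta_N)$ is the same sign-reversal the paper performs (written there as $\bs\theta'-\bs\theta_N=-\tfrac{\lambda}{1-\lambda}(\bs\theta'-\hat{\bs\theta}_N)$), and your explicit preliminary observation that $\hat\Phi_N$ inherits monotonicity from the $\phi_i$ and $\tilde\phi$ with nonnegative weights is a detail the paper leaves implicit.
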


\begin{remark}
    Condition (ii) in Theorem \ref{thm:consistency_ZCV} is motivated by the convex risk minimization in \S \ref{sec:M_est_convex}. In particular, condition (ii) in Theorem \ref{thm:consistency_ZCV} holds when $\phi_i(\bs\theta)$s and $\tilde \phi(\bs\theta)$ are derivatives of convex functions. 
\end{remark}

\begin{proof}[of Theorem \ref{thm:uniform_converge_ZCV}]
    Applying Theorem \ref{thm:uniform_converge_MCV} to each coordinate of $\phi_{i}$, we have, for $1\le k \le p$,  
    $\sup_{\bs\theta \in \Theta:\|\bs\theta-\bs\theta_N\| \leq r} |  \hat{\Phi}_{Nk}(\theta) - \Phi_{Nk}(\theta) | = o_{\Pr}(1) \text{ as } N \to \infty$, 
    where $\hat{\Phi}_{Nk}(\theta)$ and $ \Phi_{Nk}(\theta)$ are the $k$th coordinate of $\hat{\Phi}_{N}(\theta)$ and $ \Phi_{N}(\theta)$, respectively. 
    This then implies that $\sup_{\bs\theta\in\Theta:\|\bs\theta-\bs\theta_N\| \leq r}\|\hat{\Phi}_N(\bs\theta)-\Phi_N(\bs\theta)\| = o_{\Pr}(1)$.
    Therefore, Theorem \ref{thm:uniform_converge_ZCV} holds. 
\end{proof}

\begin{proof}[of Theorem \ref{thm:consistency_ZCV}]
    Define $r$ the same as in  Theorem \ref{thm:uniform_converge_ZCV}. 
    From condition (iii), for any $0< \varepsilon\le r$, there exists $\eta>0$ such that
    \begin{align*}
    \eta & < \inf_{\bs\theta \in \Theta:\|\bs\theta - \bs\theta_N\| = \varepsilon} \Phi_N(\bs\theta)^\top(\bs\theta - \bs\theta_N) \\
    & \leq  \sup_{\bs\theta \in \Theta:\|\bs\theta - \bs\theta_N\| = \varepsilon} (\Phi_N(\bs\theta) - \hat{\Phi}_N(\bs\theta))^\top(\bs\theta - \bs\theta_N) + \inf_{\bs\theta \in \Theta:\|\bs\theta - \bs\theta_N\| = \varepsilon} \hat{\Phi}_N(\bs\theta)^\top(\bs\theta - \bs\theta_N) \\
    & \leq  \sup_{\bs\theta \in \Theta:\|\bs\theta - \bs\theta_N\| = \varepsilon}\|\Phi_N(\bs\theta) - \hat{\Phi}_N(\bs\theta)\|\|\bs\theta - \bs\theta_N\| + \inf_{\bs\theta \in \Theta:\|\bs\theta - \bs\theta_N\| = \varepsilon} \hat{\Phi}_N(\bs\theta)^\top(\bs\theta - \bs\theta_N) \\
    & = o_{\Pr}(1) + \inf_{\bs\theta \in \Theta:\|\bs\theta - \bs\theta_N\| = \varepsilon} \hat{\Phi}_N(\bs\theta)^\top(\bs\theta - \bs\theta_N),
    \end{align*}
    where the second inequality follows by some algebra, the third inequality follows by the Cauchy–Schwarz inequality, and the last equality follows from Theorem \ref{thm:uniform_converge_ZCV}. 
    This then implies that, as $N\rightarrow \infty$, 
    $$
    \Pr \Big( \inf_{\bs\theta \in \Theta:\|\bs\theta - \bs\theta_N\| = \varepsilon} \hat{\Phi}_N(\theta)^\top(\theta - \theta_N) \geq {\eta}/{2} \Big) \rightarrow 1. 
    $$ 

    We now prove that  $\inf_{\bs\theta \in \Theta:\|\bs\theta - \bs\theta_N\| = \varepsilon} \hat{\Phi}_N(\theta)^\top(\theta - \theta_N) \geq {\eta}/{2}$ must imply $\|\hat{\theta}_N - \theta_N\| \le \varepsilon$. 
    We prove this by contradiction. 
    Suppose that $\inf_{\bs\theta \in \Theta:\|\bs\theta - \bs\theta_N\| = \varepsilon} \hat{\Phi}_N(\theta)^\top(\theta - \theta_N) \geq {\eta}/{2}$ and $\|\hat{\theta}_N - \theta_N\| > \varepsilon$. 
    Because $\Theta$ is convex by condition (i), there must exist $\bs\theta' \in \Theta$ such that $\|\bs\theta' - \bs\theta_N\| = \varepsilon$ and $\bs\theta' = \lambda \hat{\bs\theta}_N + (1-\lambda)\bs\theta_N$ for some $0 < \lambda < 1$. 
    From conditions (ii) and (iv), we have 
    \begin{align*}
        \hat\Phi_N(\bs\theta')^\top(\bs\theta' - \bs\theta_N) & =  
        - \frac{\lambda}{1-\lambda}\hat{\Phi}_N(\bs\theta')^\top(\bs\theta' - \hat{\bs\theta}_N) = 
        - \frac{\lambda}{1-\lambda}\{ \hat{\Phi}_N(\bs\theta')-\hat{\Phi}_N(\bs\hat\theta_N) \}^\top(\bs\theta' - \hat{\bs\theta}_N) \\
        & \leq 0.
    \end{align*}
    However, this contradics with our assumption that $\inf_{\bs\theta \in \Theta:\|\theta - \theta_N\| = \varepsilon} \hat{\Phi}_N(\theta)^\top(\theta - \theta_N) \geq {\eta}/{2}$.

    From the above, we have 
    \begin{align*}
        \Pr\big( \|\hat{\theta}_N - \theta_N\| \le \varepsilon \big)
        \ge \Pr \Big( \inf_{\bs\theta \in \Theta:\|\bs\theta - \bs\theta_N\| = \varepsilon} \hat{\Phi}_N(\theta)^\top(\theta - \theta_N) \geq {\eta}/{2} \Big) \rightarrow 1 \text{ as } N\rightarrow \infty. 
    \end{align*}
    Therefore, $\hat{\theta}_N - \theta_N = o_{\Pr}(1)$, i.e., Theorem \ref{thm:consistency_ZCV} holds. 
\end{proof} 

\section{Asymptotic Normality for M-estimation and Z-estimation under simple random sampling}\label{sec:clt_srs}

In this section, we establish the $\sqrt{n}$-consistency and asymptotic Normality of the Z-estimator under simple random sampling, which includes the M-estimator as special cases. 
Moreover, 
for conciseness, 
we impose conditions on uniform convergence of $\dot{\hat{\Phi}}_N(\bs\theta)$ and the consistency of $\hat{\theta}_N$, which can be guaranteed by conditions similar to those in \S \ref{sec:M_srs}.

\subsection{Theorems}
\begin{theorem}\label{thm:rate}
	Under the same setting as in Theorem \ref{thm:uniform_converge_ZCP}. 
	If the following conditions hold: 
	\begin{enumerate}[label=(\roman*)]
        \item 
        $\theta_N$ is the root of $\Phi_N(\theta)$, and 
		there exists $\varepsilon_0>0$ such that $\mathcal{B}({\bs\theta}_N, \varepsilon_0 ) = \{\theta: \|\theta-\theta_N\| <  \varepsilon_0\} \subset \Theta$ for all $N$, 
 
        \item
		$\hat{\Phi}_N(\hat{\bs\theta}_N)=0$, 
        and $\hat{\bs\theta}_N-\bs\theta_N = o_{\Pr}(1)$ as $N \rightarrow \infty$,

		\item for all $N$ and $1\le i \le N$, 
		$\tilde \phi({\bs\theta})$ and $\phi_i( {\bs\theta})$
        have continuous first order partial derivatives over ${\bs\theta} \in \Theta$,

        \item there exists $r_0 > 0$ such that $\sup_{\bs\theta \in \Theta:\|\bs\theta-\bs\theta_N\|\leq r_0} \| \dot{\hat{\Phi}}_N(\bs\theta)  - \dot{\Phi}_N(\bs\theta) \| = o_{\Pr}(1)$,
         
		\item
		$\sup_N \sup_{\bs\theta \in \Theta:\|\bs\theta-\bs\theta_N\|\leq\delta} \| \dot{\Phi}_N(\bs\theta)  - \dot{\Phi}_N(\bs\theta_N) \| \rightarrow 0$ 
		as $\delta \rightarrow 0$, 
        \item 
        $var_N\{\phi_{ik}(\bs\theta_N)\}= O(1) $ for $1 \leq k \leq p$,
		
		\item
		there exists $c_0>0$ such that
        $\sigma_{\min}(\dot{\Phi}_N({\bs\theta}_N)) \ge c_0$  
  for all $N$,
	\end{enumerate}
	then $\hat{\bs\theta}_N-{\bs\theta}_N=O_\Pr( n^{-1/2} )$ as $N\rightarrow \infty$.
\end{theorem}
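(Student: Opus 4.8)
The plan is to follow the classical Z-estimation argument, adapted to the simple-random-sampling regime where the stochastic order of the estimating function is governed by the sampling variance (with finite-population correction) rather than by i.i.d.\ averaging. The argument has three movements: control the size of $\hat{\Phi}_N$ at the population root, linearize via an exact Taylor expansion, and invert a uniformly well-conditioned averaged Jacobian.

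First I would pin down the order of $\hat{\Phi}_N(\bs\theta_N)$. Since $\bs\theta_N$ is the root of $\Phi_N$ by condition (i), we have $\Phi_N(\bs\theta_N)=\bs{0}$, so $\hat{\Phi}_N(\bs\theta_N) = \hat{\Phi}_N(\bs\theta_N)-\Phi_N(\bs\theta_N) = n^{-1}\sum_{i}Z_i \phi_i(\bs\theta_N) - N^{-1}\sum_i \phi_i(\bs\theta_N)$. Under simple random sampling, the $k$th coordinate of this difference has mean zero and variance equal to $n^{-1}(1-n/N)\Var_N\{\phi_{ik}(\bs\theta_N)\}$, which is $O(n^{-1})$ by condition (vi) since $1-n/N\le 1$. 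A coordinatewise application of Chebyshev's inequality then yields $\hat{\Phi}_N(\bs\theta_N) = O_{\Pr}(n^{-1/2})$.

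Next I would linearize. By the differentiability in condition (iii), the exact integral form of Taylor's theorem gives, pathwise, $\hat{\Phi}_N(\hat{\bs\theta}_N)-\hat{\Phi}_N(\bs\theta_N) = \hat{A}_N(\hat{\bs\theta}_N-\bs\theta_N)$, where $\hat{A}_N = \int_0^1 \dot{\hat{\Phi}}_N(\bs\theta_N + t(\hat{\bs\theta}_N-\bs\theta_N))\,dt$. Using $\hat{\Phi}_N(\hat{\bs\theta}_N)=\bs{0}$ from condition (ii), this becomes the identity $\hat{A}_N(\hat{\bs\theta}_N-\bs\theta_N) = -\hat{\Phi}_N(\bs\theta_N)$. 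The crux is to show $\hat{A}_N = \dot{\Phi}_N(\bs\theta_N)+o_{\Pr}(1)$ in spectral norm. Since $\hat{\bs\theta}_N-\bs\theta_N = o_{\Pr}(1)$, on an event of probability tending to one the whole segment $\{\bs\theta_N+t(\hat{\bs\theta}_N-\bs\theta_N):t\in[0,1]\}$ lies inside the radius-$r_0$ ball, so I can bound $\|\hat{A}_N-\dot{\Phi}_N(\bs\theta_N)\|$ by $\sup_{\|\bs\theta-\bs\theta_N\|\le r_0}\|\dot{\hat{\Phi}}_N(\bs\theta)-\dot{\Phi}_N(\bs\theta)\| + \omega_N(\|\hat{\bs\theta}_N-\bs\theta_N\|)$, where $\omega_N(\delta)\equiv \sup_N\sup_{\|\bs\theta-\bs\theta_N\|\le\delta}\|\dot{\Phi}_N(\bs\theta)-\dot{\Phi}_N(\bs\theta_N)\|$. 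The first supremum is $o_{\Pr}(1)$ by the uniform convergence in condition (iv); the second is $o_{\Pr}(1)$ because $\omega_N$ is a deterministic nondecreasing function vanishing at the origin by condition (v), evaluated at the $o_{\Pr}(1)$ radius $\|\hat{\bs\theta}_N-\bs\theta_N\|$.

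Finally I would invert. By condition (vii), $\sigma_{\min}(\dot{\Phi}_N(\bs\theta_N))\ge c_0>0$ uniformly in $N$, so Weyl's inequality for singular values together with $\hat{A}_N-\dot{\Phi}_N(\bs\theta_N)=o_{\Pr}(1)$ gives $\sigma_{\min}(\hat{A}_N)\ge c_0/2$ on an event of probability tending to one; on that event $\hat{A}_N$ is invertible with $\sigma_{\max}(\hat{A}_N^{-1})\le 2/c_0$. Multiplying the linearization identity by $\hat{A}_N^{-1}$ then yields $\|\hat{\bs\theta}_N-\bs\theta_N\|\le \sigma_{\max}(\hat{A}_N^{-1})\,\|\hat{\Phi}_N(\bs\theta_N)\| = O_{\Pr}(1)\cdot O_{\Pr}(n^{-1/2}) = O_{\Pr}(n^{-1/2})$, as claimed. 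The main obstacle is the middle step: establishing that the matrix-valued averaged Jacobian $\hat{A}_N$ concentrates around $\dot{\Phi}_N(\bs\theta_N)$, which requires combining the stochastic uniform convergence of $\dot{\hat{\Phi}}_N$ (condition iv) with the deterministic, $N$-uniform equicontinuity of $\dot{\Phi}_N$ (condition v), and then transferring the uniform lower bound on the smallest singular value of $\dot{\Phi}_N(\bs\theta_N)$ to $\hat{A}_N$ so as to guarantee a uniformly bounded inverse.
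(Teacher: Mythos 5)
Your proposal is correct and follows essentially the same route as the paper's own proof: Chebyshev's inequality with the simple-random-sampling variance formula to get $\hat{\Phi}_N(\bs\theta_N)=O_{\Pr}(n^{-1/2})$, an integral-form Taylor linearization, concentration of the averaged Jacobian $\hat{A}_N$ around $\dot{\Phi}_N(\bs\theta_N)$ by combining conditions (iv) and (v), and inversion via the uniform singular-value bound in condition (vii). The only cosmetic differences are that the paper formalizes the neighborhood restriction by projecting $\hat{\bs\theta}_N$ onto $\overline{\mathcal{B}}(\bs\theta_N,\varepsilon_0/2)$ so the Taylor segment provably stays inside $\Theta$ (which need not be convex, so your ``pathwise'' identity also tacitly requires the high-probability event you invoke later), and it bounds $\bs{H}^{-1}$ through the continuous mapping theorem rather than Weyl's inequality.
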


\begin{theorem}\label{thm:clt}
	Under the same setting as in Theorem \ref{thm:uniform_converge_ZCP},  
	if
	conditions (i)--(vii) in Theorem \ref{thm:rate} hold, 
	and 
	the following conditions hold: 
	\begin{enumerate}[label=(\roman*)]
		\item $\dot{\Phi}_N({\bs\theta}_N) \rightarrow \bs{\Gamma}> 0$, and $\Cov_N\{ \phi_i({\bs\theta}_N) \} \rightarrow \bs{\Sigma}$ as $N\rightarrow \infty$, 
		\item 
		$
		\max_{1\le j \le N} \| \phi_j({\bs\theta}_N) -  \E_N\{ \phi_i({\bs\theta}_N) \} \|^2 / \min(n, N-n) \rightarrow 0
		$
		as $N\rightarrow \infty$, 
	\end{enumerate}
	then  
	\begin{align*}
		\sqrt{\frac{nN}{N-n}} \cdot ( \hat{{\bs\theta}}_N-{\bs\theta}_N ) 
		& \converged \mathcal{N}\left( 
		\bs{0}, \ 
		\bs{\Gamma}^{-1} \bs{\Sigma}  (\bs{\Gamma}^{-1})^\top
		\right). 
	\end{align*}
\end{theorem}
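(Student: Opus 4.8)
The plan is to linearize the estimating equation about $\bs\theta_N$ and reduce the problem to a finite-population central limit theorem for the normalized score $\hat\Phi_N(\bs\theta_N)$. First I would invoke the exact, coordinate-wise mean value theorem: since each component $\hat\Phi_{Nk}$ is continuously differentiable by condition (iii) of Theorem \ref{thm:rate}, writing $0 = \hat\Phi_{Nk}(\hat{\bs\theta}_N) = \hat\Phi_{Nk}(\bs\theta_N) + \dot{\hat\Phi}_{Nk}(\bar{\bs\theta}_{Nk})^\top(\hat{\bs\theta}_N - \bs\theta_N)$ for some $\bar{\bs\theta}_{Nk}$ on the segment joining $\hat{\bs\theta}_N$ and $\bs\theta_N$, and stacking the $p$ rows, gives $0 = \hat\Phi_N(\bs\theta_N) + \hat A_N (\hat{\bs\theta}_N - \bs\theta_N)$, where $\hat A_N$ is the $p\times p$ matrix whose $k$th row is $\dot{\hat\Phi}_{Nk}(\bar{\bs\theta}_{Nk})^\top$. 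Once $\hat A_N$ is shown invertible with probability tending to one, this yields the exact identity $\hat{\bs\theta}_N - \bs\theta_N = -\hat A_N^{-1}\hat\Phi_N(\bs\theta_N)$, with no remainder term to control.

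The next step is to show $\hat A_N \convergep \bs\Gamma$. Because $\hat{\bs\theta}_N - \bs\theta_N = o_\Pr(1)$ by condition (ii) of Theorem \ref{thm:rate} and each $\bar{\bs\theta}_{Nk}$ lies on the segment to $\bs\theta_N$, we have $\|\bar{\bs\theta}_{Nk} - \bs\theta_N\| \le \|\hat{\bs\theta}_N - \bs\theta_N\| = o_\Pr(1)$. Combining the uniform convergence $\sup_{\|\bs\theta-\bs\theta_N\|\le r_0}\|\dot{\hat\Phi}_N(\bs\theta) - \dot\Phi_N(\bs\theta)\| = o_\Pr(1)$ (condition (iv)) with the equicontinuity of $\dot\Phi_N$ at $\bs\theta_N$ (condition (v)) and $\dot\Phi_N(\bs\theta_N)\to\bs\Gamma$ (condition (i) here), each row satisfies $\dot{\hat\Phi}_{Nk}(\bar{\bs\theta}_{Nk}) = \dot\Phi_{Nk}(\bar{\bs\theta}_{Nk}) + o_\Pr(1) = \dot\Phi_{Nk}(\bs\theta_N) + o_\Pr(1)$, so that $\hat A_N = \bs\Gamma + o_\Pr(1)$. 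Since $\bs\Gamma>0$, continuity of the smallest singular value gives $\hat A_N$ invertible with probability tending to one and $\hat A_N^{-1} = \bs\Gamma^{-1} + o_\Pr(1)$.

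The crux is the finite-population CLT for $\sqrt{nN/(N-n)}\,\hat\Phi_N(\bs\theta_N)$. Since $\Phi_N(\bs\theta_N)=0$, the term $\tilde\phi(\bs\theta_N)$ cancels and $\hat\Phi_N(\bs\theta_N) = n^{-1}\sum_i Z_i\phi_i(\bs\theta_N) - N^{-1}\sum_i\phi_i(\bs\theta_N)$ is exactly the centered sample mean of the finite population $\{\phi_i(\bs\theta_N)\}$ under simple random sampling, whose exact covariance equals $\tfrac{N-n}{nN}\Cov_N\{\phi_i(\bs\theta_N)\}$; hence the chosen normalization produces covariance $\Cov_N\{\phi_i(\bs\theta_N)\}\to\bs\Sigma$ by condition (i). I would establish asymptotic Normality through the Cramér--Wold device, applying for each fixed $a\in\mathbb{R}^p$ the Hájek-type scalar CLT for simple random sampling \citep{fpclt2017} to $a^\top\phi_i(\bs\theta_N)$, for which the Lindeberg-type condition (ii) here, $\max_j\|\phi_j(\bs\theta_N) - \E_N\{\phi_i(\bs\theta_N)\}\|^2/\min(n,N-n)\to 0$, is exactly what certifies the Lindeberg--Hájek requirement. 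This gives $\sqrt{nN/(N-n)}\,\hat\Phi_N(\bs\theta_N)\converged\mathcal{N}(\bs 0,\bs\Sigma)$.

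Finally, combining the preceding displays with Slutsky's theorem yields $\sqrt{nN/(N-n)}(\hat{\bs\theta}_N-\bs\theta_N) = -\hat A_N^{-1}\sqrt{nN/(N-n)}\,\hat\Phi_N(\bs\theta_N) \converged \mathcal{N}(\bs 0, \bs\Gamma^{-1}\bs\Sigma(\bs\Gamma^{-1})^\top)$, as claimed. The main obstacle is the CLT step: unlike in the classical i.i.d.\ theory, the $Z_i$ are negatively dependent, so the ordinary Lindeberg--Feller theorem does not apply and one must instead rely on a finite-population CLT, with condition (ii) playing the role of the Lindeberg condition and with the unusual normalization $\sqrt{nN/(N-n)}$ encoding the finite-population correction. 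The remaining steps are the standard Z-estimation linearization argument, here made clean by the exactness of the coordinate-wise mean value expansion.
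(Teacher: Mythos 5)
Your overall architecture---an exact linearization of $\hat{\Phi}_N$ at $\theta_N$, convergence of the derivative matrix to $\Gamma$, a finite-population CLT for $\sqrt{nN/(N-n)}\,\hat{\Phi}_N(\theta_N)$, and Slutsky---is the same as the paper's proof. The differences in the linearization step are cosmetic: the paper uses the integral form $H=\int_0^1 \dot{\hat{\Phi}}_N\big(\theta_N+u(\hat{\theta}_N-\theta_N)\big)\,\mathrm{d}u$ rather than your coordinate-wise mean value points $\bar{\theta}_{Nk}$, and it first replaces $\hat{\theta}_N$ by its projection onto $\overline{\mathcal{B}}(\theta_N,\varepsilon_0/2)$ so that the expansion is guaranteed to stay inside $\Theta$; in your version you should at least restrict attention to the event $\|\hat{\theta}_N-\theta_N\|<\varepsilon_0$, which has probability tending to one, so that the segment joining $\hat{\theta}_N$ and $\theta_N$ lies in $\Theta$ where differentiability is assumed.

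There is, however, one genuine gap in your CLT step. The H\'ajek-type condition for the scalar simple-random-sampling CLT applied to $a^\top\phi_i(\theta_N)$ is not your condition (ii) alone: it requires $\max_{1\le i\le N}\big[a^\top\phi_i(\theta_N)-\E_N\{a^\top\phi_i(\theta_N)\}\big]^2\big/\big[\min(n,N-n)\,\Var_N\{a^\top\phi_i(\theta_N)\}\big]\rightarrow 0$, with the finite-population \emph{variance in the denominator}. Condition (ii) controls only the numerator, so it certifies this requirement only along directions $a$ with $a^\top\Sigma a>0$. The theorem assumes $\Gamma>0$ but does \emph{not} assume $\Sigma$ is nonsingular, so there may be directions $a\neq 0$ with $a^\top\Sigma a=0$; along those, the scalar CLT is inapplicable and the claim "condition (ii) is exactly what certifies the Lindeberg--H\'ajek requirement" fails. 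The fix is easy but must be stated: for such $a$, the exact SRS variance formula gives $\Var\big\{\sqrt{nN/(N-n)}\,a^\top\hat{\Phi}_N(\theta_N)\big\}=\Var_N\{a^\top\phi_i(\theta_N)\}\rightarrow 0$, so by Chebyshev the normalized linear combination converges in probability to $0$, which coincides with $\mathcal{N}(0,a^\top\Sigma a)$. This degenerate case is precisely what the paper's Lemma \ref{lemma:clt_srs} handles in the second half of its proof: it splits coordinates into those with positive and with vanishing limiting variances, shows the cross-covariance blocks of $\Sigma$ vanish, treats the degenerate block by Chebyshev, and only applies the Li--Ding multivariate CLT (after standardization) to the nondegenerate block. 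With this case added, your Cram\'er--Wold argument is complete and equivalent to the paper's.
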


\subsection{Proof of the theorems}

\begin{proof}[of Theorem \ref{thm:rate}]
    Without loss of generality, we assume $\varepsilon_0$ is less than or equal to $r_0$ in condition (iv). 
	Let $\hat{\bs{\theta}}_N'$ denote the projection of $\hat{\bs{\theta}}_N$ onto the set $\overline{\mathcal{B}}(\bs{\theta}_N, \varepsilon_0/2) \equiv \{\theta: \|\theta-\theta_N\|\le \varepsilon_0/2\}$. 
	From condition (ii), $\hat{\bs{\theta}}_N - \bs{\theta}_N = o_{\Pr}(1)$, which implies that 
	$
	\Pr(\hat{\bs{\theta}}_N\ne \hat{\bs{\theta}}_N') \le \Pr(\|\hat{\bs{\theta}}_N-\bs{\theta}_N \|> \varepsilon_0/2) \rightarrow 0
	$
	as $N\rightarrow \infty$. 
	Consequently, we must have 
    $\Pr\{\hat{\Phi}_N(\hat{\bs\theta}_N') \ne 0\} = \Pr\{\hat{\Phi}_N(\hat{\bs\theta}_N') \ne \hat{\Phi}_N(\hat{\bs\theta}_N)\} \le \Pr(\hat{\bs{\theta}}_N\ne \hat{\bs{\theta}}_N')  = o(1)$
	and 
	$\hat{\bs{\theta}}_N' - \hat{\bs{\theta}}_N = o_{\Pr}(n^{-1/2})$. 
	Thus, to prove Theorem \ref{thm:rate}, it suffices to prove that $\hat{\bs{\theta}}_N' - \bs{\theta}_N = O_{\Pr}(n^{-1/2})$. 
    For descriptive convenience, 
    in the following, 
    we will simply assume that 
	$\hat{\bs{\theta}}_N \in \overline{\mathcal{B}}(\bs{\theta}_N, \varepsilon_0/2) \subset \Theta$ for all $N$. 
    Recall that we now only have $\Pr\{\hat{\Phi}_N(\hat{\bs\theta}_N) \ne 0\} = o(1)$.

	First, 
	from conditions (iii), 
	by 
	the fundamental theorem of calculus, 
	\begin{equation}\label{eq:taylor_M1}
		\hat{\Phi}_N(\hat{\bs{\theta}}_N) -  \hat{\Phi}_N(\bs{\theta}_N) = 
		\int_{0}^1 \dot{\hat{\Phi}}_N\big(\bs{\theta}_N+u(\hat{\bs{\theta}}_N-\bs{\theta}_N) \big) \text{d}u  (\hat{\bs{\theta}}_N-\bs{\theta}_N)
		\equiv 
		\bs{H} (\hat{\bs{\theta}}_N-\bs{\theta}_N), 
	\end{equation}
	where $\bs{H} \equiv \int_{0}^1 \dot{\hat{\Phi}}_N(\bs{\theta}_N+u(\hat{\bs{\theta}}_N-\bs{\theta}_N) ) \text{d}u$.
	Then 
	$
	\hat{{\bs\theta}}_N-{\bs\theta}_N = \bs{H}^{-1}
	\{ \hat{\Phi}_N(\hat{\bs\theta}_N)- \hat{\Phi}_N ({\bs\theta}_N) \},  
	$ 
	as long as $\bs{H}$ is invertible. 
	Therefore, to prove $\hat{{\bs\theta}}_N-{\bs\theta}_N = O_{\Pr}(n^{-1/2})$, it suffices to prove that 
	\begin{align}\label{eq:diff_theta_check_theta}
		\bs{H}^{-1}=O_\Pr(1), 
		\ \ \ 
		\text{and}
		\ \ \ 
		\hat{\Phi}_N (\hat{\bs\theta}_N)- \hat{\Phi}_N ({\bs\theta}_N)=O_\Pr(n^{-1/2}).
	\end{align}
	
	Second, we prove the first equality in \eqref{eq:diff_theta_check_theta}. 
    By definition, 
	\begin{align}\label{eq:bound_H}
		& \quad \ \| \bs{H} - \dot{\Phi}_N(\bs{\theta}_N) \|
        \nonumber
        \\
		& = 
		\left\| \int_{0}^1 \big\{ \dot{\hat{\Phi}}_N\big(\bs{\theta}_N+u(\hat{\bs{\theta}}_N-\bs{\theta}_N) \big) - \dot{\Phi}_N\big(\bs{\theta}_N+u(\hat{\bs{\theta}}_N-\bs{\theta}_N) \big) \big\} \text{d}u \right.
        \nonumber\\
		& \quad \ \left. + 
		\int_{0}^1 \big\{ \dot{\Phi}_N\big(\bs{\theta}_N+u(\hat{\bs{\theta}}_N-\bs{\theta}_N) \big) - \dot{\Phi}_N(\bs{\theta}_N) \big\} \text{d}u \right\|
		\nonumber\\
		& 
		\le 
		\int_{0}^1 \left\| \dot{\hat{\Phi}}_N\big(\bs{\theta}_N+u(\hat{\bs{\theta}}_N-\bs{\theta}_N) \big) - \dot{\Phi}_N\big(\bs{\theta}_N+u(\hat{\bs{\theta}}_N-\bs{\theta}_N) \big) \right\| \text{d}u
		\nonumber\\
		& \quad \  + 
		\int_{0}^1 \left\| \dot{\Phi}_N\big(\bs{\theta}_N+u(\hat{\bs{\theta}}_N-\bs{\theta}_N) \big) - \dot{\Phi}_N(\bs{\theta}_N) \right\| \text{d}u
		\nonumber\\
		& \le 
		\sup_{\bs{\theta} \in \Theta: \|\bs{\theta} - \bs{\theta}_N\| \le \|\hat{\bs{\theta}}_N - \bs{\theta}_N\|} \big\| \dot{\hat{\Phi}}_N(\bs{\theta} ) - \dot{\Phi}_N(\bs{\theta} \big) \|
		+ 
		\sup_{\bs{\theta} \in \Theta: \|\bs{\theta} - \bs{\theta}_N\| \le \|\hat{\bs{\theta}}_N - \bs{\theta}_N\|} \| \dot{\Phi}_N(\bs{\theta}) - \dot{\Phi}_N(\bs{\theta}_N) \|. 
	\end{align}
    Below we consider the two terms in \eqref{eq:bound_H} separately. 
    Recall that $\|\hat{\bs{\theta}}_N - \bs{\theta}_N\|\le \varepsilon_0/2\le r_0$. From condition (iv), we have 
    \begin{align}\label{eq:bound_H_1}
        \sup_{\bs{\theta} \in \Theta: \|\bs{\theta} - \bs{\theta}_N\| \le \|\hat{\bs{\theta}}_N - \bs{\theta}_N\|} \big\| \dot{\hat{\Phi}}_N(\bs{\theta} ) - \dot{\Phi}_N(\bs{\theta} \big) \|
        \le 
        \sup_{\bs{\theta} \in \Theta: \|\bs{\theta} - \bs{\theta}_N\| \le r_0} \big\| \dot{\hat{\Phi}}_N(\bs{\theta} ) - \dot{\Phi}_N(\bs{\theta} \big) \| = o_{\Pr}(1). 
    \end{align}
    From condition (v), for any $\varepsilon>0$, there exists $\eta$ such that 
    $\sup_{\bs\theta \in \Theta:\|\bs\theta-\bs\theta_N\|\leq\eta} \| \dot{\Phi}_N(\bs\theta)  - \dot{\Phi}_N(\bs\theta_N) \| \le \varepsilon$ for all $N$. 
    Consequently, 
    \begin{align*}
        \Pr 
        \Big( \sup_{\bs{\theta} \in \Theta: \|\bs{\theta} - \bs{\theta}_N\| \le \|\hat{\bs{\theta}}_N - \bs{\theta}_N\|} \| \dot{\Phi}_N(\bs{\theta}) - \dot{\Phi}_N(\bs{\theta}_N) \| > \varepsilon \Big)
        \le 
        \Pr (\|\hat{\bs{\theta}}_N - \bs{\theta}_N\| > \eta ) = o(1),
    \end{align*}
    where the last equality is due to condition (ii). 
    This then implies that 
    \begin{align}\label{eq:bound_H_2}
        \sup_{\bs{\theta} \in \Theta: \|\bs{\theta} - \bs{\theta}_N\| \le \|\hat{\bs{\theta}}_N - \bs{\theta}_N\|} \| \dot{\Phi}_N(\bs{\theta}) - \dot{\Phi}_N(\bs{\theta}_N) \| = o_{\Pr}(1). 
    \end{align}
    From \eqref{eq:bound_H}, \eqref{eq:bound_H_1} and \eqref{eq:bound_H_2}, 
    we have $\bs{H} - \dot{\Phi}_N(\bs{\theta}_N) = o_{\Pr}(1)$. 
	Note that condition (vii) implies $\dot{\Phi}_N({\bs\theta}_N)^{-1} =  O_\Pr (1)$. 
	We have $\dot{\Phi}_N({{\bs\theta}}_N)^{-1} \cdot \bs{H} = \bs{I}  + o_{\Pr} (1)$. 
	By the continuous mapping theorem, 
	$
	\bs{H}^{-1} \cdot \dot{\Phi}_N({{\bs\theta}}_N) = \bs{I}  + o_{\Pr} (1). 
	$
    Consequently, 
	\begin{align}\label{eq:H_inverse_limit}
		\bs{H}^{-1} - \dot{\Phi}_N({{\bs\theta}}_N)^{-1}
		& = 
		\big\{ \bs{H}^{-1}  \cdot \dot{\Phi}_N ({\bs\theta}_N) - \bs{I} \big\} \cdot \dot{\Phi}_N({\bs\theta}_N) ^{-1}
		= o_{\Pr}(1). 
	\end{align}
	Therefore, 
	$
	\bs{H}^{-1} = \dot{\Phi}_N({{\bs\theta}}_N)^{-1} + o_{\Pr}(1) = O_{\Pr}(1). 
	$
	
	Third, we prove the second equality in \eqref{eq:diff_theta_check_theta}, i.e., 
	$\hat{\Phi}_N (\hat{\bs\theta}_N)- \hat{\Phi}_N ({\bs\theta}_N)=O_\Pr(n^{-1/2})$.
	From the discussion at the beginning of the proof,  
    $\Pr\{\hat{\Phi}_N(\hat{\bs\theta}_N) \ne 0\} = o(1)$, 
    which implies that
	$ \hat{\Phi}_N (\hat{\bs\theta}_N) =  O_{\Pr}(n^{-1/2})$. 
	Thus, it suffices to prove that 
	$\hat{\Phi}_N ({\bs\theta}_N) = O_{\Pr}(n^{-1/2})$. 
	By the property of simple random sampling, 
	\begin{align*}
		\E \{ \hat{\Phi}_N ({\bs\theta}_N) \}
		& = 
		\E\left\{
		\frac{1}{n} \sum_{i=1}^N Z_i \phi_i(\bs{\theta}_N)
		\right\} + \tilde{\phi}(\bs{\theta}_N)
		= 
		\frac{1}{N} \sum_{i=1}^N \phi_i(\bs{\theta}_N) + \tilde{\phi}(\bs{\theta}_N)
		= \Phi_N ({\bs\theta}_N) = \bs{0},
	\end{align*}
	where the last equality holds by condition (i).
	Furthermore, from condition (vi), for $1\le k \le p$, 
	\begin{align*}
		\Var\big\{ \hat{\Phi}_{Nk} ({\bs\theta}_N) \big\}
		& = 
		\left( \frac{1}{n} - \frac{1}{N} \right) 
		\Var_N\{ \phi_{ik}( \bs{\theta}_N) \} = O(n^{-1}). 
	\end{align*}
	Thus, 
	by Chebyshev's inequality, 
	we must have  
	$\hat{\Phi}_N ({\bs\theta}_N) = O_\Pr(n^{-1/2}).$
	
	From the above, as $N\rightarrow \infty$,  
	\begin{align}\label{eq:theta_hat_taylor}
		\hat{{\bs\theta}}_N-{\bs\theta}_N & = 
		\bs{H}^{-1}
		\big\{ \hat{\Phi}_N(\hat{\bs\theta}_N)- \hat{\Phi}_N ({\bs\theta}_N) \big\}
		= O_{\Pr}(1) \cdot 
		O_{\Pr}(n^{-1/2}) 
		= O_\Pr(n^{-1/2}), 
	\end{align}
	i.e., Theorem \ref{thm:rate} holds.
\end{proof}

To prove Theorem \ref{thm:clt}, we need the following lemma. 

\begin{lemma}\label{lemma:clt_srs}
	Let $\{\bs{y}_1, \bs{y}_2, \ldots, \bs{y}_N\}$ be a finite population of size $N$ with $\bs{y}_i = (y_{i1}, \ldots, y_{iK})^\top \in \mathbb{R}^K$, 
	and 
	$ (Z_1, \ldots, Z_N) \in \{0,1\}^N$ be a random vector 
	whose probability of taking value $(z_1, z_2, \ldots, z_N)$ is $1/\binom{N}{n}$ if $\sum_{i=1}^N z_i = n$ and zero otherwise, where $n$ is a fixed constant. 
	Let $\bar{\bs{y}}_N= 
	\E_N(\bs{y}_i)
	= (\bar{y}_{N1}, \ldots, \bar{y}_{NK})^\top$ be the finite population average, 
	and 
	$
	\hat{\bs{y}}_N = n^{-1} \sum_{i=1}^N Z_i \bs{y}_i
	$
	be the average of units with $Z_i$'s equal 1. 
	If the sequence of finite populations satisfies that, as $N \rightarrow \infty$,
	\begin{itemize}
		\item[(i)] $\Cov_N( \bs{y_i} )$ has a limiting value $\bs{\Sigma}$,
		\item[(ii)] 
		$
		\max _{1 \leq i \leq N} 
		\left\| \bs{y}_i - \bar{\bs{y}}_N \right\|^2/\min\{n, N-n\}
		\rightarrow 0,
		$
	\end{itemize}
	then 
	$$
	\sqrt{\frac{nN}{N-n}} \cdot (\hat{\bs{y}}_N-\bar{\bs{y}}_N) \converged \mathcal{N} \left(\bs{0}, \bs{\Sigma} \right). 
	$$
\end{lemma}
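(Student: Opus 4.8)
The plan is to reduce the multivariate statement to a one-dimensional finite-population central limit theorem via the Cramér--Wold device, and then invoke the classical univariate central limit theorem for simple random sampling \citep{fpclt2017}. Fix any $\bs{a} \in \mathbb{R}^K$ and define the scalar finite population $c_i = \bs{a}^\top \bs{y}_i$, with population mean $\bar{c}_N = \bs{a}^\top \bar{\bs{y}}_N$ and sample mean $\hat{c}_N = n^{-1} \sum_{i=1}^N Z_i c_i = \bs{a}^\top \hat{\bs{y}}_N$. By the Cramér--Wold device, it suffices to prove that $\sqrt{nN/(N-n)}\,(\hat{c}_N - \bar{c}_N) \converged \mathcal{N}(0, \bs{a}^\top \bs{\Sigma} \bs{a})$ for every such $\bs{a}$. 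The first step is the exact second-moment identity for simple random sampling, $\Var(\hat{c}_N) = (n^{-1} - N^{-1}) \Var_N(c_i)$, which yields $\{nN/(N-n)\} \Var(\hat{c}_N) = \Var_N(c_i) = \bs{a}^\top \Cov_N(\bs{y}_i) \bs{a}$; by condition (i) this converges to $\bs{a}^\top \bs{\Sigma} \bs{a}$, pinning down the target asymptotic variance.

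Next, suppose first that $\bs{a}^\top \bs{\Sigma} \bs{a} > 0$. I would write $\sqrt{nN/(N-n)}\,(\hat{c}_N - \bar{c}_N) = \{(\hat{c}_N - \bar{c}_N)/\sqrt{\Var(\hat{c}_N)}\} \cdot \sqrt{\Var_N(c_i)}$ and apply the univariate finite-population central limit theorem to the standardized first factor. Its Lindeberg--Hájek condition requires $\max_{1 \le i \le N} (c_i - \bar{c}_N)^2 / [\min\{n, N-n\} \Var_N(c_i)] \to 0$, where the appearance of $\min\{n, N-n\}$ is exactly the quantity featured in condition (ii). By the Cauchy--Schwarz inequality, $\max_i (c_i - \bar{c}_N)^2 = \max_i \{\bs{a}^\top (\bs{y}_i - \bar{\bs{y}}_N)\}^2 \le \|\bs{a}\|^2 \max_i \|\bs{y}_i - \bar{\bs{y}}_N\|^2$, so condition (ii) forces the numerator over $\min\{n, N-n\}$ to vanish, and dividing by $\Var_N(c_i) \to \bs{a}^\top \bs{\Sigma} \bs{a} > 0$ verifies the Lindeberg--Hájek condition. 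Hence the standardized factor converges to $\mathcal{N}(0,1)$, and since $\sqrt{\Var_N(c_i)} \to \sqrt{\bs{a}^\top \bs{\Sigma} \bs{a}}$, Slutsky's theorem delivers the desired one-dimensional limit.

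It remains to handle the degenerate directions with $\bs{a}^\top \bs{\Sigma} \bs{a} = 0$, where the standardization above is ill-posed. In that case the randomization variance of $\sqrt{nN/(N-n)}\,(\hat{c}_N - \bar{c}_N)$ equals $\Var_N(c_i) \to 0$, so by Chebyshev's inequality this quantity converges in probability, and therefore in distribution, to the point mass at $0$, which coincides with $\mathcal{N}(0, \bs{a}^\top \bs{\Sigma} \bs{a})$ when $\bs{a}^\top \bs{\Sigma} \bs{a} = 0$. Collecting both cases over all $\bs{a}$ and applying the Cramér--Wold device completes the proof. The substantive content is the univariate finite-population central limit theorem itself, which I would cite rather than reprove; the only genuine bookkeeping is checking that the Lindeberg--Hájek condition survives the linear projection, handled by Cauchy--Schwarz, and isolating cleanly the degenerate directions, which is the step most likely to require care.
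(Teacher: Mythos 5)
Your proof is correct, and it takes a genuinely different route from the paper's. The paper embeds the sampling problem into the multivariate finite-population CLT of Li and Ding (2017, Theorem 4) by introducing pseudo potential outcomes $\bs{Y}_i(1) = \bs{y}_i$, $\bs{Y}_i(0) = \bs{0}$ with coefficient matrices $\bs{A}_1 = \bs{I}_K$, $\bs{A}_0 = \bs{0}$, verifies the Lindeberg-type condition coordinate-by-coordinate, and then must split the argument into two cases: when every diagonal entry of $\bs{\Sigma}$ is positive it applies the multivariate CLT directly, and when some diagonal entries vanish it partitions the coordinates into a nondegenerate block (CLT) and a degenerate block (Chebyshev), using Cauchy--Schwarz to show the cross-covariances vanish, and finally reassembles the blocks. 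You instead reduce to dimension one via the Cram\'er--Wold device, apply the univariate H\'ajek-type CLT for simple random sampling to each direction $\bs{a}$ with $\bs{a}^\top\bs{\Sigma}\bs{a} > 0$ (your Cauchy--Schwarz bound $\max_i \{\bs{a}^\top(\bs{y}_i - \bar{\bs{y}}_N)\}^2 \le \|\bs{a}\|^2 \max_i \|\bs{y}_i - \bar{\bs{y}}_N\|^2$ correctly transfers condition (ii) to the projected population, and the positive limiting variance keeps the ratio well-defined), and dispose of degenerate directions by Chebyshev. Both proofs rest on the same underlying H\'ajek CLT, but your treatment of degeneracy is cleaner and slightly more unified: it handles any direction in which $\bs{\Sigma}$ is singular (including singularity caused by correlations among coordinates with positive variances) by exactly the same scalar argument, whereas the paper's block decomposition only needs to treat coordinates with vanishing variance because the residual within-block singularity is absorbed by the multivariate CLT it cites. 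What the paper's approach buys in exchange is that it stays inside the multivariate machinery it reuses elsewhere and produces the joint limit directly without invoking Cram\'er--Wold; what yours buys is brevity and a more elementary case analysis.
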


\begin{proof}[of Lemma \ref{lemma:clt_srs}]
	Define pseudo potential outcomes 
	$\bs{Y}_i(1) = \bs{y}_i$ and $\bs{Y}_i(0) = \bs{0}$ for $1\le i \le N$, 
	constant coefficient matrices 
	$\bs{A}_1=\bs{I}_{K}$ and $\bs{A}_0=\bs{0}_{K\times K}$, 
	and 
	$
	\bs{\tau}_i(\bs{A}) = \bs{A}_1 \bs{Y}_i(1) + \bs{A}_0 \bs{Y}_i(0) =\bs{y}_i. 
	$ 
	Then the finite population averages of $\bs{Y}_i(1)$'s, $\bs{Y}_i(0)$'s and $\bs{\tau}_i(\bs{A})$'s are, respectively, 
	$\bar{\bs{Y}}(1) \equiv \E_N\{ \bs{Y}_i(1) \} = \bar{\bs{y}}_N$, 
	$\bar{\bs{Y}}(0) \equiv \E_N\{ \bs{Y}_i(0) \} = \bs{0}_{K \times 1}$
	and 
	$
	\bs{\tau}(\bs{A}) \equiv N^{-1} \sum_{i=1}^N \bs{\tau}_i(\bs{A}) = \bar{\bs{y}}_N. 
	$
	We further define 
	$\bar{\bs{Y}}_1 = n^{-1}\sum_{i:Z_i = 1} \bs{Y}_i(1)$ 
	and 
	$\bar{\bs{Y}}_0 = (N-n)^{-1} \sum_{i:Z_i = 0} \bs{Y}_i(0)$ as the averages of treatment and control potential outcomes among units with $Z_i$'s equal 1 and 0, respectively, 
	and 
	$
	\hat{\bs{\tau}}(\bs{A}) \equiv \bs{A}_1 \bar{\bs{Y}}_1 + \bs{A}_0 \bar{\bs{Y}}_0. 
	$
	We can verify that 
	$
	\hat{\bs{\tau}}(\bs{A}) = \bar{\bs{Y}}_1 =  \hat{\bs{y}}_N. 
	$
	Following Li and Ding (2017, Theorem 4), 
	we introduce 
	$n_1 = n$, $n_0 = N-n$, and for $1\le k \le K$, 
	\begin{align*}
		m_z(k) & \equiv \max_{1\le i\le N} \left[ \bs{A}_z \bs{Y}_i(z) - \bs{A}_z \bar{\bs{Y}}(z) \right]_{(k)}^2 
		= 
		\begin{cases}
			\max_{1\le i \le N} (y_{ik} - \bar{y}_{Nk})^2, & \text{ if } z=1, \\
			0, & \text{ if } z=0, 
		\end{cases}\\
		v_z(k) & \equiv 
        \Var_N\{ [ \bs{A}_z \bs{Y}_i(z) - \bs{A}_z \bar{\bs{Y}}(z) ]_{(k)} \}
		= 
		\begin{cases}
\Var_N(y_{ik}), & \text{ if } z=1, \\
			0, & \text{ if } z=0, 
		\end{cases}\\
		v_{\tau}(k)& \equiv
        \Var_N \{ [\bs\tau_{i}(\bs{A})-\bs{\tau}(\bs{A})]_{(k)} \}
        =
        \Var_N(y_{ik}),
	\end{align*}
	where $[\bs{c}]_{(k)}$ denotes the $k$th coordinate of a vector $\bs{c}$. 
	By definition, we then have
	\begin{align}\label{eq:cond_srs_proof}
		& \quad \ \max _{0 \leq z \leq 1} \max _{1 \leq k \leq K} 
		\frac{1}{n_{z}^{2}} 
		\frac{m_{z}(k)}{\sum_{r=1}^{2} n_{r}^{-1} v_{r}(k)-N^{-1} v_{\tau}(k)} 
		\nonumber
		\\
		& = 
		\max _{1 \leq k \leq K} \frac{1}{n^{2}} 
		\frac{\max _{1 \leq i \leq N}(y_{ik}- \bar{y}_{Nk})^{2}}{(n^{-1}-N^{-1})\Var_N(y_{ik})}
		= 
		\frac{N}{n(N-n)}  \max _{1 \leq k \leq K}
		\frac{\max _{1 \leq i \leq N}( y_{ik}- \bar{y}_{Nk} )^{2}}{\Var_N(y_{ik})}
		\nonumber
		\\
		& \le \frac{2}{\min \{n, N-n\}} \max _{1 \leq k \leq K}
		\frac{\max _{1 \leq i \leq N}( y_{ik}- \bar{y}_{Nk} )^{2}}{\Var_N(y_{ik})}, 
	\end{align}
	where the last equality holds because $N/\{n(N-n)\} = n^{-1} + (N-n)^{-1} \le 2/\min\{n, N-n\}$. 
	Below we consider two cases, depending on whether $\bs{\Sigma}$ is positive definite.

	First, we consider the case where all the diagonal elements of $\bs\Sigma$ is positive. 
	Let $\hat{y}_{Nk}$, $\hat{\tau}_{(k)}(\bs{A})$ and $\tau_{(k)}(\bs{A})$ 
	be the $k$th coordinates of $\hat{\bs{y}}_N,$ $\hat{\bs{\tau}}(\bs{A})$ and $\bs{\tau}(\bs{A})$, respectively, for $1\le k \le K$.  
	From condition (i) in Lemma \ref{lemma:clt_srs}, 
	as $N\rightarrow \infty$, 
	the correlation matrix of $\hat{\bs{\tau}}(\bs{A})$ will converge to $\diag(\bs\Sigma)^{-1/2} \cdot \bs\Sigma \cdot \diag(\bs\Sigma)^{-1/2}$. 
	From condition (ii) in Lemma \ref{lemma:clt_srs}, \eqref{eq:cond_srs_proof} and the fact that $\Var_N(y_{ik}) \rightarrow \Sigma_{kk} > 0$,  
	we can know that the quantity in \eqref{eq:cond_srs_proof} converges to zero as $N\rightarrow \infty$. 
	From Li and Ding (2017, Theorem 4) and by definition, we have
	\begin{align*}
		& \quad \ \left(\frac{\hat{\tau}_{(1)}(\bs{A})-\tau_{(1)}(\bs{A})}{\Var^{1/2}\{\hat{\tau}_{(1)}(\bs{A})\} }, 
		\ldots, 
		\frac{\hat{\tau}_{(K)}(\bs{A})-\tau_{(K)}(\bs{A})}{\Var^{1/2}\{\hat{\tau}_{(K)}(\bs{A})\} }
		\right)
		= \left(
		\frac{\hat{y}_{N1}-\bar{y}_{N1}}{\Var^{1/2}( \hat{y}_{N1})}, 
		\ldots,
		\frac{\hat{y}_{NK}-\bar{y}_{NK}}{\Var^{1/2}(\hat{y}_{NK})}
		\right)
		\\
		& \stackrel{d}{\longrightarrow} 
		\mathcal{N}\left( \bs{0}_{K \times 1}, \ \ \diag(\bs\Sigma)^{-1/2} \cdot \bs\Sigma \cdot \diag(\bs\Sigma)^{-1/2} \right).
	\end{align*}
	By the property of simple random sampling, 
	$
	\Var ( \hat{y}_{Nk} ) 
	= 
	( n^{-1} - N^{-1}) \Var_N( y_{ik} )
	$
	for $1\le k \le K$. 
	From condition (i) in Lemma \ref{lemma:clt_srs}, this implies that 
	\begin{align*}
		\frac{nN}{N-n}\Var\left(\hat{y}_{Nk}\right) = \Var_N\left( y_{ik} \right) \rightarrow \Sigma_{kk}, 
		\qquad (1\le k \le K)
	\end{align*}
	where $\Sigma_{kk}$ is the $k$th diagonal element of $\bs{\Sigma}$. 
	By Slutsky's theorem, we then have 
	\begin{align*}
		\sqrt{\frac{nN}{N-n}} \cdot \left( \hat{\bs{y}}_N - \bar{\bs{y}}_N \right)
		& = 
		\sqrt{\frac{nN}{N-n}} \cdot 
		\diag\left( \Cov\left(\hat{\bs{y}}_N\right) \right)^{1/2}
		\left(
		\frac{\hat{y}_{N1}-\bar{y}_{N1}}{\Var^{1/2}( \hat{y}_{N1})}, 
		\ldots,
		\frac{\hat{y}_{NK}-\bar{y}_{NK}}{\Var^{1/2}(\hat{y}_{NK})}
		\right)^\top\\
		& \converged 
		\mathcal{N}\left( \bs{0}_{K\times 1}, \ \bs{\Sigma} \right). 
	\end{align*}
	
	Second, we consider the case where some diagonal elements of $\bs{\Sigma}$ are zeros. 
	Let $\mathcal{D} = \{k: \Sigma_{kk} > 0, 1\le k \le K\}$, 
	$\mathcal{D}^c = \{1, 2, \ldots, K\} \setminus \mathcal{D}$, 
	and 
	$\bs{\Sigma}_{\mathcal{D}\mathcal{D}}$, $\bs{\Sigma}_{\mathcal{D}^c\mathcal{D}^c}$
	and 
	$\bs{\Sigma}_{\mathcal{D}\mathcal{D}^c} = \bs{\Sigma}_{\mathcal{D}^c\mathcal{D}}^\top$ 
	be the submatrices of $\bs{\Sigma}$ with indices in 
	$\mathcal{D} \times \mathcal{D}$, $\mathcal{D}^c \times \mathcal{D}^c$ 
	and 
	$\mathcal{D} \times \mathcal{D}^c$. 
	Note that by definition, for any $1\leq k\leq K$ and $k' \in \mathcal{D}^c$, 
	$$
	\Sigma_{kk'}^2 = \lim_{N \rightarrow \infty} \Cov_N^2(y_{ik}, y_{ik'}) 
	\le 
	\lim_{N\rightarrow \infty} \Var_N(y_{ik}) \Var_N(y_{ik'}) = \Sigma_{kk} \Sigma_{k'k'} = 0. 
	$$
	Thus, $\bs{\Sigma}_{\mathcal{D}\mathcal{D}^c} = \bs{0}$ and $\bs{\Sigma}_{\mathcal{D}^c\mathcal{D}^c} = \bs{0}$. 
	Let 
	$\hat{\bs{y}}_{N \mathcal{D}}$ and $\bar{\bs{y}}_{N \mathcal{D}}$ be the subvectors of $\hat{\bs{y}}_{N}$ and $\bar{\bs{y}}_{N}$ with indices in $\mathcal{D}$, 
	and 
	$\hat{\bs{y}}_{N \mathcal{D}^c}$ and $\bar{\bs{y}}_{N \mathcal{D}^c}$ be the subvectors of $\hat{\bs{y}}_{N}$ and $\bar{\bs{y}}_{N}$ with indices in $\mathcal{D}^c$. 
	From the previous discussion, 
	we can know that 
	\begin{align*}
		\sqrt{\frac{nN}{N-n}} \cdot \left( \hat{\bs{y}}_{N\mathcal{D}} - \bar{\bs{y}}_{N\mathcal{D}} \right)
		& \converged 
		\mathcal{N}\left( \bs{0}_{K\times 1}, \ \bs{\Sigma}_{\mathcal{D}\mathcal{D}} \right). 
	\end{align*}
	By Chebyshev's inequality, for any $k \in \mathcal{D}^c$, 
	\begin{align*}
		\sqrt{\frac{nN}{N-n}} \cdot \left( \hat{{y}}_{Nk} - \bar{{y}}_{Nk} \right)
		& = 
		\sqrt{\frac{nN}{N-n}} \cdot
		O_{\Pr}\left( 
		\Var^{1/2}
		(
		\hat{{y}}_{Nk} 
		)
		\right)
		= 
		O_{\Pr} \left\{ \Var_N^{1/2}({y}_{ik}) \right\}
		= o_{\Pr}(1). 
	\end{align*}
	These then imply that 
	\begin{align*}
		\sqrt{\frac{nN}{N-n}} \cdot 
		\begin{pmatrix}
			\hat{\bs{y}}_{N\mathcal{D}} - \bar{\bs{y}}_{N\mathcal{D}}
			\\
			\hat{\bs{y}}_{N\mathcal{D}^c} - \bar{\bs{y}}_{N\mathcal{D}^c}
		\end{pmatrix}
		\converged
		\mathcal{N}
		\left(
		\bs{0}, \ 
		\begin{pmatrix}
			\bs{\Sigma}_{\mathcal{D}\mathcal{D}} & \bs{0} \\
			\bs{0} & \bs{0}
		\end{pmatrix}
		\right)
		\sim 
		\mathcal{N}
		\left(
		\bs{0}, \ 
		\begin{pmatrix}
			\bs{\Sigma}_{\mathcal{D}\mathcal{D}} & \bs{\Sigma}_{\mathcal{D}\mathcal{D}^c}  \\
			\bs{\Sigma}_{\mathcal{D}^c\mathcal{D}}  & \bs{\Sigma}_{\mathcal{D}^c\mathcal{D}^c} 
		\end{pmatrix}
		\right).
	\end{align*}
	By reordering the indices, we then have 
	\begin{align*}
		\sqrt{\frac{nN}{N-n}} \cdot \left( \hat{\bs{y}}_N - \bar{\bs{y}}_N \right)
		& \converged 
		\mathcal{N}\left( \bs{0}_{K\times 1}, \ \bs{\Sigma} \right). 
	\end{align*}
	
	From the above, Lemma \ref{lemma:clt_srs} holds. 
\end{proof}

\begin{proof}[of Theorem \ref{thm:clt}]
	By the same logic as  the proof of Theorem \ref{thm:rate}, 
	without loss of generality, we can assume that 
	$\hat{\bs{\theta}}_N \in \overline{\mathcal{B}}(\bs{\theta}_N, \varepsilon_0/2) \equiv \{\theta: \|\theta-\theta_N\|\le \varepsilon_0/2\} \subset \Theta$ for all $N$. 
	From \eqref{eq:taylor_M1}, \eqref{eq:H_inverse_limit} and \eqref{eq:theta_hat_taylor} in the proof of Theorem \ref{thm:rate},  
	\begin{align*}
		\hat{{\bs\theta}}_N-{\bs\theta}_N & = 
		\bs{H}^{-1}\hat{\Phi}_N(\hat{\bs\theta}_N) -  \bs{H}^{-1}\hat{\Phi}_N ({\bs\theta}_N)
		\\
		& =
		\bs{H}^{-1}
		\hat{\Phi}_N(\hat{\bs\theta}_N)
		-  \left[ \bs{H}^{-1} -  \dot{\Phi}_N({\bs\theta}_N)^{-1} \right]\hat{\Phi}_N ({\bs\theta}_N)
		- \dot{\Phi}_N({\bs\theta}_N)^{-1} \hat{\Phi}_N ({\bs\theta}_N)
		\\
        & = 
        - \dot{\Phi}_N({\bs\theta}_N)^{-1} \hat{\Phi}_N ({\bs\theta}_N) + \bs{H}^{-1}
		\hat{\Phi}_N(\hat{\bs\theta}_N) +o_{\Pr}(  1 ) \cdot \hat{\Phi}_N ({\bs\theta}_N).
	\end{align*}
	From conditions (i) and (ii) and 
	by Lemma \ref{lemma:clt_srs}, 
	we have
	\begin{align*}
		\sqrt{\frac{nN}{N-n}} \cdot \hat{\Phi}_N({\bs\theta}_N) 
		& = \sqrt{\frac{nN}{N-n}} \cdot \big\{ \hat{\Phi}_N({\bs\theta}_N) - \Phi_N({\bs\theta}_N) \big\}
		\\
		& = 
		\sqrt{\frac{nN}{N-n}} \cdot \Big\{ \frac{1}{n}\sum_{i=1}^N Z_i \phi_i(\bs{\theta}_N) - \frac{1}{N}\sum_{i=1}^N \phi_i(\bs{\theta}_N) \Big\}
		\\
		& 
		\converged \mathcal{N}\left(\bs{0}, \ 
		\bs{\Sigma} \right). 
	\end{align*}
    From the discussion at the beginning of the proof of Theorem \ref{thm:rate},  
    $\Pr\{\hat{\Phi}_N(\hat{\bs\theta}_N) \ne 0\} = o(1)$, 
    which then implies that
	$ \sqrt{{nN}/(N-n)} \cdot \bs{H}^{-1} \hat{\Phi}_N(\hat{\bs\theta}_N) =  o_{\Pr}(1)$.
	From condition (i), we have $\dot{\Phi}_N({\bs\theta}_N) \rightarrow \bs{\Gamma} > 0$. 
	From the above and by Slutsky's theorem,
	\begin{align*}
		\sqrt{\frac{nN}{N-n}} \cdot ( \hat{{\bs\theta}}_N-{\bs\theta}_N ) 
		& \converged \mathcal{N}\left( 
		\bs{0}, \ 
		\bs{\Gamma}^{-1} \bs{\Sigma}  (\bs{\Gamma}^{-1})^\top. 
		\right),
	\end{align*}
	Therefore, Theorem \ref{thm:clt} holds. 
\end{proof}

\section{
Uniform convergence and consistency for 
M-estimation and Z-estimation under completely randomized experiments}\label{sec:M_cre}

In this section, we consider M-estimation and Z-estimation under the CRE. 
We establish the uniform convergence of the empirical risk function and the estimating equation, and the consistency of the M-estimator and Z-estimator. 
Similar to \S \ref{sec:M_srs}, we will consider  two types of sufficient conditions, based on either the compactness of the parameter space or the convexity of the risk function. 
The results in this section are obtained using theorems in \S \ref{sec:M_srs}. 
In parallel to \S \ref{sec:M_srs}, we present the results in four subsections, and, within each subsection, we first present the theorems and then their proofs. 

\subsection{M-estimation with compactness}

\begin{theorem}\label{thm:uniform_converge_sp}
	Consider a finite population of $N$ units, where each unit $i$ has two potential outcomes $(Y_i(1), Y_i(0))$ and a pretreatment covariate vector $\bs{X}_i$. 
	Let 
	$(Z_1, Z_2, \ldots, Z_N) \in \{0,1\}^N$ be a random vector 
	whose probability of taking value $(z_1, z_2, \ldots, z_N)$ is $n_1!n_0!/N!$ if $\sum_{i=1}^N z_i = n_1$ and zero otherwise, 
	where $n_1$ and $n_0 = N-n$ are two fixed constants. 
    We assume both $r_1 \equiv n_1/N$ and $r_0 \equiv n_0/N$ have positive limits.
	Let $\loss_1(\bs{y}, \bs{x}; \bs{\theta})$ and $\loss_0(\bs{y}, \bs{x}; \bs{\theta})$ be two general loss functions, 
	with $\theta$ in a set $\Theta \subset \mathbb{R}^p$.
    For each unit $i$, define $\loss_{zi}(\theta) = \loss_z(Y_i(z), \bs X_i;{\bs\theta})$ for $z=0,1$, and $\loss_{1\text{-}0, i}(\bs{\theta}) = 
        \loss_{1i}(\bs{\theta}) - \loss_{0i}(\bs{\theta})$.
    Define further 
    $$
	\Delta( \loss_{1\text{-}0,i}, \delta)=\sup _{(\bs\theta, \bs\theta') \in \Theta^2:\|\bs\theta-\bs\theta'\| \leq  \delta}| \loss_{1\text{-}0, i}({\bs\theta})
	-
	\loss_{1\text{-}0, i}({\bs\theta}') 
	|. 
	$$
    Let 
    \begin{align*}
	\hat{M}_N (\bs{\theta})
	& \equiv 
    \E_N\{
    \loss_{Z_i} (Y_i, \bs{X}_i; \bs{\theta})
    \}
	=
	r_1 \E_{N}^1 \{\loss_1(Y_i, \bs X_i;{\bs\theta})\}
	+
	r_0 \E_{N}^0 \{\loss_0(Y_i, \bs X_i;{\bs\theta})\}
	\\
    & = \frac{1}{n_1}\sum_{i=1}^N Z_i r_1 \loss_{1\text{-}0, i}(\bs{\theta}) + \E_N\{ \loss_{0i} ({\bs\theta}) \}, 
    \end{align*}
    and 
    \begin{align}\label{eq:M_population}
    M_N (\bs{\theta})
	& \equiv
	r_1 \E_N\{\loss_1(Y_i(1), \bs X_i;{\bs\theta})\}
	+r_0 \E_N\{\loss_0 (Y_i(0), \bs X_i; {\bs\theta})\}\\
    & = 
    \frac{1}{N} \sum_{i=1}^N r_1 \loss_{1\text{-}0, i}(\bs{\theta}) + \E_N\{ \loss_{0i} ({\bs\theta}) \}, 
    \nonumber
    \end{align}
    where the equivalent forms follow from some algebra. 	
	If the following conditions hold as $N\rightarrow \infty$: 
	\begin{enumerate}[label=(\roman*)]
            \item $\Theta$ is compact,
		\item $\sup_N \E_N\{ \Delta( \loss_{1\text{-}0,i}, \delta) \} \rightarrow 0$ as $\delta \rightarrow 0$, 
		\item for any ${\bs\theta}\in \Theta$, $ \Var_N
		\left\{ \loss_{1\text{-}0,i}(\bs{\theta})
		\right\}=o(N)$,
	\end{enumerate}
	then $\sup_{\bs{\theta} \in \Theta} |\hat{M}_N(\bs{\theta}) - M_N(\bs{\theta})| = o_{\Pr}(1)$ as $N\rightarrow \infty$. 
\end{theorem}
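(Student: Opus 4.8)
The plan is to reduce this completely-randomized-experiment statement directly to the simple-random-sampling result in Theorem \ref{thm:uniform_converge}. The essential observation is that, under a CRE, the assignment vector $(Z_1, \ldots, Z_N)$ with $\sum_i Z_i = n_1$ and each configuration equally likely (probability $n_1! n_0!/N! = 1/\binom{N}{n_1}$) is precisely a simple-random-sampling indicator of sample size $n = n_1$, which is exactly the sampling mechanism assumed in Theorem \ref{thm:uniform_converge}. Moreover, the decompositions of $\hat{M}_N$ and $M_N$ already recorded in the statement exhibit both functions in the canonical form required there: setting $m_i(\bs{\theta}) = r_1 \loss_{1\text{-}0, i}(\bs{\theta})$ and $\tilde{m}(\bs{\theta}) = \E_N\{\loss_{0i}(\bs{\theta})\}$, one has $\hat{M}_N(\bs{\theta}) = n_1^{-1} \sum_{i=1}^N Z_i m_i(\bs{\theta}) + \tilde{m}(\bs{\theta})$ and $M_N(\bs{\theta}) = N^{-1}\sum_{i=1}^N m_i(\bs{\theta}) + \tilde{m}(\bs{\theta})$. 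The crucial point is that the control-only remainder $\tilde{m}(\bs{\theta})$ is a deterministic finite-population average not depending on the random assignment, so it plays exactly the role of the fixed offset in Theorem \ref{thm:uniform_converge} and cancels in the comparison $\hat{M}_N - M_N$.

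With this identification in hand, all that remains is to check that conditions (i)--(iii) of the present theorem imply the corresponding conditions of Theorem \ref{thm:uniform_converge} for the choices $m_i = r_1 \loss_{1\text{-}0, i}$ and $n = n_1$. Compactness of $\Theta$ transfers verbatim. For the modulus-of-continuity condition I would note that $\Delta(m_i, \delta) = r_1 \Delta(\loss_{1\text{-}0, i}, \delta)$, so $\sup_N \E_N\{\Delta(m_i, \delta)\} \le \sup_N \E_N\{\Delta(\loss_{1\text{-}0, i}, \delta)\} \to 0$ since $r_1 \le 1$. For the variance condition, $\Var_N\{m_i(\bs{\theta})\} = r_1^2 \Var_N\{\loss_{1\text{-}0, i}(\bs{\theta})\}$; because $r_1$ has a positive limit, it is bounded away from both $0$ and $1$ for large $N$, so $o(N) = o(n_1) = o(n)$, and $\Var_N\{m_i(\bs{\theta})\} = o(n)$ follows from the assumed $\Var_N\{\loss_{1\text{-}0, i}(\bs{\theta})\} = o(N)$.

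Having verified the hypotheses, I would simply invoke Theorem \ref{thm:uniform_converge} to conclude $\sup_{\bs{\theta}\in\Theta} |\hat{M}_N(\bs{\theta}) - M_N(\bs{\theta})| = o_{\Pr}(1)$, which is the claim. This proof presents no genuine obstacle; the only points requiring a moment's care are bookkeeping ones, namely confirming that the control-only term is a legitimate fixed offset that cancels in the difference, and that the positivity of the limiting treatment proportion $r_1$ is precisely what converts the rate $o(N)$ in the given variance hypothesis into the rate $o(n_1)$ demanded by the sampling theorem. The substantive work of establishing uniform convergence was already carried out in Theorem \ref{thm:uniform_converge}, of which this theorem is the CRE corollary obtained through the difference-loss reparametrization.
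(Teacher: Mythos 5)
Your proposal is correct and follows essentially the same route as the paper's proof: identify the CRE assignment as simple random sampling of size $n = n_1$, set $m_i(\bs{\theta}) = r_1 \loss_{1\text{-}0,i}(\bs{\theta})$ and $\tilde{m}(\bs{\theta}) = \E_N\{\loss_{0i}(\bs{\theta})\}$, verify that conditions (ii) and (iii) transfer (using $\Delta(m_i,\delta) = r_1\Delta(\loss_{1\text{-}0,i},\delta)$ and the fact that $r_1$ bounded makes $\Var_N\{m_i(\bs{\theta})\} = r_1^2\Var_N\{\loss_{1\text{-}0,i}(\bs{\theta})\} = o(n_1)$), and then invoke Theorem \ref{thm:uniform_converge}. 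The bookkeeping details you flag, including the cancellation of the deterministic control-average offset, match the paper's argument exactly.
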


\begin{theorem}\label{thm:consist_sp}
	Under the same setting as in Theorem \ref{thm:uniform_converge_sp}, 
	if conditions (i)--(iii) in Theorem \ref{thm:uniform_converge_sp} hold, 
	and the following conditions hold:  
	\begin{itemize}
		\item[(i)]  $M_N(\bs{\theta})$ has a unique minimizer $\bs{\theta}_N$ satisfying that for any $\varepsilon>0$, there exists $\eta>0$ such that
		$\inf_{\bs\theta\in\Theta: \|\bs\theta, \bs\theta_{N}\| \geq \varepsilon} M_N(\bs\theta)>M_N\left(\bs\theta_{N}\right)+\eta$ for all $N$,
		\item[(ii)] $\hat{\bs\theta}_N\in\Theta$ satisfies  $\hat{M}_N(\hat{\bs\theta}_N)\leq \hat{M}_N({\bs\theta}_N)$, 
	\end{itemize}
	then $\hat{\bs\theta}_N-{\bs\theta}_N = o_{\Pr}(1)$ as $N \rightarrow \infty$.
\end{theorem}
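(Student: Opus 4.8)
The plan is to reproduce the standard argmin-consistency argument, now fueled by the uniform convergence established in Theorem \ref{thm:uniform_converge_sp}; the structure mirrors exactly the proof of Theorem \ref{thm:consistency} for simple random sampling. The key observation is that conditions (i)--(iii) of Theorem \ref{thm:uniform_converge_sp} already yield $\sup_{\bs\theta\in\Theta}|\hat{M}_N(\bs\theta)-M_N(\bs\theta)| = o_{\Pr}(1)$, so all the randomness-induced difficulty of the dependent CRE assignment is absorbed into that one statement, and the remaining argument is essentially deterministic bookkeeping.

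First I would show that the population objective gap at $\hat{\bs\theta}_N$ vanishes, i.e. $M_N(\hat{\bs\theta}_N) - M_N(\bs\theta_N) = o_{\Pr}(1)$. To do this I decompose
$$M_N(\hat{\bs\theta}_N) - M_N(\bs\theta_N) = \{M_N(\hat{\bs\theta}_N) - \hat{M}_N(\hat{\bs\theta}_N)\} + \{\hat{M}_N(\hat{\bs\theta}_N) - \hat{M}_N(\bs\theta_N)\} + \{\hat{M}_N(\bs\theta_N) - M_N(\bs\theta_N)\}.$$
The middle bracket is nonpositive by condition (ii) (since $\hat{\bs\theta}_N$ satisfies $\hat{M}_N(\hat{\bs\theta}_N)\le\hat{M}_N(\bs\theta_N)$), while the first and third brackets are each bounded in absolute value by $\sup_{\bs\theta\in\Theta}|\hat{M}_N(\bs\theta)-M_N(\bs\theta)|$. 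Combining with the nonnegativity $M_N(\hat{\bs\theta}_N) - M_N(\bs\theta_N) \ge 0$ (as $\bs\theta_N$ minimizes $M_N$) gives the sandwich $0 \le M_N(\hat{\bs\theta}_N) - M_N(\bs\theta_N) \le 2\sup_{\bs\theta\in\Theta}|\hat{M}_N(\bs\theta)-M_N(\bs\theta)| = o_{\Pr}(1)$.

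Second, I would convert this objective-value consistency into parameter consistency using the well-separated-minimum condition (i). Fixing $\varepsilon>0$, that condition supplies $\eta>0$, uniform in $N$, with $\inf_{\bs\theta\in\Theta:\ \|\bs\theta-\bs\theta_N\|\ge\varepsilon} M_N(\bs\theta) > M_N(\bs\theta_N)+\eta$. Hence the event $\{\|\hat{\bs\theta}_N-\bs\theta_N\|\ge\varepsilon\}$ forces $M_N(\hat{\bs\theta}_N)-M_N(\bs\theta_N) \ge \eta$, so $\Pr(\|\hat{\bs\theta}_N-\bs\theta_N\|\ge\varepsilon) \le \Pr(M_N(\hat{\bs\theta}_N)-M_N(\bs\theta_N)\ge\eta) \to 0$ by the previous step. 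Since $\varepsilon$ is arbitrary, $\hat{\bs\theta}_N-\bs\theta_N = o_{\Pr}(1)$.

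There is no serious obstacle here: the genuinely delicate part, namely controlling the stochastic fluctuation of the empirical objective uniformly over $\Theta$ under the dependent CRE assignment, has already been discharged in Theorem \ref{thm:uniform_converge_sp}. The only points requiring care are that the separation constant $\eta$ in condition (i) must be uniform in $N$ (which is precisely how the condition is phrased, ensuring the argument holds along the whole sequence of finite populations rather than for a fixed population), and that condition (ii) is used only as the inequality $\hat{M}_N(\hat{\bs\theta}_N)\le\hat{M}_N(\bs\theta_N)$, so that approximate minimizers of $\hat{M}_N$ are covered without needing an exact root or minimizer.
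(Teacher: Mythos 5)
Your proof is correct and takes essentially the same route as the paper: the paper proves this theorem by mapping the CRE objective into the simple-random-sampling framework (setting $n=n_1$, $m_i(\bs\theta)=r_1\loss_{1\text{-}0,i}(\bs\theta)$, $\tilde m(\bs\theta)=\E_N\{\loss_{0i}(\bs\theta)\}$) and invoking its SRS consistency theorem, whose proof is exactly the three-term decomposition, sandwich bound, and well-separation argument you write out directly. The only difference is organizational — you inline the argument using the CRE uniform convergence theorem rather than citing the SRS lemma — and both are valid.
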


\begin{proof}[of Theorem \ref{thm:uniform_converge_sp}]
	We prove Theorem \ref{thm:uniform_converge_sp} using Theorem \ref{thm:uniform_converge}. 
	Following the notation in Theorem \ref{thm:uniform_converge}, 
	we define $n = n_1$,
	$m_i({\bs\theta})=r_1  \loss_{1\text{-}0, i}(\bs{\theta})$,  and $\tilde m(\bs\theta)= \E_N\{ \loss_{0i} ({\bs\theta}) \}$. 
	Then
	$
	\Delta(m_i, \delta) 
    =r_1\Delta(\loss_{1\text{-}0,i}, \delta). 
	$
	Below we show that the conditions (ii) and (iii) in Theorem \ref{thm:uniform_converge} hold. 
	
	First, 
	because 
	$\E_N\{ \Delta(m_i, \delta) \}= r_1 \E_N\{ \Delta(\loss_{1\text{-}0,i}, \delta) \}$, 
	condition (ii) in Theorem \ref{thm:uniform_converge} follows immediately from condition (ii) in Theorem \ref{thm:uniform_converge_sp}. 
	Second, because for any $\bs{\theta}\in \Theta$, 
	\begin{align*}
		n^{-1} \Var_{N} \left\{ m_i(\bs\theta) \right\}
		& = 
		n_1^{-1} \Var_{N} \left\{ r_1  \loss_{1\text{-}0, i}(\bs{\theta}) \right\}
		= 
		r_1 \cdot N^{-1} \Var_{N} \left\{ \loss_{1\text{-}0, i}(\bs{\theta}) \right\},  
	\end{align*}
	condition (iii) in Theorem \ref{thm:uniform_converge} follows immediately from 
	condition (iii) in Theorem \ref{thm:uniform_converge_sp}. 
	
	From the above and Theorem \ref{thm:uniform_converge}, 
	$\sup_{\bs{\theta} \in \Theta} |\hat{M}_N(\bs{\theta}) - M_N(\bs{\theta})| = o_{\Pr}(1)$, i.e., Theorem \ref{thm:uniform_converge_sp} holds. 
\end{proof}

\begin{proof}[of Theorem \ref{thm:consist_sp}]
	We prove Theorem \ref{thm:consist_sp} using Theorem \ref{thm:consistency}. 
	From Theorem \ref{thm:uniform_converge_sp} and following the notation in the proof of Theorem \ref{thm:uniform_converge_sp}, 
	we can know that conditions (i)--(iii) in Theorem \ref{thm:uniform_converge} hold. 
	It suffices to verify conditions (i) and (ii) in Theorem \ref{thm:consistency}.  
	Note that conditions (i) and (ii) in Theorem \ref{thm:consistency} follow immediately from conditions (i) and (ii) in Theorem \ref{thm:consist_sp}.  
	Therefore, 
	$\hat{\bs{\theta}}_N - \bs{\theta}_N = o_{\Pr}(1)$, i.e., Theorem \ref{thm:consist_sp} holds. 
\end{proof}

\subsection{Z-estimation with compactness} 

\begin{theorem}\label{thm:uniform_converge_ZCP_sp}
	Consider a finite population of $N$ units, where each unit $i$ has two potential outcomes $(Y_i(1), Y_i(0))$ and a pretreatment covariate vector $\bs{X}_i$. 
	Let 
	$(Z_1, Z_2, \ldots, Z_N) \in \{0,1\}^N$ be a random vector 
	whose probability of taking value $(z_1, z_2, \ldots, z_N)$ is $n_1!n_0!/N!$ if $\sum_{i=1}^N z_i = n_1$ and zero otherwise, 
	where $n_1$ and $n_0 = N-n$ are two fixed constants. 
        We assume both $r_1 \equiv n_1/N$ and $r_0 \equiv n_0/N$ have positive limits.
	Let $\psi_1(\bs{y}, \bs{x}; \bs{\theta})\in \mathbb{R}^p$ and $\psi_0(\bs{y}, \bs{x}; \bs{\theta})\in \mathbb{R}^p$ be two general functions, 
	with $\theta$ in a set $\Theta \subset \mathbb{R}^p$.  
    For each unit $i$, define $\psi_{zi}(\theta) = \psi_z(Y_i(z), \bs X_i;{\bs\theta})$ for $z=0,1$, and $\psi_{1\text{-}0, i}(\bs{\theta}) = 
    \psi_{1i}(\bs{\theta}) - \psi_{0i}(\bs{\theta})$.
    Define further 
    $$
	\Delta( \psi_{1\text{-}0,i}, \delta)=\sup _{(\bs\theta, \bs\theta') \in \Theta^2:\|\bs\theta-\bs\theta'\| \leq  \delta}
    \| \psi_{1\text{-}0, i}({\bs\theta})
	-
	\psi_{1\text{-}0, i}({\bs\theta}') 
	\|. 
	$$
    Let 
    \begin{align*}
	\hat{\Psi}_N (\bs{\theta})
	& \equiv 
    \E_N\{
    \psi_{Z_i} (Y_i, \bs{X}_i; \bs{\theta})
    \}
	=
	r_1 \E_{N}^1 \{\psi_1(Y_i, \bs X_i;{\bs\theta})\}
	+
	r_0 \E_{N}^0 \{\psi_0(Y_i, \bs X_i;{\bs\theta})\}
	\\
    & 
    = \frac{1}{n_1}\sum_{i=1}^N Z_i r_1 \psi_{1\text{-}0, i}(\bs{\theta}) + \E_N\{ \psi_{0i} ({\bs\theta}) \}, 
    \end{align*}
    and 
    \begin{align*}
    \Psi_N (\bs{\theta})
	& \equiv
	r_1 \E_N\{\psi_1(Y_i(1), \bs X_i;{\bs\theta})\}
	+r_0 \E_N\{\psi_0 (Y_i(0), \bs X_i; {\bs\theta})\}\\
    & = 
    \frac{1}{N} \sum_{i=1}^N r_1 \psi_{1\text{-}0, i}(\bs{\theta}) + \E_N\{ \psi_{0i} ({\bs\theta}) \}, 
    \end{align*}
    where the equivalent forms follow from some algebra. 
    	If the following conditions hold as $N\rightarrow \infty$: 
    	\begin{enumerate}[label=(\roman*)]
                \item $\Theta$ is compact,
    		\item $\sup_N \E_N\{ \Delta( \psi_{1\text{-}0,i}, \delta) \} \rightarrow 0$ as $\delta \rightarrow 0$, 
    		\item for any ${\bs\theta}\in \Theta$ and $1\le k \le p$, $ \Var_N
    		\left\{ \psi_{1\text{-}0,ik}(\bs{\theta})
    		\right\}=o(N)$, 
            where $ \psi_{1\text{-}0,ik}(\bs{\theta})$ denotes the $k$th coordinate of $ \psi_{1\text{-}0,i}(\bs{\theta})$, 
    	\end{enumerate}
    	then $\sup_{\bs\theta\in\Theta}\|\hat{\Psi}_N(\bs\theta)-\Psi_N(\bs\theta)\| = o_{\Pr}(1)$ as $N\rightarrow \infty$.  
\end{theorem}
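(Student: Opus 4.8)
The plan is to derive Theorem~\ref{thm:uniform_converge_ZCP_sp} as a direct consequence of the simple-random-sampling result in Theorem~\ref{thm:uniform_converge_ZCP}, exactly mirroring how the M-estimation analogue in Theorem~\ref{thm:uniform_converge_sp} was obtained from Theorem~\ref{thm:uniform_converge}. The key observation is that, under the CRE, the assignment vector $(Z_1,\ldots,Z_N)$ is precisely a simple-random-sampling indicator with sample size $n=n_1$, and the displayed identities in the statement rewrite both $\hat{\Psi}_N(\theta)$ and $\Psi_N(\theta)$ as a sampling average of the single quantity $r_1\psi_{1\text{-}0,i}(\theta)$ plus the deterministic offset $\E_N\{\psi_{0i}(\theta)\}$. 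This casts the problem into the exact template of Theorem~\ref{thm:uniform_converge_ZCP}.

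Concretely, first I would set $n=n_1$, $\phi_i(\theta)=r_1\psi_{1\text{-}0,i}(\theta)\in\mathbb{R}^p$, and $\tilde{\phi}(\theta)=\E_N\{\psi_{0i}(\theta)\}$, so that $\hat{\Psi}_N$ and $\Psi_N$ coincide with $\hat{\Phi}_N$ and $\Phi_N$ in Theorem~\ref{thm:uniform_converge_ZCP}. It then remains to verify the three hypotheses of that theorem. Condition (i) (compactness of $\Theta$) is assumed directly. For condition (ii), I would note that $\Delta(\phi_i,\delta)=r_1\,\Delta(\psi_{1\text{-}0,i},\delta)\le \Delta(\psi_{1\text{-}0,i},\delta)$ since $r_1\in(0,1)$, so that $\sup_N\E_N\{\Delta(\phi_i,\delta)\}\le \sup_N\E_N\{\Delta(\psi_{1\text{-}0,i},\delta)\}\to 0$ as $\delta\to 0$ by condition (ii) of Theorem~\ref{thm:uniform_converge_ZCP_sp}.

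For the variance condition (iii), I would use $\Var_N\{\phi_{ik}(\theta)\}=r_1^2\,\Var_N\{\psi_{1\text{-}0,ik}(\theta)\}$ together with $n=n_1=r_1N$, which gives $n^{-1}\Var_N\{\phi_{ik}(\theta)\}=r_1\,N^{-1}\Var_N\{\psi_{1\text{-}0,ik}(\theta)\}=r_1\cdot o(1)=o(1)$, invoking condition (iii) of Theorem~\ref{thm:uniform_converge_ZCP_sp}; here the assumption that $r_1$ has a positive limit is what guarantees $n_1\to\infty$, so that the simple-random-sampling asymptotics are nontrivial. Applying Theorem~\ref{thm:uniform_converge_ZCP} then yields $\sup_{\theta\in\Theta}\|\hat{\Phi}_N(\theta)-\Phi_N(\theta)\|=o_{\Pr}(1)$, which is exactly the claimed $\sup_{\theta\in\Theta}\|\hat{\Psi}_N(\theta)-\Psi_N(\theta)\|=o_{\Pr}(1)$. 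I do not expect a substantive obstacle: the entire content is the reduction to SRS through the $Z_i$-reweighting identity, and the only care needed is the bookkeeping of the $r_1$ factors and confirming that the sample size $n_1$ diverges, both routine under the stated boundedness of $r_1$.
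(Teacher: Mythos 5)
Your proposal is correct and follows essentially the same route as the paper's own proof: both reduce to Theorem \ref{thm:uniform_converge_ZCP} via the identification $n=n_1$, $\phi_i(\bs\theta)=r_1\psi_{1\text{-}0,i}(\bs\theta)$, $\tilde\phi(\bs\theta)=\E_N\{\psi_{0i}(\bs\theta)\}$, and verify conditions (ii) and (iii) there through the identities $\E_N\{\Delta(\phi_i,\delta)\}=r_1\E_N\{\Delta(\psi_{1\text{-}0,i},\delta)\}$ and $n^{-1}\Var_N\{\phi_{ik}(\bs\theta)\}=r_1\cdot N^{-1}\Var_N\{\psi_{1\text{-}0,ik}(\bs\theta)\}$. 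The bookkeeping of the $r_1$ factors matches the paper exactly.
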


\begin{theorem}\label{thm:consist_ZCP_sp}
	Under the same setting as in Theorem \ref{thm:uniform_converge_ZCP_sp}, if conditions (i)--(iii) in Theorem \ref{thm:uniform_converge_ZCP_sp} holds, 
	and the following conditions hold:
	\begin{itemize}
		\item[(i)] $\bs{\theta}_N$ is the unique root of $\Psi_N(\bs{\theta})$  satisfying that for any $\varepsilon>0$, there exists $\eta>0$ such that
		$\inf_{\bs\theta\in\Theta: \|\bs\theta-\bs\theta_{N}\| \geq \varepsilon} \|\Psi_N(\bs\theta)\|>\eta$ for all $N$,
		\item[(ii)] $\hat{\bs{\theta}}_N \in \Theta$ 
		satisfies $\hat{\Psi}_N(\hat{\bs\theta}_N) = 0$, 
	\end{itemize}
	then $\hat{\bs\theta}_N-\bs\theta_N = o_{\Pr}(1)$ as $N \rightarrow \infty$. 
\end{theorem}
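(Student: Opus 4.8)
The plan is to prove Theorem \ref{thm:consist_ZCP_sp} by directly invoking the simple-random-sampling consistency result, Theorem \ref{thm:consistency_ZCP}, through the very same reduction already employed in the proof of Theorem \ref{thm:uniform_converge_ZCP_sp}. Specifically, I would set $n = n_1$, take $\phi_i(\bs\theta) = r_1\, \psi_{1\text{-}0,i}(\bs\theta)$ and $\tilde\phi(\bs\theta) = \E_N\{ \psi_{0i}(\bs\theta) \}$, so that the completely randomized estimating function $\hat\Psi_N$ and its population analogue $\Psi_N$ coincide exactly with the simple-random-sampling quantities $\hat\Phi_N$ and $\Phi_N$ appearing in Theorem \ref{thm:uniform_converge_ZCP}. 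Under this identification the CRE assignment vector is precisely a simple random sample of size $n_1$, so the probabilistic setup matches verbatim.

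The first step is to confirm that conditions (i)--(iii) of Theorem \ref{thm:uniform_converge_ZCP} hold for these reduced functions; this is exactly the verification carried out inside the proof of Theorem \ref{thm:uniform_converge_ZCP_sp}. Compactness of $\Theta$ is condition (i) directly; the equicontinuity condition follows from $\E_N\{\Delta(\phi_i,\delta)\} = r_1\,\E_N\{\Delta(\psi_{1\text{-}0,i},\delta)\}$ together with condition (ii) of the present theorem; and the variance condition follows from $\Var_N\{\phi_{ik}(\bs\theta)\} = r_1^2\,\Var_N\{\psi_{1\text{-}0,ik}(\bs\theta)\} = o(N) = o(n)$, using that $r_1$ is bounded away from zero and condition (iii) of the present theorem.

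The second step is to match the two additional consistency hypotheses. Because $\Phi_N \equiv \Psi_N$ and $\hat\Phi_N \equiv \hat\Psi_N$ under the reduction, condition (i) of Theorem \ref{thm:consistency_ZCP}, that $\bs\theta_N$ is the unique well-separated root of $\Phi_N$, is literally condition (i) of Theorem \ref{thm:consist_ZCP_sp}, and condition (ii), that $\hat\Phi_N(\hat{\bs\theta}_N) = 0$, is literally condition (ii). Applying Theorem \ref{thm:consistency_ZCP} then yields $\hat{\bs\theta}_N - \bs\theta_N = o_{\Pr}(1)$, which is the claim.

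Since every ingredient transfers mechanically, there is no substantive obstacle; the only point requiring care is the bookkeeping that $\Phi_N$ and $\Psi_N$ genuinely agree after the substitution, so that the separation condition on the population estimating equation denotes the same object in both theorems. This is precisely the template used to deduce Theorem \ref{thm:consist_sp} from Theorem \ref{thm:consistency}, and I would present the argument in the same concise style.
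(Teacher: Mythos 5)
Your proposal is correct and follows essentially the same route as the paper's own proof: the identical reduction $n=n_1$, $\phi_i(\bs\theta)=r_1\psi_{1\text{-}0,i}(\bs\theta)$, $\tilde\phi(\bs\theta)=\E_N\{\psi_{0i}(\bs\theta)\}$ (so that $\hat\Phi_N\equiv\hat\Psi_N$ and $\Phi_N\equiv\Psi_N$), followed by verifying conditions (i)--(iii) of Theorem \ref{thm:uniform_converge_ZCP} exactly as in the proof of Theorem \ref{thm:uniform_converge_ZCP_sp}, and then observing that conditions (i) and (ii) of Theorem \ref{thm:consistency_ZCP} are literally conditions (i) and (ii) of the present theorem. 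No gaps; your verification of the variance condition via $\Var_N\{\phi_{ik}(\bs\theta)\}=r_1^2\Var_N\{\psi_{1\text{-}0,ik}(\bs\theta)\}$ and the positivity of the limit of $r_1$ matches the paper's bookkeeping.
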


\begin{proof}[of Theorem \ref{thm:uniform_converge_ZCP_sp}]
	We prove Theorem \ref{thm:uniform_converge_ZCP_sp} using Theorem \ref{thm:uniform_converge_ZCP}. 
	Following the notation in Theorem \ref{thm:uniform_converge_ZCP}, 
	we define $n = n_1$,
	$\phi_i({\bs\theta})=r_1  \psi_{1\text{-}0, i}(\bs{\theta})$,  and $\tilde \phi(\bs\theta)= \E_N\{ \psi_{0i} ({\bs\theta}) \}$. 
	Then
	$
	\Delta(\phi_i, \delta) 
    =r_1\Delta(\psi_{1\text{-}0,i}, \delta). 
	$
	Below we show that conditions (i) and (ii) in Theorem \ref{thm:uniform_converge_ZCP} hold. 
	
	First, 
	because 
	$\E_N\{ \Delta(\phi_i, \delta) \}= r_1 \E_N\{ \Delta(\psi_{1\text{-}0,i}, \delta) \}$, 
	condition (ii) in Theorem \ref{thm:uniform_converge_ZCP} follows immediately from condition (ii) in Theorem \ref{thm:uniform_converge_ZCP_sp}. 
	Second, because for any $\bs{\theta}\in \Theta$ and $1\le k \le p$, 
	\begin{align*}
		n^{-1} \Var_{N} \left\{ \phi_{ik}(\bs\theta) \right\}
		& = 
		n_1^{-1} \Var_{N} \left\{ r_1  \psi_{1\text{-}0, ik}(\bs{\theta}) \right\}
		= 
		r_1 \cdot N^{-1} \Var_{N} \left\{ \psi_{1\text{-}0, ik}(\bs{\theta}) \right\},  
	\end{align*}
	condition (iii) in Theorem \ref{thm:uniform_converge_ZCP} follows immediately from 
	condition (iii) in Theorem \ref{thm:uniform_converge_ZCP_sp}. 
	
	From the above and Theorem \ref{thm:uniform_converge_ZCP}, 
	$\sup_{\bs{\theta} \in \Theta} \|\hat{\Psi}_N(\bs{\theta}) - \Psi_N(\bs{\theta})\| = o_{\Pr}(1)$, i.e., Theorem \ref{thm:uniform_converge_ZCP_sp} holds. 
\end{proof}

\begin{proof}[of Theorem \ref{thm:consist_ZCP_sp}]
	We prove Theorem \ref{thm:consist_ZCP_sp} using Theorem \ref{thm:consistency_ZCP}. 
	From Theorem \ref{thm:uniform_converge_ZCP_sp} and following the notation in the proof of Theorem \ref{thm:uniform_converge_ZCP_sp}, 
	we can know that conditions (i)--(iii) in Theorem \ref{thm:uniform_converge_ZCP} hold. 
	It suffices to verify conditions (i) and (ii) in Theorem \ref{thm:consistency_ZCP}.  
	Note that conditions (i) and (ii) in Theorem \ref{thm:consistency_ZCP} follow immediately from conditions (i) and (ii) in Theorem \ref{thm:consist_ZCP_sp}.  
	Therefore, 
	$\hat{\bs{\theta}}_N - \bs{\theta}_N = o_{\Pr}(1)$, i.e., Theorem \ref{thm:consist_ZCP_sp} holds. 
\end{proof}

\subsection{M-estimation with convexity}

\begin{theorem}\label{thm:uniform_converge_MCV_sp}
    Under the same setting as in Theorem \ref{thm:uniform_converge_sp}, if the following conditions hold:
    \begin{itemize}
        \item[(i)] there exists $r > 0$ such that $\mathcal{B}(\bs\theta_N, r) \subseteq \Theta$ for all $N$, where $\{\bs\theta_N\}$ is a sequence from $\Theta$,
        \item[(ii)] there exists $L \geq 0$ such that $(\E_N [ \{ \loss_{1\text{-}0,i}(\bs\theta) - \loss_{1\text{-}0,i}(\bs\theta') \}^2 ])^{1/2} \leq L \|\bs\theta - \bs\theta'\|$ for all $N$ and $\bs\theta, \bs\theta' \in \mathcal{B}(\bs\theta_N, r)$,
    \end{itemize}
    then, as $N \to \infty$, 
    \begin{equation*}
        \sup_{\bs\theta \in \Theta:\|\bs\theta-\bs\theta_N\| \leq r} \big| \{ \hat{M}_N(\theta) - M_N(\theta) \} - \{ \hat{M}_N(\theta_N) - M_N(\theta_N) \} \big| = o_{\Pr}(1).
    \end{equation*}
    If, in addition, 
    \begin{itemize}
        \item[{(iii)}]
        $var_N\{\loss_{1\text{-}0,i}(\bs\theta_N)\}= o(N) $,
    \end{itemize}
    then, as $N \to \infty$, 
    $$
    \sup_{\bs\theta \in \Theta:\|\bs\theta-\bs\theta_N\| \leq r} \big|  \hat{M}_N(\theta) - M_N(\theta) \big| = o_{\Pr}(1).
    $$
    
\end{theorem}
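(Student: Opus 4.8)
The plan is to prove Theorem \ref{thm:uniform_converge_MCV_sp} by reducing it to the simple-random-sampling result in Theorem \ref{thm:uniform_converge_MCV}, mirroring exactly the reduction used in the proof of Theorem \ref{thm:uniform_converge_sp}. First I would adopt the substitution $n = n_1$, $m_i(\bs{\theta}) = r_1 \loss_{1\text{-}0, i}(\bs{\theta})$, and $\tilde{m}(\bs{\theta}) = \E_N\{\loss_{0i}(\bs{\theta})\}$, under which the CRE assignment $(Z_1,\ldots,Z_N)$ becomes precisely the simple-random-sampling indicator required by Theorem \ref{thm:uniform_converge_MCV}, and the objects $\hat{M}_N(\bs{\theta})$ and $M_N(\bs{\theta})$ defined in Theorem \ref{thm:uniform_converge_MCV_sp} coincide term-by-term with those in the simple-random-sampling theorem.

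Next I would verify the two hypotheses of the first part of Theorem \ref{thm:uniform_converge_MCV}. Its condition (i), the existence of $r>0$ with $\mathcal{B}(\bs{\theta}_N, r) \subseteq \Theta$, is identical to condition (i) of the present theorem. For the $L^2$-Lipschitz condition (ii), since $m_i(\bs{\theta}) - m_i(\bs{\theta}') = r_1\{\loss_{1\text{-}0,i}(\bs{\theta}) - \loss_{1\text{-}0,i}(\bs{\theta}')\}$, I would factor the constant $r_1$ out of the root-mean-square $(\E_N[\{\cdot\}^2])^{1/2}$ and bound the result by $r_1 L \|\bs{\theta} - \bs{\theta}'\|$, so that the Lipschitz constant for the reduced problem may be taken to be $r_1 L$. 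Invoking the first conclusion of Theorem \ref{thm:uniform_converge_MCV} then yields the stabilized uniform bound, which is exactly the first displayed conclusion of the present theorem.

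For the strengthened conclusion under the additional condition (iii), I would check the remaining hypothesis $\Var_N\{m_i(\bs{\theta}_N)\} = o(n)$ of Theorem \ref{thm:uniform_converge_MCV}. Here $\Var_N\{m_i(\bs{\theta}_N)\} = r_1^2\,\Var_N\{\loss_{1\text{-}0,i}(\bs{\theta}_N)\} = o(N)$ by condition (iii), and since $n = n_1 = r_1 N$ with $r_1$ bounded away from $0$ and $1$, the rate $o(N)$ is the same as $o(n)$; hence the hypothesis holds. The second conclusion of Theorem \ref{thm:uniform_converge_MCV} then delivers the full uniform convergence $\sup_{\bs{\theta}\in\Theta:\|\bs{\theta}-\bs{\theta}_N\|\le r}|\hat{M}_N(\bs{\theta}) - M_N(\bs{\theta})| = o_{\Pr}(1)$.

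The argument is essentially bookkeeping, since the genuine analytic content is already carried by Theorem \ref{thm:uniform_converge_MCV}. The only point requiring a moment of care is the matching of rates, namely confirming that the CRE variance scaling $o(N)$ in condition (iii) translates into the simple-random-sampling requirement $o(n)$; this is immediate once one uses that $r_1$ has a positive limit. I therefore expect no substantive obstacle beyond verifying these transfers cleanly.
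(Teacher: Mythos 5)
Your proposal is correct and follows essentially the same route as the paper's own proof: the identical reduction $n=n_1$, $m_i(\bs\theta)=r_1\loss_{1\text{-}0,i}(\bs\theta)$, $\tilde m(\bs\theta)=\E_N\{\loss_{0i}(\bs\theta)\}$ to Theorem \ref{thm:uniform_converge_MCV}, with the Lipschitz constant scaling to $r_1 L$ and the variance condition transferring via $n_1^{-1}\Var_N\{r_1\loss_{1\text{-}0,i}(\bs\theta_N)\}=r_1 N^{-1}\Var_N\{\loss_{1\text{-}0,i}(\bs\theta_N)\}=o(1)$. Your explicit remark that $o(N)$ and $o(n_1)$ coincide because $r_1$ has a positive limit is exactly the bookkeeping the paper performs.
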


\begin{theorem}\label{thm:consist_MCV_sp}
    Under the same setting as in Theorem \ref{thm:uniform_converge_sp},if conditions (i) and (ii) in Theorem \ref{thm:uniform_converge_MCV_sp} holds, 
	and the following conditions hold:
    \begin{itemize}
        \item[(i)] $\Theta$ is convex
        \item[(ii)] $\loss_{1\text{-}0,i}(\bs\theta)$, $\loss_{0,i}(\bs\theta)$ are convex for all $i$,
	\item[(iii)] 
        $\bs{\theta}_N$ is the unique minimizer of $M_N(\bs{\theta})$  satisfying that for any $\varepsilon>0$, there exists $\eta>0$ such that
		$\inf_{\bs\theta\in\Theta: \|\bs\theta-\bs\theta_{N}\| =\varepsilon} M_N(\bs\theta)>M_N\left(\bs\theta_{N}\right)+\eta$ for all $N$,
	\item[(iv)] $\hat{\bs{\theta}}_N \in \Theta$ 
		satisfies $\hat{M}_N(\hat{\bs\theta}_N)\leq \hat{M}_N(\bs{\theta}_N)$, 
	\end{itemize}
	then $\hat{\bs\theta}_N-\bs\theta_N = o_{\Pr}(1)$ as $N \rightarrow \infty$. 
\end{theorem}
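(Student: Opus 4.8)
The plan is to prove Theorem \ref{thm:consist_MCV_sp} by reducing the completely randomized experiment to a simple random sampling problem and then invoking the already-established convexity-based consistency result for simple random sampling, Theorem \ref{thm:consistency_MCV}. The reduction uses exactly the substitution employed in the proofs of Theorems \ref{thm:uniform_converge_sp} and \ref{thm:consist_sp}: I would set $n = n_1$, $m_i(\bs\theta) = r_1 \loss_{1\text{-}0,i}(\bs\theta)$, and $\tilde m(\bs\theta) = \E_N\{\loss_{0i}(\bs\theta)\}$, so that the empirical and population risks $\hat M_N$ and $M_N$ from the CRE setting coincide with the simple-random-sampling forms $n^{-1}\sum_i Z_i m_i + \tilde m$ and $N^{-1}\sum_i m_i + \tilde m$. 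This identification is legitimate because the CRE assignment places probability $n_1!\,n_0!/N! = 1/\binom{N}{n_1}$ on each configuration with $\sum_i Z_i = n_1$, which is precisely the law of a simple random sample of size $n_1$.

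With the substitution in place, it remains to verify the hypotheses of Theorem \ref{thm:consistency_MCV}. First I would check conditions (i) and (ii) of Theorem \ref{thm:uniform_converge_MCV}, which feed into Theorem \ref{thm:consistency_MCV}. The ball condition $\mathcal{B}(\bs\theta_N, r)\subseteq\Theta$ carries over verbatim from condition (i) of Theorem \ref{thm:uniform_converge_MCV_sp}. For the $L^2$-Lipschitz condition, since $m_i = r_1\loss_{1\text{-}0,i}$ and $r_1\in(0,1)$, one has $(\E_N[\{m_i(\bs\theta)-m_i(\bs\theta')\}^2])^{1/2} = r_1(\E_N[\{\loss_{1\text{-}0,i}(\bs\theta)-\loss_{1\text{-}0,i}(\bs\theta')\}^2])^{1/2} \le r_1 L\|\bs\theta-\bs\theta'\|$, so condition (ii) of Theorem \ref{thm:uniform_converge_MCV} holds with Lipschitz constant $r_1 L$.

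Next I would verify the four numbered conditions of Theorem \ref{thm:consistency_MCV}. Convexity of $\Theta$ is condition (i) of the present theorem. For the convexity of the summands, $m_i = r_1\loss_{1\text{-}0,i}$ is convex because $\loss_{1\text{-}0,i}$ is convex and $r_1>0$, while $\tilde m = \E_N\{\loss_{0i}\}$ is convex as a nonnegative average of the convex functions $\loss_{0i}$; both facts are exactly what condition (ii) supplies. The unique-minimizer separation condition and the optimality condition $\hat M_N(\hat{\bs\theta}_N)\le \hat M_N(\bs\theta_N)$ transfer directly from conditions (iii) and (iv). Applying Theorem \ref{thm:consistency_MCV} then yields $\hat{\bs\theta}_N - \bs\theta_N = o_\Pr(1)$.

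Because every step is a direct transcription of the conditions through the linear substitution, there is no genuine analytic obstacle; the proof is essentially bookkeeping. The one point that warrants care is confirming that the convexity required of the non-random part of the risk in the simple-random-sampling theorem is convexity of $\tilde m(\bs\theta) = \E_N\{\loss_{0i}(\bs\theta)\}$ rather than of the individual $\loss_{1\text{-}0,i}$, and that condition (ii) of the present theorem—together with positivity of $r_1$—indeed delivers the convexity of both $m_i$ and $\tilde m$ needed there.
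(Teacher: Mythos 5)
Your proposal is correct and follows essentially the same route as the paper: both reduce the CRE to the simple-random-sampling framework via the substitution $n = n_1$, $m_i(\bs\theta) = r_1 \loss_{1\text{-}0,i}(\bs\theta)$, $\tilde m(\bs\theta) = \E_N\{\loss_{0i}(\bs\theta)\}$, and then invoke Theorem \ref{thm:consistency_MCV} after transferring the ball, Lipschitz, convexity, separation, and optimality conditions. The only difference is that you spell out the verifications (e.g., the Lipschitz constant $r_1 L$ and the convexity of $m_i$ and $\tilde m$ from positivity of $r_1$ and averaging) that the paper dismisses as immediate.
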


\begin{proof}[of Theorem \ref{thm:uniform_converge_MCV_sp}]

	We prove Theorem \ref{thm:uniform_converge_MCV_sp} using Theorem \ref{thm:uniform_converge_MCV}. 
	Following the notation in Theorem \ref{thm:uniform_converge_MCV}, 
	we define $n = n_1$,
	$m_i({\bs\theta})=r_1  \loss_{1\text{-}0, i}(\bs{\theta})$,  and $\tilde m(\bs\theta)= \E_N\{ \loss_{0i} ({\bs\theta}) \}$. 
	Below we show conditions (ii) and (iii) in Theorem \ref{thm:uniform_converge_MCV} can be implied by conditions (ii) and (iii) in  Theorem \ref{thm:uniform_converge_MCV_sp}, respectively. 
	
    First,
	because 
	$\E_N[\{ m_i(\bs\theta) - m_i(\bs\theta') \}^2]= r_1^2  \E_N[\{ l_{1\text{-}0,i}(\bs\theta) - l_{1\text{-}0,i}(\bs\theta') \}^2]$, 
	condition (ii) in Theorem \ref{thm:uniform_converge_MCV} follows immediately from condition (iii) in Theorem \ref{thm:uniform_converge_MCV_sp}. 
	Second, because
	\begin{align*}
		n^{-1} \Var_{N} \left\{ m_i(\bs\theta_N) \right\}
		& = 
		n_1^{-1} \Var_{N} \left\{ r_1  \loss_{1\text{-}0, i}(\bs{\theta_N}) \right\}
		= 
		r_1 \cdot N^{-1} \Var_{N} \left\{ \loss_{1\text{-}0, i}(\bs{\theta}) \right\},  
	\end{align*}
	condition (iii) in Theorem \ref{thm:uniform_converge_MCV} follows immediately from 
	condition (iii) in Theorem \ref{thm:uniform_converge_MCV_sp}. 
	
	From the above and Theorem \ref{thm:uniform_converge_MCV}, 
	Theorem \ref{thm:uniform_converge_MCV_sp} holds. 
\end{proof}

\begin{proof}[of Theorem \ref{thm:consist_MCV_sp}]
	We prove Theorem \ref{thm:consist_MCV_sp} using Theorem \ref{thm:consistency_MCV}. 
	From Theorem \ref{thm:uniform_converge_MCV_sp} and following the notation in the proof of Theorem \ref{thm:uniform_converge_MCV_sp}, 
	we can know that conditions (i) and (ii) in Theorem \ref{thm:uniform_converge_MCV} hold. 
	It suffices to verify conditions (i)-(iv) in Theorem \ref{thm:consistency_MCV}.  
	Note that conditions (i)-(iv) in Theorem \ref{thm:consistency_MCV} follow immediately from condition (i)-(iv) in Theorem \ref{thm:consist_MCV_sp}.  
	Therefore, 
	$\hat{\bs{\theta}}_N - \bs{\theta}_N = o_{\Pr}(1)$, i.e., Theorem \ref{thm:consist_MCV_sp} holds. 
\end{proof}

\subsection{Z-estimation with convexity}

\begin{theorem}\label{thm:uniform_converge_ZCV_sp}
    Under the same setting as in Theorem \ref{thm:uniform_converge_ZCP_sp}, if the following conditions hold:
    \begin{itemize}
        \item[(i)] there exists $r > 0$ such that $\mathcal{B}(\theta_N, r) \subseteq \Theta$ for all $N$, where $\{\bs\theta_N\}$ is a sequence from $\Theta$,
        \item[(ii)] there exists $L \geq 0$ such that $(\E_N \left[ \| \psi_{1\text{-}0,i}(\bs\theta) - \psi_{1\text{-}0,i}(\bs\theta') \|^2 \right])^{1/2} \leq L \|\bs\theta - \bs\theta'\|$ for all $N$ and $\bs\theta, \bs\theta' \in \mathcal{B}(\bs\theta_N, r)$,
        \item[(iii)] $var_N\{\psi_{1\text{-}0,ik}(\bs\theta_N)\}$ is $o(N)$ for each coordinate k,
    \end{itemize}
    then 
    $$
    \sup_{\bs\theta \in \Theta:\|\bs\theta-\bs\theta_N\| \leq r} \| \hat{\Psi}_N(\theta) - \Psi_N(\theta)  \| = o_{\Pr}(1) \text{ as } N \to \infty.
    $$
\end{theorem}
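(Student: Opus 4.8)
The plan is to reduce this statement to its simple-random-sampling counterpart, Theorem \ref{thm:uniform_converge_ZCV}, exactly as the earlier completely-randomized-experiment results (Theorems \ref{thm:uniform_converge_sp}, \ref{thm:uniform_converge_ZCP_sp}, \ref{thm:uniform_converge_MCV_sp}) were reduced to their simple-random-sampling analogues. The device is the algebraic identity already recorded in the statement of Theorem \ref{thm:uniform_converge_ZCP_sp}, which rewrites both the empirical and the population estimating functions under the CRE as one $Z_i$-weighted average plus a deterministic term:
\begin{align*}
\hat{\Psi}_N(\bs\theta) = \frac{1}{n_1}\sum_{i=1}^N Z_i\, r_1 \psi_{1\text{-}0,i}(\bs\theta) + \E_N\{\psi_{0i}(\bs\theta)\}, \qquad
\Psi_N(\bs\theta) = \frac{1}{N}\sum_{i=1}^N r_1 \psi_{1\text{-}0,i}(\bs\theta) + \E_N\{\psi_{0i}(\bs\theta)\}.
\end{align*}
I would set $n = n_1$, $\phi_i(\bs\theta) = r_1 \psi_{1\text{-}0,i}(\bs\theta)$, and $\tilde\phi(\bs\theta) = \E_N\{\psi_{0i}(\bs\theta)\}$; under this substitution $\hat{\Psi}_N$ and $\Psi_N$ are precisely the $\hat\Phi_N$ and $\Phi_N$ of Theorem \ref{thm:uniform_converge_ZCV}. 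Since the CRE assignment vector follows the same uniform law over vectors with $\sum_i Z_i = n_1$ used in that theorem, it then suffices to verify that conditions (i)--(iii) of Theorem \ref{thm:uniform_converge_ZCV} are implied by the present conditions (i)--(iii).

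Condition (i) transfers verbatim, as $\mathcal{B}(\bs\theta_N, r)\subseteq\Theta$ is assumed in both. For condition (ii), the identity $\|\phi_i(\bs\theta)-\phi_i(\bs\theta')\| = r_1\|\psi_{1\text{-}0,i}(\bs\theta)-\psi_{1\text{-}0,i}(\bs\theta')\|$ gives $(\E_N[\|\phi_i(\bs\theta)-\phi_i(\bs\theta')\|^2])^{1/2} \le r_1 L\|\bs\theta-\bs\theta'\|$, so the Lipschitz condition holds with constant $r_1 L$. For condition (iii), the same scaling computation appearing in the proof of Theorem \ref{thm:uniform_converge_ZCP_sp} yields $n^{-1}\Var_N\{\phi_{ik}(\bs\theta_N)\} = r_1 N^{-1}\Var_N\{\psi_{1\text{-}0,ik}(\bs\theta_N)\}$, which equals $r_1\cdot o(1) = o(1)$ because $\Var_N\{\psi_{1\text{-}0,ik}(\bs\theta_N)\} = o(N)$ and $r_1$ is bounded (the setting of Theorem \ref{thm:uniform_converge_ZCP_sp} assumes $r_1$ and $r_0$ have positive limits, hence $r_1 \le 1$). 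Equivalently $\Var_N\{\phi_{ik}(\bs\theta_N)\} = o(n)$, which is condition (iii) of Theorem \ref{thm:uniform_converge_ZCV}.

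With all three conditions in hand, Theorem \ref{thm:uniform_converge_ZCV} immediately delivers $\sup_{\|\bs\theta-\bs\theta_N\|\le r}\|\hat\Phi_N(\bs\theta)-\Phi_N(\bs\theta)\| = o_{\Pr}(1)$, and since $\hat\Phi_N = \hat\Psi_N$ and $\Phi_N = \Psi_N$ under the chosen substitution, this is exactly the claimed conclusion. I do not expect a substantive obstacle: the whole argument is a translation of the finite-population variance scaling into the $r_1$-weighted form, and the only point demanding care is bookkeeping of the $r_1$ factors when transferring conditions (ii) and (iii), together with invoking the boundedness of $r_1$ to absorb those factors. As an alternative, one could instead apply the M-estimation result of Theorem \ref{thm:uniform_converge_MCV} coordinatewise to the components of $\phi_i$, but routing through the already-packaged vector statement of Theorem \ref{thm:uniform_converge_ZCV} keeps the proof shortest.
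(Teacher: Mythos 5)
Your proposal is correct and follows essentially the same route as the paper's own proof: the paper also reduces to Theorem \ref{thm:uniform_converge_ZCV} via the substitution $n = n_1$, $\phi_i(\bs\theta) = r_1\psi_{1\text{-}0,i}(\bs\theta)$, $\tilde\phi(\bs\theta) = \E_N\{\psi_{0i}(\bs\theta)\}$, and verifies conditions (ii) and (iii) through exactly the same $r_1$-scaling identities you give. Your extra remarks (condition (i) transferring verbatim, boundedness of $r_1$ absorbing the constants) are correct bookkeeping that the paper leaves implicit.
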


\begin{theorem}\label{thm:consist_ZCV_sp}
    Under the same setting as in Theorem \ref{thm:uniform_converge_ZCP_sp},if conditions (i)--(iii) in Theorem \ref{thm:uniform_converge_ZCV_sp} holds, 
	and the following conditions hold:
    \begin{itemize}
        \item[(i)] $\Theta$ is convex, 
        \item[(ii)] For all $i$ and $\bs\theta, \bs\theta' \in \Theta$, $\psi_{1\text{-}0,i}(\bs\theta)$ and $\psi_{0i}(\bs\theta)$ satisfy 
        $$
        \{\psi_{1\text{-}0,i}(\bs\theta) - \psi_{1\text{-}0,i}(\bs\theta')\}^\top(\bs\theta - \bs\theta') \geq 0 \text{ and } 
        \{ \psi_{0i}(\bs\theta) - \psi_{0i}(\bs\theta') \}^\top(\bs\theta - \bs\theta') \geq 0, 
         $$
	\item[(iii)] 
        $\bs{\theta}_N$ is the unique root of $\Psi(\bs{\theta})$  satisfying that for any $\varepsilon>0$, there exists $\eta>0$ such that
		$\inf_{\bs\theta\in\Theta: \|\bs\theta-\bs\theta_{N}\| = \varepsilon} \Psi_N(\bs\theta)^\top(\bs\theta-\bs\theta_N) > \eta$ for all $N$,
	\item[(iv)] $\hat{\bs{\theta}}_N \in \Theta$ 
		satisfies $\hat{\Psi}_N(\hat{\bs\theta}_N) = 0$, 
	\end{itemize}
	then $\hat{\bs\theta}_N-\bs\theta_N = o_{\Pr}(1)$ as $N \rightarrow \infty$. 
\end{theorem}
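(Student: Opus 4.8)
The plan is to prove Theorem \ref{thm:consist_ZCV_sp} by reducing it to its simple-random-sampling counterpart, Theorem \ref{thm:consistency_ZCV}, exactly as the earlier completely randomized experiment consistency results (Theorems \ref{thm:consist_ZCP_sp} and \ref{thm:consist_MCV_sp}) were reduced to their simple-random-sampling versions. First I would adopt the same reparametrization used in the proof of Theorem \ref{thm:uniform_converge_ZCV_sp}: set $n = n_1$, $\phi_i(\bs\theta) = r_1\,\psi_{1\text{-}0,i}(\bs\theta)$, and $\tilde\phi(\bs\theta) = \E_N\{\psi_{0i}(\bs\theta)\}$, so that $\hat{\Phi}_N(\bs\theta) = \hat{\Psi}_N(\bs\theta)$ and $\Phi_N(\bs\theta) = \Psi_N(\bs\theta)$. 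Under this identification, the completely randomized experiment Z-estimation problem becomes a simple-random-sampling Z-estimation problem to which Theorem \ref{thm:consistency_ZCV} applies.

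Next I would verify that the hypotheses of Theorem \ref{thm:consistency_ZCV} are met. Conditions (i)--(iii) of Theorem \ref{thm:uniform_converge_ZCV} (the uniform-convergence prerequisites) follow directly from the proof of Theorem \ref{thm:uniform_converge_ZCV_sp}, which already shows that the present assumptions (i)--(iii) of Theorem \ref{thm:uniform_converge_ZCV_sp} translate, through the reparametrization above, into exactly those prerequisites. It then remains to check the four additional consistency conditions. The convexity of $\Theta$ (condition (i)) and the coercivity-type uniqueness condition on $\bs\theta_N$ (condition (iii)) carry over verbatim, because $\Phi_N = \Psi_N$; similarly, $\hat{\Phi}_N(\hat{\bs\theta}_N) = \hat{\Psi}_N(\hat{\bs\theta}_N) = 0$ gives condition (iv) directly from assumption (iv) of Theorem \ref{thm:consist_ZCV_sp}.

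The only point requiring a short computation is the monotonicity condition (ii) of Theorem \ref{thm:consistency_ZCV}, which must hold for $\phi_i$ and $\tilde\phi$. For $\phi_i$, since $r_1 > 0$,
\begin{align*}
\{\phi_i(\bs\theta) - \phi_i(\bs\theta')\}^\top(\bs\theta - \bs\theta')
= r_1\{\psi_{1\text{-}0,i}(\bs\theta) - \psi_{1\text{-}0,i}(\bs\theta')\}^\top(\bs\theta - \bs\theta') \ge 0
\end{align*}
by assumption (ii) of Theorem \ref{thm:consist_ZCV_sp}; and for $\tilde\phi$, monotonicity is preserved under averaging,
\begin{align*}
\{\tilde\phi(\bs\theta) - \tilde\phi(\bs\theta')\}^\top(\bs\theta - \bs\theta')
= \frac{1}{N}\sum_{i=1}^N \{\psi_{0i}(\bs\theta) - \psi_{0i}(\bs\theta')\}^\top(\bs\theta - \bs\theta') \ge 0,
\end{align*}
again by assumption (ii). With all hypotheses verified, Theorem \ref{thm:consistency_ZCV} yields $\hat{\bs\theta}_N - \bs\theta_N = o_{\Pr}(1)$, completing the argument. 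I do not anticipate any genuine obstacle: the proof is a mechanical transfer to the simple-random-sampling regime, and the only mild care needed is observing that scaling by the positive constant $r_1$ and taking finite-population averages both preserve the monotonicity structure that drives the convex-case consistency argument.
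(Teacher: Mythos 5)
Your proposal is correct and follows the same route as the paper's proof: reparametrize with $n = n_1$, $\phi_i(\bs\theta) = r_1\,\psi_{1\text{-}0,i}(\bs\theta)$, $\tilde\phi(\bs\theta) = \E_N\{\psi_{0i}(\bs\theta)\}$, invoke Theorem \ref{thm:uniform_converge_ZCV_sp} for the uniform-convergence prerequisites, and then check conditions (i)--(iv) of Theorem \ref{thm:consistency_ZCV}. The only difference is cosmetic: where the paper says the conditions follow "immediately," you explicitly verify that scaling by $r_1 > 0$ and averaging over $i$ preserve monotonicity, which is a welcome bit of added detail but not a different argument.
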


\begin{proof}[of Theorem \ref{thm:uniform_converge_ZCV_sp}]
	We prove Theorem \ref{thm:uniform_converge_ZCV_sp} using Theorem \ref{thm:uniform_converge_ZCV}. 
	Following the notation in Theorem \ref{thm:uniform_converge_ZCV}, 
	we define $n = n_1$,
	$\phi_i({\bs\theta})=r_1  \psi_{1\text{-}0, i}(\bs{\theta})$,  and $\tilde \phi(\bs\theta)= \E_N\{ \psi_{0i} ({\bs\theta}) \}$. 
	Below we show that conditions (ii) and (iii) in Theorem \ref{thm:uniform_converge_ZCV} hold. 
	
	First, 
	because 
	$\E_N[\{ \phi_i(\bs\theta) - \phi_i(\bs\theta') \}^2]= r_1^2  \E_N[\{ \psi_{1\text{-}0,i}(\bs\theta) - \psi_{1\text{-}0,i}(\bs\theta') \}^2]$, 
	condition (ii) in Theorem \ref{thm:uniform_converge_ZCV} follows immediately from condition (ii) in Theorem \ref{thm:uniform_converge_ZCV_sp}. 
	Second, because, for any $1\le k \le p$, 
	\begin{align*}
		n^{-1} \Var_{N} \left\{ \phi_{ik}(\bs\theta_N) \right\}
		& = 
		n_1^{-1} \Var_{N} \left\{ r_1  \psi_{1\text{-}0, ik}(\bs{\theta_N}) \right\}
		= 
		r_1 \cdot N^{-1} \Var_{N} \left\{ \psi_{1\text{-}0, ik}(\bs{\theta_N}) \right\},  
	\end{align*}
	condition (iii) in Theorem \ref{thm:uniform_converge_ZCV} follows immediately from 
	condition (iii) in Theorem \ref{thm:uniform_converge_ZCV_sp}. 
	
	From the above and Theorem \ref{thm:uniform_converge_ZCV}, 
	$\sup_{\bs{\theta} \in \Theta} \|\hat{\Psi}_N(\bs{\theta}) - \Psi_N(\bs{\theta})\| = o_{\Pr}(1)$, i.e., Theorem \ref{thm:uniform_converge_ZCV_sp} holds. 
\end{proof}

\begin{proof}[of Theorem \ref{thm:consist_ZCV_sp}]
	We prove Theorem \ref{thm:consist_ZCV_sp} using Theorem \ref{thm:consistency_ZCV}. 
	From Theorem \ref{thm:uniform_converge_ZCV_sp} and following the notation in the proof of Theorem \ref{thm:uniform_converge_ZCV_sp}, 
	we can know that conditions (i)--(iii) in Theorem \ref{thm:uniform_converge_ZCV} hold. 
	It suffices to verify conditions (i)-(iv) in Theorem \ref{thm:consistency_ZCV}.  
	Note that conditions (i)-(iv) in Theorem \ref{thm:consistency_ZCV} follow immediately from condition (i)-(iv) in Theorem \ref{thm:consist_ZCV_sp}.  
	Therefore, 
	$\hat{\bs{\theta}}_N - \bs{\theta}_N = o_{\Pr}(1)$, i.e., Theorem \ref{thm:consist_ZCV_sp} holds. 
\end{proof}

\section{Asymptotic Normality for Z-estimation under completely randomized experiments}\label{sec:clt_cre}

In this section, we establish the $\sqrt{N}$-consistency and asymptotic Normality of the Z-estimator under the CRE , which includes the M-estimator as special cases. 
For conciseness, 
we impose conditions on uniform convergence of $\dot{\hat{\Psi}}_N(\bs\theta)$ and the consistency of $\hat{\theta}_N$, which can be guaranteed by conditions similar to those in \S \ref{sec:M_cre}.
The results in this section are obtained using theorems in \S \ref{sec:clt_srs}. 

\subsection{Theorems}

\begin{theorem}\label{thm:rate_sp}
	Under the same setting as in Theorem \ref{thm:uniform_converge_ZCP_sp}, 
	for $z=0,1$ and $1\le i\le N$, we further define $\psi_{zi}(\theta)$ and $\Delta(\dot{\psi}_{zi}, \delta)$ the same as that in \S \ref{sec:fp_asym}. 
	If conditions (i)--(iii) in Theorem \ref{thm:uniform_converge_sp} hold, 
	conditions (i) and (ii) in Theorem \ref{thm:consist_sp} hold, 
	and the following conditions hold:  
 	\begin{enumerate}[label=(\roman*)]
        \item 
        $\theta_N$ is the root of $\Psi_N(\theta)$, and 
		there exists $\varepsilon_0>0$ such that $\mathcal{B}({\bs\theta}_N, \varepsilon_0 ) = \{\theta: \|\theta-\theta_N\| <  \varepsilon_0\} \subset \Theta$ for all $N$, 
 
        \item
		$\hat{\Psi}_N(\hat{\bs\theta}_N)=0$, 
        and $\hat{\bs\theta}_N-\bs\theta_N = o_{\Pr}(1)$ as $N \rightarrow \infty$,

		\item for all $N$ and $1\le i \le N$ and $z=0,1$, 
		$\psi_{zi}({\bs\theta})$  have continuous first order partial derivatives over ${\bs\theta} \in \Theta$,

        \item there exists $r_0 > 0$ such that $\sup_{\bs\theta \in \Theta:\|\bs\theta-\bs\theta_N\|\leq r_0} \| \dot{\hat{\Psi}}_N(\bs\theta)  - \dot{\Psi}_N(\bs\theta) \| = o_{\Pr}(1)$,
         
		\item
		$\sup_N \sup_{\bs\theta \in \Theta:\|\bs\theta-\bs\theta_N\|\leq\delta} \| \dot{\Psi}_N(\bs\theta)  - \dot{\Psi}_N(\bs\theta_N) \| \rightarrow 0$ 
		as $\delta \rightarrow 0$, 
        \item 
        $var_N\{\psi_{1\text{-}0,ik}(\bs\theta_N)\}= O(1) $ for $1 \leq k \leq p$,
		
		\item
		there exists $c_0>0$ such that 
        $\sigma_{\min}(\dot{\Psi}_N({\bs\theta}_N)) \ge c_0$
        for all $N$,
	\end{enumerate}
	then $\hat{\bs\theta}_N-{\bs\theta}_N=O_\Pr( N^{-1/2} )$ as $N\rightarrow \infty$.
\end{theorem}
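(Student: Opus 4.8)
The plan is to reduce the completely randomized experiment to the simple random sampling framework of Theorem~\ref{thm:rate}, exactly as was done for the uniform-convergence and consistency results in Theorems~\ref{thm:uniform_converge_ZCP_sp} and~\ref{thm:consist_ZCP_sp}. Specifically, I would set $n = n_1$, take $\phi_i(\bs\theta) = r_1 \psi_{1\text{-}0,i}(\bs\theta)$ and $\tilde\phi(\bs\theta) = \E_N\{\psi_{0i}(\bs\theta)\}$. Under this identification the empirical and population estimating functions coincide with those of the SRS setup, $\hat\Phi_N(\bs\theta) = \hat\Psi_N(\bs\theta)$ and $\Phi_N(\bs\theta) = \Psi_N(\bs\theta)$, and consequently $\dot\Phi_N(\bs\theta) = \dot\Psi_N(\bs\theta)$. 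The whole task then becomes verifying that conditions (i)--(vii) of Theorem~\ref{thm:rate} are implied by the hypotheses of the present theorem.

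Most of the verification is immediate because the two sets of estimating functions agree. Conditions (i), (ii), (iv), (v) and (vii) of Theorem~\ref{thm:rate} translate verbatim into the corresponding hypotheses (i), (ii), (iv), (v), (vii) here, since they only involve $\Phi_N$, $\hat\Phi_N$, $\dot\Phi_N$ and $\bs\theta_N$, which are all unchanged under the identification. For condition (iii), I would note that $\phi_i = r_1(\psi_{1i} - \psi_{0i})$ and $\tilde\phi = N^{-1}\sum_{i=1}^N \psi_{0i}$ are finite linear combinations of the $\psi_{zi}$, each of which has continuous first-order partial derivatives by hypothesis (iii), so $\phi_i$ and $\tilde\phi$ do as well. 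The only genuine computation is for condition (vi): since $\phi_{ik}(\bs\theta_N) = r_1\,\psi_{1\text{-}0,ik}(\bs\theta_N)$, we have $\Var_N\{\phi_{ik}(\bs\theta_N)\} = r_1^2\,\Var_N\{\psi_{1\text{-}0,ik}(\bs\theta_N)\}$, and because $r_1 \le 1$ and hypothesis (vi) gives $\Var_N\{\psi_{1\text{-}0,ik}(\bs\theta_N)\} = O(1)$, condition (vi) of Theorem~\ref{thm:rate} follows.

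Applying Theorem~\ref{thm:rate} then yields $\hat{\bs\theta}_N - \bs\theta_N = O_\Pr(n^{-1/2}) = O_\Pr(n_1^{-1/2})$. To finish, I would convert this to the desired rate using $n_1 = r_1 N$ with $r_1$ bounded away from $0$: then $n_1^{-1/2} = r_1^{-1/2} N^{-1/2}$ with $r_1^{-1/2}$ bounded, so $O_\Pr(n_1^{-1/2}) = O_\Pr(N^{-1/2})$, which is the claim. I would also note that the consistency $\hat{\bs\theta}_N - \bs\theta_N = o_\Pr(1)$ needed to invoke the SRS theorem is available either directly from hypothesis (ii) or from the referenced uniform-convergence and consistency conditions together with Theorem~\ref{thm:consist_ZCP_sp}.

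There is no genuine obstacle here: the argument is a mechanical translation that mirrors the earlier CRE-to-SRS reductions in this appendix. The only points requiring care are the variance-scaling factor $r_1^2$ in condition (vi) and the final rate conversion from $n_1^{-1/2}$ to $N^{-1/2}$, both of which rely on $r_1$ having a positive limit; this assumption is precisely what makes the two rates interchangeable.
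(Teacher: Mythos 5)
Your proposal is correct and matches the paper's own proof essentially line for line: the same identification $n=n_1$, $\phi_i(\bs\theta)=r_1\psi_{1\text{-}0,i}(\bs\theta)$, $\tilde\phi(\bs\theta)=\E_N\{\psi_{0i}(\bs\theta)\}$, the same observation that only condition (vi) of Theorem \ref{thm:rate} requires computation (via $\Var_N\{\phi_{ik}(\bs\theta_N)\}=r_1^2\Var_N\{\psi_{1\text{-}0,ik}(\bs\theta_N)\}$), and the same final conversion from $O_\Pr(n_1^{-1/2})$ to $O_\Pr(N^{-1/2})$ using the positive limit of $r_1$.
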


\begin{theorem}\label{thm:clt_sp}
	Under the same setting as in Theorem \ref{thm:uniform_converge_ZCP_sp}, 
	if  conditions (i)--(vii) in Theorem \ref{thm:rate_sp} hold, 
	and the following conditions hold as $N\rightarrow \infty$: 
	\begin{itemize}
		\item[(i)] $
        \dot{\Psi}_N(\theta_N)
        \rightarrow \bs{\Gamma}$ with $\sigma_{\min}(\bs{\Gamma}) > 0$, 
		and 
		$
		\Cov_N \{ \psi_{1\text{-}0,i}(\bs{\theta}_N) \} \rightarrow \bs{\Sigma}, 
		$
		\item[(ii)] 
		$
		\max_{1\le j \le N} \| \psi_{1\text{-}0, j}(\bs{\theta}_N) - \E_N\{\psi_{1\text{-}0, i}(\bs{\theta}_N)\} \|^2 
		/N \rightarrow 0, 
		$
	\end{itemize}
	then 
	$
	\sqrt{N}(\hat{\bs\theta}_N-{\bs\theta}_N) \converged \mathcal{N}( \bs{0},  \tilde{r}_1\tilde{r}_0\bs{\Gamma}^{-1} \bs{\Sigma} (\bs{\Gamma}^{-1})^\top )
	$, 
	where $\tilde{r}_z$ is the limit of $r_z$ for $z=0, 1$.
\end{theorem}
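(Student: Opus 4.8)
The plan is to deduce Theorem~\ref{thm:clt_sp} from the simple-random-sampling (SRS) asymptotics of \S\ref{sec:clt_srs}, exactly as Theorems~\ref{thm:uniform_converge_sp}--\ref{thm:consist_ZCV_sp} were obtained from their SRS analogues. Concretely, I would reuse the embedding already exploited in the proof of Theorem~\ref{thm:uniform_converge_ZCP_sp}: set $n = n_1$, $\phi_i(\bs{\theta}) = r_1\,\psi_{1\text{-}0,i}(\bs{\theta})$ and $\tilde{\phi}(\bs{\theta}) = \E_N\{\psi_{0i}(\bs{\theta})\}$, so that $\hat{\Psi}_N \equiv \hat{\Phi}_N$ and $\Psi_N \equiv \Phi_N$ in the SRS notation, and the CRE assignment $(Z_1,\ldots,Z_N)$ is precisely the SRS indicator with sample size $n_1$. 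Under this identification $\dot{\Phi}_N(\bs{\theta}) = \dot{\Psi}_N(\bs{\theta})$ for all $\bs{\theta}$, since the $\psi_{0i}$ terms recombine as $r_1\E_N\{\dot\psi_{1i}\} - r_1\E_N\{\dot\psi_{0i}\} + \E_N\{\dot\psi_{0i}\} = r_1\E_N\{\dot\psi_{1i}\} + r_0\E_N\{\dot\psi_{0i}\}$, so the Jacobian conditions carry over verbatim.

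First I would check that conditions (i)--(vii) of Theorem~\ref{thm:rate_sp}, assumed here, translate into conditions (i)--(vii) of Theorem~\ref{thm:rate} under the embedding: the statements on $\Psi_N$, $\dot{\Psi}_N$ and its local behaviour are unchanged because $\Phi_N = \Psi_N$ and $\dot{\Phi}_N = \dot{\Psi}_N$, while $\Var_N\{\phi_{ik}(\bs{\theta}_N)\} = r_1^2\,\Var_N\{\psi_{1\text{-}0,ik}(\bs{\theta}_N)\} = O(1)$ follows from condition (vi). This yields $\hat{\bs{\theta}}_N - \bs{\theta}_N = O_{\Pr}(n_1^{-1/2}) = O_{\Pr}(N^{-1/2})$ and, more importantly, puts me in position to invoke Theorem~\ref{thm:clt}. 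Next I would verify the two extra hypotheses of Theorem~\ref{thm:clt}. For its condition (i), $\dot{\Phi}_N(\bs{\theta}_N) = \dot{\Psi}_N(\bs{\theta}_N) \to \bs{\Gamma}$ directly, and $\Cov_N\{\phi_i(\bs{\theta}_N)\} = r_1^2\,\Cov_N\{\psi_{1\text{-}0,i}(\bs{\theta}_N)\} \to \tilde{r}_1^2\,\bs{\Sigma}$ using $r_1 \to \tilde{r}_1$. For its condition (ii), since $r_1$ and $r_0$ are bounded away from $0$ and $1$ we have $\min(n_1,n_0) \asymp N$, hence
\[
\frac{\max_{1\le j\le N}\|\phi_j(\bs{\theta}_N) - \E_N\{\phi_i(\bs{\theta}_N)\}\|^2}{\min(n_1,n_0)}
= \frac{r_1^2\,\max_{1\le j\le N}\|\psi_{1\text{-}0,j}(\bs{\theta}_N) - \E_N\{\psi_{1\text{-}0,i}(\bs{\theta}_N)\}\|^2}{\min(n_1,n_0)} \to 0
\]
by condition (ii) of Theorem~\ref{thm:clt_sp}.

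With the hypotheses verified, Theorem~\ref{thm:clt} gives $\sqrt{n_1 N/(N-n_1)}\,(\hat{\bs{\theta}}_N - \bs{\theta}_N) \converged \mathcal{N}(\bs{0},\, \bs{\Gamma}^{-1}(\tilde{r}_1^2\bs{\Sigma})(\bs{\Gamma}^{-1})^\top)$. The final step is an algebraic rescaling: writing $n_1 = r_1 N$ and $N - n_1 = n_0 = r_0 N$ gives $n_1 N/(N-n_1) = (r_1/r_0)N$, so multiplying by $\sqrt{r_0/r_1} \to \sqrt{\tilde{r}_0/\tilde{r}_1}$ and applying Slutsky's theorem converts the normalization to $\sqrt{N}$ and the asymptotic covariance to $(\tilde{r}_0/\tilde{r}_1)\tilde{r}_1^2\,\bs{\Gamma}^{-1}\bs{\Sigma}(\bs{\Gamma}^{-1})^\top = \tilde{r}_1\tilde{r}_0\,\bs{\Gamma}^{-1}\bs{\Sigma}(\bs{\Gamma}^{-1})^\top$, the claimed limit.

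I expect the main difficulty to be bookkeeping rather than conceptual: correctly propagating the factor $r_1$ through both the covariance and the Lindeberg-type condition, and then converting the SRS normalization $\sqrt{n_1 N/(N-n_1)}$ into $\sqrt{N}$ so that exactly one factor $\tilde{r}_1\tilde{r}_0$ survives (rather than a spurious $\tilde{r}_1^2$ or $r_1/r_0$). The one place demanding genuine care is condition (ii) of Lemma~\ref{lemma:clt_srs} / Theorem~\ref{thm:clt}, whose denominator is $\min(n_1,n_0)$ rather than $N$; the reduction is legitimate only because $r_1,r_0$ are bounded away from $0$ and $1$, which renders $\min(n_1,n_0)$ comparable to $N$ and lets condition (ii) of Theorem~\ref{thm:clt_sp} transfer.
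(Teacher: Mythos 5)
Your proposal is correct and follows essentially the same route as the paper's own proof: the identical embedding $n=n_1$, $\phi_i(\bs\theta)=r_1\psi_{1\text{-}0,i}(\bs\theta)$, $\tilde\phi(\bs\theta)=\E_N\{\psi_{0i}(\bs\theta)\}$ (so $\Phi_N\equiv\Psi_N$), verification of conditions (i)--(ii) of Theorem \ref{thm:clt} via the factor $r_1$ in the covariance and the Lindeberg-type condition, and the final Slutsky rescaling from $\sqrt{n_1N/n_0}$ to $\sqrt{N}$ yielding the factor $\tilde r_1\tilde r_0$. Your explicit remark that the $\min(n_1,n_0)$ denominator is comparable to $N$ because $r_1,r_0$ have positive limits is a point the paper passes over silently, but it is exactly the justification needed.
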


\subsection{Proofs of the theorems}

\begin{proof}[of Theorem \ref{thm:rate_sp}]
	We prove Theorem \ref{thm:rate_sp} using Theorem \ref{thm:rate}. 
    Following the notation in Theorem \ref{thm:rate}, 
	we define $n = n_1$,
	$\phi_i({\bs\theta})=r_1  \psi_{1\text{-}0, i}(\bs{\theta})$,  and $\tilde \phi(\bs\theta)= \E_N\{ \psi_{0i} ({\bs\theta}) \}$.  
	We verify conditions (i)--(vii) in Theorem \ref{thm:rate}. 
	Note that 
	conditions (i)--(v) and (vii) in Theorem \ref{thm:rate} follow immediately from 
	conditions (i)--(v) and (vii) in Theorem \ref{thm:rate_sp}. 
	Below we consider only conditions (vi) in Theorem \ref{thm:rate}. 
	By definition, for $1\le k\le p$, 
	$
	\Var_N\{ {\phi}_{ik}(\bs{\theta}_N) \} = r_1^2 \Var_N\{  {\psi}_{1\text{-}0, ik}(\bs{\theta}_N) \}
	$. Condition (vi) in Theorem \ref{thm:rate} then follows from condition (vi) in Theorem \ref{thm:rate_sp}. 
	
	From the above and Theorem \ref{thm:rate}, 
	$\hat{\bs{\theta}}_N - \bs{\theta}_N = O_{\Pr}(n_1^{-1/2})$. 
	Because both $r_1 = n_1/N$ and $r_0 = n_0/N$ have positive limits under the setting of Theorem \ref{thm:uniform_converge_ZCP_sp}, we then have 
	$\hat{\bs{\theta}}_N - \bs{\theta}_N = O_{\Pr}(N^{-1/2})$, 
	i.e., Theorem \ref{thm:rate_sp} holds. 
\end{proof}

\begin{proof}[of Theorem \ref{thm:clt_sp}]
	We prove Theorem \ref{thm:clt_sp} using Theorem \ref{thm:clt}. 
	From Theorem \ref{thm:rate_sp} and following the notation in the proof of Theorem \ref{thm:rate_sp}, 
	we can know that 
	conditions (i)--(vii) in Theorem \ref{thm:rate} hold.
	Below we verify conditions (i) and (ii) in Theorem \ref{thm:clt}. 
	
	First, conditions (i) in Theorem \ref{thm:clt} follows immediately from condition (i) in Theorem \ref{thm:clt_sp}. 
	In particular, 
	$
	\Cov_N\{ \phi_i(\bs{\theta}_N) \}
	= 
	r_1^2 \Cov_N\{ \psi_{1\text{-}0, i}(\bs{\theta}_N) \} 
	\rightarrow \tilde{r}_1^{2}\bs{\Sigma}. 
	$
	Second, 
	because 
	$
	\| \phi_j({\bs\theta}_N) - \E_N\{\phi_i({\bs\theta}_N)\} \| 
	= 
	r_1 \| \psi_{1\text{-}0, j}(\bs{\theta}_N) - \E_N\{ \psi_{1\text{-}0, i}(\bs{\theta}_N)\} \| 
	$, 
	condition (ii) in Theorem \ref{thm:clt} follows from condition (ii) in Theorem \ref{thm:clt_sp}.
	
	From the above and Theorem \ref{thm:clt}, we have 
	\begin{align*}
		\sqrt{  n_1N/n_0 } \cdot 
		( \hat{\bs{\theta}}_N - \bs{\theta}_N )
		\converged 
		\mathcal{N}
		\left( 
		\bs{0}, \tilde{r}_1^2\bs{\Gamma}^{-1}  \bs{\Sigma} (\bs{\Gamma}^{-1})^{\top}
		\right)
	\end{align*}
	By Slutsky's theorem, this further implies that 
	\begin{align*}
		\sqrt{N}( \hat{\bs{\theta}}_n - \bs{\theta}_N )
		=
		\sqrt{ n_0/n_1 } \cdot \sqrt{  n_1N/n_0 } \cdot 
		( \hat{\bs{\theta}}_n - \bs{\theta}_N )
		\converged 
		\mathcal{N}
		\left( 
		\bs{0}, \tilde{r}_1\tilde{r}_0\bs{\Gamma}^{-1} \bs{\Sigma} (\bs{\Gamma}^{-1})^\top
		\right). 
	\end{align*}
	Therefore, Theorem \ref{thm:clt_sp} holds. 
\end{proof}

\section{Technical lemmas for variance estimation under completely randomized experiments}\label{sec:var_cre}

\subsection{Lemmas}

\begin{lemma}\label{lemma:consist_of_var_e}
    For $1\le i \le N$, 
    let $f_{ai}(\theta)\in \mathbb{R}$ and $f_{bi}(\theta)\in \mathbb{R}$ be functions of $\theta\in \Theta$, where these functions can depend on $N$ but we make such dependence implicit for notational simplicity. 
	Let 
	$(Z_1, Z_2, \ldots, Z_N) \in \{0,1\}^N$ be a random vector 
	whose probability of taking value $(z_1, z_2, \ldots, z_N)$ is $1/\binom{N}{n}$ if $\sum_{i=1}^N z_i = n$ and zero otherwise, 
	where $n$ is a fixed constant. 
    Let $\{\theta_N\}$ be a sequence from $\Theta$. 
	If following conditions hold as $N\rightarrow \infty$: 
	\begin{itemize}
		\item[(i)] the finite population variances and covariance of $f_{ai}(\bs{\theta}_N )$ and $f_{bi}(\bs{\theta}_N)$ have limits, 
		\item[(ii)] $n/N$ has a limiting value in $(0,1)$, 
		\item[(iii)] $\max_{1\leq j\leq N}[f_{aj}(\bs{\theta}_N )-\E_N\{f_{ai}(\bs{\theta}_N )\}]^2$ and $\max_{1\leq j\leq N}[f_{bj}(\bs{\theta}_N )-\E_N\{f_{bi}(\bs{\theta}_N )\}]^2$ are of order $o(N)$,  
	\end{itemize}
	then 
	$
	\Cov_{N}^1 \{f_{ai}(\bs{\theta}_N),f_{bi}(\bs{\theta}_N) \}
	= \Cov_N\{f_{ai}(\bs{\theta}_N),f_{bi}(\bs{\theta}_N) \} + o_\Pr(1). 
	$
\end{lemma}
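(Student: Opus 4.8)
The plan is to collapse the sampled covariance $\Cov_N^1$ into a sample mean of a single finite-population quantity and then apply the standard Chebyshev argument for simple random sampling. Write $a_i = f_{ai}(\bs{\theta}_N)$ and $b_i = f_{bi}(\bs{\theta}_N)$, with population means $\bar a_N = \E_N(a_i)$, $\bar b_N = \E_N(b_i)$, centered values $\tilde a_i = a_i-\bar a_N$, $\tilde b_i = b_i-\bar b_N$, sampled means $\bar a^1 = \E_N^1(a_i)$, $\bar b^1 = \E_N^1(b_i)$, and deviations $\delta_a = \bar a^1-\bar a_N$, $\delta_b=\bar b^1-\bar b_N$. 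The first step is an algebraic identity: substituting $a_i-\bar a^1 = \tilde a_i-\delta_a$ and $b_i-\bar b^1 = \tilde b_i-\delta_b$ into the definition of $\Cov_N^1$ and using $\sum_{i:Z_i=1}\tilde a_i = n\delta_a$ (and likewise for $b$), the cross terms cancel and I get
\[
\Cov_N^1(a_i,b_i) = \frac{n}{n-1}\Big(\frac1n\sum_{i:Z_i=1}\tilde a_i\tilde b_i - \delta_a\delta_b\Big).
\]
This isolates the one object I must control, the sampled mean of $c_i \equiv \tilde a_i\tilde b_i$.

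Next I dispatch the easy pieces. Under simple random sampling $\bar a^1$ is unbiased for $\bar a_N$ with $\Var(\bar a^1) = (n^{-1}-N^{-1})\Var_N(a_i)$; since $\Var_N(a_i) = O(1)$ by condition (i) and $n^{-1}-N^{-1}\to 0$ by condition (ii), Chebyshev gives $\delta_a = o_{\Pr}(1)$, and similarly $\delta_b = o_{\Pr}(1)$, so $\delta_a\delta_b = o_{\Pr}(1)$; the prefactor $n/(n-1)\to 1$ by condition (ii). It remains to show $\frac1n\sum_{i:Z_i=1}c_i \convergep \bar c_N$, where $\bar c_N = \E_N(c_i) = \tfrac{N-1}{N}\Cov_N(a_i,b_i)$, noting $\bar c_N - \Cov_N(a_i,b_i) = -N^{-1}\Cov_N(a_i,b_i) = o(1)$ because $\Cov_N$ converges by condition (i).

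The main obstacle is the third step: $c_i = \tilde a_i\tilde b_i$ is a product, whereas conditions (i) and (iii) bound $a_i$ and $b_i$ only separately, so $\Var_N(c_i)$ must be bounded by hand. The sampled mean of $c_i$ has variance $(n^{-1}-N^{-1})\Var_N(c_i)$, so by Chebyshev it suffices to prove $\Var_N(c_i) = o(N)$. I bound
\[
\Var_N(c_i) \le \frac{N}{N-1}\,\E_N(\tilde a_i^2\tilde b_i^2) \le \frac{N}{N-1}\Big(\max_{1\le j\le N}\tilde b_j^2\Big)\,\E_N(\tilde a_i^2),
\]
where $\E_N(\tilde a_i^2) = \tfrac{N-1}{N}\Var_N(a_i) = O(1)$ by condition (i) and $\max_j \tilde b_j^2 = o(N)$ by condition (iii); hence $\Var_N(c_i) = o(N)$ and $(n^{-1}-N^{-1})\Var_N(c_i) = o(1)$, giving $\frac1n\sum_{i:Z_i=1}c_i = \bar c_N + o_{\Pr}(1)$.

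Combining, substitute back into the identity: since $\bar c_N$ is bounded and $n/(n-1) = 1+o(1)$,
\[
\Cov_N^1(a_i,b_i) = (1+o(1))\big(\bar c_N + o_{\Pr}(1)\big) = \bar c_N + o_{\Pr}(1) = \Cov_N(a_i,b_i) + o_{\Pr}(1),
\]
which is the claim. Every step except the second-moment bound on the product $\tilde a_i\tilde b_i$ is routine Chebyshev bookkeeping; that bound, leveraging the $o(N)$ maximum in condition (iii) against the bounded second moment in condition (i), is the only delicate point.
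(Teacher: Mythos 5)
Your proof is correct, and it takes a genuinely different route from the paper: the paper disposes of this lemma in a single line by citing \citet[][Lemma A3]{fpclt2017}, whereas you re-derive the result from scratch. Your argument is a self-contained Chebyshev computation, and every step checks out: the algebraic identity $\Cov_N^1(a_i,b_i)=\frac{n}{n-1}\big\{\frac1n\sum_{i:Z_i=1}\tilde a_i\tilde b_i-\delta_a\delta_b\big\}$ is exact (both cross terms collapse to $n\delta_a\delta_b$ via $\sum_{i:Z_i=1}\tilde a_i=n\delta_a$); the deviations $\delta_a,\delta_b$ are $o_{\Pr}(1)$ by the standard simple-random-sampling variance formula together with conditions (i) and (ii); and the only delicate step---consistency of the sampled mean of the product $c_i=\tilde a_i\tilde b_i$---is settled by your bound $\Var_N(c_i)\le\frac{N}{N-1}\big(\max_{1\le j\le N}\tilde b_j^2\big)\,\E_N(\tilde a_i^2)=o(N)\cdot O(1)$, which is exactly where condition (iii) enters, with $(n^{-1}-N^{-1})\Var_N(c_i)=o(1)$ following because $(N-n)/n$ stays bounded under condition (ii). What your route buys is self-containedness and a transparent accounting of where each hypothesis is used; indeed, your argument needs only the $b$-half of condition (iii) (the max deviation of $a$ is never invoked, only its bounded second moment), so it establishes a marginally stronger statement than the one asked for. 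What the paper's route buys is brevity and reuse: Li and Ding's Lemma A3 packages this same computation once, and the paper simply points to it. Either proof is acceptable.
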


\begin{lemma}\label{lemma:consist_of_var_e_est_para}
    Under the same setting as in Lemma \ref{lemma:consist_of_var_e}, define 
    $$
		\Delta(f_{ai}, \bs \theta_N, \delta)=\sup _{\bs\theta\in \Theta:\|\bs\theta-\bs\theta_N\|\leq\delta}|
		f_{ai}(\bs\theta)-f_{ai}(\bs\theta' )|,
    $$
    and define analogously $\Delta(f_{ai}, \bs \theta_N, \delta)$. 
    If conditions (i)--(iii) in Lemma \ref{lemma:consist_of_var_e} hold, 
    and the following conditions hold: 
    \begin{enumerate}[label=(\roman*)]
        \item as $\delta \rightarrow 0$, $\sup_N \E_N\{\Delta^2(f_{ai}, \bs \theta_N, \delta)\} \rightarrow 0$, and $\sup_N \E_N\{\Delta^2(f_{bi}, \bs \theta_N, \delta)\} \rightarrow 0$,  
        \item $\hat{\theta}_N - \theta_N = o_{\Pr}(1)$ as $N\rightarrow \infty$, 
    \end{enumerate}
    then 
    $
	\Cov_{N}^1 \{f_{ai}(\hat{\bs{\theta}}_N), f_{bi}(\hat{\bs{\theta}}_N) \}
	= \Cov_N\{f_{ai}(\bs{\theta}_N),f_{bi}(\bs{\theta}_N) \} + o_\Pr(1). 
	$
\end{lemma}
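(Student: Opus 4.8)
The plan is to reduce the claim to Lemma \ref{lemma:consist_of_var_e} plus a perturbation argument. Writing $a_i(\bs\theta)=f_{ai}(\bs\theta)$ and $b_i(\bs\theta)=f_{bi}(\bs\theta)$ for brevity, I would first split the target difference by the triangle inequality as
\begin{align*}
\Cov_N^1\{a_i(\hat{\bs\theta}_N), b_i(\hat{\bs\theta}_N)\} - \Cov_N\{a_i(\bs\theta_N), b_i(\bs\theta_N)\}
& = \big[\Cov_N^1\{a_i(\hat{\bs\theta}_N), b_i(\hat{\bs\theta}_N)\} - \Cov_N^1\{a_i(\bs\theta_N), b_i(\bs\theta_N)\}\big] \\
& \quad + \big[\Cov_N^1\{a_i(\bs\theta_N), b_i(\bs\theta_N)\} - \Cov_N\{a_i(\bs\theta_N), b_i(\bs\theta_N)\}\big].
\end{align*}
The second bracket is exactly the conclusion of Lemma \ref{lemma:consist_of_var_e}, whose hypotheses (i)--(iii) are assumed, hence it is $o_\Pr(1)$. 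The work therefore concentrates on showing the first bracket, which measures the effect of replacing $\hat{\bs\theta}_N$ by $\bs\theta_N$, is $o_\Pr(1)$.

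For the first bracket I would exploit the translation-invariant pairwise representation of the sample covariance,
$$
\Cov_N^1\{a_i(\bs\theta), b_i(\bs\theta)\}
= \frac{1}{2n(n-1)} \sum_{i,j:\, Z_i = Z_j = 1} \{a_i(\bs\theta)-a_j(\bs\theta)\}\{b_i(\bs\theta)-b_j(\bs\theta)\},
$$
which eliminates the sample means and thus avoids controlling the (not necessarily bounded) finite-population means of $a_i(\bs\theta_N)$ and $b_i(\bs\theta_N)$. Fixing $\delta>0$, I would work on the event $\mathcal{E}_\delta = \{\|\hat{\bs\theta}_N-\bs\theta_N\|\le \delta\}$, which has probability tending to one by condition (ii). On $\mathcal{E}_\delta$ the increments satisfy $|a_i(\hat{\bs\theta}_N)-a_i(\bs\theta_N)| \le \Delta(f_{ai},\bs\theta_N,\delta)$ uniformly in $i$ (and likewise for $b$), since $\Delta$ is the supremum over the whole $\delta$-ball. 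Expanding the difference of the two bilinear forms into the three pieces $d^a_{ij}B_{ij}$, $A_{ij}d^b_{ij}$, $d^a_{ij}d^b_{ij}$, where $A_{ij},B_{ij}$ are the pairwise differences at $\bs\theta_N$ and the perturbations obey $|d^a_{ij}|\le \Delta(f_{ai},\bs\theta_N,\delta)+\Delta(f_{aj},\bs\theta_N,\delta)$, and applying the Cauchy--Schwarz inequality over the $n^2$ sampled pairs, I would bound the first bracket by a finite sum of products of the form
$$
\Big(\frac{1}{n}\sum_{i:Z_i=1}\Delta^2(f_{ai},\bs\theta_N,\delta)\Big)^{1/2}\big(\Cov_N^1\{b_i(\bs\theta_N), b_i(\bs\theta_N)\}\big)^{1/2},
$$
the symmetric term with $a$ and $b$ interchanged, and a cross term involving both $\Delta$-averages.

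It then remains to control each factor. By the exchangeability of simple random sampling, $\E\{n^{-1}\sum_{i:Z_i=1}\Delta^2(f_{ai},\bs\theta_N,\delta)\} = \E_N\{\Delta^2(f_{ai},\bs\theta_N,\delta)\} \le \sup_N \E_N\{\Delta^2(f_{ai},\bs\theta_N,\delta)\}$, so Markov's inequality gives $n^{-1}\sum_{i:Z_i=1}\Delta^2(f_{ai},\bs\theta_N,\delta) = O_\Pr\big(\sup_N \E_N\{\Delta^2(f_{ai},\bs\theta_N,\delta)\}\big)$, and similarly for $b$. The sample variances $\Cov_N^1\{a_i(\bs\theta_N),a_i(\bs\theta_N)\}$ and $\Cov_N^1\{b_i(\bs\theta_N),b_i(\bs\theta_N)\}$ are $O_\Pr(1)$ by Lemma \ref{lemma:consist_of_var_e} applied with $f_a=f_b$, using the boundedness of the limiting finite-population variances in condition (i). Consequently, on $\mathcal{E}_\delta$ the first bracket is $O_\Pr(\rho(\delta))$, where $\rho(\delta)$ is built from $\sup_N\E_N\{\Delta^2(f_{ai},\bs\theta_N,\delta)\}^{1/2}$ and $\sup_N\E_N\{\Delta^2(f_{bi},\bs\theta_N,\delta)\}^{1/2}$, which condition (i) of this lemma forces to zero as $\delta\to 0$. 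A standard $\varepsilon$--$\eta$ argument---first choose $\delta$ so that the deterministic part of the bound is below any prescribed level, then invoke $\Pr(\mathcal{E}_\delta)\to 1$ together with the $O_\Pr$ controls---shows the first bracket is $o_\Pr(1)$, and combined with the Lemma \ref{lemma:consist_of_var_e} bound for the second bracket this proves the claim.

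The main obstacle I anticipate is the bilinearity of the covariance: the estimation error in $\hat{\bs\theta}_N$ enters both arguments, so the expansion produces cross terms in which a perturbed factor such as $b_i(\hat{\bs\theta}_N)$ must be re-expressed through $b_i(\bs\theta_N)$ and its increment, all without assuming bounded finite-population means. I expect the pairwise-difference representation to be the decisive step that neutralizes both difficulties simultaneously, reducing the entire estimate to Cauchy--Schwarz against sample second moments that are already under control.
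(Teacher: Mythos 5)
Your proposal is correct and follows essentially the same route as the paper's proof: both split the error into a perturbation term (replacing $\hat{\bs\theta}_N$ by $\bs\theta_N$ inside $\Cov_N^1$) and a sampling term handled by Lemma \ref{lemma:consist_of_var_e}, then control the perturbation by Cauchy--Schwarz against sample second moments, with the $\Delta^2$-averages over the sampled units bounded via Markov's inequality, the exchangeability of simple random sampling, and the consistency of $\hat{\bs\theta}_N$ in an $\varepsilon$--$\eta$ argument. The only difference is bookkeeping: the paper expands the sample covariance around centered residuals $\xi_{aj}$ at $\bs\theta_N$ rather than using your pairwise-difference representation, and both devices equally avoid any need for bounded finite-population means.
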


\subsection{Proof of the lemmas}

\begin{proof}[of Lemma \ref{lemma:consist_of_var_e}]
	Lemma \ref{lemma:consist_of_var_e} follows immediately from \citet[][Lemma A3]{fpclt2017}. 
\end{proof}

\begin{proof}[of Lemma \ref{lemma:consist_of_var_e_est_para}]
    For $1\le j \le N$ with $Z_j=1$, define
    \begin{align*}
        \xi_{aj} & = 
        [ f_{aj}(\hat{\bs{\theta}}_N) - \E_N^1 \{f_{ai}(\hat{\bs{\theta}}_N)\} ]
        - [ f_{aj}(\bs{\theta}_N) - \E_N^1 \{f_{ai}(\bs{\theta}_N)\} ]
        \\
        & = 
        \{ f_{aj}(\hat{\bs{\theta}}_N) - f_{aj}(\bs{\theta}_N) \}
        - \E_N^1 \{f_{ai}(\hat{\bs{\theta}}_N) - f_{ai}(\bs{\theta}_N)\},  
    \end{align*}
    and define analogously $\xi_{bj}$. 
    We can then write $\Cov_{N}^1 \{f_{ai}(\hat{\bs{\theta}}_N), f_{bi}(\hat{\bs{\theta}}_N) \}$ as 
    \begin{align*}
        & \quad \ \Cov_{N}^1 \{f_{ai}(\hat{\bs{\theta}}_N), f_{bi}(\hat{\bs{\theta}}_N) \}
        \\
        & = 
        \frac{1}{n-1}
        \sum_{j:Z_j=1} 
        [f_{aj}(\bs{\theta}_N) - \E_N^1 \{f_{ai}(\bs{\theta}_N)\} + \xi_{aj}]
        [f_{bj}(\bs{\theta}_N) - \E_N^1 \{f_{bi}(\bs{\theta}_N)\} + \xi_{bj}]
        \\
        & = \Cov_{N}^1 \{f_{ai}({\bs{\theta}}_N), f_{bi}({\bs{\theta}}_N) \} 
        + 
        \frac{1}{n-1}
        \sum_{j:Z_j=1} 
        \xi_{aj}
        [f_{bj}(\bs{\theta}_N) - \E_N^1 \{f_{bi}(\bs{\theta}_N)\}]
        \\
        & \quad \ +
        \frac{1}{n-1} \sum_{j:Z_j=1} 
        \xi_{bj} [f_{aj}(\bs{\theta}_N) - \E_N^1 \{f_{ai}(\bs{\theta}_N)\} ]
        + 
        \frac{1}{n-1} \sum_{j:Z_j=1} \xi_{aj} \xi_{bj}. 
    \end{align*}
    By the Cauchy–Schwarz inequality, we have 
    \begin{align*}
        \Big( \frac{1}{n-1} \sum_{j:Z_j=1} 
        \xi_{bj} [f_{aj}(\bs{\theta}_N) - \E_N^1 \{f_{ai}(\bs{\theta}_N)\} ] \Big)^2 
        & \le 
        \frac{n}{n-1} \cdot\E_N^1 (\xi_{bi}^2) \cdot
        \Var_N^1 \{ f_{ai}( \bs{\theta}_N) \}, \\
        \Big( \frac{1}{n-1} \sum_{j:Z_j=1} 
        \xi_{aj} [f_{bj}(\bs{\theta}_N) - \E_N^1 \{f_{bi}(\bs{\theta}_N)\} ] \Big)^2 
        & \le 
        \frac{n}{n-1} \cdot \E_N^1 (\xi_{ai}^2) \cdot
        \Var_N^1 \{ f_{bi}( \bs{\theta}_N) \}, \\
        \Big( \frac{1}{n-1} \sum_{j:Z_j=1} \xi_{aj} \xi_{bj} \Big)^2 
        & \le 
        \frac{n^2}{(n-1)^2} \cdot  \E_N^1 (\xi_{ai}^2) \cdot \E_N^1 (\xi_{bi}^2)
    \end{align*}
    From Lemma \ref{lemma:consist_of_var_e}, we can know that
    $\Var_N^1 \{ f_{ai}( \bs{\theta}_N) \} = \Var_N \{ f_{ai}( \bs{\theta}_N) \} + o_\Pr(1)$, 
    $\Var_N^1 \{ f_{bi}( \bs{\theta}_N) \} = \Var_N \{ f_{bi}( \bs{\theta}_N) \} + o_\Pr(1)$,
    and 
    $
	\Cov_{N}^1 \{f_{ai}(\bs{\theta}_N),f_{bi}(\bs{\theta}_N) \}
	= \Cov_N\{f_{ai}(\bs{\theta}_N),f_{bi}(\bs{\theta}_N) \} + o_\Pr(1).
	$
	Therefore, to prove Lemma \ref{lemma:consist_of_var_e_est_para}, it suffices to prove that $\E_N^1 (\xi_{ai}^2) = o_{\Pr}(1)$ and $\E_N^1 (\xi_{bi}^2) = o_{\Pr}(1)$. 
    By symmetry, below we prove only $\E_N^1 (\xi_{ai}^2) = o_{\Pr}(1)$. 
    
    Let $\zeta_N = \| \hat{\theta}_N - \theta_N \|$. 
    By definition, 
    \begin{align*}
        |\xi_{aj}| \le 
        |f_{aj}(\hat{\bs{\theta}}_N) - f_{aj}(\bs{\theta}_N) |
        + \E_N^1 \{|f_{ai}(\hat{\bs{\theta}}_N) - f_{ai}(\bs{\theta}_N)|\} 
        \le 
        \Delta(f_{aj}, \bs \theta_N, \zeta_N) + \E_N^1 \{\Delta(f_{ai}, \bs \theta_N, \zeta_N)\}. 
    \end{align*}
    This then implies that $\xi_{aj}^2 \le 2 \Delta^2(f_{aj}, \bs \theta_N, \zeta_N) + 2 [\E_N^1 \{\Delta(f_{ai}, \bs \theta_N, \zeta_N)\}]^2$, and consequently  
    \begin{align}\label{eq:E_N1_xi_square}
        \E_N^1(\xi_{ai}^2)
        & \le 2 \E_N^1 \{\Delta^2(f_{ai}, \bs \theta_N, \zeta_N)\} + 2 [\E_N^1 \{\Delta(f_{ai}, \bs \theta_N, \zeta_N)\}]^2 
        \le 4 \E_N^1 \{\Delta^2(f_{ai}, \bs \theta_N, \zeta_N)\}. 
    \end{align}
    From condition (i), 
    for any $\varepsilon>0$ and $\eta \in (0,1)$, there exists $\delta>0$ such that 
    $\E_N\{\Delta^2(f_{ai}, \bs \theta_N, \delta)\} < \eta \varepsilon$ for all $N$. 
    We then have 
    \begin{align*}
        \Pr\big( \E_N^1 \{\Delta^2(f_{ai}, \bs \theta_N, \zeta_N)\}> \varepsilon \big)
        & \le 
        \Pr\big( \E_N^1 \{\Delta^2(f_{ai}, \bs \theta_N, \zeta_N)\}> \varepsilon, \zeta_N\le \delta \big) 
        +
        \Pr(\zeta_N > \delta)
        \\
        & \le \Pr\big( \E_N^1 \{\Delta^2(f_{ai}, \bs \theta_N, \delta)\}> \varepsilon \big) 
        +
        \Pr(\| \hat{\theta}_N - \theta_N \| > \delta)\\
        & 
        \le \varepsilon^{-1} \cdot \E_N \{\Delta^2(f_{ai}, \bs \theta_N, \delta)\} +
        \Pr(\| \hat{\theta}_N - \theta_N \| > \delta)
        \\
        & \le \eta + \Pr(\| \hat{\theta}_N - \theta_N \| > \delta),
    \end{align*}
    where the second last inequality holds due to the Markov inequality and the fact that $\E[\E_N^1 \{\Delta^2(f_{ai}, \bs \theta_N, \delta)\}] = \E_N \{\Delta^2(f_{ai}, \bs \theta_N, \delta)\}$, implied by the property of simple random sampling. 
    From condition (ii), 
    letting $N\rightarrow \infty$, 
    $
        \limsup_{N\rightarrow \infty}\Pr\big( \E_N^1 \{\Delta^2(f_{ai}, \bs \theta_N, \zeta_N)\}> \varepsilon \big) \le \eta. 
    $
    Because $\eta$ can be arbitrarily small, we must have $\E_N^1 \{\Delta^2(f_{ai}, \theta_N, \zeta_N)\} = o_{\Pr}(1)$. 
    From \eqref{eq:E_N1_xi_square}, this further implies that $\E_N^1(\xi_{ai}^2) = o_{\Pr}(1)$. 
    
    From the above, Lemma \ref{lemma:consist_of_var_e_est_para} holds. 
\end{proof}

\section{Proofs of Theorems \ref{thm:theta_clt} and \ref{thm:m_est_variance_estimate} and Corollary \ref{cor:wald_ci_coverage}}\label{sec:proof_thm12_cor1}

\begin{remark}\label{rmk:converge_dist}
 {\rev

    We first give the technical details for the comment before Theorem \ref{thm:theta_clt}. 
    Consider any two sequences of random vectors or distributions $A_N$ and $B_N$. 
    Below we show that, if the L\'evy--Prokhorov distance between the probability measures induced by $A_N$ and $B_N$ converges to zero as $N \rightarrow \infty$, 
    and $B_N$ follows mean-zero Gaussian distribution with the eigenvalues of
    its covariance matrix uniformly bounded away from $0$ and $\infty$ for all $N$, 
    then the Kolmogorov distance between the distributions of $A_N$ and $B_N$ converges to zero as $N \rightarrow \infty$.   

    Consider any given subsequence $\{N_s\}$.
    By the Bolzano–Weierstrass theorem, there exists a further subsequence $\{N_t\} \subset \{N_s\}$ such that 
    $\Cov(B_{N_t})$ has a nonsingular and finite limit as $t \rightarrow \infty$.  
    By the property of L\'evy--Prokhorov distance \citep[see, e.g.,][Theorem 4.2]{vG2003}, we know that, as $t\rightarrow\infty$, both $A_{N_t}$ and $B_{N_t}$ converge weakly to the Gaussian distribution with mean zero and covariance matrix $\lim_{t\rightarrow\infty}\Cov(B_{N_t})$. 
    From the multivariate P\'olya's theorem \citep{Chandra1989}, we know that the Kolmogorov distance between the distributions of $A_{N_t}$ and $B_{N_t}$ converges to zero as $t \rightarrow \infty$. 
    From the above, the Kolmogorov distance between the distributions of $A_{N}$ and $B_{N}$ must converge to zero as $N\rightarrow\infty$. 
    \hfill $\Box$}
\end{remark}

\begin{proof}[of Theorem \ref{thm:theta_clt}]
    Let $\lpdist_N = \lpdist(\sqrt{N}( \hat{\bs{\theta}}_N - \bs{\theta}_N), \mathcal{N}(\bs{0}, \bs{\Sigma}_N))$  
    denote the 
    L\'evy--Prokhorov distance
    between the probability measures induced by $\sqrt{N}( \hat{\bs{\theta}}_N - \bs{\theta}_N)$ and $\mathcal{N}(\bs{0}, \bs{\Sigma}_N)$. 
    To prove that 
    $\lpdist_N$ 
    converges to zero as $N\rightarrow \infty$, it suffices to prove that for any subsequence $\{N_s\}$, there exists a further subsequence $\{N_{t}\}\subset \{N_s\}$ such that $\lpdist_{N_{t}}$ converges to zero.

    Consider any given subsequence $\{N_{s}\}$. 
    From Conditions \ref{cond:stable_limits}(i) and (ii), 
    by the Bolzano–Weierstrass theorem,
    there exists a further subsequence $\{N_{t}\} \subset \{N_{s}\}$ 
    such that $r_1$, $r_0$, $\Cov_{N_{t}} \{ \psi_{1\text{-}0,i}(\bs{\theta}_{N_{t}}) \}$
    and  $\dot{\Psi}_{N_{t}}(\theta_{N_{t}})$ have limits. 
    Let $\tilde{r}_1, \tilde{r}_0$, $\bs{\Sigma}$ and $\bs{\Gamma}$ denote the 
    limit of $r_1, r_0$, $\Cov_{N_{t}} \{ \psi_{1\text{-}0,i}(\bs{\theta}_{N_{t}}) \}$
    and  $\dot{\Psi}_{N_{t}}(\theta_{N_{t}})$, respectively, as $t\rightarrow \infty$. 
    Below we prove that, along the subsequence $\{N_{t}\}$, 
    both $\sqrt{N_t}( \hat{\bs{\theta}}_{N_t} - \bs{\theta}_{N_t})$ and $\mathcal{N}(\bs{0}, \bs{\Sigma}_{N_{t}})$ converge weakly to the same distribution $\mathcal{N}(\bs{0}, \bs{\Sigma}_0)$, where $  \bs{\Sigma}_0 = \tilde{r}_1 \tilde{r}_0\bs{\Gamma}^{-1} \bs{\Sigma} (\bs{\Gamma}^{-1})^\top$.  We prove this using Theorem \ref{thm:clt_sp}.
    It suffices to verify all conditions in Theorem \ref{thm:clt_sp} along the subsequence $\{N_{t}\}$, which include conditions (i)--(vii) in Theorem \ref{thm:rate_sp} and conditions (i) and (ii) in Theorem \ref{thm:clt_sp}. 
	We can verify that 
	\begin{itemize}%
		\item[(a)] Condition \ref{cond:population_minimum} implies condition (i) in Theorem \ref{thm:rate_sp}; 

            \item[(b)] Condition      \ref{cond:empirical_minimum} implies conditions (ii) and (iv) in Theorem \ref{thm:rate_sp};

		\item[(c)] Condition \ref{cond:continuous} implies conditions (iii) and (v) in Theorem \ref{thm:rate_sp};
		
		\item[(d)] Condition \ref{cond:stable_limits} and the property of the subsequence imply conditions (vi) and (vii) in Theorem \ref{thm:rate_sp}
		and conditions (i) and (ii) in Theorem \ref{thm:clt_sp}.
        \end{itemize}
    Most of these are obvious. Below we only verify condition (v) in Theorem \ref{thm:rate_sp} using Condition \ref{cond:continuous}. 
    For any $\delta>0$, any $N$ and any $\theta$ satisfying $\|\theta - \theta_N\|\le \delta$, 
    \begin{align*}
        \| \dot{\Psi}_N(\bs\theta)  - \dot{\Psi}_N(\bs\theta_N) \| 
        & \le  
        \| 
        r_1 \E_N\{\dot\psi_{1i}(\theta) - \dot\psi_{1i}(\theta_N)\} 
        + 
        r_0 \E_N\{\dot\psi_{0i}(\theta) - \dot\psi_{0i}(\theta_N)\} 
        \|\\
        & \le 
        \E_N\{\| \dot\psi_{1i}(\theta) - \dot\psi_{1i}(\theta_N)\|\}
        + 
        \E_N\{\|\dot\psi_{0i}(\theta) - \dot\psi_{0i}(\theta_N) \|\} 
        \\
        & \le \E_N \{\Delta(\dot\psi_{1i}, \theta_N, \delta)\} + \E_N \{\Delta(\dot\psi_{0i}, \theta_N, \delta)\}.
    \end{align*}
    
    Thus, 
    $$
    \sup_N \sup_{\bs\theta \in \Theta:\|\bs\theta-\bs\theta_N\|\leq\delta} \| \dot{\Psi}_N(\bs\theta)  - \dot{\Psi}_N(\bs\theta_N) \| \le \sup_N \E_N \{\Delta(\dot\psi_{1i}, \theta_N, \delta)\} + \sup_N \E_N \{\Delta(\dot\psi_{0i}, \theta_N, \delta)\},$$ 
    which converges to $0$ as $\delta \rightarrow 0$ under Condition \ref{cond:continuous}. 
    By Theorem \ref{thm:clt_sp}, 
    we then have $\sqrt{N_t}( \hat{\bs{\theta}}_{N_t} - \bs{\theta}_{N_t})$
    converges weakly to $\mathcal{N}(\bs{0}, \bs{\Sigma}_0)$. 
    By the construction of the subsequence $\{N_t\}$, we obviously have $\bs{\Sigma}_{N_{t}} \rightarrow \bs{\Sigma}_0$, and thus $\mathcal{N}(\bs{0}, \bs{\Sigma}_{N_{t}})$ converges weakly to $\mathcal{N}(\bs{0}, \bs{\Sigma}_0)$. 
    These then imply that $\pi_{N_t} \rightarrow 0$ as $t\rightarrow 0$. 
 
	From the above, Theorem \ref{thm:theta_clt} holds. 
\end{proof}

\begin{proof}[of Theorem \ref{thm:m_est_variance_estimate}]    
    By similar logic as the proof of Theorem \ref{thm:theta_clt}, 
    it suffices to show that $\hat{\bs{\Sigma}}_N = \tilde{\bs{\Sigma}}_N  + o_{\Pr}(1)$ along any subsequence $\{N_t\}$ such that $r_z$, $\Cov_N\{\psi_{zi}(\bs{\theta}_N)\}$ , $\Cov_N\{\psi_{1i}(\bs{\theta}_N),\psi_{0i}(\bs{\theta}_N)\}$ and $\dot \Psi_N(\bs{\theta}_N)$ have limits, $z=0,1$.  Note that all these quantities or their entries are bounded under Condition \ref{cond:stable_limits}.
 
        First, we prove that $\dot{\hat{\Psi}}_N(\hat{\bs{\theta}}_N) = \dot \Psi_N(\bs{\theta}_N) + o_{\Pr}(1)$. 
	Note that 
	\begin{align}\label{eq:var_est_ddot_M}
		&\big\| \dot{\hat{\Psi}}_N(\hat{\bs{\theta}}_N) - \dot \Psi_N(\bs{\theta}_N) \big\| \nonumber \\
		 \le &
		\big\| \dot{\hat{\Psi}}_N(\hat{\bs{\theta}}_N) - \dot \Psi_N(\hat{\bs{\theta}}_N) \big\|
		+ 
		\big\| \dot \Psi_N(\hat{\bs{\theta}}_N) - \dot \Psi_N(\bs{\theta}_N) \big\|
		\nonumber
		\\
		\le &
		\sup_{\bs{\theta} \in \Theta: \|\bs{\theta} - \bs{\theta}_N\| \leq \|\hat{\bs{\theta}}_N - \bs{\theta}_N\|} \big\| \dot{\hat{\Psi}}_N(\bs{\theta}) - \dot \Psi_N(\bs{\theta}) \big\| + \sup_{\bs{\theta} \in \Theta: \|\bs{\theta} - \bs{\theta}_N\| \leq \|\hat{\bs{\theta}}_N - \bs{\theta}_N\|} \big\| \dot \Psi_N(\bs{\theta}) - \dot \Psi_N(\bs{\theta}_N) \big\|. 
	\end{align}
	By the same logic as the second part in the proof of Theorem \ref{thm:rate}, in particular the proof for the two terms in \eqref{eq:bound_H},  
    both terms in \eqref{eq:var_est_ddot_M} are of order $o_{\Pr}(1)$ under Conditions  \ref{cond:continuous} and \ref{cond:empirical_minimum}. 
	Thus, $\dot{\hat{\Psi}}_N(\hat{\bs{\theta}}_N) - \dot \Psi_N(\bs{\theta}_N) = o_{\Pr}(1)$. 
	
	Second, we prove that $\Cov_{N}^z \{\psi_{z,i}(\hat{\bs{\theta}}_N) \}= \Cov_{N}\{\psi_{z,i}(\bs{\theta}_N) \} + o_{\Pr}(1)$ for $z=0,1$ using Lemma \ref{lemma:consist_of_var_e_est_para}. 
	It suffices to show that 
	conditions (i)--(iii) in Lemma \ref{lemma:consist_of_var_e} and 
	conditions (i)--(ii) in Lemma \ref{lemma:consist_of_var_e_est_para} hold for $f_{ai}( \bs{\theta}) =  \psi_{z,ik}(\bs{\theta})$ and $f_{bi}(\bs{\theta}) = \psi_{z',il}(\bs{\theta})$ for any $0\le z,z' \le 1$ and $1\le k, l \le p$. 
	We can verify that 
	Condition \ref{cond:stable_limits} and the property of the subsequence imply conditions (i)-(iii) in Lemma \ref{lemma:consist_of_var_e}, 
	Condition \ref{cond:continuous} implies condition (i) in Lemma \ref{lemma:consist_of_var_e_est_para}, 
	and 
	Condition \ref{cond:empirical_minimum} implies condition (ii) in Lemma \ref{lemma:consist_of_var_e_est_para}. 
	
	From the above, we can know that
	\begin{align*}
		\hat{\bs{\Sigma}}_N 
		& = r_1 r_0 \{\dot{\hat{\Psi}}_N(\hat{\bs{\theta}}_N)\}^{-1} 
		\big(
		r_1^{-1} {\Cov}_N^1\{\psi_{1i}(\hat{\bs{\theta}}_N) \} + 
		r_0^{-1} {\Cov}_N^0 \{\psi_{0i}(\hat{\bs{\theta}}_N) \}
		\big)
		\{\dot{\hat{\Psi}}_N(\hat{\bs{\theta}}_N)^\top\}^{-1}
		\\
		& = 
		r_1 r_0 \{\dot{\Psi}_N(\bs{\theta}_N)\}^{-1} 
		\Big(
		r_1^{-1} \Cov_{N}\{\psi_{1i}(\bs{\theta}_N) \} + 
		r_0^{-1} \Cov_{N}\{\psi_{0i}(\bs{\theta}_N) \}
		\Big)
		\{\dot{\Psi}_N(\bs{\theta}_N)^\top\}^{-1} + o_{\Pr}(1)
		\\
		& = \tilde{\bs{\Sigma}}_N + o_{\Pr}(1), 
	\end{align*}
	where $\tilde{\bs{\Sigma}}_N \equiv r_1 r_0 \{\dot{\Psi}_N(\bs{\theta}_N)\}^{-1} 
		(
		r_1^{-1} \Cov_{N}\{\psi_{1i}(\bs{\theta}_N) \} + 
		r_0^{-1} \Cov_{N}\{\psi_{0i}(\bs{\theta}_N) \}
		)
		\{\dot{\Psi}_N(\bs{\theta}_N)\}^{-1}$. 
	Because 
	\begin{align*}
		& \quad \ r_1^{-1} \Cov_{N}\{\psi_{1i}(\bs{\theta}_N) \} + 
		r_0^{-1} \Cov_{N}\{\psi_{0i}(\bs{\theta}_N) \} - 
		\Cov_N\{ \psi_{1\text{-}0, i}(\bs{\theta}_N) \}
		\\
		& = 
		\frac{r_0}{r_1} \Cov_{N}\{\psi_{1i}(\bs{\theta}_N) \} + \frac{r_1}{r_0} \Cov_{N}\{\psi_{0i}(\bs{\theta}_N) \}  
		+ \Cov_N\{ \psi_{1i}(\bs{\theta}_N), \psi_{0i}(\bs{\theta}_N)\}
		+ \Cov_N\{  \psi_{0i}(\bs{\theta}_N), \psi_{1i}(\bs{\theta}_N)\}
		\\
		& = 
		\Cov_N\left\{ \sqrt{\frac{r_0}{r_1}} \  \psi_{1i}(\bs{\theta}_N) + \sqrt{\frac{r_1}{r_0}} \ \psi_{0i}(\bs{\theta}_N) \right\} \ge 0,
	\end{align*}
	we must have $\tilde{\bs{\Sigma}}_N \ge \bs{\Sigma}_N$. 
	Therefore, Theorem \ref{thm:m_est_variance_estimate} holds. 
\end{proof}

{\rev
\begin{proof}[of Corollary \ref{cor:wald_ci_coverage}]  

By similar logic as the proof of Theorem \ref{thm:theta_clt}, 
it suffices to show that for any $v \in \mathbb{R}^{p\times m}$ and $\alpha \in (0,1)$, 
\begin{align*}
\liminf_{t \to \infty} \Pr \big[ N_t \{ v^\top(\hat{\bs{\theta}}_{N_t} - \bs{\theta}_{N_t}) \}^\top   (v^\top \hat{\Sigma}_{N_t} v)^{-1} \{ v^\top(\hat{\bs{\theta}}_{N_t} - \bs{\theta}_{N_t}) \} 
\le \chi^2_{m, 1-\alpha} ) \big] \geq 1 - \alpha,
\end{align*}
along any subsequence $\{N_t\}$ such that $\Sigma_N$ and $\tilde{\Sigma}_N$ have limits $\Sigma_\infty$ and $\tilde{\Sigma}_\infty$, respectively. 
Because $\tilde{\Sigma}_N \geq \Sigma_N$ for all $N$ and $\sigma_{\min}(\tilde{\Sigma}_N)$ is bounded away from zero, we can know that $\tilde{\Sigma}_\infty \geq \Sigma_\infty$ and $\tilde{\Sigma}_\infty$ is positive-definite.

From Theorem \ref{thm:theta_clt} and Theorem \ref{thm:m_est_variance_estimate}, we can know that, as $t\rightarrow \infty$, 
\begin{align*}
\sqrt{N_t} v^\top(\hat{\bs{\theta}}_{N_t} - \bs{\theta}_{N_t}) \converged \mathcal{N}(0, v^\top \Sigma_\infty v) \quad \text{and} \quad v^\top \hat{\Sigma}_{N_t} v \convergep v^\top \tilde{\Sigma}_\infty v.
\end{align*}
By Slutsky's theorem, we then have, as $t\rightarrow \infty$, 
\begin{align*}
N_t \{ v^\top(\hat{\bs{\theta}}_{N_t} - \bs{\theta}_{N_t}) \}^\top   (v^\top \hat{\Sigma}_{N_t} v)^{-1} \{ v^\top(\hat{\bs{\theta}}_{N_t} - \bs{\theta}_{N_t}) \}
& \converged \epsilon^\top (v^\top \Sigma_\infty v)^{1/2} (v^\top \tilde{\Sigma}_\infty v)^{-1} (v^\top \Sigma_\infty v)^{1/2} \epsilon \leq \epsilon^T\epsilon,
\end{align*}
where $(v^\top \Sigma_\infty v)^{1/2}$ denotes the  positive semidefinite square root of $v^\top \tilde{\Sigma}_\infty v$, 
$\epsilon$ is an $m$ dimensional standard Gaussian random vector, 
and the last inequality holds due to the fact that $\tilde{\Sigma}_\infty \geq \Sigma_\infty$.
Note that
the distribution function of $\epsilon^\top (v^\top \Sigma_\infty v)^{1/2} (v^\top \tilde{\Sigma}_\infty v)^{-1} (v^\top \Sigma_\infty v)^{1/2} \epsilon$ must be continuous at  
$\chi^2_{m, 1-\alpha}>0$.  
Thus, the coverage probability of the confidence set satisfies
\begin{align*}
&\quad \ \liminf_{t \to \infty} \Pr \big[ N_t \{ v^\top(\hat{\bs{\theta}}_{N_t} - \bs{\theta}_{N_t}) \}^\top   (v^\top \hat{\Sigma}_{N_t} v)^{-1} \{ v^\top(\hat{\bs{\theta}}_{N_t} - \bs{\theta}_{N_t}) \} 
\le \chi^2_{m, 1-\alpha} ) \big]  \\
& = \Pr \{ \epsilon^\top (v^\top \Sigma_\infty v)^{\frac{1}{2}} (v^\top \tilde{\Sigma}_\infty v)^{-1} (v^\top \Sigma_\infty v)^{\frac{1}{2}} \epsilon 
\le \chi^2_{m, 1-\alpha} ) \} \geq \Pr \{ \epsilon^\top  \epsilon \le \chi^2_{m, 1-\alpha} ) \} = 1 - \alpha.
\end{align*}

From the above, Corollary \ref{cor:wald_ci_coverage} holds. 
\end{proof}
}

\section{Proofs of Theorems \ref{thm:clt_mae},  \ref{thm:MA_ci} and \ref{thm:ma_optimal}}\label{sec:proof_thm345}

\subsection{Proof of Theorem \ref{thm:clt_mae}}
To prove Theorem \ref{thm:clt_mae}, we need the following three lemmas. 
\begin{lemma}\label{lemma:average_true_hat}
    Let $m_1(\theta), m_2(\theta), \ldots, m_N(\theta)$ be functions of $\theta\in \Theta$, 
    where the functions can also depend on $N$ but we make this dependence implicit for notational convenience, 
    and 
    $(Z_1, Z_2, \ldots, Z_N) \in \{0,1\}^N$ be a random vector 
	whose probability of taking value $(z_1, z_2, \ldots, z_N)$ is $n_1!n_0!/N!$ if $\sum_{i=1}^N z_i = n$ and zero otherwise, 
	where $n$ is a fixed constant.
	Define 
    $$
		\Delta(m_{i}, \theta_N, \delta)=\sup _{\bs\theta \in \Theta:\|\bs\theta-\bs\theta_N \|\leq\delta}|
		m_{i}(\bs\theta)-m_{i}(\bs\theta_N)|.
	$$
	If the following conditions hold: 
	\begin{enumerate}[label=(\roman*)]
	    \item $\sup_N \E_N \{\Delta(m_i, \theta_N, \delta)\} \rightarrow 0$ as $\delta \rightarrow 0$,  	
	    
	    \item $\hat{\theta}_N - \theta_N = o_{\Pr}(1)$, 
	\end{enumerate}
	then $\E_N^1 \{m_i(\hat{\theta}_N)\} - \E_N^1 \{m_i(\theta_N)\} = o_{\Pr}(1)$ and 
	$\E_N \{m_i(\hat{\theta}_N)\} - \E_N \{m_i(\theta_N)\} = o_{\Pr}(1)$. 
\end{lemma}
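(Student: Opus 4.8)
The plan is to bound both differences by the local modulus of continuity $\Delta(m_i,\theta_N,\cdot)$ using condition (i), and to convert the random radius $\zeta_N\equiv\|\hat\theta_N-\theta_N\|$ into a fixed $\delta$ by working on the event $\{\zeta_N\le\delta\}$; this mirrors the argument in the proof of Lemma \ref{lemma:consist_of_var_e_est_para}. I would first dispose of the full-population average, which carries no randomness from $Z$. By the definition of $\Delta$, on $\{\zeta_N\le\delta\}$ one has $|m_i(\hat\theta_N)-m_i(\theta_N)|\le\Delta(m_i,\theta_N,\delta)$ for every $i$, hence $|\E_N\{m_i(\hat\theta_N)\}-\E_N\{m_i(\theta_N)\}|\le\E_N\{\Delta(m_i,\theta_N,\delta)\}$. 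The key point is that $\E_N\{\Delta(m_i,\theta_N,\delta)\}$ is a deterministic finite-population average. Given $\varepsilon>0$, condition (i) lets me choose $\delta$ with $\sup_N\E_N\{\Delta(m_i,\theta_N,\delta)\}<\varepsilon$, so the difference can exceed $\varepsilon$ only off $\{\zeta_N\le\delta\}$; thus $\Pr(|\E_N\{m_i(\hat\theta_N)\}-\E_N\{m_i(\theta_N)\}|>\varepsilon)\le\Pr(\zeta_N>\delta)\to 0$ by condition (ii).

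For the treated-arm average the same pointwise bound gives, on $\{\zeta_N\le\delta\}$, that $|\E_N^1\{m_i(\hat\theta_N)\}-\E_N^1\{m_i(\theta_N)\}|\le\E_N^1\{\Delta(m_i,\theta_N,\delta)\}$. The difference from the previous case is that $\E_N^1\{\Delta(m_i,\theta_N,\delta)\}$ is now random through $Z$. Here I would exploit the unbiasedness of the sample mean of a simple random sample under the CRE, namely $\E[\E_N^1\{\Delta(m_i,\theta_N,\delta)\}]=\E_N\{\Delta(m_i,\theta_N,\delta)\}$, and apply Markov's inequality together with the nonnegativity of $\Delta$ to obtain $\Pr(\E_N^1\{\Delta(m_i,\theta_N,\delta)\}>\varepsilon)\le\varepsilon^{-1}\E_N\{\Delta(m_i,\theta_N,\delta)\}$. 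For a fixed $\eta\in(0,1)$, choosing $\delta$ so that $\sup_N\E_N\{\Delta(m_i,\theta_N,\delta)\}<\eta\varepsilon$ bounds this probability by $\eta$.

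Combining the two pieces by a union bound yields
\[
\Pr\big(|\E_N^1\{m_i(\hat\theta_N)\}-\E_N^1\{m_i(\theta_N)\}|>\varepsilon\big)\le\eta+\Pr(\zeta_N>\delta),
\]
and taking $\limsup_{N\to\infty}$ eliminates the last term by condition (ii); since $\eta\in(0,1)$ is arbitrary the $\limsup$ is zero, giving the claimed $o_{\Pr}(1)$ conclusion. The only genuinely delicate step is the treated-arm bound, where the averaging set is itself random: the resolution is to combine monotonicity of $\Delta$ in its radius (to replace $\zeta_N$ by the deterministic $\delta$) with Markov's inequality and the CRE unbiasedness identity. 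The full-population claim is comparatively immediate.
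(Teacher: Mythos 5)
Your proof is correct and follows essentially the same route as the paper's: both arguments bound the differences by the modulus $\Delta(m_i,\theta_N,\cdot)$, use its monotonicity in the radius to replace the random $\|\hat\theta_N-\theta_N\|$ by a fixed $\delta$ on the event $\{\|\hat\theta_N-\theta_N\|\le\delta\}$, and handle the treated-arm term via Markov's inequality together with the simple-random-sampling identity $\E[\E_N^1\{\Delta(m_i,\theta_N,\delta)\}]=\E_N\{\Delta(m_i,\theta_N,\delta)\}$, finishing with condition (ii) and the arbitrariness of $\eta$.
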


\begin{proof}[of Lemma \ref{lemma:average_true_hat}]
    By definition, 
    \begin{align*}
        |\E_N^1 \{m_i(\hat{\theta}_N)\} - \E_N^1 \{m_i(\theta_N)\}|
        & \le 
        \E_N^1 \{ |m_i(\hat{\theta}_N) - m_i(\theta_N)| \}
        \le 
        \E_N^1 \{ \Delta(m_i, \theta_N, \|\hat{\theta}_N - \theta_N\|) \},
    \end{align*}
    and analogously, 
    \begin{align*}
        |\E_N \{m_i(\hat{\theta}_N)\} - \E_N \{m_i(\theta_N)\}|
        & \le 
        \E_N \{ |m_i(\hat{\theta}_N) - m_i(\theta_N)| \}
        \le 
        \E_N \{ \Delta(m_i, \theta_N, \|\hat{\theta}_N - \theta_N\|) \}. 
    \end{align*}
    From condition (i), for any $\varepsilon>0$ and $\eta \in (0,1)$, there exists $\delta > 0$ such that $\E_N \{\Delta(m_i, \theta_N, \delta)\} < \eta \varepsilon$ for all $N$. We then have 
    \begin{align*}
        & \quad \ \Pr\big( 
        |\E_N^1 \{m_i(\hat{\theta}_N)\} - \E_N^1 \{m_i(\theta_N)\}| > \varepsilon
        \big)
        \\
        & \le 
        \Pr\big( 
        |\E_N^1 \{m_i(\hat{\theta}_N)\} - \E_N^1 \{m_i(\theta_N)\}| > \varepsilon, \| \hat{\theta}_N - \theta_N \| \le \delta
        \big) + 
        \Pr\big( \| \hat{\theta}_N - \theta_N \| > \delta \big)\\
        & \le 
        \Pr\big( 
        \E_N^1 \{ \Delta(m_i, \theta_N, \|\hat{\theta}_N - \theta_N\|) \} > \varepsilon, \| \hat{\theta}_N - \theta_N \| \le \delta
        \big) + 
        \Pr\big( \| \hat{\theta}_N - \theta_N \| > \delta \big)\\
        & \le 
        \Pr\big( 
        \E_N^1 \{ \Delta(m_i, \theta_N, \delta) \} > \varepsilon
        \big) + 
        \Pr\big( \| \hat{\theta}_N - \theta_N \| > \delta \big)
        \\
        & \le 
        \varepsilon^{-1} \E_N \{ \Delta(m_i, \theta_N, \delta) \} + 
        \Pr\big( \| \hat{\theta}_N - \theta_N \| > \delta \big)
        \\
        & \le \eta + \Pr\big( \| \hat{\theta}_N - \theta_N \| > \delta \big), 
    \end{align*}
    where the second last inequality holds due to the Markov inequality, 
    and 
    \begin{align*}
        \Pr\big( |\E_N \{m_i(\hat{\theta}_N)\} - \E_N \{m_i(\theta_N)\}| > \varepsilon \big)
        & \le 
        \Pr\big( \E_N \{ \Delta(m_i, \theta_N, \|\hat{\theta}_N - \theta_N\|) \} > \varepsilon \big)\\
        & \le \Pr\big( \| \hat{\theta}_N - \theta_N \| > \delta \big). 
    \end{align*}
    From condition (ii) and noting that $\eta$ can be arbitrarily small, we can derive that, 
    as $N\rightarrow \infty$, both 
    $\Pr( 
        |\E_N^1 \{m_i(\hat{\theta}_N)\} - \E_N^1 \{m_i(\theta_N)\}| > \varepsilon
    )$
    and 
    $ \Pr( |\E_N \{m_i(\hat{\theta}_N)\} - \E_N \{m_i(\theta_N)\}| > \varepsilon )$ 
    converge to zero. 
    Therefore, we derive Lemma \ref{lemma:average_true_hat}.
\end{proof}

\begin{lemma}\label{lemma:maa_1}
	Let $\{Y_i(1), Y_i(0), \bs{X}_i\}$ be a finite population of $N$ units, and
	$(Z_1, Z_2, \ldots, Z_N) \in \{0,1\}^N$ be a random vector 
	whose probability of taking value $(z_1, z_2, \ldots, z_N)$ is $n_1!n_0!/N!$ if $\sum_{i=1}^N z_i = n_1$ and zero otherwise, 
	where $n_1$ and $n_0 = N-n_1$ are fixed constants.
	For any $\bs{x}$, 
	let $h_1(\bs{x}, \bs{\theta})\in \mathbb{R}$ and $h_1(\bs{x}, \bs{\theta}) \in \mathbb{R}$ be two functions of $\bs{\theta}\in \Theta \subset \mathbb{R}^p$. 
	If the following conditions hold: 
	\begin{enumerate}[label=(\roman*)]
		\item $n_1/N$ and $n_0/N$ have positive limits as $N\rightarrow \infty$, 
		\item there exists $\varepsilon_0>0$ such that $\mathcal{B}(\bs{\theta}_N, \varepsilon_0) \subset \Theta$ for all $N$, 
		\item for any $N$ and $1\le i \le N$, 
		$h_1(\bs{X}_i, \bs{\theta})$ and $h_1(\bs{X}_i, \bs{\theta})$, viewed as functions of $\bs{\theta}$, are continuously differentiable over $\bs{\theta}\in \Theta$, 
		\item $\sup_N \E_N\{\Delta(\dot{h}_{zi}, \theta_N, \delta)\} \rightarrow 0$ as $\delta \rightarrow 0$, where 
		$$
		\Delta(\dot{h}_{zi}, \theta_N, \delta) = \sup_{\bs{\theta}\in \Theta: \|\bs{\theta}-\bs{\theta}_N \|\le \delta} 
		\| \dot{h}_z(\bs{X}_i; \bs{\theta}) - \dot{h}_z(\bs{X}_i; \bs{\theta}_N) \|,
		$$
		\item for $z=0,1$, $N^{-1} \Cov_N\{ \dot{h}_z(\bs{X}_i; \bs{\theta}_N) \} = o(1)$, 
		\item $\hat{\bs\theta}_N-{\bs\theta}_N=O_{\Pr}(N^{-1/2})$, 
	\end{enumerate}
	then, for $z=0,1$, 
	\begin{align*}
		{\E}_N^z \{h_z(\bs X_i;\hat{\bs{\theta}}_N)\}- \E_N\{ h_z(\bs X_i;\hat{\bs\theta}_N) \}
		& = 
		{\E}_N^z\{ h_z(\bs X_i;{\bs\theta}_N) \} - \E_N\{ h_z(\bs X_i;{\bs\theta}_N) \} + 
		o_{\Pr}(N^{-1/2}). 
	\end{align*}
\end{lemma}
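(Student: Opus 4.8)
The plan is to reduce the claim to the product of an $o_{\Pr}(1)$ factor with the $O_{\Pr}(N^{-1/2})$ rate from condition (vi). Fix $z$ and introduce the design contrast $D_z(\bs{\theta}) \equiv \E_N^z\{h_z(\bs{X}_i; \bs{\theta})\} - \E_N\{h_z(\bs{X}_i; \bs{\theta})\}$, so that the left-hand side of the lemma is exactly $D_z(\hat{\bs{\theta}}_N) - D_z(\bs{\theta}_N)$. Since $h_z(\bs{x}; \cdot)$ is continuously differentiable by condition (iii), the fundamental theorem of calculus gives $h_z(\bs{X}_i; \hat{\bs{\theta}}_N) - h_z(\bs{X}_i; \bs{\theta}_N) = \bar{\bs{g}}_i^\top(\hat{\bs{\theta}}_N - \bs{\theta}_N)$, where $\bar{\bs{g}}_i = \int_0^1 \dot{h}_z(\bs{X}_i; \bs{\theta}_N + u(\hat{\bs{\theta}}_N - \bs{\theta}_N))\,du$. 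Because $\hat{\bs{\theta}}_N - \bs{\theta}_N$ is constant in $i$, it factors out of both $\E_N^z$ and $\E_N$, so that
\[
D_z(\hat{\bs{\theta}}_N) - D_z(\bs{\theta}_N) = \big(\{\E_N^z - \E_N\}[\bar{\bs{g}}_i]\big)^\top (\hat{\bs{\theta}}_N - \bs{\theta}_N).
\]
By condition (vi) the second factor is $O_{\Pr}(N^{-1/2})$, so it remains to show that the averaged-gradient contrast $\{\E_N^z - \E_N\}[\bar{\bs{g}}_i]$ is $o_{\Pr}(1)$.

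To control this contrast I would anchor it at $\bs{\theta}_N$ and split
\[
\{\E_N^z - \E_N\}[\bar{\bs{g}}_i] = \{\E_N^z - \E_N\}[\dot{h}_z(\bs{X}_i; \bs{\theta}_N)] + \{\E_N^z - \E_N\}\Big[\int_0^1 \{\dot{h}_z(\bs{X}_i; \bs{\theta}_N + u(\hat{\bs{\theta}}_N - \bs{\theta}_N)) - \dot{h}_z(\bs{X}_i; \bs{\theta}_N)\}\,du\Big].
\]
For the anchored first term, under the CRE the $n_z$ units in arm $z$ form a simple random sample of size $n_z$, so $\E_N^z\{\dot{h}_z(\bs{X}_i; \bs{\theta}_N)\}$ is design-unbiased for $\E_N\{\dot{h}_z(\bs{X}_i; \bs{\theta}_N)\}$ with covariance $(n_z^{-1} - N^{-1})\Cov_N\{\dot{h}_z(\bs{X}_i; \bs{\theta}_N)\}$. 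Its trace is at most $(N/n_z)\cdot N^{-1}\mathrm{tr}(\Cov_N\{\dot{h}_z(\bs{X}_i; \bs{\theta}_N)\})$, which tends to zero because $N/n_z$ has a finite limit (condition (i)) and $N^{-1}\Cov_N\{\dot{h}_z(\bs{X}_i; \bs{\theta}_N)\} = o(1)$ (condition (v)); Chebyshev's inequality then delivers $o_{\Pr}(1)$.

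For the perturbation second term, its norm is bounded by $\E_N^z\{\Delta(\dot{h}_{zi}, \bs{\theta}_N, \|\hat{\bs{\theta}}_N - \bs{\theta}_N\|)\} + \E_N\{\Delta(\dot{h}_{zi}, \bs{\theta}_N, \|\hat{\bs{\theta}}_N - \bs{\theta}_N\|)\}$ by the definition of $\Delta(\dot{h}_{zi}, \cdot, \cdot)$. Since condition (vi) forces $\|\hat{\bs{\theta}}_N - \bs{\theta}_N\| = o_{\Pr}(1)$ and condition (iv) supplies the uniform modulus $\sup_N \E_N\{\Delta(\dot{h}_{zi}, \bs{\theta}_N, \delta)\} \to 0$ as $\delta \to 0$, both averages are $o_{\Pr}(1)$ by the same conditioning-and-Markov argument already carried out in the proof of Lemma \ref{lemma:average_true_hat}: on the event $\{\|\hat{\bs{\theta}}_N - \bs{\theta}_N\| \le \delta\}$ one uses monotonicity of $\Delta$ in its radius, together with design-unbiasedness of $\E_N^z$ for $\E_N$ and Markov's inequality. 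Adding the two pieces gives $\{\E_N^z - \E_N\}[\bar{\bs{g}}_i] = o_{\Pr}(1)$, and multiplying by the $O_{\Pr}(N^{-1/2})$ factor establishes the lemma. The main obstacle is precisely this perturbation term: the gradient is evaluated at the data-dependent argument $\bs{\theta}_N + u(\hat{\bs{\theta}}_N - \bs{\theta}_N)$, so one cannot take expectations directly, and the required stochastic-equicontinuity control must instead be routed through condition (iv) restricted to the high-probability event $\{\|\hat{\bs{\theta}}_N - \bs{\theta}_N\| \le \delta\}$.
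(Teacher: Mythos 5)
Your proposal is correct and takes essentially the same route as the paper's own proof: the paper applies the mean value theorem to $\theta \mapsto \E_N^z\{h_z(\bs{X}_i;\theta)\} - \E_N\{h_z(\bs{X}_i;\theta)\}$ where you use the integral form of Taylor's theorem, and then controls the resulting gradient contrast by exactly your two-step argument --- Chebyshev's inequality plus simple random sampling with conditions (i) and (v) for the term anchored at $\bs{\theta}_N$, and the conditioning-and-Markov argument of Lemma \ref{lemma:average_true_hat} with condition (iv) for the perturbation term. The only detail you omit is the paper's preliminary reduction (the projection trick from the proof of Theorem \ref{thm:rate}) to the case $\hat{\bs{\theta}}_N \in \overline{\mathcal{B}}(\bs{\theta}_N,\varepsilon_0/2)\subset\Theta$, which is needed so that the segment $\bs{\theta}_N + u(\hat{\bs{\theta}}_N-\bs{\theta}_N)$ on which you evaluate $\dot{h}_z$ actually lies in $\Theta$, where condition (iii) guarantees differentiability.
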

\begin{proof}[of Lemma \ref{lemma:maa_1}]
	By the same logic as  the proof of Theorem \ref{thm:rate}, 
    we can assume that 
	$\hat{\bs{\theta}}_N \in \overline{\mathcal{B}}(\bs{\theta}_N, \varepsilon_0/2) \equiv \{\theta: \|\theta-\theta_N\|\le \varepsilon_0/2\} \subset \Theta$ for all $N$. 
	By symmetry, we prove only the case with $z=1$, since the proof for the other case with $z=0$ is almost the same. 
	By the mean value theorem, 
	there exists $\tilde {\bs\theta}_N$ between $\hat{\bs\theta}_N$ and ${\bs\theta}_N$ such that 
	\begin{align*}
		& \quad \ 
		\big[ {\E}_N^1 \{h_1(\bs X_i;\hat{\bs{\theta}}_N)\}- \E_N\{ h_1(\bs X_i;\hat{\bs\theta}_N) \}  \big]
		- 
		\big[ {\E}_N^1 \{ h_1(\bs X_i;{\bs\theta}_N) \} - \E_N\{ h_1(\bs X_i;{\bs\theta}_N) \} \big]
		\\
		& = 
		( \hat{\bs\theta}_N - \bs\theta_N )^\top 
		\big[ {\E}_N^1 \{\dot{h}_1(\bs X_i;\tilde{\bs{\theta}}_N)\}- \E_N\{ \dot{h}_1(\bs X_i;\tilde{\bs\theta}_N) \}  \big] . 
	\end{align*}
	Because $\hat{\bs\theta}_N - \bs\theta_N = O_{\Pr}(N^{-1/2})$, 
    we must have $\tilde{\bs\theta}_N - \bs\theta_N = O_{\Pr}(N^{-1/2}) = o_{\Pr}(1)$. 
    Let $\nabla_k {h}_1(\bs X_i;{\bs{\theta}})$ denote the partial derivative of ${h}_1(\bs X_i;{\bs{\theta}})$ over the $k$th coordinate of $\theta$.     
    From condition (iv) 
    and applying Lemma \ref{lemma:average_true_hat}, we then have, for $1\le k \le p$, 
    \begin{align*}
        & \quad \ \E_N^1 \{ \nabla_k {h}_1(\bs X_i;\tilde{\bs{\theta}}_N)\}- \E_N\{ \nabla_k {h}_1(\bs X_i;\tilde{\bs\theta}_N) \}\\
        & = 
        \E_N^1 \{ \nabla_k {h}_1(\bs X_i; \bs{\theta}_N)\}- \E_N\{ \nabla_k {h}_1(\bs X_i; \bs\theta_N) \} + o_{\Pr}(1)
        \\
        & = 
        \sqrt{\left(\frac{1}{n} - \frac{1}{N}\right) \Var_N\{\nabla_k {h}_1(X_i; \theta_N)\}} \cdot O_{\Pr}(1) + o_{\Pr}(1)
        = o_{\Pr}(1), 
    \end{align*}
    where the second last equality follows from Chebyshev's inequality and the property of simple random sampling, 
    and the last equality follows from conditions (i) and (v). 
    From the above, we have
    \begin{align*}
		& \quad \ 
		\big[ {\E}_N^1 \{h_1(\bs X_i;\hat{\bs{\theta}}_N)\}- \E_N\{ h_1(\bs X_i;\hat{\bs\theta}_N) \}  \big]
		- 
		\big[ {\E}_N^1 \{ h_1(\bs X_i;{\bs\theta}_N) \} - \E_N\{ h_1(\bs X_i;{\bs\theta}_N) \} \big]
		\\
		& = 
		( \hat{\bs\theta}_N - \bs\theta_N )^\top 
		\big[ {\E}_N^1 \{\dot{h}_1(\bs X_i;\tilde{\bs{\theta}}_N)\}- \E_N\{ \dot{h}_1(\bs X_i;\tilde{\bs\theta}_N) \}  \big] 
		= 
		O_{\Pr}(N^{-1/2}) \cdot o_{\Pr}(1) = o_{\Pr}(N^{-1/2}). 
	\end{align*}
    Therefore, Lemma \ref{lemma:maa_1} holds.  
\end{proof}

\begin{lemma}\label{lemma:maa_2}
	Consider the same setting as Lemma \ref{lemma:maa_1}, 
	and define $Y_i(z;\theta)$, $Y_i(\theta)$ and $\bar{Y}_z(\theta)$ the same as in \S \ref{sec:MA_consist}. 
	If conditions (i)--(vi) in Lemma \ref{lemma:maa_1} hold, 
	and the following conditions hold as $N\rightarrow\infty$: 
	\begin{enumerate}[label = (\roman*)]
		\item the finite population variances and covariance for $Y_i(1;\theta_N)$ and $Y_i(0;\theta_N)$ have limits, 
		\item for $z=0,1$, $N^{-1}\max_{1\le j \le N} [ Y_j(z;\theta_N)-\E_N\{Y_i(z;\theta_N)\} ]^2 \rightarrow 0$, 
	\end{enumerate}
	then
	\begin{align*}
	    \sqrt{N}
		\begin{pmatrix}
			\bar{Y}_1(\hat{\theta}_N) - \bar{Y}(1)\\
			\bar{Y}_0(\hat{\theta}_N) - \bar{Y}(0)
		\end{pmatrix}
		& = \sqrt{N}
		\begin{pmatrix}
			{\E}_N^1 \{ Y_i(1)-h_1(\bs X_i;\hat{\bs{\theta}}_N ) \}+ \E_N\{h_1(\bs X_i;\hat{\bs\theta}_N)\} - \bar{Y}(1)\\
			{\E}_N^0 \{ Y_i(0)-h_0(\bs X_i;\hat{\bs{\theta}}_N ) \}+ \E_N\{h_0(\bs X_i;\hat{\bs\theta}_N)\} - \bar{Y}(0)
		\end{pmatrix}
		\\
		& \converged
		\mathcal{N}
		\left( 
		\bs{0}, \ 
		\begin{pmatrix}
			( \tilde{r}_1^{-1} - 1 )\Omega_{11} & - \Omega_{10}\\
			- \Omega_{01} & ( \tilde{r}_0^{-1} - 1 ) \Omega_{00}
		\end{pmatrix}
		\right), 
	\end{align*}
	where $\tilde{r}_1$ and $\tilde{r}_0$ are the limits of $r_1$ and $r_0$, 
	and $\Omega_{11}$, $\Omega_{00}$ and $\Omega_{10} = \Omega_{01}$ are the limits of the finite population variances of $Y_i(1) - h_1(\bs{X}_i; \bs{\theta}_N)$ and $Y_i(0) - h_0(\bs{X}_i; \bs{\theta}_N)$ and their finite population covariance. 
\end{lemma}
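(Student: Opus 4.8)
The plan is to separate the statement into an exact algebraic identity (the first equality in the display), a reduction of each coordinate to a fixed-parameter finite-population average via Lemma \ref{lemma:maa_1}, and a concluding appeal to the finite-population central limit theorem for a completely randomized experiment.

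First I would verify the exact equality. By the definitions $Y_i(z;\theta) = Y_i(z) - h_z(\bs X_i;\theta) + \E_N\{h_z(\bs X_j;\theta)\}$ and $Y_i(\theta) = Z_i Y_i(1;\theta) + (1-Z_i) Y_i(0;\theta)$, one has $Y_i(\theta) = Y_i(z;\theta)$ on the event $\{Z_i = z\}$, so that
\[ \bar{Y}_z(\theta) = \E_N^z\{Y_i(z;\theta)\} = \E_N^z\{Y_i(z) - h_z(\bs X_i;\theta)\} + \E_N\{h_z(\bs X_j;\theta)\}. \]
Evaluating at $\theta = \hat{\theta}_N$ reproduces the second expression in the display; this step is purely definitional.

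Next I would reduce each coordinate. Set $e_i(z) = Y_i(z) - h_z(\bs X_i;\theta_N)$, the residual at the oracle parameter. Using the invariance $\E_N\{Y_i(z;\theta)\} = \bar{Y}(z)$, the above identity gives
\[ \bar{Y}_z(\hat{\theta}_N) - \bar{Y}(z) = \big[\E_N^z\{Y_i(z)\} - \bar{Y}(z)\big] - \big[\E_N^z\{h_z(\bs X_i;\hat{\theta}_N)\} - \E_N\{h_z(\bs X_j;\hat{\theta}_N)\}\big]. \]
Since hypotheses (i)--(vi) of Lemma \ref{lemma:maa_1} are in force, that lemma replaces $\hat{\theta}_N$ by $\theta_N$ in the second bracket up to an $o_{\Pr}(N^{-1/2})$ error; recombining the two brackets at $\theta_N$ yields
\[ \bar{Y}_z(\hat{\theta}_N) - \bar{Y}(z) = \E_N^z\{e_i(z)\} - \E_N\{e_i(z)\} + o_{\Pr}(N^{-1/2}), \qquad z=0,1. \]
Hence, up to a term that is $o_{\Pr}(1)$ after multiplication by $\sqrt{N}$, the left-hand vector equals $\sqrt{N}$ times the pair formed by the treated-arm average of $e_i(1)$ and the control-arm average of $e_i(0)$, each centered at its finite-population mean.

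Finally I would apply the finite-population central limit theorem for a CRE \citep{fpclt2017}, viewing $(e_i(1), e_i(0))$ as a pair of pseudo potential outcomes exactly as in the proof of Lemma \ref{lemma:clt_srs}. Condition (i) of the present lemma provides the limiting variances $\Omega_{11}, \Omega_{00}$ and covariance $\Omega_{10} = \Omega_{01}$, while condition (ii) together with $r_1, r_0$ bounded away from $0$ and $1$ (Lemma \ref{lemma:maa_1}(i)) supplies the Lindeberg-type negligibility. The diagonal entries come from $\Var(\E_N^z\{e_i(z)\}) = (n_z^{-1} - N^{-1})\Var_N\{e_i(z)\}$, which after scaling by $N$ converges to $(\tilde{r}_z^{-1} - 1)\Omega_{zz}$; the off-diagonal entry comes from the standard CRE identity $\Cov(\E_N^1\{e_i(1)\}, \E_N^0\{e_i(0)\}) = -N^{-1}\Cov_N(e_i(1), e_i(0))$, whose $N$-scaled limit is $-\Omega_{10}$. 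Slutsky's theorem then absorbs the $o_{\Pr}(N^{-1/2})$ remainder. The main obstacle will be the bookkeeping in the reduction step: Lemma \ref{lemma:maa_1} must be applied to the precise combination $\E_N^z\{h_z\} - \E_N\{h_z\}$, not to the arm average alone (which is not $\sqrt{N}$-consistent for its own probability limit), and the constant shift $\E_N\{h_z(\bs X_j;\theta)\}$ must be tracked so that it is the residuals $e_i(z)$, rather than the raw adjusted outcomes, that drive the limiting covariance; the degenerate case of a singular limiting covariance is handled exactly as in the proof of Lemma \ref{lemma:clt_srs}.
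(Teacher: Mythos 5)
Your proposal is correct and follows essentially the same route as the paper's proof: both use Lemma \ref{lemma:maa_1} to replace $\hat{\bs\theta}_N$ by $\bs\theta_N$ in the combination $\E_N^z\{h_z\}-\E_N\{h_z\}$ up to $o_{\Pr}(N^{-1/2})$, then apply the finite population CLT of \citet[Theorem 5]{fpclt2017} to the arm averages of the fixed-parameter adjusted outcomes, and finish with Slutsky's theorem. Working with the residuals $e_i(z)$ rather than the paper's $Y_i(z;\bs\theta_N)$ is an immaterial difference, since the two differ by the constant $\E_N\{h_z(\bs X_i;\bs\theta_N)\}$, which cancels in the centered differences and leaves the variances and covariance unchanged.
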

\begin{proof}[of Lemma \ref{lemma:maa_2}]
	From Lemma \ref{lemma:maa_1}, 
	for $z=0, 1$, we have 
	\begin{align}\label{eq:MA_e_eq}
		\bar{Y}_z(\hat{\theta}_N) - \bar{Y}(z) & =
		{\E}_N^z \{ Y_i(z)-h_z(\bs X_i;\hat{\bs{\theta}}_N ) \}+\E_N\{ h_z(\bs X_i;\hat{\bs\theta}_N) \} - \bar{Y}(z)
		\nonumber
		\\
		& = 
		{\E}_N^z \{Y_i(z)\} - 
		\big[ {\E}_N^z\{h_z(\bs X_i;\hat{\bs{\theta}}_N)\}- \E_N\{ h_z(\bs X_i;\hat{\bs\theta}_N) \} \big] - \bar{Y}(z)
		\nonumber
		\\
		& = {\E}_N^z \{Y_i(z)\} - 
		\big[ {\E}_N^z \{h_z(\bs X_i;\bs{\theta}_N)\}- \E_N\{ h_z(\bs X_i;\bs\theta_N) \} \big] - \bar{Y}(z) + o_{\Pr}(N^{-1/2})
		\nonumber
		\\
		& = 
		{\E}_N^z \{Y_i(z) - h_z(\bs X_i;\bs{\theta}_N)\} +    \E_N\{ h_z(\bs X_i;\bs\theta_N) \}
		- \bar{Y}(z) + o_{\Pr}(N^{-1/2})
		\nonumber
		\\
		& = {\E}_N^z \{Y_i(z, \bs{\theta}_N)\} - \E_{N} \{Y_i(z, \bs{\theta}_N)\}  + o_{\Pr}(N^{-1/2}).
	\end{align}
	where the last equality follows from the definition of $Y_{i}(z; \bs{\theta})$. 
	Below we prove Lemma \ref{lemma:maa_2} using Theorem 5 in \citet{fpclt2017}. 
	
	Following the notation in \citet{fpclt2017}, 
	define matrices
	$\bs{A}_1 = (1, 0)^\top$ and $\bs{A}_0 = (0, 1)^\top$, 
	and 
	$\hat{\bs{\delta}}(\bs{A}) = 
	\bs{A}_1 \E_N^1\{ Y_i(1; \bs{\theta}_N) \} + 
	\bs{A}_0 \E_N^0\{ Y_i(0; \bs{\theta}_N) \} = (\E_N^1\{ Y_i(1; \bs{\theta}_N) \}, \E_N^0\{ Y_i(0; \bs{\theta}_N))^\top$. 
	We can verify that, 
	$\hat{\bs{\delta}}(\bs{A})$ has mean 
	$$
	\bs{A}_1\E_N\{Y_i(1; \bs{\theta}_N)\} + \bs{A}_0 \E_N\{Y_i(0; \bs{\theta}_N)\} = (\E_N\{Y_i(1; \bs{\theta}_N)\}, \E_N\{Y_i(0; \bs{\theta}_N)\})^\top, 
	$$
	and covariance matrix 
	\begin{align*}
		& \quad \ \frac{1}{n_1}\bs{A}_1 \Var_N\{Y_i(1; \bs{\theta}_N)\} \bs{A}_1^\top + 
		\frac{1}{n_0}\bs{A}_0 \Var_N\{Y_i(0; \bs{\theta}_N)\} \bs{A}_0^\top - 
		\frac{1}{N} \Cov_N\left\{ (Y_i(1; \bs{\theta}_N), Y_i(0; \bs{\theta}_N))^\top \right\}
		\\
		& = 
		\begin{pmatrix}
			( n_1^{-1} - N^{-1} ) \Var_N\{Y_i(1; \bs{\theta}_N)\} & - N^{-1} \Cov_N\{Y_i(1; \bs{\theta}_N), Y_i(0; \bs{\theta}_N)\}\\
			- N^{-1} \Cov_N\{Y_i(1; \bs{\theta}_N), Y_i(0; \bs{\theta}_N)\} & ( n_0^{-1} - N^{-1} ) \Var_N\{Y_i(0; \bs{\theta}_N)\}
		\end{pmatrix}. 
	\end{align*}
	From condition (i) in Lemma \ref{lemma:maa_1},  conditions (i) and (ii) in Lemma \ref{lemma:maa_2} and the finite population central limit theorem \citep[][Theorem 5]{fpclt2017}, 
	$\hat{\bs{\delta}}(\bs{A})$ has the following asymptotic distribution as $N\rightarrow \infty$: 
	\begin{align*}%
		\sqrt{N}
		\begin{pmatrix}
			{\E}_N^1 \{Y_i(1; \bs{\theta}_N)\} - \E_N\{Y_i(1; \bs{\theta}_N)\}\\
			{\E}_N^0 \{Y_i(0; \bs{\theta}_N)\} - \E_N\{Y_i(0; \bs{\theta}_N)\}
		\end{pmatrix}
		\converged
		\mathcal{N}
		\left( 
		\bs{0}, \ 
		\begin{pmatrix}
			( \tilde{r}_1^{-1} - 1 )\Omega_{11} & - \Omega_{10}\\
			- \Omega_{01} & ( \tilde{r}_0^{-1} - 1 ) \Omega_{00}
		\end{pmatrix}
		\right).
	\end{align*}
	By \eqref{eq:MA_e_eq} and Slutsky's theorem, 
	as $N\rightarrow \infty$, 
	$\sqrt{N}(\bar{Y}_1(\hat{\theta}_N) - \bar{Y}(1), \bar{Y}_0(\hat{\theta}_N) - \bar{Y}(0))^\top$ will converge to the same asymptotic distribution. 
	Therefore, Lemma \ref{lemma:maa_2} holds. 
\end{proof}

\begin{proof}[of Theorem \ref{thm:clt_mae}]
    By the same logic as the proof of Theorem \ref{thm:theta_clt}, it suffices to show that
    $\sqrt{N}(\hat{\tau}_{g\MA} - \tau_g)$ 
    and $\mathcal{N}(0, V_{g\MA})$ converge weakly to the same limit 
    along any subsequence $\{N_t\}$ such that $r_z$, 
    $\bar{Y}(z)$, $\Var_N\{Y_i(z;\theta_N)\}$ and 
    $\Cov_N\{Y_i(1;\theta_N), Y_i(0;\theta_N)\}$    
    have limits, $z=0,1$.  Note that all these quantities or their entries are bounded under Condition \ref{cond:ma_stable}.

	Along the subsequence $\{N_t\}$, define 
	$\bs{\xi}_{N_t} = (\bar{Y}(1), \bar{Y}(0))^\top$, 
	$\bs{\xi}_{\infty} \equiv (\mu_1, \mu_0)^\top = \lim_{t\rightarrow \infty} \bs{\xi}_{N_t}$, 
	$\hat{\bs{\xi}}_{N_t} = (\bar{Y}_1(\hat{\theta}_{N_t}), \bar{Y}_0(\hat{\theta}_{N_t}))^\top$, 
	and $G((\xi_1, \xi_2)^\top) = g(\xi_1) - g(\xi_2)$. 
    Similar to Lemma \ref{lemma:maa_2},
	let $\tilde{r}_0,$ $\tilde{r}_1,$ $\Omega_{11}, \Omega_{00}$ and $\Omega_{10}=\Omega_{01}$ 
    be the limits of $r_1$, $r_0,$
    $\Var_N\{Y_i(1;\theta_N)\}$, $\Var_N\{Y_i(0;\theta_N)\}$ and 
    $\Cov_N\{Y_i(1;\theta_N), Y_i(0;\theta_N)\}$, respectively, as $t\rightarrow \infty$. 
    Define further 
	$$
	\tilde{\bs{\Omega}} = 
	\begin{pmatrix}
		( \tilde{r}_1^{-1} - 1 )\Omega_{11} & - \Omega_{10}\\
		- \Omega_{01} & ( \tilde{r}_0^{-1} - 1 ) \Omega_{00}
	\end{pmatrix}. 
	$$

	First, we prove  $
	\sqrt{N} (\hat{\bs{\xi}}_{N_t} - \bs{\xi}_{N_t}) \converged \mathcal{N}(\bs{0}, \tilde{\bs{\Omega}})
	$
	using Lemmas \ref{lemma:maa_1} and \ref{lemma:maa_2}. 
	It suffices to verify the conditions in Lemmas \ref{lemma:maa_1} and \ref{lemma:maa_2}. 
	We can verify that 
	Condition \ref{cond:ma_continuous} implies conditions (iii) and (iv) in Lemma \ref{lemma:maa_1}, 
	and 
	Condition \ref{cond:ma_stable} and the property of the subsequence $\{N_t\}$ imply conditions (i), (ii) and (v) in Lemma \ref{lemma:maa_1} 
	and conditions (i) and (ii) in Lemma \ref{lemma:maa_2}.
	Besides, condition (vi) in Lemma \ref{lemma:maa_1} holds by the condition in Theorem \ref{thm:clt_mae}.

    Second, we show the asymptotic distribution of $\sqrt{{N_t}}(\hat{\tau}_{g\MA} - \tau_g) = \sqrt{{N_t}}\{G(\hat{\xi}_{N_t}) - G(\xi_{N_t})\}$ along the subsequence $\{N_t\}$. 
	By the mean value theorem, there exists $\tilde{\bs{\xi}}_{N_t}$ between $\hat{\bs{\xi}}_{N_t}$ and $\bs{\xi}_{N_t}$ such that, 
	\begin{align}\label{eq:mean_val_G_MA}
		G(\hat{\bs{\xi}}_{N_t}) - G(\bs{\xi}_{N_t})
		= 
		\dot G(\tilde{\bs{\xi}}_{N_t})^\top (\hat{\bs{\xi}}_{N_t} - \bs{\xi}_{N_t}). 
	\end{align}
	By the asymptotic Normality of $\hat{\bs{\xi}}_{N_t}$, 
	$$
	\| \tilde{\bs{\xi}}_{N_t} - \bs{\xi}_\infty \| 
	\le 
	\| \tilde{\bs{\xi}}_{N_t} - \bs{\xi}_{N_t} \| + \| \bs{\xi}_{N_t} - \bs{\xi}_\infty \| 
	\le 
	\| \hat{\bs{\xi}}_{N_t} - \bs{\xi}_{N_t} \| + \| \bs{\xi}_{N_t} - \bs{\xi}_\infty \| 
	= o_{\Pr}(1).
	$$
	By the continuous differentiability of $g$ in Condition \ref{cond:ma_continuous} and the continuous mapping theorem, 
	$\dot G(\tilde{\bs{\xi}}_{N_t}) = \dot G(\bs{\xi}_\infty) + o_{\Pr}(1)$. 
	From \eqref{eq:mean_val_G_MA}, the asymptotic Normality of $\hat{\bs{\xi}}_N$ and Slutsky's theorem, we then have 
	\begin{align}\label{eq:mae_asymp_proof1}
		\sqrt{{N_t}}(\hat{\tau}_{g\MA} - \tau_g)
		& = \sqrt{{N_t}}\{ G(\hat{\bs{\xi}}_{N_t}) - G(\bs{\xi}_{N_t}) \}
		= 
		\{ \dot G(\bs{\xi}_\infty) + o_{\Pr}(1) \}^\top \cdot \sqrt{{N_t}} (\hat{\bs{\xi}}_{N_t} - \bs{\xi}_{N_t})
		\nonumber
		\\
		& =
		\dot G(\bs{\xi}_\infty)^\top \sqrt{{N_t}} (\hat{\bs{\xi}}_{N_t} - \bs{\xi}_{N_t})
		+ 
		o_{\Pr}(1)
		\converged 
		\mathcal{N}\left( 0, \dot G(\bs{\xi}_\infty)^\top \tilde{\bs{\Omega}} \dot G(\bs{\xi}_\infty) \right).
	\end{align}
	By definition, 
	the asymptotic variance 
	has the following equivalent forms: 
	\begin{align*}%
		& \quad \ \dot G(\bs{\xi}_\infty)^\top \tilde{\bs{\Omega}} \dot G(\bs{\xi}_\infty)
		\nonumber
		\\
		& = 
		\begin{pmatrix}
			\dot g(\mu_1) &- \dot g(\mu_0)
		\end{pmatrix}
		\begin{pmatrix}
			( \tilde{r}_1^{-1} - 1 )\Omega_{11} & - \Omega_{10}\\
			- \Omega_{01} & ( \tilde{r}_0^{-1} - 1 ) \Omega_{00}
		\end{pmatrix}
		\begin{pmatrix}
			\dot g(\mu_1)\\-\dot g(\mu_0)
		\end{pmatrix}
		\nonumber
        \\
		& = 
		\tilde{r}_1^{-1} \{\dot g(\mu_1)\}^2  \Omega_{11} 
		+ 
		\tilde{r}_0^{-1} \{\dot g(\mu_0)\}^2  \Omega_{00} 
		- 
		\left[
		\{\dot g(\mu_1)\}^2  \Omega_{11}  + \{\dot g(\mu_0)\}^2  \Omega_{00} - 2 \dot g(\mu_1) \dot g(\mu_0) \Omega_{10}
		\right].
	\end{align*}

    Third, by the property of the subsequence $\{N_t\}$, 
    $V_{g\MA}$ converges to $\dot G(\bs{\xi}_\infty)^\top \tilde{\bs{\Omega}} \dot G(\bs{\xi}_\infty)$ as $t\rightarrow \infty$.

	From the above, 
	we can then derive Theorem \ref{thm:clt_mae}. 
\end{proof}

\subsection{Proof of Theorem \ref{thm:MA_ci}}

\begin{proof}[of Theorem \ref{thm:MA_ci}]
    By similar logic as the proof of Theorem \ref{thm:theta_clt}, it suffices to show that $\hat{V}_{g\MA} = \tilde{V}_{g\MA} + o_{\Pr}(1)$  along any subsequence $\{N_t\}$ such that $r_z$, 
    $\bar{Y}(z)$, $\Var_N\{Y_i(z;\theta_N)\}$ and 
    $\Cov_N\{Y_i(1;\theta_N), Y_i(0;\theta_N)\}$    
    have limits, $z=0,1$. Note that all these quantities or their entries are bounded under Condition \ref{cond:ma_stable}.

    First, we prove that $g(\bar{Y}_z(\hat{\theta}_{N_t})) = g(\bar{Y}(z)) + o_{\Pr}(1)$ for $z=0,1$. From the proof of Theorem \ref{thm:clt_mae}, for $z=0,1$,  
    $\bar{Y}_z(\hat{\theta}_{N_t}) = \bar{Y}(z) + o_{\Pr}(1)$. 
    By the continuity of $g$ in Condition \ref{cond:ma_continuous} and the fact that $\bar{Y}(z)$ has a limit along the subsequence $\{N_t\}$, we can derive that $g(\bar{Y}_z(\hat{\theta}_{N_t})) = g(\bar{Y}(z)) + o_{\Pr}(1)$ as $t\rightarrow \infty$.

    Second, we prove that $\Var_{N_t}^z\{Y_i(\hat{\theta}_{N_t})\} = \Var_{N_t}\{Y_i(z;\theta_{N_t})\} + o_{\Pr}(1)$ for $z=0,1$. 
    We consider only the case with $z=1$, since the proof for the other case with $z=0$ is almost the same. 
    Define $n=n_1$, $f_{ai}(\theta) = f_{bi}(\theta) = Y_i(1) - h_1(X_i, \theta)$. 
    We then have $\Var_{N_t}^1\{Y_i(\hat{\theta}_{N_t})\} = \Cov_{N_t}^1\{f_{ai}(\hat{\theta}_{N_t}), f_{bi}(\hat{\theta}_{N_t})\}$
    and 
    $\Var_{N_t}\{Y_i(1;\theta_{N_t})\} = \Cov_{N_t}\{f_{ai}(\theta_{N_t}), f_{bi}(\theta_{N_t})\}$. 
    Thus, 
    from Lemma \ref{lemma:consist_of_var_e_est_para}, 
    to prove that $\Var_{N_t}^1\{Y_i(\hat{\theta}_{N_t})\} = \Var_{N_t}\{Y_i(1;\theta_{N_t})\} + o_{\Pr}(1)$, 
    it suffices to verify the conditions in Lemmas \ref{lemma:consist_of_var_e} and  \ref{lemma:consist_of_var_e_est_para}. 
    By definition, conditions (i)-(iii) in Lemma \ref{lemma:consist_of_var_e} follow from Condition \ref{cond:ma_stable}  and the property of the subsequence $\{N_t\}$, 
    condition (i) in Lemma \ref{lemma:consist_of_var_e_est_para} follows from Condition \ref{cond:ma_continuous}, 
    and condition (ii) in Lemma \ref{lemma:consist_of_var_e_est_para} follows from the condition in Theorem \ref{thm:MA_ci}. 
    
    From the above, we can then derive that $\hat{V}_{g\MA} = \tilde{V}_{g\MA} + o_{\Pr}(1)$ along the subsequence $\{N_t\}$. 
    Therefore, Theorem \ref{thm:MA_ci} holds.
\end{proof}

\subsection{Proof of Theorem \ref{thm:ma_optimal}}
\begin{proof}[of Theorem \ref{thm:ma_optimal}]
	Since the adjustment functions $h_1(\bs{x}; \bs{\theta})$ and $h_0(\bs{x}; \bs{\theta})$ depend on disjoint subsets of the parameter $\bs{\theta}$ and both of them have intercept terms, 
	for descriptive simplicity, 
	we decompose $\bs\theta$ to into $(\alpha_1, \beta_1)$ and $(\alpha_0, \beta_0)$, and will write  $h_z(x;\theta)$ equivalently as $\alpha_z + \tilde{h}_z(x; \beta_z)$ for $z=0,1$.
	Besides, we use $\alpha_{N1}, \beta_{N1}, \alpha_{N0}$ and $\beta_{N0}$ to denote the corresponding subvectors of $\theta_N$ from minimizing $M_N(\theta)$ with the squared losses.

	For $z=0,1$, by definition, we can know that $\theta_N$ minimizes $\E_N [ \{Y_i(z) - h_z(X_i;\theta)\}^2 ]$ over all $\theta$. 
	Besides, since $h_z$ contains an intercept term, we must have 
	$\E_N \{Y_i(z) - h_z(X_i;\theta)\} = 0$. 
	Thus, for $z=0,1$ and any $\theta$, we have 
	\begin{align*}
	    \Var_N\{Y_i(z;\theta)\}
	    & = \Var_N\{Y_i(z) - \tilde{h}_z(X_i; \beta_z)\}
	    = 
	    \frac{N}{N-1}
	    \min_{\alpha_z} \E_N \big[ \{Y_i(z) - \alpha_z - \tilde{h}_z(X_i; \beta_z)\}^2 \big] 
	    \\
	    & \ge \frac{N}{N-1} \min_{\alpha_z, \beta_z} \E_N \big[ \{Y_i(z) - \alpha_z - \tilde{h}_z(X_i; \beta_z)\}^2 \big] 
	    \\
	    & = \frac{N}{N-1} \min_{\theta} \E_N \big[ \{Y_i(z) - h_z(X_i;\theta)\}^2 \big] 
	    = \frac{N}{N-1} \E_N \big[ \{Y_i(z) - h_z(X_i;\theta_N)\}^2 \big]
	    \\ 
	    & = 
	    \Var_N\{Y_i(z;\theta_N)\}. 
	\end{align*}
	This immediately implies that $\theta_N$ from minimizing $M_N(\theta_N)$ with squared losses must also minimize the probability limit $\tilde{V}_{g\MA}$ in \eqref{eq:Vtilde_gMA} of the conservative variance estimator. 
	
	From the above, Theorem \ref{thm:ma_optimal} holds. 
\end{proof}

\section{Proofs of Theorems \ref{thm:clt_mbe}, \ref{thm:clt_ibe} \& \ref{thm:clt_ai} and Corollaries \ref{cor:glm_est_eqn}, \ref{cor:model_imp_consist} \& \ref{cor:equi_i_a} }\label{sec:proof_thm_cor_A}

\subsection{Proof of Corollary \ref{cor:glm_est_eqn}}

\begin{proof}[of Corollary \ref{cor:glm_est_eqn}]
    For $z=0,1$ and by the density form in \eqref{eq:glm}, the partial derivative of $\loss_z(y,x;\theta) = -\log f_z(y,x;\theta)$ over $\alpha_z$ has the following equivalent forms: 
    \begin{align*}
        \frac{\partial}{\partial\alpha_z} \loss_z(y,x;\theta)
        & = 
        \frac{\partial}{\partial\alpha_z}
        \left\{ - \frac{y (\alpha_z + \bs{\beta}_z^\top \bs{x}) - b_z (\alpha_z + \bs{\beta}_z^\top \bs{x})}{a_z(\bs{\phi}_z)} - c_z(y, \bs{\phi}_z) \right\}
        = 
        - \frac{y  - \dot{b}_z(\alpha_z + \bs{\beta}_z^\top \bs{x})}{a_z(\bs{\phi}_z)} 
        \\
        & = 
        -1/a_z(\bs{\phi}_z) \cdot \{y  - h_z(x;\theta)\}. 
    \end{align*}
    This then implies that $M_N(\theta)$ defined as in  \eqref{eq:M_population} satisfies: 
    \begin{align*}
        \frac{\partial}{\partial\alpha_z} M_N(\theta) 
        & = r_z \E_N\Big\{ \frac{\partial}{\partial\alpha_z} \loss_z(Y_i(z),X_i;\theta) \Big\} + r_{1-z} \E_N\Big\{ \frac{\partial}{\partial\alpha_z} \loss_{1-z}(Y_i(1-z),X_i;\theta) \Big\}\\
        & = -r_z/a_z(\bs{\phi}_z) \cdot \E_N\{Y_i(z)  - h_z(X_i;\theta)\}.  
    \end{align*}
    Under Conditions \ref{cond:population_minimum}--\ref{cond:empirical_minimum}, $\Psi_N(\theta_N) = 0$.
    Because $\partial M_N(\theta)/{\partial\alpha_z}$ is a coordinate of $\Psi_N(\theta_N)$, we must have 
    $\E_N\{Y_i(z)  - h_z(X_i;\theta_N)\} = 0$ for $z=0,1$. 
    Therefore, for $z=0,1$, 
    $\E_N\{Y_i(z)\} = \E_N\{h_z(X_i;\theta_N)\} = \E_N \{ \dot{b}_z(\alpha_{Nz} + \bs{\beta}_{Nz}^\top \bs{X}_i) \}$, 
    i.e., Corollary \ref{cor:glm_est_eqn} holds. 
\end{proof}

\subsection{Proofs of Theorems \ref{thm:clt_mbe} and \ref{thm:clt_ibe}}

\begin{proof}[of Theorem \ref{thm:clt_mbe}]
    By the same logic as the proof of Theorem \ref{thm:theta_clt}, it suffices to show that
    $\sqrt{N}( \hat{\tau}_{g\MB}-\tau_{g\MB})$ 
    and $\mathcal{N}
		( 
		0,
		\dot G_{N\MB}(\bs{\theta}_N)^\top \bs{\Sigma}_N \dot G_{N\MB}(\bs{\theta}_N)
		)$ converge weakly to the same limit 
    along any subsequence $\{N_t\}$ such that $\Sigma_N$ and 
    $\dot G_{N\MB}(\bs{\theta}_N)$   
    have limits.  Note that these quantities or their entries are bounded under the condition in Theorem \ref{thm:clt_mbe} and Condition \ref{cond:gh_MB}(iv).
    Furthermore, 
	by the same logic as the proof of Theorem \ref{thm:rate}, 
    we can assume that 
	$\hat{\bs{\theta}}_{N_t} \in \overline{\mathcal{B}}(\bs{\theta}_{N_t}, \varepsilon_0/2) \equiv \{\theta: \|\theta-\theta_N\|\le \varepsilon_0/2\} \subset \Theta$ for all $N_t$.

	By the mean value theorem, 
	there exists $\tilde{{\bs\theta}}_{N_t}$ between $\hat{\bs\theta}_{N_t}$ and ${\bs\theta}_{N_t}$ such that
	$
	G_{{N_t}\MB}(\hat{{\bs\theta}}_{N_t}) - G_{{N_t}\MB}(\bs{\theta}_{N_t})
	=
	\dot G_{{N_t}\MB}(\tilde{{\bs\theta}}_{N_t})^\top (\hat{\bs\theta}_{N_t}-{\bs\theta}_{N_t}). 
	$
	By the asymptotic Normality of $\hat{\bs\theta}_{N_t}$ and Condition \ref{cond:gh_MB}(iii), 
	for any $\varepsilon>0$, there exists $\delta>0$ such that 
	\begin{align*}
		\Pr( \| \dot G_{{N_t}\MB}(\tilde{{\bs\theta}}_{N_t}) - \dot G_{{N_t}\MB}(\bs{\theta}_{N_t}) \| > \varepsilon )
		& \le 
		\Pr( \| \tilde{{\bs\theta}}_{N_t} - \bs{\theta}_{N_t} \| > \delta ) 
		\le 
		\Pr( \| \hat{{\bs\theta}}_{N_t} - \bs{\theta}_{N_t} \| > \delta ), 
	\end{align*}
	which converges to zero as $N \rightarrow \infty$. 
    Thus, 
	\begin{align*}
		G_{{N_t}\MB}(\hat{{\bs\theta}}_{N_t}) - G_{{N_t} \MB}(\bs{\theta}_{N_t})
		& =
		 \{ \dot G_{{N_t}\MB}({\bs\theta}_{N_t})+o_\Pr(1)\}^\top (\hat{\bs\theta}_{N_t}-{\bs\theta}_{N_t})
		\\
        & = 
		\dot G_{{N_t}\MB}({\bs\theta}_{N_t})^\top (\hat{\bs\theta}_{N_t} - {\bs\theta}_{N_t})  + o_\Pr({N_t}^{-1/2}),
	\end{align*}
	where the last equality holds because $\hat{\bs\theta}_{N_t} - {\bs\theta}_{N_t} = O_\Pr({N_t}^{-1/2})$. 
	From the asymptotic Normality of $\hat{\bs\theta}_{N_t}$, the choice of the subsequence $\{N_t\}$, and Slutsky's theorem, 
	we can derive that 
	\begin{align*}
		\sqrt{{N_t}} \left( \hat{\tau}_{g\MB} -  \tau_{g\MB} \right)
		& = 
		\sqrt{{N_t}} \left\{ G_{{N_t}\MB}(\hat{{\bs\theta}}_{N_t}) - G_{{N_t}\MB}(\bs{\theta}_{N_t}) \right\}
		= \dot G_{{N_t}\MB}({\bs\theta}_{N_t})^\top \cdot \sqrt{{N_t}}(\hat{\bs\theta}_{N_t} - {\bs\theta}_{N_t}) + o_\Pr(1)
		\\
		& \ \dot\sim \  \mathcal{N}\left( 0, \ \dot G_{{N_t}\MB}({\bs\theta}_{N_t})^\top \bs\Sigma_{N_t} \dot G_{{N_t}\MB}({\bs\theta}_{N_t}) \right). 
	\end{align*}
	
     From the above, Theorem \ref{thm:clt_mbe} holds.  
\end{proof}

\begin{proof}[of Theorem \ref{thm:clt_ibe}]
	Note that $\hat{\tau}_{g\MI} - \tau_{g\MI} = G_{N\MI}(\hat{\bs{\theta}}_N) - G_{N\MI}(\bs{\theta}_N)$. 
	The proof of Theorem \ref{thm:clt_ibe} is the same as that of Theorem \ref{thm:clt_mbe}, and is thus omitted here. 
\end{proof}

\subsection{Proof of Corollary \ref{cor:model_imp_consist}}

\begin{proof}[of Corollary \ref{cor:model_imp_consist}]
	From Conditions \ref{cond:population_minimum}--\ref{cond:empirical_minimum} and Theorem \ref{thm:theta_clt}, 
	$\sqrt{N}(\hat{\bs{\theta}}_N - \bs{\theta}_N) \lpc \mathcal{N}(\bs{0}, \bs{\Sigma}_N)$, with $\sigma_{\max}(\Sigma_N)$ being bounded and 
	\begin{align*}
		\bs{\Sigma}_N = r_1 r_0 \{\ddot M_N(\bs{\theta_N})\}^{-1} \Cov_N\{ \dot \loss_{1\text{-}0, i}(\bs{\theta}_N) \}  \{\ddot M_N(\bs{\theta_N})\}^{-1}. 
	\end{align*}
	From Condition \ref{cond:MI_g_h} and Theorem \ref{thm:clt_ibe}, 
	\begin{align*}
		\sqrt{N}(\hat{\tau}_{g\MI}-\tau_{g\MI}) 
		& \lpc 
		\mathcal{N}\left( 
		0, \ 
		\dot G_{N\MI}(\bs{\theta}_N)^\top 
		\bs{\Sigma}_N \dot G_{N\MI}(\bs{\theta}_N)
		\right)\\
		& \sim 
		\mathcal{N}\left( 
		0, \ 
		r_1 r_0\dot G_{N\MI}(\bs{\theta}_N)^\top 
		\{\ddot M_N(\bs{\theta_N})\}^{-1} \Cov_N\{ \dot \loss_{1\text{-}0, i}(\bs{\theta}_N) \}  \{\ddot M_N(\bs{\theta_N})\}^{-1} \dot G_{N\MI}(\bs{\theta}_N)
		\right)\\
		& \sim 
		\mathcal{N}\left( 
		0, V_{g\MI}
		\right), 
	\end{align*}
	where 
	$
	\tau_{g\MI} 
	= 
	g( \E_N\{ h_1(\bs{X}_i; \bs{\theta}_N) \} ) - g( \E_N\{ h_0(\bs{X}_i; \bs{\theta}_N )\} ).
	$
	From Condition \ref{cond:consis_MI}, for $z=0,1$, 
	\begin{align*}
		& \quad \ \bar{Y}(z) - \E_N\{ h_z(\bs{X}_i; \bs{\theta}_N) \}
		= 
		\E_N \left\{
		Y_i(z) - h_z(\bs{X}_i; \bs{\theta}_N)
		\right\}
		\\
		& =
		r_z^{-1} \E_N \big\{
		r_z q_z(\bs{\theta}_N)^\top \psi_z(Y_i(z), \bs{X}_i; \bs{\theta}_N)
		+
		r_{1-z} q_z(\bs{\theta}_N)^\top \psi_{1-z}(Y_i(1-z), \bs{X}_i; \bs{\theta}_N)
		\big\}
		\\
		& = 
		r_z^{-1} q_z(\bs{\theta}_N)^\top \E_N \big\{
		r_z  \psi_z(Y_i(z), \bs{X}_i; \bs{\theta}_N)
		+
		r_{1-z} \psi_{1-z}(Y_i(1-z), \bs{X}_i; \bs{\theta}_N)
		\big\}
		\\
		& = 
		r_z^{-1} q_z(\bs{\theta}_N)^\top \Psi_N(\bs{\theta}_N) = 0. 
	\end{align*}
	Thus, $\tau_{g\MI}$ has the following equivalent forms:
	\begin{align*}
		\tau_{g, \MI} 
		= 
		g( \E_N\{ h_1(\bs{X}_i; \bs{\theta}_N) \} ) - g( \E_N\{ h_0(\bs{X}_i; \bs{\theta}_N )\} )
		= 
		g(\bar{Y}(1)) - g(\bar{Y}(0)) = \tau_g.
	\end{align*}
	From the above, Corollary \ref{cor:model_imp_consist} holds. 
\end{proof}

\subsection{Proof of Corollary \ref{cor:equi_i_a}}

\begin{proof}[of Corollary \ref{cor:equi_i_a}]
    First, we prove that $\hat{\tau}_{g\MI} - \hat{\tau}_{g\MA} = o_{\Pr}(N^{-1/2})$. 
	By Condition \ref{cond:consis_MI}, for $z=0,1$ and any $\theta\in \Theta$, 
	\begin{align*}
		& \quad \ \E_N^z \{ Y_i(z)-h_z(\bs{X}_i;\bs{\theta}) \} =
		\bs{q}_z(\bs{\theta})^\top \E_N^z  \{ \psi_z(Y_i(z), \bs{X}_i; \bs{\theta}) \}
		\\
		& = r_z^{-1}\bs{q}_z(\bs{\theta})^\top \cdot 
		\big[ r_z \E_N^z  \{ \psi_z(Y_i(z), \bs{X}_i; \bs{\theta}) \}
		+
		r_{1-z} \E_N^{1-z}  \{ \psi_{1-z}(Y_i(1-z), \bs{X}_i; \bs{\theta}) \}
		\big]
		\\
		&= r_z^{-1} \bs{q}_z(\bs{\theta})^\top {\hat \Psi}_N(\bs{\theta}).
	\end{align*}
    From Condition \ref{cond:empirical_minimum}, 
    $\hat \Psi_N(\bs{\theta}_N) = 0$, which immediately implies that 
    $\E_N^z \{ Y_i(z)-h_z(\bs{X}_i;\hat{\bs{\theta}}_N) \} = 0$ for $z=0,1$.     
    By definition, this further implies that, for $z=0,1$, 
    \begin{align*}
        \bar{Y}_z(\hat{\theta}_N)
        & = \E_N^z \{ Y_i(z)-h_z(\bs{X}_i;\hat{\bs{\theta}}_N) \} + 
        \E_N\{ h_z(\bs{X}_i;\hat{\bs{\theta}}_N) \} 
        = \E_N\{ h_z(\bs{X}_i;\hat{\bs{\theta}}_N) \}. 
    \end{align*}
    Consequently, $\hat{\tau}_{g\MI}=\hat{\tau}_{g\MA}$.

	Second, we prove that $V_{g\MI} = V_{g\MA}$. 
    From Condition \ref{cond:consis_MI} and by the same logic as the proof of Corollary \ref{cor:model_imp_consist}, for $z=0,1$, 
    $\E_N\{ Y_i(z) - h_z(\bs{X}_i;\bs{\theta})\} = r_z^{-1} q_z(\theta)^\top \Psi_N(\theta)$. 
	Taking derivative on both sides, we then have
	\begin{align*}
		-\E_N\{ \dot h_z(\bs{X}_i;\bs{\theta})\} =r_z^{-1} \dot \Psi_N(\bs\theta)^\top \bs{q}_z(\bs{\theta}) +r_z^{-1}\dot{\bs{q}}_z(\bs{\theta})^\top \Psi_N(\bs\theta)
	\end{align*}
    Because $\Psi_N(\bs\theta_N)=0$, we then have 
	$
		\E_N\{ \dot h_z(\bs{X}_i;\bs{\theta}_N) \}=-r_z^{-1} \dot \Psi_N(\bs{\theta}_N)^\top \bs{q}_z(\bs{\theta}).  
	$
	By the definition of $G_{N\MI}(\theta)$ and from Corollary \ref{cor:model_imp_consist}, we then have 
	\begin{align*}
		\dot G_{N\MI}(\bs{\theta}_N)&= \dot g(\E_N\{h_1(\bs X_i;\bs {\theta}_N)\})\E_N\{\dot h_1(\bs X_i;\bs {\theta}_N)\}-\dot g(\E_N\{h_0(\bs X_i;\bs {\theta}_N)\})\E_N\{\dot h_0(\bs X_i;\bs {\theta}_N)\}\\
		& = 
		\dot g(\bar{Y}(1))\E_N\{\dot h_1(\bs X_i;\bs {\theta}_N)\}-\dot g(\bar{Y}(0))\E_N\{\dot h_0(\bs X_i;\bs {\theta}_N)\}
		\\
		&= -r_1^{-1} \dot g(\bar{Y}(1)) \dot \Psi_N(\bs{\theta}_N)^\top \bs{q}_1(\bs{\theta}) 
		+ r_0^{-1} \dot g(\bar{Y}(0))  \dot \Psi_N(\bs{\theta}_N)^\top \bs{q}_0(\bs{\theta})\\
		& = \dot \Psi_N(\bs{\theta}_N)^\top \big\{ r_0^{-1} \dot g(\bar{Y}(0)) \bs{q}_0(\bs{\theta})-r_1^{-1} \dot g(\bar{Y}(1)) \bs{q}_1(\bs{\theta}) \big\}\\
        & = 
        \dot \Psi_N(\bs{\theta}_N)^\top 
        \begin{pmatrix}
        q_1(\theta_N) & q_0(\theta_N)
        \end{pmatrix}
        \begin{pmatrix}
        -r_1^{-1} \dot g(\bar{Y}(1))\\
        r_0^{-1} \dot g(\bar{Y}(0))
        \end{pmatrix}.
	\end{align*}
	We can then simplify $V_{g\MI}$ in Corollary \ref{cor:model_imp_consist} as 
	\begin{align*}
		& \quad \ V_{g\MI} 
        \\ 
        & = r_1 r_0 \dot G_{N\MI}(\bs{\theta}_N) ^\top 
		\{\dot \Psi_N(\bs{\theta_N})\}^{-1} \Cov_N\left\{ \psi_{1i}(\bs{\theta}_N) - \psi_{0i}(\bs{\theta}_N) \right\}  \{\dot \Psi_N(\bs{\theta_N})^\top \}^{-1}  \dot G_{N\MI}(\bs{\theta}_N)\\
		&=r_1 r_0 
		\begin{pmatrix} 
		-r_1^{-1} \dot g(\bar{Y}(1)) \\
		 r_0^{-1} \dot g(\bar{Y}(0))
		\end{pmatrix}^\top
		\begin{pmatrix}
		\bs{q}_1(\bs{\theta}_N)^\top\\\bs{q}_0(\bs{\theta}_N)^\top
		\end{pmatrix} 
		\Cov_N\{ \psi_{1i}(\bs{\theta}_N) - \psi_{0i}(\bs{\theta}_N)\}
		\begin{pmatrix}
		\bs{q}_1(\bs{\theta}_N)&\bs{q}_0(\bs{\theta}_N)
		\end{pmatrix}
		\begin{pmatrix} 
		- r_1^{-1} \dot g(\bar{Y}(1))  \\ r_0^{-1} \dot g(\bar{Y}(0))
		\end{pmatrix}\\
		&=r_1 r_0 
		\begin{pmatrix} 
		- r_1^{-1} \dot g(\bar{Y}(1)) \\
		r_0^{-1} \dot g(\bar{Y}(0))
		\end{pmatrix}^\top
		\Cov_N\left\{ 
		\begin{pmatrix}
		\bs{q}_1(\bs{\theta}_N)^\top \{\psi_{1i}(\bs{\theta}_N) - \psi_{0i}(\bs{\theta}_N)\} \\
		\bs{q}_0(\bs{\theta}_N)^\top \{\psi_{1i}(\bs{\theta}_N) - \psi_{0i}(\bs{\theta}_N)\}
		\end{pmatrix} 
		\right\}
		\begin{pmatrix} 
		- r_1^{-1} \dot g(\bar{Y}(1))  \\ r_0^{-1} \dot g(\bar{Y}(0))
		\end{pmatrix}
	\end{align*}
	By Condition \ref{cond:consis_MI}, for $1\le j\le N$, $\bs{q}_1(\bs{\theta}_N)^\top \{\psi_{1j}(\bs{\theta}_N) - \psi_{0j}(\bs{\theta}_N)\}=Y_j(1; \bs{\theta}_N) - \E_N\{h_1(X_i;\theta_N)\}$, and 
	$\bs{q}_0(\bs{\theta}_N)^\top  \{\psi_{1j}(\bs{\theta}_N) - \psi_{0j}(\bs{\theta}_N)\}=-Y_j(0; \bs{\theta}_N) + \E_N\{h_0(X_i;\theta_N)\}$. 
	We can then further simplify $V_{g\MI}$ as
	\begin{align*}
	    V_{g\MI} & = 
	    r_1 r_0 
		\begin{pmatrix} 
		- r_1^{-1} \dot g(\bar{Y}(1)) \\
		r_0^{-1} \dot g(\bar{Y}(0))
		\end{pmatrix}^\top
		\Cov_N\left\{ 
		\begin{pmatrix}
		Y_i(1;\theta_N)\\
		-Y_i(0;\theta_N)
		\end{pmatrix} 
		\right\}
		\begin{pmatrix} 
		- r_1^{-1} \dot g(\bar{Y}(1))  \\ r_0^{-1} \dot g(\bar{Y}(0))
		\end{pmatrix}\\
		& =
		r_1 r_0 
		\Var_N\left\{ 
		 r_1^{-1} \dot g(\bar{Y}(1)) Y_i(1;\theta_N) 
		 + r_0^{-1} \dot g(\bar{Y}(0)) Y_i(0;\theta_N)
		\right\}\\
		& = 
		\frac{r_0}{r_1} \Var_N\left\{ 
		 r_1^{-1} \dot g(\bar{Y}(1)) Y_i(1;\theta_N) 
		\right\}
		+
		\frac{r_1}{r_0} \Var_N\left\{ 
	    r_0^{-1} \dot g(\bar{Y}(0)) Y_i(0;\theta_N)
		\right\}
		\\
		& \quad + 
		2 \Cov_N\{ \dot g(\bar{Y}(1)) Y_i(1;\theta_N) ,  \dot g(\bar{Y}(0)) Y_i(0;\theta_N) \}
		\\
		& = r_1^{-1} \Var_N\{ \dot g(\bar{Y}(1))Y_i(1; \bs{\theta}_N) \} + 
		r_0^{-1} \Var_N\{ \dot g(\bar{Y}(0))Y_i(0; \bs{\theta}_N) \} 
		\nonumber
		\\
		& \quad \ 
		- 
		\Var_N\{ \dot g(\bar{Y}(1))Y_i(1; \bs{\theta}_N) - \dot g(\bar{Y}(0))Y_i(0; \bs{\theta}_N)  \} \\
		& = V_{g\MA}. 
	\end{align*}
	
    From the above, Corollary \ref{cor:equi_i_a} holds. 
\end{proof}

\subsection{Proof of Theorem \ref{thm:clt_ai}}\label{app:ai}

\begin{proof}[of Theorem \ref{thm:clt_ai}]

By the same logic as the proof of Theorem \ref{thm:theta_clt}, it suffices consider any subsequence such that $r_1, r_0, \bar{Y}(1), \bar{Y}(0)$ and the finite population variances and covariances for $Y_i(z)$, $h_{j1}^\MI(X_i; \theta_{Nj}^\MI)$ and $h_{j0}^\MI(X_i; \theta_{Nj}^\MI)$ $(z=0,1; 1 \le j \le J)$ have limits. Note that these quantities are all bounded under the conditions in Theorem \ref{thm:clt_ai}. 
For notational simplicity, we will simply use $\{N\}$ to denote such a subsequence.

Recall that $\hat{\theta}_N^{\MA}$ is the minimizer of the empirical risk function in \eqref{eq:M_hat} with square losses and transformed covariates $h_{j0}^\MI (X_i; \hat{\theta}^\MI_{Nj})$ and $h_{j1}^\MI (X_i; \hat{\theta}^\MI_{Nj})$, where the parameters $\hat{\theta}^\MI_{Nj}$'s are estimated from the data. 
Let $\tilde{\theta}_N^{\MA}$ be the minimizer of the empirical risk function in \eqref{eq:M_hat} with square losses and transformed covariates $h_{j0}^\MI (X_i; \theta^\MI_{Nj})$ and $h_{j1}^\MI (X_i; \theta^\MI_{Nj})$, where the parameters are fixed and are the probability limits of the corresponding estimated ones. 
We further use $\theta_N^{\MA}$ to denote the minimizer of the finite population risk function in \eqref{eq:M_population} with transformed covariates $h_{j0}^\MI (X_i; \theta^\MI_{Nj})$ and $h_{j1}^\MI (X_i; \theta^\MI_{Nj})$. 
For simplicity, we will use $\hat{\alpha}_{Nz}, \hat{\beta}_{Nzj}$ and $\hat{\gamma}_{Nzj}$ to denote elements of $\hat{\theta}_N^{\MA}$, 
$\tilde{\alpha}_{Nz}, \tilde{\beta}_{Nzj}$ and $\tilde{\gamma}_{Nzj}$ to denote elements of $\tilde{\theta}_N^{\MA}$, 
and 
${\alpha}_{Nz}, {\beta}_{Nzj}$ and ${\gamma}_{Nzj}$ to denote elements of ${\theta}_N^{\MA}$.

First, we prove that, for $z =0,1$,   
$\Var_N^z\{Y_i\} = \Var_N\{Y_i(z)\} + o_{\Pr}(1)$, 
\begin{align}\label{eq:AI_samp_var_est}
    \Var_N^z\{h_{j0}^\MI(X_i; \hat{\theta}_{Nj}^\MI)\} & = \Var_N\{h_{j0}^\MI(X_i; {\theta}_{Nj}^\MI)\} + o_{\Pr}(1), 
    \nonumber
    \\
    \Var_N^z\{h_{j1}^\MI(X_i; \hat{\theta}_{Nj}^\MI)\} & = \Var_N\{h_{j1}^\MI(X_i; {\theta}_{Nj}^\MI)\} + o_{\Pr}(1), 
    \nonumber
    \\
    \Cov_N^z\{Y_i, h_{j0}^\MI(X_i; \hat{\theta}_{Nj}^\MI)\} & = \Cov_N\{Y_i(z), h_{j0}^\MI(X_i; {\theta}_{Nj}^\MI)\} + o_{\Pr}(1), 
    \nonumber
    \\
    \Cov_N^z\{Y_i, h_{j1}^\MI(X_i; \hat{\theta}_{Nj}^\MI)\} & = \Cov_N\{Y_i(z), h_{j1}^\MI(X_i; {\theta}_{Nj}^\MI)\} + o_{\Pr}(1),  
    \nonumber
    \\
    \Cov_N^z\{h_{j0}^\MI(X_i; \hat{\theta}_{Nj}^\MI), h_{j1}^\MI(X_i; \hat{\theta}_{Nj}^\MI)\}
    & =
    \Cov_N\{h_{j0}^\MI(X_i; {\theta}_{Nj}^\MI), h_{j1}^\MI(X_i; {\theta}_{Nj}^\MI)\} + o_{\Pr}(1), 
\end{align}
and 
\begin{align}\label{eq:AI_samp_var_true}
    \Var_N^z\{h_{j0}^\MI(X_i; {\theta}_{Nj}^\MI)\} & = \Var_N\{h_{j0}^\MI(X_i; {\theta}_{Nj}^\MI)\} + o_{\Pr}(1), 
    \nonumber
    \\
    \Var_N^z\{h_{j1}^\MI(X_i; {\theta}_{Nj}^\MI)\} & = \Var_N\{h_{j1}^\MI(X_i; {\theta}_{Nj}^\MI)\} + o_{\Pr}(1), 
    \nonumber
    \\
    \Cov_N^z\{Y_i, h_{j0}^\MI(X_i; {\theta}_{Nj}^\MI)\} & = \Cov_N\{Y_i(z), h_{j0}^\MI(X_i; {\theta}_{Nj}^\MI)\} + o_{\Pr}(1), 
    \nonumber
    \\
    \Cov_N^z\{Y_i, h_{j1}^\MI(X_i; {\theta}_{Nj}^\MI)\} & = \Cov_N\{Y_i(z), h_{j1}^\MI(X_i; {\theta}_{Nj}^\MI)\} + o_{\Pr}(1),  
    \nonumber
    \\
    \Cov_N^z\{h_{j0}^\MI(X_i; {\theta}_{Nj}^\MI), h_{j1}^\MI(X_i; {\theta}_{Nj}^\MI)\}
    & =
    \Cov_N\{h_{j0}^\MI(X_i; {\theta}_{Nj}^\MI), h_{j1}^\MI(X_i; {\theta}_{Nj}^\MI)\} + o_{\Pr}(1),
\end{align}
Under Condition \ref{cond:impute_adjust} and given that $\hat{\theta}_{Nj}^\MI - {\theta}_{Nj}^\MI = O_{\Pr}(N^{-1/2}) = o_{\Pr}(1)$, these follows immediately from Lemmas \ref{lemma:consist_of_var_e} and \ref{lemma:consist_of_var_e_est_para}. 

Second, we prove that, for $z=0,1$ and $1\le j\le J$, $\hat{\beta}_{Nzj} = {\beta}_{Nzj} + o_{\Pr}$, $\hat{\gamma}_{Nzj} = \gamma_{Nzj} + o_{\Pr}(1)$, 
$\tilde{\beta}_{Nzj} = {\beta}_{Nzj} + o_{\Pr}$, $\tilde{\gamma}_{Nzj} = \gamma_{Nzj} + o_{\Pr}(1)$. 
These follows immediately from the forms of least squares estimators and the discussion before. 

Third, we prove that, for $z=0,1$ and $1\le j \le J$, 
\begin{align*}
    \E_N^z \{h_{j0}^{\MI}(X_i, \hat{\theta}_{Nj}^\MI)\} - \E_N\{ h_z(X_i; \hat{\theta}_{Nj}^\MI) \} = 
    \E_N^z \{h_{j0}^{\MI}(X_i, {\theta}_{Nj}^\MI)\} - \E_N\{ h_z(X_i; {\theta}_{Nj}^\MI) \} + o_{\Pr}(N^{-1/2}).
\end{align*}
Under Condition \ref{cond:impute_adjust} and given that $\hat{\theta}_{Nj}^\MI - {\theta}_{Nj}^\MI = O_{\Pr}(N^{-1/2})$, 
these follow from Lemma \ref{lemma:maa_1}. 

Fourth, we prove that, for $z=0,1$, 
\begin{align}\label{eq:h_z_ENz_EN}
    \E_N^z\{h_z^\MA(X_i; \hat{\theta}_N^\MA)\} - \E_N\{h_z^\MA(X_i; \hat{\theta}_N^\MA)\}
    & = 
    \E_N^z\{\tilde{h}_z^\MA(X_i; {\theta}_N^\MA)\} - \E_N\{\tilde{h}_z^\MA(X_i; {\theta}_N^\MA)\} + o_{\Pr}(N^{-1/2}), 
    \nonumber
    \\
    \E_N^z\{\tilde{h}_z^\MA(X_i; \tilde{\theta}_N^\MA)\} - \E_N\{\tilde{h}_z^\MA(X_i; \tilde{\theta}_N^\MA)\}
    & = 
    \E_N^z\{\tilde{h}_z^\MA(X_i; {\theta}_N^\MA)\} - \E_N\{\tilde{h}_z^\MA(X_i; {\theta}_N^\MA)\} + o_{\Pr}(N^{-1/2}). 
\end{align}
By symmetric, below we consider only the case with $z=1$. 
By definition, 
\begin{align*}
    \E_N^1\{h_1^\MA(X_i; \hat{\theta}_N^\MA)\} - \E_N\{h_1^\MA(X_i; \hat{\theta}_N^\MA)\}
    & = 
    \sum_{j=1}^J \hat{\beta}_{N1j} 
    \big[ \E_N^1 \{h_{0j}^\MI(X_i; \hat{\theta}_{Nj}^\MI)\} - \E_N \{h_{0j}^\MI(X_i; \hat{\theta}_{Nj}^\MI)\} \big]\\
    & \quad + 
    \sum_{j=1}^J \hat{\gamma}_{N1j}
    \big[ \E_N^1 \{h_{1j}^\MI(X_i; \hat{\theta}_{Nj}^\MI)\} - \E_N \{h_{1j}^\MI(X_i; \hat{\theta}_{Nj}^\MI)\} \big],  
\end{align*}
\begin{align*}
    \E_N^1\{\tilde{h}_1^\MA(X_i; \tilde{\theta}_N^\MA)\} - \E_N\{\tilde{h}_1^\MA(X_i; \tilde{\theta}_N^\MA)\}
    & = 
    \sum_{j=1}^J \tilde{\beta}_{N1j} 
    \big[ \E_N^1 \{h_{0j}^\MI(X_i; {\theta}_{Nj}^\MI)\} - \E_N \{h_{0j}^\MI(X_i; {\theta}_{Nj}^\MI)\} \big]\\
    & \quad + 
    \sum_{j=1}^J \tilde{\gamma}_{N1j}
    \big[ \E_N^1 \{h_{1j}^\MI(X_i; {\theta}_{Nj}^\MI)\} - \E_N \{h_{1j}^\MI(X_i; {\theta}_{Nj}^\MI)\} \big], 
\end{align*}
and
\begin{align*}
    \E_N^1\{\tilde{h}_1^\MA(X_i; {\theta}_N^\MA)\} - \E_N\{\tilde{h}_1^\MA(X_i; {\theta}_N^\MA)\}
    & = 
    \sum_{j=1}^J {\beta}_{N1j} 
    \big[ \E_N^1 \{h_{0j}^\MI(X_i; {\theta}_{Nj}^\MI)\} - \E_N \{h_{0j}^\MI(X_i; {\theta}_{Nj}^\MI)\} \big]\\
    & \quad + 
    \sum_{j=1}^J {\gamma}_{N1j}
    \big[ \E_N^1 \{h_{1j}^\MI(X_i; {\theta}_{Nj}^\MI)\} - \E_N \{h_{1j}^\MI(X_i; {\theta}_{Nj}^\MI)\} \big]. 
\end{align*}
Note that under Condition \ref{cond:impute_adjust}, by Chebyshev's inequality and the property of simple random sampling, $\E_N^1 \{h_{0j}^\MI(X_i; {\theta}_{Nj}^\MI)\} - \E_N \{h_{0j}^\MI(X_i; {\theta}_{Nj}^\MI)\} = O_{\Pr}(N^{-1/2})$. 
From the discussion before, we then have, for $1\le j \le J$, 
\begin{align*}
    & \quad \ \hat{\beta}_{N1j} 
    \big[ \E_N^1 \{h_{0j}^\MI(X_i; \hat{\theta}_{Nj}^\MI)\} - \E_N \{h_{0j}^\MI(X_i; \hat{\theta}_{Nj}^\MI)\} \big]\\
    & = \{ {\beta}_{N1j} + o_{\Pr}(1) \} \cdot
    \big\{ \big[ \E_N^1 \{h_{0j}^\MI(X_i; {\theta}_{Nj}^\MI)\} - \E_N \{h_{0j}^\MI(X_i; {\theta}_{Nj}^\MI)\} \big] + o_{\Pr}(N^{-1/2}) \big\}\\
    & = {\beta}_{N1j}  \big[ \E_N^1 \{h_{0j}^\MI(X_i; {\theta}_{Nj}^\MI)\} - \E_N \{h_{0j}^\MI(X_i; {\theta}_{Nj}^\MI)\} \big] + o_{\Pr}(N^{-1/2}), 
\end{align*}
and 
\begin{align*}
    & \quad \ \tilde{\beta}_{N1j} 
    \big[ \E_N^1 \{h_{0j}^\MI(X_i; {\theta}_{Nj}^\MI)\} - \E_N \{h_{0j}^\MI(X_i; {\theta}_{Nj}^\MI)\} \big]\\
    & = \{ {\beta}_{N1j} + o_{\Pr}(1) \} \cdot
    \big[ \E_N^1 \{h_{0j}^\MI(X_i; {\theta}_{Nj}^\MI)\} - \E_N \{h_{0j}^\MI(X_i; {\theta}_{Nj}^\MI)\} \big]
    \\
    & = {\beta}_{N1j} \big[ \E_N^1 \{h_{0j}^\MI(X_i; {\theta}_{Nj}^\MI)\} - \E_N \{h_{0j}^\MI(X_i; {\theta}_{Nj}^\MI)\} \big] + o_{\Pr}(N^{-1/2}). 
\end{align*}
Analogously, we have, for $1\le j \le J$, 
\begin{align*}
    & \quad \ \hat{\gamma}_{N1j}
    \big[ \E_N^1 \{h_{1j}^\MI(X_i; \hat{\theta}_{Nj}^\MI)\} - \E_N \{h_{1j}^\MI(X_i; \hat{\theta}_{Nj}^\MI)\} \big]\\
    & = {\gamma}_{N1j}
    \big[ \E_N^1 \{h_{1j}^\MI(X_i; {\theta}_{Nj}^\MI)\} - \E_N \{h_{1j}^\MI(X_i; {\theta}_{Nj}^\MI)\} \big] + o_{\Pr}(N^{-1/2}), 
\end{align*}
and 
\begin{align*}
    & \quad \ \tilde{\gamma}_{N1j}
    \big[ \E_N^1 \{h_{1j}^\MI(X_i; {\theta}_{Nj}^\MI)\} - \E_N \{h_{1j}^\MI(X_i; {\theta}_{Nj}^\MI)\} \big]\\
    & = {\gamma}_{N1j}
    \big[ \E_N^1 \{h_{1j}^\MI(X_i; {\theta}_{Nj}^\MI)\} - \E_N \{h_{1j}^\MI(X_i; {\theta}_{Nj}^\MI)\} \big] + o_{\Pr}(N^{-1/2}). 
\end{align*}
Therefore, \eqref{eq:h_z_ENz_EN} must hold. 

Fifth, for $z=0,1$, we define 
\begin{align*}
    Y_i(z; \hat{\theta}_N^\MA) & = Y_i(z) - h_z^\MA(X_i; \hat{\theta}_N^\MA) + \E_N\{h_z^\MA(X_i; \hat{\theta}_N^\MA)\}, \\
    \tilde{Y}_i(z; \tilde{\theta}_N^\MA) & = Y_i(z) - \tilde{h}_z^\MA(X_i; \tilde{\theta}_N^\MA) + \E_N\{ \tilde{h}_z^\MA(X_i; \tilde{\theta}_N^\MA)\}, \\
    \tilde{Y}_i(z; {\theta}_N^\MA) & = {Y}_i(z) - \tilde{h}_z^\MA(X_i; {\theta}_N^\MA) + \E_N\{\tilde{h}_z^\MA(X_i; {\theta}_N^\MA)\}, 
\end{align*}
and 
\begin{align*}
    \hat{\zeta}_{Nz} & \equiv \bar{Y}_{z}(\hat{\theta}_N^\MA) 
    \equiv n_z^{-1} \sum_{i:Z_i=z} Y_i(z; \hat{\theta}_N^\MA)
    = \bar{Y}_z - \E_N^z\{h_z^\MA(X_i; \hat{\theta}_N^\MA)\} + \E_N\{h_z^\MA(X_i; \hat{\theta}_N^\MA)\} \\
    \tilde{\zeta}_{Nz} & \equiv \bar{\tilde{Y}}_{z}(\tilde{\theta}_N^\MA) 
    \equiv n_z^{-1} \sum_{i:Z_i=z} \tilde{Y}_i(z; \tilde{\theta}_N^\MA)
    = \bar{Y}_z - \E_N^z\{\tilde{h}_z^\MA(X_i; \tilde{\theta}_N^\MA)\} + \E_N\{\tilde{h}_z^\MA(X_i; \tilde{\theta}_N^\MA)\}, \\
    \check{\zeta}_{Nz} & 
    \equiv \bar{\tilde{Y}}_{z}({\theta}_N^\MA) 
    \equiv n_z^{-1} \sum_{i:Z_i=z} \tilde{Y}_i(z; {\theta}_N^\MA)
    = \bar{Y}_z - \E_N^z\{\tilde{h}_z^\MA(X_i; {\theta}_N^\MA)\} + \E_N\{\tilde{h}_z^\MA(X_i; {\theta}_N^\MA)\}. 
\end{align*}
From \eqref{eq:h_z_ENz_EN}, we then have, for $z=0,1$, 
\begin{align}\label{eq:zeta_hat_tilde_check}
    \hat{\zeta}_{Nz} = \check{\zeta}_{Nz} + o_{\Pr}(N^{-1/2}), 
    \qquad
    \tilde{\zeta}_{Nz} = \check{\zeta}_{Nz} + o_{\Pr}(N^{-1/2}), 
\end{align}

Sixth, under Condition \ref{cond:impute_adjust}, 
by the finite population central limit theorem \citep[][Theorem 5]{fpclt2017} and almost the same logic as the proof of Lemma \ref{lemma:maa_2}, 
we can know that, as $N\rightarrow \infty$, 
\begin{align}\label{eq:zeta_check_joint_clt}
    \sqrt{N}
    \begin{pmatrix}
    \check{\zeta}_{N1} - \bar{Y}(1)\\
    \check{\zeta}_{N0} - \bar{Y}(0)
    \end{pmatrix}
    \ \dot\sim \  
    \mathcal{N} \left(
    0, \ 
    \begin{pmatrix}
    (r_1^{-1} - 1) \Omega_{N11} & - \Omega_{N10}\\
    -\Omega_{N01} & (r_0^{-1}-1) \Omega_{N00}
    \end{pmatrix}
    \right), 
\end{align}
where $\Omega_{N11}$, $\Omega_{N10}$ and $\Omega_{N10} = \Omega_{N01}$ are the finite population variances of $\tilde{Y}_i(1; \theta_N^\MA)$ and $\tilde{Y}_i(0; \theta_N^\MA)$ and their finite population covariance. 

Seventh,
noting that $\hat{\tau}_{g\MA\MI} = g(\hat{\zeta}_{N1}) - g(\hat{\zeta}_{N0})$, 
we further define $\tilde{\tau}_{g\MA\MI}  = g(\tilde{\zeta}_{N1}) - g(\tilde{\zeta}_{N0})$ and $\check{\tau}_{g\MA\MI}  = g(\check{\zeta}_{N1}) - g(\check{\zeta}_{N0})$. 
By the mean value theorem, we can then derive that 
$\hat{\tau}_{g\MA\MI}  = \check{\tau}_{g\MA\MI}  + o_{\Pr}(N^{-1/2})$
and 
$\tilde{\tau}_{g\MA\MI} = \check{\tau}_{g\MA\MI}  + o_{\Pr}(N^{-1/2})$. 
Similar to the proof of Theorem \ref{thm:clt_mae}, 
we can know that $\sqrt{N}( \check{\tau}_{g\MA\MI} - \tau_g )$ is asymptotically Gaussian distributed with mean 0 and variance 
\begin{align*}
    V_{g\MA\MI} & = 
    r_1^{-1} \Var_N\{ \dot g(\bar{Y}(1))\tilde{Y}_i(1; \bs{\theta}_N^\MA) \} + 
	r_0^{-1} \Var_N\{ \dot g(\bar{Y}(0)) \tilde{Y}_i(0; \bs{\theta}_N^\MA) \} 
	\nonumber
	\\
	& \quad \ 
	- 
	\Var_N\{ \dot g(\bar{Y}(1)) \tilde{Y}_i(1; \bs{\theta}_N^\MA) - \dot g(\bar{Y}(0)) \tilde{Y}_i(0; \bs{\theta}_N^\MA)  \}, 
\end{align*}
where $Y_i(z; \bs{\theta}_N) = Y_i(z) - \tilde{h}_z^\MA(X_i; {\theta}_N^\MA) + \E_N\{\tilde{h}_z^\MA(X_i; {\theta}_N^\MA)\}$ for $z=0,1$. 
By Slutsky's theorem, 
$\sqrt{N} ( \hat{\tau}_{g\MA\MI} - \tau_g )$ 
and $\sqrt{N}( \tilde{\tau}_{g\MA\MI} - \tau_g )$ follow the same asymptotic distribution as $\sqrt{N}( \check{\tau}_{g\MA\MI} - \tau_g )$. 

Finally, we consider variance estimation. 
From \eqref{eq:zeta_hat_tilde_check} and \eqref{eq:zeta_check_joint_clt}, $\hat{\zeta}_{Nz} \equiv \bar{Y}_{z}(\hat{\theta}_N^\MA) = \bar{Y}(z) + o_{\Pr}(1)$ for $z=0,1$. 
From \eqref{eq:AI_samp_var_est} and \eqref{eq:AI_samp_var_true} and by some algera, we can derive that 
\begin{align*}
    \Var_N^z\{Y_i(z; \hat{\theta}^\MA_N) \}
    & = 
    \Var_N^z\Big\{ Y_i - \sum_{j=1}^J \hat{\beta}_{Nzj} h_{0j}^\MI(X_i; \hat{\theta}_{Nj}^\MI) - \sum_{j=1}^J \hat{\gamma}_{Nzj} h_{1j}^\MI (X_i; \hat{\theta}^\MI_{Nj})   \Big\}\\
    & = 
    \Var_N\Big\{ Y_i(z) - \sum_{j=1}^J {\beta}_{Nzj} h_{0j}^\MI(X_i; {\theta}_{Nj}^\MI) - \sum_{j=1}^J {\gamma}_{Nzj} h_{1j}^\MI (X_i; {\theta}^\MI_{Nj})   \Big\} + o_{\Pr}(1)
    \\
    & = \Var_N\{\tilde{Y}_i(z; {\theta}^\MA_N) \} + o_{\Pr}(1). 
\end{align*}
Let 
$
\tilde{V}_{g\MA\MI} \equiv r_1^{-1} \Var_N\{ \dot g(\bar{Y}(1))\tilde{Y}_i(1; \bs{\theta}_N^\MA) \} + 
	r_0^{-1} \Var_N\{ \dot g(\bar{Y}(0)) \tilde{Y}_i(0; \bs{\theta}_N^\MA) \} \ge V_{g\MA\MI}. 
$
We then have 
\begin{align*}
    \hat{V}_{g\MA\MI} & = 
    r_1^{-1} \Var_N^1\{ \dot g( \bar{Y}_{1}(\hat{\theta}_N^\MA) )Y_i(1; \hat{\theta}_N^\MA) \} + 
	r_0^{-1} \Var_N^0\{ \dot g( \bar{Y}_{0}(\hat{\theta}_N^\MA) ) {Y}_i(0; \hat{\theta}_N^\MA) \} 
	\nonumber
	\\
	& = \tilde{V}_{g\MA\MI} + o_{\Pr}(1)
	\ge V_{g\MA\MI} + o_{\Pr}(1). 
\end{align*}

From the above, Theorem \ref{thm:clt_ai} holds. 
\end{proof}

\section{Additional remarks and technical details}\label{sec:add_detail}

\subsection{Additional remark for Condition \ref{cond:consis_MI}}\label{sec:glm_cond}

Below we show that the generalized linear models in \eqref{eq:glm} satisfy Condition \ref{cond:consis_MI}. 
By the density form in \eqref{eq:glm}, we can derive that, for $z=0,1$, 
$- \partial \log f_z(y,x;\theta) / \partial \alpha_{1-z} = 0$, and 
\begin{align*}
    - \frac{\partial}{\partial \alpha_z} \log f_z(y,x;\theta)
    = 
    - \frac{y - \dot{b}_z(\alpha_z+\beta_z^\top x)}{a_z(\phi_z)} 
    = 
    - \frac{y - h_z(x;\theta)}{a_z(\phi_z)}. 
\end{align*}
Without loss of generality, we assume $\alpha_0$ and $\alpha_1$ are the first two coordinates of $\theta$. 
Let $q_0(\theta)$ and $q_1(\theta)$ be two vectors that have the same dimension as $\theta$, with $q_0(\theta) = (-a_0(\phi_0), 0, 0, \ldots, 0)^\top$ and $q_1(\theta) = (0, -a_1(\phi_1), 0, \ldots, 0)^\top$, 
and recall that $\psi_z(y,x;\theta) = - \partial \log f_z(y,x;\theta) / \partial \theta$ for $z=0,1$. 
We can then verify that 
$q_z(\theta)^\top \psi_z(y,x;\theta) = y - h_z(x;\theta)$ 
and 
$q_z(\theta)^\top \psi_{1-z}(y,x;\theta) = 0$ for $z=0,1$. 
Therefore,  the generalized linear models 
in \eqref{eq:glm} 
satisfy Condition \ref{cond:consis_MI}. 

\subsection{Additional remark for the relation between model-imputed and model-assisted estimator}\label{sec:proof_equ_mi_ma}

Below we show how to view a model-assisted estimator as a model-imputed estimator, giving the details for the discussion in \S \ref{sec:conn_mi_ma}. 
Given the estimating equation in \eqref{eq:psi_ma_mi} for $\alpha_z$s, we can know that, for $z=0,1$.
\begin{align}\label{eq:alpha_hat_ma_mi}
    0 = \E_N^z\{Y_i - \hat{\alpha}_z - h_z(x;\hat{\theta})\}.
\end{align}
where $\hat{\theta}$ is the same parameter estimator used for the model-assisted estimator. 
Let $\hat{\tilde{\theta}} = (\hat{\alpha}_1, \hat{\alpha}_0, \hat{\theta}^\top)^\top$ denote the parameter estimator for the model-imputed estimator. We then have, for $z=0,1$,  
\begin{align*}
    \E_N\{\tilde{h}_z(X_i, \hat{\tilde{\theta}})\}
    & = 
    \E_N\{\hat{\alpha}_z  + h_z(X_i, \hat{\theta})\}
    = \hat{\alpha}_z + \E_N\{ h_z(X_i, \hat{\theta})\}
    \\
    & = \E_N^z\{Y_i - h_z(x;\hat{\theta})\} + \E_N\{ h_z(X_i, \hat{\theta})\}\\
    & = \E_N^z\{Y_i(\hat{\theta})\},
\end{align*}
where the second last equality follows from \eqref{eq:alpha_hat_ma_mi}, and the last equality follows from the definition in \S \ref{sec:MA_consist}. 
This then implies that 
\begin{align*}
    \hat{\tau}_{g\MI} 
    & = g(\E_N\{\tilde{h}_1(X_i, \hat{\tilde{\theta}})\})
    - g(\E_N\{\tilde{h}_0(X_i, \hat{\tilde{\theta}})\})
    = g(\E_N^1\{Y_i(\hat{\theta})\})
    - g(\E_N^0\{Y_i(\hat{\theta})\})
    = \hat{\tau}_{g\MA},
\end{align*}
showing that we can also view a model-assisted estimator as a model-imputed estimator. 

\subsection{Additional remark for general estimands of form $\nu(\bar{Y}(1), \bar{Y}(0))$}
From the proof of Theorem \ref{thm:clt_mae}, we can know that 
\begin{align}\label{eq:Y_bar_vec}
    \begin{pmatrix}
        \bar{Y}_1(\hat{\theta}_{N}) - \bar{Y}(1)\\
        \bar{Y}_0(\hat{\theta}_{N}) - \bar{Y}(0)
    \end{pmatrix}
    \lpc 
    \mathcal{N}
    \left( 
    \begin{pmatrix}
        0\\
        0
    \end{pmatrix}, 
    \begin{pmatrix}
        (r_1^{-1}-1) \Var_N\{Y_i(1;\theta_N)\}
        & -\Cov_N\{Y_i(1;\theta_N), Y_i(0;\theta_N)\}
        \\
        -\Cov_N\{Y_i(1;\theta_N), Y_i(0;\theta_N)\}
        & 
        (r_0^{-1}-1) \Var_N\{Y_i(0;\theta_N)\}
    \end{pmatrix}
    \right).
\end{align}
From the proof of Theorem \ref{thm:MA_ci}, 
we can know that 
\begin{align}\label{eq:Y_bar_vec_var_est}
    & \quad \ \begin{pmatrix}
        r_1^{-1} \Var_{N_t}^1\{Y_i(\hat{\theta}_{N})\} & 0\\
        0 & r_0^{-1} \Var_{N}^0\{Y_i(\hat{\theta}_{N})\}
    \end{pmatrix}
    \\
    & 
    \lpc 
    \begin{pmatrix}
        r_1^{-1} \Var_{N}\{Y_i(1;\theta_{N})\}  & 0\\
        0 & r_0^{-1} \Var_{N}\{Y_i(\hat{\theta}_{N})\}
    \end{pmatrix}
    \nonumber
    \\
    & \ge 
    \begin{pmatrix}
        r_1^{-1} \Var_{N}\{Y_i(1;\theta_{N})\}  & 0\\
        0 & r_0^{-1} \Var_{N}\{Y_i(\hat{\theta}_{N})\}
    \end{pmatrix}
    - 
    \Cov_N 
    \left( 
    \begin{pmatrix}
        Y_i(1;\theta_{N})\\
        Y_i(0;\theta_{N})
    \end{pmatrix}
    \right)
    \nonumber
    \\
    & = \begin{pmatrix}
        (r_1^{-1}-1) \Var_N\{Y_i(1;\theta_N)\}
        & -\Cov_N\{Y_i(1;\theta_N), Y_i(0;\theta_N)\}
        \\
        -\Cov_N\{Y_i(1;\theta_N), Y_i(0;\theta_N)\}
        & 
        (r_0^{-1}-1) \Var_N\{Y_i(0;\theta_N)\}
    \end{pmatrix}.
    \nonumber
\end{align}
Consider any estimand of form $\nu(\bar{Y}(1), \bar{Y}(0))$. 
Assuming that $\nu$ is continuously differentiable, 
from \eqref{eq:Y_bar_vec} and 
by the Delta method as in Theorem \ref{thm:clt_mae}, we can know that  $\nu(\bar{Y}_1(\hat{\theta}_{N}), \bar{Y}_0(\hat{\theta}_{N}))$ is consistent for $\nu(\bar{Y}(1), \bar{Y}(0))$ and asymptotically Normal.  
From \eqref{eq:Y_bar_vec_var_est}, we can further derive its conservative variance estimator.

{\lad 

\subsection{Working model for binary potential outcomes}

For binary outcomes, 
individual treatment effects 
can only take values in $\{-1,0,1\}$. We can consider the following working model 
\begin{equation}\label{eq:tau_model_binary}
    \tau \mid  X = 
        \begin{cases} 
        1 & \text{w.p. } \ \exp(x^\top \beta) / \{ \exp(x^\top \beta) + \exp(-x^\top \beta) + \gamma \}, \\ 
        0 & \text{w.p. } \ \gamma / \{ \exp(x^\top \beta) + \exp(-x^\top \beta) + \gamma \}, \\ 
        -1 & \text{w.p. } \ \exp(-x^\top \beta) / \{ \exp(x^\top \beta) + \exp(-x^\top \beta) + \gamma \},
        \end{cases}
\end{equation}
where $\gamma$ is a prespecified known constant. 
The estimation approach in \citet{Tian2014} is equivalent to the maximum likelihood estimation from the working model \eqref{eq:tau_model_binary} with $\gamma=2$. 
If we choose $\gamma=1$, then the model \eqref{eq:tau_model_binary} reduces to a multinomial logistic regression model. 
Intuitively, the model assumes that the the individual effect is more likely to be positive when $x^\top \beta$ is large, and more likely to be negative when $x^\top \beta$ is small. In other words, the individual effect is positively associated with $x^\top \beta$.

Under the working model \eqref{eq:tau_model_binary}, 
the probability mass function of $\tau \mid X$ has the following form:
\begin{equation*}
    f(\tau \mid x; \beta) = \exp(\tau \cdot x^\top \beta) \cdot \gamma ^ {\mathbb{I}\{\tau = 0\}} / \big[\exp(x^\top \beta) + \exp(-x^\top \beta) + \gamma \big],
\end{equation*}
which falls in the exponential dispersion family in \eqref{eq:model_tau}. 
Thus, its parameter is identifiable and can be inferred from the observed data using our Z-estimation theory. 

If, in addition, we know that the treatment effect is nonnegative for any unit, then the individual treatment effects can only take values in $\{0,1\}$. In this case, we can consider imposing a logistic regression working model on $\tau\mid X$. It also falls in the exponential dispersion family in \eqref{eq:model_tau}, and its parameter is identifiable and can be inferred from the observed data. 
}

\end{document}